\newcommand{\E}{\mathbb{E}}
\newcommand{\Var}{\mathrm{Var}}
\newcommand{\Cov}{\mathrm{Cov}}
\newtheorem{theorem}{Theorem}
\newtheorem{proposition}[theorem]{Proposition}
\newtheorem{lemma}[theorem]{Lemma}
\newtheorem{corollary}[theorem]{Corollary}
\newtheorem{remark}[theorem]{Remark}
\theoremstyle{definition}
\newtheorem{definition}[theorem]{Definition}
\theoremstyle{definition}
\theoremstyle{definition}
\newtheorem{example}{Example}
\begin{document}

\title{Cluster Stability Selection}
%\author{
%  For Greg Faletto qualifying exam (advisor: Jacob Bien)
%  % Vitaly Surazhsky \\
%%                Department of Computer Science\\
%%        Technion---Israel Institute of Technology\\
%%        Technion City, Haifa 32000, \underline{Israel}
%%            \and
%%        Yossi Gil\\
%%        Department of Computer Science\\
%%        Technion---Israel Institute of Technology\\
%%        Technion City, Haifa 32000, \underline{Israel}
%}
\date{\today}

\author{Gregory Faletto\thanks{Corresponding author: gregory.faletto@marshall.usc.edu} }
\author{Jacob Bien}
\affil{Department of Data Sciences and Operations \\ University of Southern California Marshall School of Business}

%\author{Gregory Faletto gregory.faletto@marshall.usc.edu \and Jacob Bien jbien@usc.edu}
%\affil{Department of Data Sciences and Operations\\
%       Marshall School of Business\\
%       University of Southern California}

%\author{Gregory Faletto \qquad Jacob Bien  \\
%%gregory.faletto@marshall.usc.edu \qquad jbien@usc.edu \\
% \\
%      Department of Data Sciences and Operations\\
%       Marshall School of Business\\
%       University of Southern California\\
%}

%\author{ Gregory Faletto \\ gregory.faletto@marshall.usc.edu \\
%      Department of Data Sciences and Operations\\
%       Marshall School of Business\\
%       University of Southern California\\
%       Los Angeles, CA 90089, USA
%       \and
%       Jacob Bien \\ jbien@usc.edu \\
%      Department of Data Sciences and Operations\\
%       Marshall School of Business\\
%       University of Southern California\\
%       Los Angeles, CA 90089, USA}
%       
%       \affil{Department of Data Sciences and Operations\\
%       Marshall School of Business\\
%       University of Southern California}
       
%\author[1]{Gregory Faletto \thanks{gregory.faletto@marshall.usc.edu}}
%\author[1]{Jacob Bien \thanks{jbien@usc.edu}}
%\affil[1]{Department of Data Sciences and Operations, University of Southern California Marshall School of Business}

%\renewcommand\Authands{ and }

%\editor{}

\maketitle

\begin{abstract}
Stability selection \citep{meinshausen-2010} makes any feature selection method more stable by returning only those features that are consistently selected across many subsamples. We prove (in what is, to our knowledge, the first result of its kind) that for data containing highly correlated proxies for an important latent variable, the lasso typically selects one proxy, yet stability selection with the lasso can fail to select any proxy, leading to worse predictive performance than the lasso alone.

We introduce {\em cluster stability selection}, which exploits the practitioner's knowledge that highly correlated clusters exist in the data, resulting in better feature rankings than stability selection in this setting. We consider several feature-combination approaches, including taking a weighted average of the features in each important cluster where weights are determined by the frequency with which cluster members are selected, which we show leads to better predictive models than previous proposals.

We present generalizations of theoretical guarantees from \citet{meinshausen-2010} and  \citet{shah_samworth_2012} to show that cluster stability selection retains the same guarantees. In summary, cluster stability selection enjoys the best of both worlds, yielding a sparse selected set that is both stable and has good predictive performance.
\end{abstract}

%\begin{keywords}
%variable selection, correlated features, lasso, latent variables
%\end{keywords}

\section{Introduction}

%\subsection{Stability Selection}

\textit{Stability}, as characterized by \citet{Yu2013}, holds when ``statistical conclusions are robust or stable to appropriate perturbations to data.'' \citet{Yu3920} call stability one of ``three core principles" necessary for ``principled inquiry to extract reliable and reproducible information from data." 

\textit{Stability selection} \citep{meinshausen-2010} adds stability to any base feature selection method. \citeauthor{meinshausen-2010} focus on the lasso \citep{Tibshirani1996} as the base procedure, and we will do the same. Even with a fixed \(\lambda\), the sets of features selected by the lasso can be unstable, particularly in the high-dimensional setting (\(p \gg n\)). 

\begin{figure}[htbp]
\begin{center}
\includegraphics[width=\textwidth]{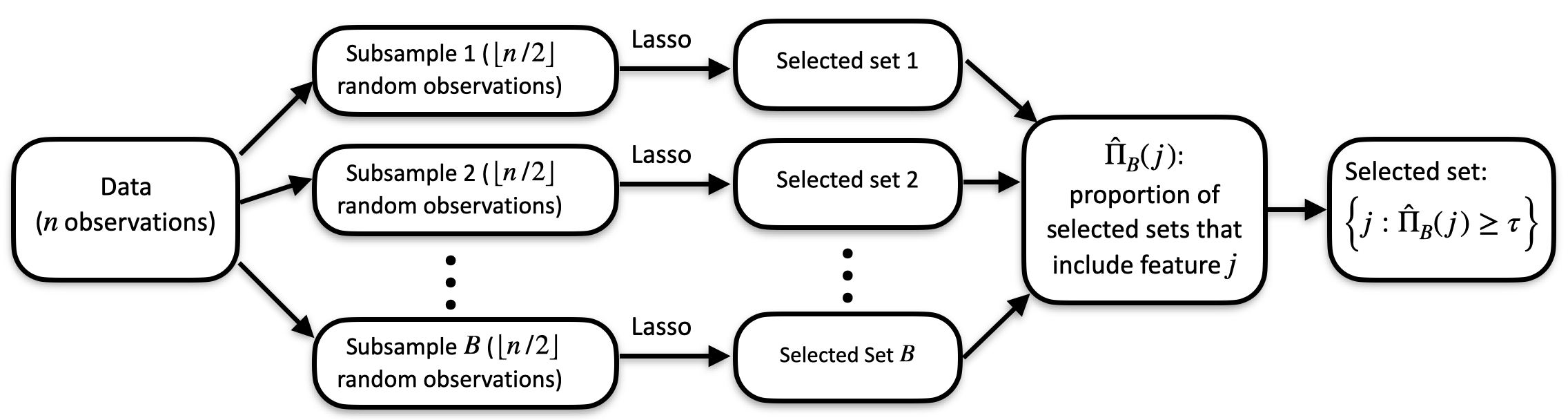}
\caption{Flow chart depicting stability selection.}
\label{stabsel_flowchart}
\end{center}
\end{figure}

Figure \ref{stabsel_flowchart} illustrates stability selection.
The data are repeatedly randomly split into subsamples of size \(\lfloor n/2 \rfloor\). On each subsample, the lasso (or any other feature selection method) is used to select a set of features. The proportion of subsamples in which each feature is selected is interpreted as an importance measure for that feature. Finally, the selected set returned by stability selection is the set of features whose selection proportion exceeds a predetermined threshold. This adds stability to the lasso and guarantees control of false discoveries under very mild assumptions \citep{meinshausen-2010}.

Stability selection has, however, a structural problem in the case where observed features are highly correlated. This is the primary motivation of our work and has been noted before.  In the words of \citet{shah_samworth_2013}, the problem is that ``highly correlated variables\ldots split the vote.'' (\citealt{kirk-2010} and \citealt{kent-2010} make very similar points.) 

As a model for why features may be correlated, we will use the \textit{errors-in-variables} framework. Suppose that a variable \(\boldsymbol{Z}\) is in the true model for \(\boldsymbol{y}\), but is not observed. Instead, \(q \geq 2\) equally good \textit{proxies}---\(\boldsymbol{Z}\) plus a little noise---are observed. Any one of these proxies would be useful for out-of-sample predictive performance because of their high correlation with \(\boldsymbol{Z}\). The lasso will tend to choose one proxy on each fit, choosing uniformly at random among the proxies, so each proxy's selection proportion tends toward \(1/q\) rather than 1. Stability selection's ranking of the importance of the features is then suboptimal for out-of-sample predictive performance. The following simulation study illustrates this problem. (We will explain this simulation in full detail in Section \ref{sim.study.sparse}; for now, we omit details for the sake of exposition.)

 \begin{example}\label{ex.stab.problem}
 
We simulate $n=200$ observations of 100 features: $q=10$ proxies that each have correlation 0.9 with the latent variable $\boldsymbol{Z}$, 10 independent ``weak signal features'' that are in the true model for \(\boldsymbol{y}\) (but with smaller coefficients than \(\boldsymbol{Z}\)), and 80 ``noise features.'' 

The left panel of Figure \ref{intro_sim} shows the top 20 selection proportions after applying stability selection with the lasso on one simulated data set. Although any one proxy for \(\boldsymbol{Z}\) would be a better selection for prediction than any other observed feature, the proxies for $\boldsymbol{Z}$ have lower selection proportions than many of the weak signal features. The right panel shows that this behavior substantially hurts the out-of-sample predictive performance of stability selection for most model sizes (the mean squared error represents the out-of-sample prediction error of a least squares model using the selected set of features).
 
\begin{figure}[htbp]
\begin{center}
\includegraphics[width=\textwidth]{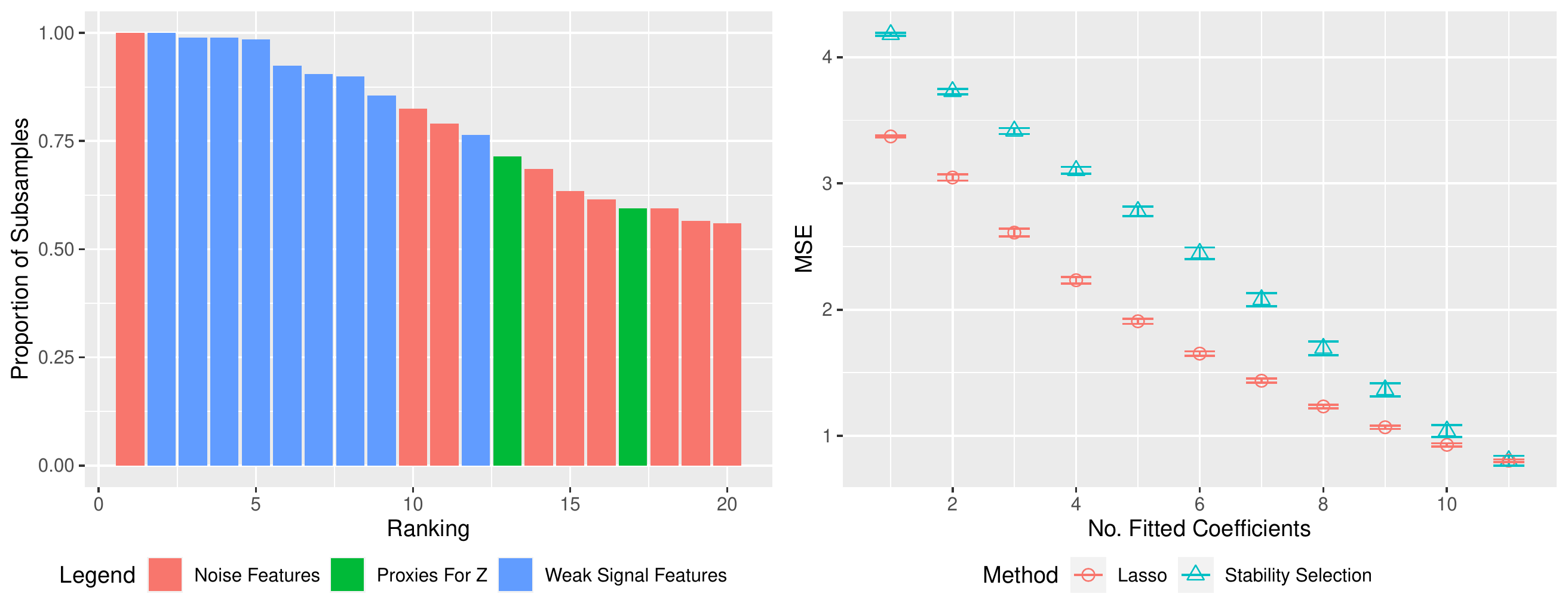}
\caption{Left: the top 20 selection proportions for stability selection with the lasso in a single simulated example. Right: the average mean squared error against model size for the lasso and stability selection across 1000 simulations from Example \ref{ex.stab.problem}.}
\label{intro_sim}
\end{center}
\end{figure}

\end{example}

This is not just a problem of theoretical concern. Highly correlated data arise in many fields of study. 

\begin{itemize}

\item In \textbf{economics}, practitioners observe \textit{repeated measurements}---multiple noisy observations of the same latent signal, like answers to survey questions \citep{Schennach2016} or the private information known to bidders in auctions \citep{Li2000, Krasnokutskaya2011}.
%wages \citep{Kennan2011}, 

\item In \textbf{education}, noisy measurements of academic ability (like test scores) may be used as features \citep{Cunha2005, davis2002statistical}.

\item In the \textbf{social sciences and humanities}, predictors may include the personal interpretations of game participants \citep{10.1145/985692.985733}, Amazon Mechanical Turk workers \citep{Mason2012}, or the general public \citep{10.1111/j.1365-2966.2008.13689.x}. Each person's assessments can be interpreted as a noisy measurement of a common underlying signal \citep{Hayes2007}.

\item In \textbf{biology}, gene expression levels are used to predict health outcomes \citep{sorlie_tibshirani_2003}. Some genes are known to share common biological pathways. Expression levels of genes in the same pathway may be nearly identical up to measurement noise \citep{Segal2004}.

\end{itemize}

In Section \ref{real.data.study2}, we will demonstrate how our proposed method can be applied to yet another setting with clustered features: genome-wide association studies (GWAS). It is common that multiple proxies for the same signal are observed in one way or another, and we should expect this to become more common in ``big data" settings where as many features are observed as possible.

To address this problem, we propose \textit{cluster stability selection}. In brief, given a set of known or estimated clusters \(\{C_k\}\), we find the proportion of subsamples in which at least one feature from cluster \(C_k\) was selected. Cluster \(C_k\) is interpreted to be important (that is, the underlying signal common to all the features in \(C_k\) is considered important) if this selection proportion is high. If \(C_k\) is important, we construct a \textit{cluster representative} by taking a weighted average of the features in \(C_k\) and using this cluster representative for downstream regression tasks (rather than regressing on the individual features separately). We propose three different ways to calculate these weights and demonstrate both theoretically and empirically the advantages each weighting scheme enjoys, depending on the context and the goals of the investigator. 

We will show that cluster stability selection dominates both the lasso and stability selection in settings where there are clusters of highly correlated features in the data. Cluster stability selection retains the benefits of stability selection---like higher stability of selected sets and fewer false selections---while avoiding ``splitting the vote" and thereby hurting prediction accuracy. Further, our proposed method works even when no clusters are present, and we show that our estimator is more powerful than previous versions of stability selection in this case.

We briefly review the literature related to our work. Investigations into the stability of learning algorithms more broadly go at least as far back as works like \citet{Devroye1979, Kearns1997}, \cite{Bousquet2002}, and \citet{Lange2003}. A few authors propose similar methods to what we discuss. \citet{kirk-2010} and \citet{Alexander2011} discuss selection proportions for groups of features in stability selection, as we propose, but they do not consider the importance of individual features in the cluster, nor do they form cluster representatives, opting instead to select all features from correlated clusters. They also use the stability selection selection probability estimator proposed by \citet{meinshausen-2010}, whereas we propose a novel estimator of the selection probabilities. Finally, they do not examine predictive performance of the resulting selected models.

\citet{Gauraha2016} also considers stability selection when features are highly correlated. \citeauthor{Gauraha2016} proposes clustering the features, forming cluster representatives, and using the lasso regressed on cluster representatives as the base procedure for stability selection.

\citet{Beinrucker2016} develop \textit{extended stability selection}. This method divides the data into subsamples of size \(n/L\) for some \(L \in \mathbb{N}\) rather than only \(\lfloor n/2 \rfloor\), and additionally chooses random subsets of features to consider for each subsample. \citeauthor{Beinrucker2016} show through simulations that their method is helpful when predictors are highly correlated.

In the presence of highly correlated features, one of the most frequently suggested ideas to practitioners is to simply drop all but one of the highly correlated features from the data; see for example \citet[Section 4.7.1]{Greene2012Econometric} and \citet[Section 3.3.3, p. 102]{james2021introduction}. This idea underlies the \textit{protolasso} \citep{Reid2015}. In this method, the feature within a cluster with the greatest marginal correlation with the response is chosen as a \textit{prototype} for the cluster. The remaining features in each cluster are discarded, and the lasso is estimated on the prototypes. Related proposals form cluster representatives via averaging \citep{Buhlmann2013, Park2007}, and other more sophisticated proposals exist as well; see  \citet{Li1998} and \citet[Section 4.1]{Schennach2016}.

Many earlier works have noted, but not proven, that the lasso tends to select only one feature from a highly correlated cluster \citep{Efron2004, Zou2005, Zhao2006, Bondell2008, Jacob2009, She2010, SortedL1, Witten2014, Anbari2014, Li2018}. Our primary theoretical contribution (Theorem \ref{thm.result.sel}) provides what is to our knowledge the first precise statement with proof of this observation.

More recent research engages with regularized estimation of models with measurement error, particularly in the high-dimensional setting \citep{Rosenbaum2010, rosenbaum, loh2012, Sorensen2015, Belloni2016, BelloniPivot2017, Belloni2017,  Zheng2018, Nghiem2019}. In contrast to this stream of literature, we are interested in stable model selection for good out-of-sample predictive performance, we focus on stabilizing existing estimation procedures rather than proposing a new one, and our method is designed to be useful when more than one noisy observation of the same lurking signal is available.

Lastly, a more general setting than the one we consider is that of \textit{latent factor models}, where observed features may be influenced by multiple latent variables rather than just one \citep{Bollen1989, izenman, Bing2019, Bing2021}.

Below, we outline our main contributions as well as the structure of our paper.

\begin{itemize}

\item In Section \ref{sec.theory}, we prove in a simple setting that the lasso tends to select one feature from a highly correlated cluster (Theorem \ref{thm.result.sel}). We show that stability selection with the lasso ranks features poorly in this setting as a result, both theoretically (Corollary \ref{cor.stab.sel2}) and through a simulation study in Section \ref{sims.data}.

\item We also show---both theoretically in Section \ref{sec.theory} (Proposition \ref{prop.gen.pred.risk}) and through simulation studies in Section \ref{sims.data}---that low-noise proxies for important latent features are not only useful predictors, they can be better predictors than directly observed signal features. We show both theoretically in Section \ref{gen.stab.sel} (Proposition \ref{proxies.risk.different.weight}) and through simulations in Sections \ref{avg.sim.study} and \ref{weight.avg.sim.study} that suitably weighted averages of proxies are better predictors than individual proxies.

\item Motivated by these results, in Section \ref{gen.stab.sel} we propose cluster stability selection. We show in the setting of Theorem \ref{thm.result.sel} that cluster stability selection ranks features optimally for predictive performance (Proposition \ref{cor.gen.stab.sel}). Further, we generalize all of stability selection's theoretical guarantees from \citet{shah_samworth_2012} (our Theorems \ref{ss.thm.1} and \ref{ss.thm.2}) and the main theoretical guarantee of \citet{meinshausen-2010}, showing that cluster stability selection enjoys analogous error control properties (even as our proposal selects more features than \citeauthor{meinshausen-2010}'s proposal in the case with no clusters).

\item We propose a novel, stability-based approach to combining cluster members for downstream regression tasks (Section \ref{gen.stab.sel.weight.avg}) motivated by Proposition \ref{proxies.risk.different.weight}. We demonstrate cluster stability selection's superior predictive performance and stability through simulation studies and a real data example in Section \ref{sims.data}.

\end{itemize}

First, in Section \ref{stabsel.sec} we review stability selection in more detail.

\section{Stability Selection}\label{stabsel.sec}

We will discuss both stability selection as proposed by \citet{meinshausen-2010} and the modification proposed by \citet{shah_samworth_2012}. It will be easier to start with the \citeauthor{shah_samworth_2012} formulation. We require a selection procedure
\begin{equation}\label{sel.proced}
\hat{S}_n^\lambda  := \hat{S}_n^\lambda \left(\boldsymbol{X}, \boldsymbol{y} \right),
\end{equation}
typically depending on a tuning parameter \(\lambda \). The only required property of \(\hat{S}_n^\lambda \) is that it maps a data set to a subset of \([p] := \{1, \ldots, p\}\) in a way that can be either random or deterministic conditional on the data. (Sometimes it will be convenient to suppress the \(n\) from the notation and write \(\hat{S}^{\lambda}\), or even suppress the \(\lambda\) if the meaning is clear.) We focus on the lasso with one pre-selected \(\lambda > 0\) as the base procedure. Similarly to \citeauthor{shah_samworth_2012}, we define the selection probability of feature \(j\) under \(\hat{S}_n^\lambda\) as
\begin{equation}\label{sel.prob}
p_{j,n,\lambda} := \mathbb{P} \left(j \in \hat{S}_n^\lambda \right) 
\end{equation}
(where the randomness is with respect to both the data and any randomness in the selection procedure). \citeauthor{shah_samworth_2012} frame these as the primary parameters of interest in stability selection.

As shown in Figure \ref{stabsel_flowchart}, in the first step of stability selection, \(B \) subsamples \(A_1, \ldots, A_B \subset [n]\) of size \(\lfloor n/2 \rfloor\) are drawn, as well as subsamples \( \overline{A}_b \subset [n] \) of the same size with \(A_b \cap \overline{A}_b = \emptyset\).\footnote{That is, if \(n\) is even then \( \overline{A}_b = [n] \setminus A_b\); if \(n\) is odd, one random index is dropped from each \([n] \setminus A_b\) so that \(|A_b| = |\overline{A}_b| = (n-1)/2 = \lfloor n/2 \rfloor\).} For convenience, we abuse notation and write
\[
\hat{S}^\lambda \left( A_b\right) :=  \hat{S}^\lambda \left(\boldsymbol{X}_{A_b \cdot}, \boldsymbol{y}_{A_b} \right),
\]
where \(\boldsymbol{X}_{A_b \cdot}\) denotes the \(\lfloor n/2 \rfloor \times p\) matrix obtained by selecting the rows from \(A_b\) from \(\boldsymbol{X}\), and similarly for \(\boldsymbol{y}_{A_b}\). Then for every \(j \in [p]\), \(p_{j,n,\lambda} \) is estimated by
\begin{equation}\label{ss.ss.est}
\hat{\Pi}_B^{\text{(SS)}}(j) = \frac{1}{2B} \sum_{b=1}^B \left[ \mathbbm{1}\left\{j \in \hat{S}^\lambda \left( A_b \right)\right\} + \mathbbm{1}\left\{j \in \hat{S}^\lambda (\overline{A}_b)\right\} \right].
\end{equation}
(Notice this estimator is unbiased for \(p_{j, \lfloor n/2 \rfloor,\lambda} \) but not necessarily for \(p_{j,n,\lambda} \)). Having computed each \(\hat{\Pi}_B(j) \), the practitioner has available an importance measure for each feature, and, by extension, a ranking of the features (perhaps with ties). Then the selected set consists of all features such that \(\hat{\Pi}_B^{\text{(SS)}}(j) \geq \tau\), with \(\tau \in (0,1)\) a pre-selected threshold.
%To extract a selected set, both \citet{meinshausen-2010} and \citet{shah_samworth_2012} advocate selecting all features such that \(\hat{\Pi}_B(j) \geq \tau\), 
%for each subsample \(A_b\), in addition to estimating \(\hat{S}^\lambda \left( A_b \right)\) they also consider a subsample 

The original proposal by \citet{meinshausen-2010} is more general in the sense that it allows for a finite set of tuning parameters \(\Lambda\). The procedure is otherwise the same except that their theoretical results require using all \(B = \binom{n}{\lfloor n/2 \rfloor}\) unique subsamples of size \(\lfloor n/2 \rfloor\), focusing on the quantity
\[
\hat{\Pi}^{\text{(MB)}}(j) :=  \max_{\lambda \in \Lambda} \left\{\binom{n}{\lfloor n/2 \rfloor}^{-1} \sum_{b=1}^{\binom{n}{\lfloor n/2 \rfloor}} \mathbbm{1}\left\{j \in  \hat{S}^\lambda \left( A_b \right)\right\} \right\}
\]
for each feature \(j\) (though in practice this is infeasible and they suggest that 100 subsamples works well).

\citet{shah_samworth_2012} call their modification that makes use of the sets \(\overline{A}_b\) (and whose theoretical results do not require using all \(\binom{n}{\lfloor n /2 \rfloor}\) subsets) \textit{complementary pairs subsampling} and prove stronger theoretical guarantees than \citeauthor{meinshausen-2010}'s when \(|\Lambda| = 1\). 

%. Then their estimator for \(p_{j,n,\lambda} \) is

%\[
%\hat{\Pi}^{\text{(MB)}}(j) :=  \max_{\lambda \in \Lambda} \left\{\frac{1}{B} \sum_{b=1}^B \mathbbm{1}\left\{j \in  \hat{S}^\lambda \left( A_b \right)\right\} \right\},
%\]
%where \(\Lambda\) is a pre-selected set of tuning parameters. \citet{shah_samworth_2012} propose a modification to this procedure: 
%
%
%
%
%
%
%Note that \citeauthor{shah_samworth_2012} do not use the notion of a set of tuning parameters \(\Lambda\) over which \(\hat{S}^{\lambda}\) is evaluated. We also focus on \(\Lambda\) with \(|\Lambda| = 1\). Nonetheless, 

We emphasize that the success of stability selection hinges on the quality of the rankings it assigns the features. We will frequently draw attention to these rankings.

\section{Theory}\label{sec.theory}

We present theoretical results that will highlight the ``vote-splitting" problem that stability selection has and thereby motivate cluster stability selection. In particular, we will show the following:

\begin{enumerate}[1.]

\item (Theorem \ref{thm.result.sel}.) When a highly correlated group of features is observed, the lasso tends to choose only one of them. In particular, we show that if two equally low-noise proxies for an important latent feature are observed, the lasso tends to select either one of them with equal probability.

\item (Corollary \ref{cor.stab.sel2}.) Because of this ``vote splitting," stability selection with the lasso tends to rank such features lower than directly observed features, even if the signal strength of these directly observed features is smaller.

\item (Corollary \ref{cor.thm.1.risk}.) These low-noise proxies would be better selections for out-of-sample predictive performance, so the ranking yielded by stability selection is detrimental to prediction.

\end{enumerate}

\subsection{The Lasso Selects Randomly Among Highly Correlated Features}\label{sec.theory.lasso.proxies}

In this section we provide theoretical support for the idea that 
\begin{equation}\label{statement.inf}
\textit{the lasso tends to select only one feature within a highly correlated cluster}.
\end{equation}

To the best of our knowledge, no existing theoretical result verifies \eqref{statement.inf}. It is not obvious how to map this informal statement into something that can be proven. For instance, we know that if the penalty parameter \(\lambda\) can be arbitrary, the lasso can typically yield selected sets of any size from 0 to \(\min\{n, p\}\). So if \(p \leq n\), there exist some lasso selected sets in which none of the features in a highly correlated cluster are selected, and some in which all of the features are selected. 

%. Nonetheless, observations like \eqref{statement.inf} have been made many times across the literature. For example, \citet{Zou2005} claim that ``if there is a group of variables among which the pairwise correlations are very high, then the lasso tends to select only one variable from the group and does not care which one is selected." (We also cited several other papers motivated by this phenomenon in the introduction.)

The common belief in statements like \eqref{statement.inf} stems back at least as early as \citet{Efron2004}. \citeauthor{Efron2004} show that the lasso is closely related to their algorithm least angle regression (LARS), which adds features one at a time according to which feature is most highly correlated with the residual between the observed response and the current model. If two features \(\boldsymbol{X}_{\cdot 1}\) and \(\boldsymbol{X}_{\cdot 2}\) are highly correlated, as one of them is added to the active set (say \(\boldsymbol{X}_{\cdot 1}\)), \(\boldsymbol{X}_{\cdot 2}\)'s correlation with the residual will tend to drop more than competitor features. Therefore \(\boldsymbol{X}_{\cdot 2}\) appears to have a disadvantage relative to competitor features.

Following this reasoning, we construct something closer to a provable mathematical statement from \eqref{statement.inf} by considering the path of selected features as the penalty \(\lambda\) decreases from infinity:
\begin{align}
& \textit{After one feature from a highly correlated cluster enters the lasso path, it is} \nonumber
\\ & \textit{(asymptotically) very unlikely that another feature from the cluster will enter until much} \nonumber
\\ & \textit{later in the lasso path (that is, after other reasonable features to select are exhausted).} \label{statement.more.f}
\end{align}

Next we construct a very simple example in which \eqref{statement.more.f} could hold, and in Theorem \ref{thm.result.sel} we prove that it does. Suppose a response \(\boldsymbol{y}\) is observed that is generated from the linear model
\begin{equation} \label{thm.alt.spec.y}
\boldsymbol{y}  = \beta_Z \boldsymbol{Z} +  \boldsymbol{X}_{\cdot 3} + \boldsymbol{\epsilon}, 
\end{equation}
where \(\boldsymbol{Z}, \boldsymbol{X}_{\cdot 3}\), and \(\boldsymbol{\epsilon}\) are independent Gaussian random variables. Also, \(\beta_Z > 1\), so \(\boldsymbol{Z}\) is a more important signal for predicting \(\boldsymbol{y}\) than \(\boldsymbol{X}_{\cdot 3}\). (We will fully specify the setup in a moment.) The practitioner observes \(\boldsymbol{y}\), \(\boldsymbol{X}_{\cdot 3}\), and two noisy proxies \(\boldsymbol{X}_{\cdot 1}\) and \(\boldsymbol{X}_{\cdot 2}\) for \(\boldsymbol{Z}\). In particular, \(\boldsymbol{X}_{\cdot 1}\) and \(\boldsymbol{X}_{\cdot 2}\) are equal to \(\boldsymbol{Z}\) plus a small amount of independent and identically distributed (i.i.d.) noise, \(\boldsymbol{\zeta}_1\) and \(\boldsymbol{\zeta}_2\); that is,
\begin{equation}\label{thm.alt.spec.x}
\boldsymbol{X}_{\cdot j} :=\boldsymbol{Z} + \boldsymbol{\zeta}_{j} , \qquad j \in [2]  .  
\end{equation}
(The amount of noise is ``small" in the sense that \(\Var(\zeta_{11})\) is small.) 

Clearly \(\boldsymbol{X}_{\cdot 1}\) and \(\boldsymbol{X}_{\cdot 2}\) form a highly correlated cluster. We also assume in this setting that \(\beta_Z\) is large enough (and \(\Var(\zeta_{11})\) is small enough) that the features \(\boldsymbol{X}_{\cdot 1}\) and \(\boldsymbol{X}_{\cdot 2}\) are more highly correlated with \(\boldsymbol{y}\) than \(\boldsymbol{X}_{\cdot 3}\) is.\footnote{It turns out it will also be important that \(\beta_Z\) is not \textit{too} large---if it is, then the selected set of size 2 that is best for out-of-sample predictive performance could be \(\boldsymbol{X}_{\cdot 1}\) and \(\boldsymbol{X}_{\cdot 2}\), rather than one of these proxies and \(\boldsymbol{X}_{\cdot 3}\). Then \eqref{statement.more.f} will not hold.} In this setting, either of \(\boldsymbol{X}_{\cdot 1}\) or \(\boldsymbol{X}_{\cdot 2}\) is a good selection if the goal of our model is a sparse selected set with good out-of-sample predictive performance (particularly since \(\boldsymbol{Z}\) is not observed), but selecting both is redundant. That is, the best selected sets of size 2 are \(\{\boldsymbol{X}_{\cdot 1}, \boldsymbol{X}_{\cdot 3}\}\) or \(\{\boldsymbol{X}_{\cdot 2}, \boldsymbol{X}_{\cdot 3}\}\). Theorem \ref{thm.result.sel} says that with high probability the lasso path matches this behavior: after selecting one of \(\boldsymbol{X}_{\cdot 1}\) or \(\boldsymbol{X}_{\cdot 2}\), the next feature to enter the lasso path is \(\boldsymbol{X}_{\cdot 3}\).

Now we specify the setting precisely. Suppose \(n\) i.i.d. draws 
\begin{equation}\label{thm.alt.spec}
(Z_i, X_{i3}, \epsilon_i, \zeta_{i1}, \zeta_{i2}), \qquad i \in [n],
\end{equation}
 are observed, with the variables having a multivariate Gaussian distribution. Each variable has mean 0 and is independent from the others, and we have \(\Var(Z_1) = \Var(X_{13}) = 1\), \(\Var(\epsilon_1) = \sigma_\epsilon^2\), and \(\Var(\zeta_{11}) = \Var(\zeta_{12}) = \sigma_\zeta^2(n) \). Denote \( \boldsymbol{Z} := (Z_1, \ldots, Z_n)^\top, \boldsymbol{X}_{\cdot 3} := (X_{13}, \ldots, X_{n3})^\top\), \(\boldsymbol{\epsilon} := (\epsilon_1, \ldots, \epsilon_n)^\top\), and \(\boldsymbol{\zeta}_j = (\zeta_{1j}, \ldots, \zeta_{nj})^\top\), \(j \in [2]\). We will require \(\sigma_\zeta^2(n)\) to vanish at a particular rate as \(n \to \infty\), so that \(\boldsymbol{X}_{\cdot 1}\) and \(\boldsymbol{X}_{\cdot 2}\) approach \(\boldsymbol{Z}\) (and the correlation of \(\boldsymbol{X}_{\cdot 1}\) and \(\boldsymbol{X}_{\cdot 2}\) approaches 1). In particular, let
\begin{equation}\label{lasso.thm.sig.zeta.def}
\sigma_\zeta^2(n) := \frac{10}{\sqrt{n \log n}}.
\end{equation}
%Lastly, the quantity
%\begin{equation}\label{def.big.delta}
%\Delta_{\text{max}} := \frac{19\left(\log 100 \right)^{3/4}}{100}  
%\end{equation}
%is part of a uniform upper bound on \(\beta_Z\) under our assumptions. 
We will consider the lasso with scaled features \citep{Tibshirani1996}:
\begin{equation}\label{lasso.solution}
\hat{\beta}(\lambda) \in \underset{\beta \in \mathbb{R}^p}{\arg \min} \left\{ \frac{1}{2n} \left\lVert \boldsymbol{y} - \sum_{j=1}^p \frac{\boldsymbol{X}_{\cdot j}}{\lVert \boldsymbol{X}_{\cdot j} \rVert_2} \beta_j \right\rVert_2^2 + \lambda \sum_{j=1}^p  |\beta_j| \right\}.
\end{equation}

\begin{theorem}\label{thm.result.sel} Let \(\boldsymbol{X}_{\cdot j}\), \(j \in [3]\) and \(\boldsymbol{y}\) be as defined in \eqref{thm.alt.spec.y} -- \eqref{thm.alt.spec}. For constants \(t_0 \in (0,1]\) (defined in Lemma \ref{lemma.ncvx.cor}) and \(c_2 \in \left(0, \frac{e-1}{8e^2}\right)\) (defined in Equation \ref{defn.c2}), assume \(n \geq 100\) is large enough to satisfy
%\begin{equation}\label{n.exp.size.req.max}
%n > \exp \left\{ \left(   \frac{100}{19 \log (100)}\sqrt{\frac{c_2}{2 + \sigma_\epsilon^2}}  +  \frac{10}{19} \sqrt{ 3 + \sigma_\epsilon^2 } \right)^4 \right\},
%\end{equation}
%%%% This is the "scariest-looking" assumption, but to see that it's reasonable in terms of the sample size required, see https://www.desmos.com/calculator/qohz3hnzju.
\begin{equation}\label{n.large.delta.cond.max}
\frac{n}{\log n} > \frac{5 + \sigma_\epsilon^2}{c_2} \cdot
\max \left\{ \frac{1}{4 t_0^2 (2 + \sigma_\epsilon^2 )^2}, 2 \left(12 + \sigma_\epsilon^2\right), 5\left(1 + \sigma_\epsilon^2 \right) \right\}
%%%%% Note: to see why we can't do without any of these conditions, examine https://www.desmos.com/calculator/piz14ehzfo.
\end{equation}
and
\begin{equation}\label{c4n.conds.max}
 \frac{n}{\left(\log n \right)^{3/2}}  > 3.61 \cdot \frac{5 + \sigma_\epsilon^2}{c_2} .
\end{equation}
Then
\begin{enumerate}[(i)]
\item the interval 
% \begin{equation}\label{samp.size.beta.min}
% n \log n  > \frac{9801}{\left(\beta_Z - 1\right)^2}.
% \end{equation}
% and
 \begin{equation}\label{cond.beta.z}
I(n) = \left( 1 + 10 \sigma_\zeta^2(n) , 1 +  \frac{19}{10}\sqrt{\frac{ 2 + \sigma_\epsilon^2 }{c_2}}   \frac{\left(\log n \right)^{3/4}}{n^{1/2}} \right)
% \left( \beta_Z - 1 \right)^2 \left( \beta_Z^2 + 1 + \sigma_\epsilon^2 \right)  < 2 + \sigma_\epsilon^2.
%\beta_Z \in \left(1 +  10 \cdot \sigma_\zeta^2(n) ,1 + \frac{19}{10} \sqrt{ \frac{2 + \sigma_\epsilon^2 }{ c_2 }}
% \frac{\left(\log n \right)^{3/4}}{n^{1/2}}\right]
% \underbrace{\frac{\sqrt{ 1 + \sigma_\zeta^2(n)} }{ 1 - 2 \delta(n) \sqrt{ 3 + \sigma_\epsilon^2 }}}_{*} < \beta_Z <  \underbrace{\frac{5}{5 -19 \left(\log n \right)^{1/4} \delta(n)}}_{**},
\end{equation}
is a nonempty subset of \((1,2)\), and
%(It turns out that these conditions imply \(\beta_Z\) is at most \( \beta_Z^{\text{max}} \) for all \(n\) satisfying our assumptions.) Then 
\item for any \(\beta_Z \in I(n)\), for the lasso path calculated as in \eqref{lasso.solution} there exists a finite constant \(c_{3} > 0\) (free of all parameters in the setup) such that the first two features to enter the lasso path are \(\boldsymbol{X}_{\cdot 1}\) followed by \(\boldsymbol{X}_{\cdot 3}\) with probability at least
\begin{align*}
 &   \frac{1}{2} -  c_{3}  \left(\underbrace{\beta_Z^2 + 1 + \sigma_\epsilon^2}_{= \Var(y_1)} \right)^{7/2} \frac{\left(\log n \right)^{3/2}}{n^{1/4}} .
 \end{align*}
%(In Lemma \ref{very.first.lemma}, we show that \eqref{cond.beta.z} is feasible in the sense that term \(**\) is strictly greater than term \(*\) for \(n \geq 100\).)
\end{enumerate}

\end{theorem}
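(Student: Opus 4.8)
The plan is to reduce the two-step lasso-path question to two comparisons of scalar statistics, exploit the exchangeability of the two proxies to extract the factor $\tfrac12$, and then reduce the whole probability bound to a single Gaussian anti-concentration (``near-tie'') estimate that produces the $n^{-1/4}$ rate. First I would record that, because there is only one active variable between the first and second entry times, the lasso coefficient of that variable moves monotonically and never returns to zero, so no variable is dropped before the second one enters; hence the first two entries of the lasso path in \eqref{lasso.solution} coincide with the first two steps of LARS. The first variable to enter is $\arg\max_j C_j$ with $C_j := |\tilde{\boldsymbol X}_{\cdot j}^\top \boldsymbol y|$ and $\tilde{\boldsymbol X}_{\cdot j} := \boldsymbol X_{\cdot j}/\lVert \boldsymbol X_{\cdot j}\rVert_2$; once a proxy (say $\boldsymbol X_{\cdot 1}$) has entered, the second variable minimizes the usual LARS step size, which for feature $k$ has leading form $\gamma_k \approx (C_1 - \tilde{\boldsymbol X}_{\cdot k}^\top\boldsymbol y)/(1 - \rho_{1k})$, where $\rho_{1k} := \tilde{\boldsymbol X}_{\cdot 1}^\top\tilde{\boldsymbol X}_{\cdot k}$ (after fixing signs, which hold on a high-probability event). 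Since the joint law of the data is invariant under swapping $(\boldsymbol\zeta_1,\boldsymbol X_{\cdot 1})\leftrightarrow(\boldsymbol\zeta_2,\boldsymbol X_{\cdot 2})$, the events ``$\boldsymbol X_{\cdot 1}$ first, $\boldsymbol X_{\cdot 3}$ second'' and ``$\boldsymbol X_{\cdot 2}$ first, $\boldsymbol X_{\cdot 3}$ second'' are equiprobable, so the target probability equals $\tfrac12\,\mathbb P(G)$ with $G := \{\text{a proxy enters first and } \boldsymbol X_{\cdot 3} \text{ enters second}\}$. It therefore suffices to show $\mathbb P(G^c) \le 2c_3(\Var(y_1))^{7/2}(\log n)^{3/2} n^{-1/4}$.

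Next I would bound $\mathbb P(G^c)$ by the two failure modes $\{C_3 > \max(C_1,C_2)\}$ (wrong first entry) and $\{\gamma_2 < \gamma_3\}$ (the second proxy beating $\boldsymbol X_{\cdot 3}$), together with the negligible sign-failure events. For the first mode I would invoke concentration of the quadratic forms $\lVert\boldsymbol Z\rVert_2^2,\lVert\boldsymbol X_{\cdot 3}\rVert_2^2,\boldsymbol Z^\top\boldsymbol X_{\cdot 3},\boldsymbol Z^\top\boldsymbol\epsilon$, etc.\ (this is where Lemma \ref{lemma.ncvx.cor} and the constant $c_2$ of \eqref{defn.c2} enter) to show that $C_1,C_2$ concentrate near $\beta_Z\sqrt n\,(1-\tfrac12\sigma_\zeta^2(n))$ and $C_3$ near $\sqrt n$; the left endpoint $\beta_Z > 1 + 10\sigma_\zeta^2(n)$ of \eqref{cond.beta.z} is exactly what guarantees a positive margin $\max(C_1,C_2)-C_3$ that survives these fluctuations, so this mode contributes a lower-order error.

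The crux, and the step I expect to be the main obstacle, is the second mode. On the concentration event one has $\gamma_3 \approx C_1 - C_3$ (because $\rho_{13}\approx 0$), while $\gamma_2 \approx (C_1-C_2)/(1-\rho_{12})$ with $1-\rho_{12}\approx \sigma_\zeta^2(n)$ tiny (this is precisely ``vote splitting''), so $\gamma_2 < \gamma_3$ can occur only in the near-tie event $\{0 < C_1 - C_2 \le (1-\rho_{12})(C_1-C_3)\}$. The key observation is that $C_1 - C_2 = (\tilde{\boldsymbol X}_{\cdot 1}-\tilde{\boldsymbol X}_{\cdot 2})^\top\boldsymbol y$ is, after conditioning on the design, a mean-zero Gaussian whose randomness is carried by $\boldsymbol\zeta_1-\boldsymbol\zeta_2$, which is independent of $\boldsymbol y$; its standard deviation is of order $\sigma_\zeta(n)\sqrt{\Var(y_1)}=\Theta(n^{-1/4}(\log n)^{-1/4})$, whereas the threshold $(1-\rho_{12})(C_1-C_3)$ is only of order $\sigma_\zeta^2(n)(\beta_Z-1)\sqrt n=\Theta(n^{-1/2}(\log n)^{1/4})$. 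A Gaussian small-ball (anti-concentration) bound then controls $\mathbb P(\text{near tie})$ by $\mathrm{threshold}/\mathrm{std}=\Theta(n^{-1/4}(\log n)^{1/2})$, the dominant term, producing the advertised $n^{-1/4}$ rate; the extra powers of $\log n$ and the factor $(\Var(y_1))^{7/2}$ arise from the high-moment/Gaussian-tail bounds needed to make the concentration of $C_1,C_2,C_3,\rho_{12},\rho_{13}$ and the random ratio $\gamma_2$ rigorous. The genuine difficulty is that $\gamma_2$ is a ratio whose numerator and denominator are both small, random, and on comparable scales, so the anti-concentration of $C_1-C_2$ must be applied only after freezing $\rho_{12}, C_1, C_3$ on the high-probability concentration event.

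Finally, part (i) is a direct computation: conditions \eqref{n.large.delta.cond.max}--\eqref{c4n.conds.max} are chosen so that the right endpoint of $I(n)$ lies below $2$ and above the left endpoint $1+10\sigma_\zeta^2(n)$, making $I(n)$ a nonempty subinterval of $(1,2)$. Combining this with the bound $\mathbb P(G^c)\le 2c_3(\Var(y_1))^{7/2}(\log n)^{3/2}n^{-1/4}$ and the identity $\mathbb P(\boldsymbol X_{\cdot 1}\text{ first},\,\boldsymbol X_{\cdot 3}\text{ second}) = \tfrac12\mathbb P(G)$ yields (ii).
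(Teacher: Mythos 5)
Your plan mirrors the paper's proof architecture almost step for step. The path characterization (first entry maximizes the normalized correlation; no variable is dropped before the second entry, which is the paper's Lemma \ref{lemma.x1.not.leave}; the second entry is decided by comparing knots of the form $(C_1-C_k)/(1-\hat R_{1k})$, which the paper derives from the KKT conditions in Lemma \ref{lemma.e.event}), the extraction of the factor $\tfrac12$ from exchangeability of the two proxies, the isolation of the near-tie event $\{0 < C_1 - C_2 \le (1-\hat R_{12})(C_1-C_3)\}$ (this is exactly the complement of the paper's $\tilde{\mathcal E}_1(n)$ with its deterministic threshold $\eta(n)$), and the use of sample-correlation concentration for everything else (the paper's event $\mathcal F_n$, Lemma \ref{lem.conc}) are all the same. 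Your rate calculus is also correct: threshold $\asymp n^{-1/2}(\log n)^{1/4}$ against standard deviation $\asymp n^{-1/4}(\log n)^{-1/4}$ gives $n^{-1/4}(\log n)^{1/2}$, which is in fact a sharper log power than the paper's $(\log n)^{3/2}$ (the paper bounds $1-\hat R_{12}$ on $\mathcal F_n$ by the generic slack $\delta(n)+\tilde\delta(n)\asymp\delta(n)$ rather than by $1-\rho_{12}(n)=\tilde\delta(n)$, which costs a factor $\log n$); either way the stated bound follows. Two of your side remarks are off but harmless: the wrong-first-entry mode is not lower order under the paper's tuning (the concentration failure probability is deliberately balanced at $30/n^{1/4}$), and the $(\log n)^{3/2}$, $(\Var(y_1))^{7/2}$ factors come from the loose freezing and the Berry--Esseen constants, as you essentially guess.

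The one genuine divergence is the instrument for the anti-concentration step, and as written that step would fail. You assert that $C_1-C_2=(\tilde{\boldsymbol X}_{\cdot 1}-\tilde{\boldsymbol X}_{\cdot 2})^\top\boldsymbol y$ is ``after conditioning on the design, a mean-zero Gaussian whose randomness is carried by $\boldsymbol\zeta_1-\boldsymbol\zeta_2$.'' Conditioning on the design leaves nothing random in $\tilde{\boldsymbol X}_{\cdot 1}-\tilde{\boldsymbol X}_{\cdot 2}$; the conditioning you actually want is on $(\boldsymbol Z,\boldsymbol X_{\cdot 3},\boldsymbol\epsilon,\boldsymbol\zeta_1+\boldsymbol\zeta_2)$, under which the unnormalized difference $(\boldsymbol X_{\cdot 1}-\boldsymbol X_{\cdot 2})^\top\boldsymbol y=(\boldsymbol\zeta_1-\boldsymbol\zeta_2)^\top\boldsymbol y$ is indeed exactly Gaussian. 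But $C_1-C_2$ is self-normalized: writing $\boldsymbol M=\boldsymbol Z+(\boldsymbol\zeta_1+\boldsymbol\zeta_2)/2$, the norms satisfy $\lVert\boldsymbol X_{\cdot 1}\rVert_2^2-\lVert\boldsymbol X_{\cdot 2}\rVert_2^2=2\boldsymbol M^\top(\boldsymbol\zeta_1-\boldsymbol\zeta_2)$, so the denominators depend on $\boldsymbol\zeta_1-\boldsymbol\zeta_2$ and $C_1-C_2$ is a nonlinear functional of the Gaussian vector, not Gaussian under any natural conditioning. Anti-concentration can only be applied to its linearization, and the linearization error must then be bounded quantitatively; this is precisely what the paper's key technical input supplies --- Pinelis's finite-sample Berry--Esseen theorem for the delta method (Theorem \ref{pinelis.thm.2.9}, deployed via Proposition \ref{lemma.prob.a1.eta}) --- and in the paper's accounting that error, $\mathcal C/\sqrt n$ with $\mathcal C\asymp 1/\tilde\sigma\asymp(n\log n)^{1/4}$, is of the same $n^{-1/4}$ polynomial order as the near-tie term, so it cannot be dismissed as a lower-order rigor tax. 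Your conditional-Gaussian route can be repaired (condition as above, apply anti-concentration to the exactly Gaussian term $(\boldsymbol\zeta_1-\boldsymbol\zeta_2)^\top P^\perp_{\boldsymbol M}\boldsymbol y/\lVert\boldsymbol M\rVert_2$, and absorb the nonlinear remainder into an enlarged threshold on a concentration event), and would then be more self-contained than citing Pinelis; but the quantitative control of that remainder is the missing ingredient, not a footnote.
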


\begin{proof}See Appendix \ref{thm.result.sel.proof}.
\end{proof}
By exchangeability, the event where the first two features to enter the lasso path are \(\boldsymbol{X}_{\cdot 2}\) followed by \(\boldsymbol{X}_{\cdot 3}\) occurs with equal probability. Therefore this result implies that for any \(\lambda\) between the second and third knot of the lasso path, with high probability the two selected features will be the weak signal feature \(\boldsymbol{X}_{\cdot 3}\) and one of the low-noise proxy features \(\boldsymbol{X}_{\cdot 1}\) or \(\boldsymbol{X}_{\cdot 2}\). 
%Therefore the probability that \(\boldsymbol{X}_{\cdot 3}\) is selected second is at least twice this probability.

%\subsubsection{Consequences for Stability Selection}

Theorem \ref{thm.result.sel} leads us to the problem with stability selection using the lasso in this setting: the probability that any one proxy will be chosen by the lasso is lower than the probability of choosing the weak signal feature. We summarize this observation in the following result.

\begin{corollary}\label{cor.stab.sel2}
%\ref{thm.result.sel}
In the setting of Theorem \ref{thm.result.sel}, consider applying stability selection with the base procedure defined as follows: the selected set on each iteration is the first two features to enter the lasso path. Assume \(\beta_Z\) satisfies the assumptions of Theorem \ref{thm.result.sel} with \(\lfloor n/2 \rfloor\) large enough to satisfy the sample size requirements of Theorem \ref{thm.result.sel}. Then there exists a constant \(c_4 > 0\) defined in \eqref{def.c14} such that
\[
\E \left[ \hat{\Pi}^{\text{(MB)}}(j)  \right] = \E \left[ \hat{\Pi}_B^{\text{(SS)}}(j)  \right]  \leq  \frac{1}{2} + c_4 \left(\beta_Z^2 + 1 + \sigma_\epsilon^2 \right)^{7/2} \frac{\left(\log n \right)^{3/2}}{n^{1/4}}
, \qquad j \in [2],
\]
and
\[
\E \left[ \hat{\Pi}^{\text{(MB)}}(3)  \right] = \E \left[ \hat{\Pi}_B^{\text{(SS)}}(3)  \right]     \geq  1 - 2 c_4 \left(\beta_Z^2 + 1 + \sigma_\epsilon^2 \right)^{7/2} \frac{\left(\log n \right)^{3/2}}{n^{1/4}}.
\]

\end{corollary}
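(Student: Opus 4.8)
The plan is to reduce the whole statement to a single-fit bound obtained from Theorem~\ref{thm.result.sel} at the reduced sample size $m := \lfloor n/2 \rfloor$, after first disposing of the claimed equality of expectations. The key preliminary observation is that the base procedure here—``return the first two features to enter the lasso path''—does not depend on $\lambda$, so the maximization over $\Lambda$ defining $\hat{\Pi}^{\text{(MB)}}$ is vacuous, and both $\hat{\Pi}^{\text{(MB)}}(j)$ and $\hat{\Pi}_B^{\text{(SS)}}(j)$ are averages of the single indicator $\mathbbm{1}\{j \in \hat{S}(A)\}$ over subsamples $A$ of size $m$. Because the $n$ observations are i.i.d., any size-$m$ subsample is distributed exactly as $m$ fresh i.i.d. draws, so $\E[\mathbbm{1}\{j \in \hat{S}(A)\}] = p_{j,m,\lambda}$ for every subsample appearing in either estimator; since averaging preserves a common expectation, $\E[\hat{\Pi}^{\text{(MB)}}(j)] = \E[\hat{\Pi}_B^{\text{(SS)}}(j)] = p_{j,m,\lambda}$. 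This proves the equalities and reduces the corollary to upper/lower bounding the single-fit selection probabilities $p_{j,m,\lambda} = \mathbb{P}(j \in \hat{S})$ at sample size $m$.

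Next I would invoke Theorem~\ref{thm.result.sel} at sample size $m$ (the hypotheses guarantee $m$ meets its sample-size requirements). Write $\delta_m := c_3 (\beta_Z^2 + 1 + \sigma_\epsilon^2)^{7/2} (\log m)^{3/2}/m^{1/4}$, let $p_{13}$ denote the probability that the first two features to enter are $\boldsymbol{X}_{\cdot 1}$ then $\boldsymbol{X}_{\cdot 3}$, and let $G$ be the event that the first two are $(\boldsymbol{X}_{\cdot 1},\boldsymbol{X}_{\cdot 3})$ or $(\boldsymbol{X}_{\cdot 2},\boldsymbol{X}_{\cdot 3})$. The theorem gives $p_{13} \ge \tfrac12 - \delta_m$, and the exchangeability of $\boldsymbol{X}_{\cdot 1}$ and $\boldsymbol{X}_{\cdot 2}$ (noted after the theorem) makes these two disjoint orderings equiprobable, so $\mathbb{P}(G) = 2p_{13} \ge 1 - 2\delta_m$. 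Disjointness together with $\mathbb{P}(G) \le 1$ also forces $p_{13} = \tfrac12\mathbb{P}(G) \le \tfrac12$.

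The two required bounds then follow. For feature $3$, $G \subseteq \{3 \in \hat{S}\}$, so $p_{3,m,\lambda} \ge \mathbb{P}(G) \ge 1 - 2\delta_m$. For feature $1$ (and symmetrically $2$), I would condition on $G$: on $G$ we have $1 \in \hat{S}$ precisely when the ordering is $(\boldsymbol{X}_{\cdot 1},\boldsymbol{X}_{\cdot 3})$, so $\mathbb{P}(1 \in \hat{S}, G) = p_{13} = \tfrac12\mathbb{P}(G)$, while $\mathbb{P}(1 \in \hat{S}, G^c) \le \mathbb{P}(G^c) = 1 - \mathbb{P}(G)$; adding yields $p_{1,m,\lambda} \le 1 - \tfrac12\mathbb{P}(G) \le \tfrac12 + \delta_m$. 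This asymmetry—coefficient $\delta_m$ for features $1,2$ versus $2\delta_m$ for feature $3$—is exactly what produces the $c_4$ versus $2c_4$ in the statement, and obtaining the sharp $\tfrac12 + \delta_m$ (rather than the cruder $\tfrac12 + 2\delta_m$) depends on using the equality $p_{13} = \tfrac12\mathbb{P}(G)$ as the slack rather than bounding the whole of $G$ away.

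Finally, I would convert from $m$ to $n$: since $m = \lfloor n/2 \rfloor \ge n/4$ and $\log m \le \log n$, we have $(\log m)^{3/2}/m^{1/4} \le 4^{1/4}(\log n)^{3/2}/n^{1/4}$, so defining $c_4 := 4^{1/4} c_3$ turns $\delta_m$ into $c_4(\beta_Z^2 + 1 + \sigma_\epsilon^2)^{7/2}(\log n)^{3/2}/n^{1/4}$ and gives the stated inequalities. The only genuinely delicate step is the first one: arguing carefully that the essentially deterministic $\hat{\Pi}^{\text{(MB)}}$ (an average over all $\binom{n}{m}$ subsamples) and the random complementary-pairs estimator $\hat{\Pi}_B^{\text{(SS)}}$ share the single common expectation $p_{j,m,\lambda}$, which rests entirely on the i.i.d. structure making every size-$m$ subsample distributionally a fresh sample. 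The remaining combinatorics over the three relevant lasso-path orderings are routine once Theorem~\ref{thm.result.sel} is in hand.
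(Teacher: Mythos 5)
Your proposal is correct and follows essentially the same route as the paper's proof: reduce both estimators' expectations to the single-fit selection probabilities at sample size $\lfloor n/2 \rfloor$ via linearity and the i.i.d.\ structure, apply Theorem \ref{thm.result.sel} at that sample size together with exchangeability of $\boldsymbol{X}_{\cdot 1}$ and $\boldsymbol{X}_{\cdot 2}$ to bound those probabilities (the paper bounds $\mathbb{P}(1 \in \hat{S})$ by $1 - \mathbb{P}(\{2 \text{ first, then } 3\})$, which is numerically identical to your conditioning on $G$), and then convert the rate from $\lfloor n/2 \rfloor$ to $n$ by absorbing a constant into $c_4$. Your constant $4^{1/4}c_3$ differs from the paper's $c_4 = 81c_3$ in \eqref{def.c14}, but this is harmless since a bound with a smaller constant implies the stated one.
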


\begin{proof} See Appendix \ref{other.other.proofs}. \end{proof}

That is, as \(n \to \infty\) the selection proportion for \(\boldsymbol{X}_{\cdot 3}\) yielded by stability selection tends to 1 and the selection proportions of \(\boldsymbol{X}_{\cdot 1}\) and \(\boldsymbol{X}_{\cdot 2}\) tend towards \(1/2\), so stability selection tends to order the features sub-optimally for out-of-sample predictive performance.

\subsection{Low-Noise Proxies Are Good Selections For Prediction}\label{sec.prox.impt}

A practitioner who is strictly interested in features that appear in the true data-generating process may be uninterested in low-noise proxies for important latent features, but we will show that they are useful for out-of-sample predictive performance. 

We will make use of a more general data-generating process than \eqref{thm.alt.spec.y}. Suppose \(q\) proxies are observed, 
\begin{equation}\label{thm.alt.spec.x.gen}
\boldsymbol{X}_{\cdot j} :=\boldsymbol{Z} + \boldsymbol{\zeta}_{j} , \qquad j \in [q]  ,
\end{equation}
with (possibly different) noise variances: \( \boldsymbol{\zeta}_{j} \sim \mathcal{N} \left(0, \sigma_{\zeta j}^2 \boldsymbol{I}_n \right), j \in [q]\). Suppose
\begin{equation}\label{thm.alt.spec.gen}
(Z_i, X_{i, q+1}, \ldots, X_{i p}, \epsilon_i, \zeta_{i1}, \ldots, \zeta_{iq}), \qquad i \in [n]
\end{equation}
are independent Gaussian random variables, and
\begin{equation}\label{thm.alt.spec.y.gen}
\boldsymbol{y}  = \beta_Z \boldsymbol{Z} + \sum_{j = q+1}^p \beta_j \boldsymbol{X}_{\cdot j} + \boldsymbol{\epsilon},
\end{equation}
with \(\beta_{q+1}, \ldots, \beta_p \in \mathbb{R}\). Now we will define a notion of prediction risk that we will use to compare features as selections.

\begin{definition}[Prediction risk of a single feature]\label{def.risk.ind}

Assume the setup of \eqref{thm.alt.spec.x.gen}, \eqref{thm.alt.spec.gen}, and \eqref{thm.alt.spec.y.gen}. Let \(\boldsymbol{\tilde{Z}}, \boldsymbol{\tilde{X}}_{\cdot q + 1}, \ldots, \boldsymbol{\tilde{X}}_{\cdot p}, \tilde{\boldsymbol{\epsilon}}\), and \(\tilde{\boldsymbol{\zeta}}_1, \ldots, \tilde{\boldsymbol{\zeta}}_q\) be i.i.d. copies of the corresponding variables, and define \(\tilde{\boldsymbol{y}}\) and \(\tilde{\boldsymbol{X}}_{\cdot 1}, \ldots, \tilde{\boldsymbol{X}}_{\cdot q}\) analogously. The \textbf{prediction risk} \(R(j)\) of a model using only feature \(j\) is the expected out-of-sample mean squared error of the predictions:
\[
R(j) := \mathbb{E} \left[ \frac{1}{n}  \left\lVert \tilde{\boldsymbol{y}} - \hat{\beta}_j \tilde{\boldsymbol{X}}_{\cdot j}  \right\rVert_2^2\right],
\]
where \(\hat{\beta}_j =  \left. \boldsymbol{X}_{\cdot j}^\top \boldsymbol{y} \middle/ \boldsymbol{X}_{\cdot j}^\top \boldsymbol{X}_{\cdot j} \right. \) is the ordinary least squares (OLS) coefficient.

\end{definition} 
We show that proxies are better selections than directly observed features (in the sense of reducing this prediction risk) if the signal strength of the latent feature is large enough and the noise added to the latent feature is not too large.
\begin{proposition}\label{prop.gen.pred.risk}
Assume\footnote{It is worth noting that this result readily generalizes further to cases of arbitrary numbers of latent signals and so on, but this becomes both notationally inconvenient and beyond the scope of what is needed in this paper.} the setup of \eqref{thm.alt.spec.x.gen}, \eqref{thm.alt.spec.gen}, and \eqref{thm.alt.spec.y.gen}. Then for any \(j \in [q]\) and any \(k \in \{q+1, \ldots, p\}\), 
\[
R(j) < R(k) \qquad \iff \qquad  \frac{\beta_Z^2}{ \beta_{k}^2 } >  1+ \sigma_{\zeta j}^2  .
\]
\end{proposition}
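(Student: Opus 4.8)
The plan is to derive an exact closed form for the risk $R(m)$ that holds for \emph{every} feature index $m$ (proxy or directly observed), show it is an increasing affine function of the population residual variance $\tau_m^2 := \sigma_y^2 - c_m^2/v_m$ where $c_m := \Cov(y_1, X_{1m})$, $v_m := \Var(X_{1m})$, and $\sigma_y^2 := \Var(y_1)$, and then read off the stated equivalence by computing $c_m$ and $v_m$ from the model. The first step is to reduce the risk to a single test observation: since $\hat\beta_j$ is a function of the training data alone and is therefore independent of the i.i.d.\ test copies, and since the test rows are i.i.d., the average over the $n$ test points collapses and $R(j) = \E[(\tilde y_1 - \hat\beta_j \tilde X_{1j})^2]$. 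Expanding the square and using independence of $\hat\beta_j$ from $(\tilde y_1, \tilde X_{1j})$ gives $R(j) = \sigma_y^2 - 2 c_j \E[\hat\beta_j] + v_j \E[\hat\beta_j^2]$, so the whole problem reduces to the first two moments of the OLS slope $\hat\beta_j = \boldsymbol{X}_{\cdot j}^\top \boldsymbol{y} / \lVert \boldsymbol{X}_{\cdot j}\rVert_2^2$.

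Next I would compute those moments by conditioning on the training column $\boldsymbol{X}_{\cdot j}$, which is the key device for handling the ratio. Conditionally on $\boldsymbol{X}_{\cdot j}$, the vector $\boldsymbol{y}$ is Gaussian with $\E[\boldsymbol{y}\mid\boldsymbol{X}_{\cdot j}] = (c_j/v_j)\,\boldsymbol{X}_{\cdot j}$ (a coordinatewise Gaussian regression of $y_i$ on $X_{ij}$) and, crucially, with $\Var(y_i\mid X_{ij})$ equal to the \emph{constant} $\tau_j^2 = \sigma_y^2 - c_j^2/v_j$, the same for every $i$ and free of the conditioning value, because every remaining variable in the model is an independent Gaussian. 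Hence $\E[\hat\beta_j\mid\boldsymbol{X}_{\cdot j}] = c_j/v_j$ and $\Var(\hat\beta_j\mid\boldsymbol{X}_{\cdot j}) = \tau_j^2 / \lVert\boldsymbol{X}_{\cdot j}\rVert_2^2$. Taking expectations, using $\lVert\boldsymbol{X}_{\cdot j}\rVert_2^2/v_j \sim \chi^2_n$ together with $\E[1/\chi^2_n] = 1/(n-2)$ (which is where $n \ge 3$ is needed), yields $\E[\hat\beta_j] = c_j/v_j$ and $\E[\hat\beta_j^2] = (c_j/v_j)^2 + \tau_j^2/(v_j(n-2))$.

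Substituting these into the expansion and simplifying, the $c_j^2/v_j$ terms combine with $\sigma_y^2$ to reproduce $\tau_j^2$, giving the clean identity $R(j) = \tfrac{n-1}{n-2}\,\tau_j^2$ with the same proportionality constant for all features. Therefore $R(j) < R(k) \iff \tau_j^2 < \tau_k^2 \iff c_j^2/v_j > c_k^2/v_k$. It then remains to evaluate these model quantities: the independence structure and the normalizations $\Var(Z_1) = \Var(X_{1k}) = 1$ give, for a proxy $j\in[q]$, $c_j = \beta_Z$ and $v_j = 1 + \sigma_{\zeta j}^2$, hence $c_j^2/v_j = \beta_Z^2/(1+\sigma_{\zeta j}^2)$; and for a direct feature $k \in \{q+1,\dots,p\}$, $c_k = \beta_k$ and $v_k = 1$, hence $c_k^2/v_k = \beta_k^2$. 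The inequality $c_j^2/v_j > c_k^2/v_k$ is then exactly $\beta_Z^2/(1+\sigma_{\zeta j}^2) > \beta_k^2$, i.e.\ $\beta_Z^2/\beta_k^2 > 1 + \sigma_{\zeta j}^2$, as claimed.

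The main obstacle is the exact second moment of the ratio $\hat\beta_j$; the conditioning-on-$\boldsymbol{X}_{\cdot j}$ step is what defuses it, by linearizing $\hat\beta_j$ in the Gaussian $\boldsymbol{y}$ and exposing the homoscedasticity that makes the conditional variance proportional to $\lVert\boldsymbol{X}_{\cdot j}\rVert_2^{-2}$; after that the only genuinely non-elementary computation is the inverse chi-square moment. Everything else is bookkeeping of covariances under the independent-Gaussian design.
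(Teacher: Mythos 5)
Your proof is correct, and its computational engine coincides with the paper's: expand the risk using independence of $\hat\beta_j$ from the test copies, condition on the training column $\boldsymbol{X}_{\cdot j}$ so that homoscedastic Gaussian regression gives the conditional mean and a conditional variance proportional to $1/\lVert\boldsymbol{X}_{\cdot j}\rVert_2^2$, then apply $\E[1/\chi^2_n]=1/(n-2)$; this is exactly what the paper's Lemmas \ref{dist.cond.one.sub.e.mse.lem} and \ref{pred.risk.lemma.x1} do. The genuine difference is organizational, and your route is tighter. You regress $\boldsymbol{y}$ on $\boldsymbol{X}_{\cdot m}$ directly and obtain one identity, $R(m)=\frac{n-1}{n-2}\bigl(\Var(y_1)-\Cov(y_1,X_{1m})^2/\Var(X_{1m})\bigr)$, valid simultaneously for proxies and directly observed features; the stated equivalence then reduces to monotonicity of $R$ in the explained variance $c_m^2/v_m$, with the model entering only through covariance bookkeeping at the very end. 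The paper instead conditions $\boldsymbol{Z}$ on $\boldsymbol{X}_{\cdot j}$ and propagates through the structural equation, yielding two separate risk formulas (Lemma \ref{pred.risk.lemma.x1}(i) for proxies and (ii) for direct features), each proved first in the special case $p=q+1$ and then extended to general $p$ by absorbing the remaining signal features into an enlarged noise vector $\tilde{\boldsymbol{\epsilon}}$. The intermediate quantities agree exactly --- your $c_j/v_j$ and $\tau_j^2$ are the paper's conditional mean $\beta_Z/(1+\sigma_{\zeta j}^2)$ and conditional-variance numerator $\frac{\beta_Z^2\sigma_{\zeta j}^2}{1+\sigma_{\zeta j}^2}+\sum_{j'>q}\beta_{j'}^2+\sigma_\epsilon^2$ --- but your unified formula eliminates both the casework and the $p=q+1$ reduction, and makes transparent why the common factor $\frac{n-1}{n-2}$ is irrelevant to the comparison. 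One small point to make explicit: \eqref{thm.alt.spec.gen} leaves the marginal variances implicit, so you should state (as your final bookkeeping requires, and as the paper's lemmas also implicitly assume) that $\Var(Z_1)=\Var(X_{1k})=1$.
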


Proposition \ref{prop.gen.pred.risk} implies that in the setting of Theorem \ref{thm.result.sel}, \(\boldsymbol{X}_{\cdot 1}\) or \(\boldsymbol{X}_{\cdot 2}\) is a better choice for a model of size 1 than \(\boldsymbol{X}_{\cdot 3}\):
\begin{corollary}\label{cor.thm.1.risk}
Under the assumptions of Theorem \ref{thm.result.sel}, \(R(1) =R(2) < R(3)\).
\end{corollary}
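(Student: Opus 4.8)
The plan is to recognize the setting of Theorem \ref{thm.result.sel} as a special case of the general model \eqref{thm.alt.spec.x.gen}--\eqref{thm.alt.spec.y.gen} and then invoke Proposition \ref{prop.gen.pred.risk} directly. Specifically, taking $q = 2$ and $p = 3$, the response \eqref{thm.alt.spec.y} is exactly \eqref{thm.alt.spec.y.gen} with a single directly observed signal feature $\boldsymbol{X}_{\cdot 3}$ carrying coefficient $\beta_3 = 1$, and the two proxies have common noise variance $\sigma_{\zeta 1}^2 = \sigma_{\zeta 2}^2 = \sigma_\zeta^2(n)$.

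First I would establish $R(1) = R(2)$. Because $\boldsymbol{X}_{\cdot 1}$ and $\boldsymbol{X}_{\cdot 2}$ are exchangeable---each is $\boldsymbol{Z}$ plus independent Gaussian noise of the same variance $\sigma_\zeta^2(n)$, and both enter the joint law with $\boldsymbol{y}$ symmetrically---the prediction risk of Definition \ref{def.risk.ind} is invariant under swapping them, giving $R(1) = R(2)$ immediately.

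Next I would apply Proposition \ref{prop.gen.pred.risk} with $j \in \{1,2\}$ and $k = 3$. Since $\beta_k = \beta_3 = 1$ and $\sigma_{\zeta j}^2 = \sigma_\zeta^2(n)$, the stated equivalence reduces the claim $R(j) < R(3)$ to verifying $\beta_Z^2 > 1 + \sigma_\zeta^2(n)$. This is the only quantitative step, and it follows from the lower endpoint of the interval $I(n)$ in \eqref{cond.beta.z}: any $\beta_Z \in I(n)$ satisfies $\beta_Z > 1 + 10\sigma_\zeta^2(n) > 1$, so $\beta_Z^2 > (1 + 10\sigma_\zeta^2(n))^2 = 1 + 20\sigma_\zeta^2(n) + 100\sigma_\zeta^4(n) > 1 + \sigma_\zeta^2(n)$, where the final inequality holds because $20 > 1$ and all remaining terms are nonnegative.

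There is essentially no obstacle here: the corollary is a direct consequence of Proposition \ref{prop.gen.pred.risk}, and the only content is checking the numerical inequality, which the assumption $\beta_Z > 1 + 10\sigma_\zeta^2(n)$ makes slack by a comfortable margin. The coefficient of $10$ (rather than, say, $1$) in the definition of $I(n)$ is precisely what guarantees the strict inequality cleanly, so no delicate estimation is required.
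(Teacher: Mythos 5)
Your proposal is correct, and its skeleton matches the paper's proof exactly: both reduce the claim via Proposition \ref{prop.gen.pred.risk} (with \(\beta_3 = 1\), \(\sigma_{\zeta 1}^2 = \sigma_{\zeta 2}^2 = \sigma_\zeta^2(n)\)) to the scalar inequality \(\beta_Z^2 > 1 + \sigma_\zeta^2(n)\), and both dispatch \(R(1) = R(2)\) by exchangeability. Where you differ is in how that inequality is verified. The paper invokes Lemma \ref{cond.beta.z.lemma}, which gives \(\beta_Z > \sqrt{1+\sigma_\zeta^2(n)}\big/\bigl(1 - 2\delta(n)\sqrt{3+\sigma_\epsilon^2}\bigr)\), and then notes that the denominator lies strictly between 0 and 1 (using Lemma \ref{lemma.delt.ineq}); this piggybacks on machinery already built for the proof of Theorem \ref{thm.result.sel}, but it drags in \(\delta(n)\), the sample-size conditions, and the supporting lemmas behind Lemma \ref{cond.beta.z.lemma}. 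You instead work directly from the lower endpoint of \(I(n)\) in \eqref{cond.beta.z}: \(\beta_Z > 1 + 10\sigma_\zeta^2(n) > 0\) gives \(\beta_Z^2 > 1 + 20\sigma_\zeta^2(n) + 100\sigma_\zeta^4(n) > 1 + \sigma_\zeta^2(n)\). This is more elementary and more self-contained---it uses nothing beyond the raw assumption \(\beta_Z \in I(n)\) and makes transparent that the factor of 10 in the definition of \(I(n)\) is what makes the inequality slack---whereas the paper's route reuses an existing lemma at the cost of a less direct chain of dependencies. Both are complete proofs; yours is arguably the cleaner one for a reader who wants this corollary in isolation.
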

(For the proofs of these results, see Appendix \ref{risk.proofs}.) Together, these results show that in the setting of Theorem \ref{thm.result.sel}, stability selection provides a worse ranking of the features for predictive performance than the lasso. In the next section we will introduce our new method and show that it performs better in this setting.
\section{Cluster Stability Selection}\label{gen.stab.sel}

In this section we detail our proposed procedure, \textit{cluster stability selection}.

\subsection{Description of Method}

 Although our procedure allows for responses in general spaces \(\mathcal{Y}\) and any base feature selection method that can be characterized as in \eqref{sel.proced}, we focus on \(\mathcal{Y} = \mathbb{R}\) and using the lasso for feature selection. 

Our procedure requires a partitioning of the features into clusters \(\mathcal{C} = \{C_1, \ldots, C_K\}\)
%\[
%\mathcal{C} = \{C_1, \ldots, C_K\}, \qquad \text{where} \qquad \bigcup_{k=1}^K C_k = [p] \qquad \text{and} \qquad C_k \cap C_{k'} = \emptyset \ \forall k \neq k'
%\]
as an input. The clusters may be known from domain knowledge, but if not, they can be estimated by methods including hierarchical clustering as well as those methods proposed by \citet{Bondell2008, She2010, Shen2010, Buhlmann2013, Sharma} and \citet{Witten2014}. 

In brief, cluster stability selection returns a ranking of clusters rather than a ranking of features. We use complementary pairs subsampling similar to the estimator \eqref{ss.ss.est} proposed by \citet{shah_samworth_2012}, except that we allow for an arbitrary finite set of parameters \(\Lambda\). We calculate the individual selection proportions for each feature \(j \in [p]\) for any \(\lambda \in \Lambda\)
\begin{equation}\label{sample.prop}
\hat{\Pi}_B(j) := \frac{1}{2B} \sum_{b=1}^B \left[ \mathbbm{1}\left\{   j \in \bigcup_{\lambda \in \Lambda} \hat{S}^\lambda \left( A_b \right) \right\} + \mathbbm{1}\left\{  j \in  \bigcup_{\lambda \in \Lambda} \hat{S}^\lambda (\overline{A}_b)\right\} \right] ,
\end{equation}
%which is equal to \(\hat{\Pi}_B^{\text{(SS)}}(j) \) when \(| \Lambda| = 1\).
and for every \(k \in [K]\) we calculate the proportion of subsets in which at least one feature from \(C_k\) is selected for at least one \(\lambda \in \Lambda\):
\begin{equation}\label{theta.hat.exp}
\hat{\Theta}_B(C_k) :=  \frac{1}{2B} \sum_{b=1}^B \left[  \mathbbm{1} \left\{    C_k \cap \bigcup_{\lambda \in \Lambda} \hat{S}^{\lambda}\left(A_b \right) \neq \emptyset   \right\} +  \mathbbm{1} \left\{   C_k \cap \bigcup_{\lambda \in \Lambda} \hat{S}^{\lambda}\left(\overline{A}_b \right) \neq \emptyset  \right\} \right]  .
\end{equation} 
Having estimated importance measures \(\hat{\Theta}_B(C_k)\) for each cluster, we construct \textit{cluster representatives} \(\boldsymbol{X}_{\cdot C_k}^{\text{rep}}\) for regression tasks by taking weighted averages of the cluster members,
\begin{equation}\label{synth.proxy.def}
\boldsymbol{X}_{\cdot C_k}^{\text{rep}}:=  \sum_{j \in C_k } w_{kj} \boldsymbol{X}_{\cdot j} ,
\end{equation}
where \(\boldsymbol{w}_k = \begin{pmatrix} w_{k j} \end{pmatrix}_{j \in C_k} \in \Delta^{|C_k| - 1}\) are weights constructed for each cluster from the individual feature selection proportions \(\hat{\Pi}_B(j)\) in one of the following ways:

\begin{itemize}

\item \textbf{Weighted averaged cluster stability selection:}
\begin{equation}\label{weight.avg.weights}
w_{kj} = \frac{\hat{\Pi}_B(j)}{\sum_{j' \in C_k} \hat{\Pi}_B(j')} \qquad \forall j \in C_k  .
\end{equation}

\item \textbf{Simple averaged cluster stability selection:}  

\begin{equation}\label{simp.avg.weights}
w_{kj} = \frac{1}{\left| C_k \right|}  \qquad \forall j \in C_k .
\end{equation}

\item \textbf{Sparse cluster stability selection:} For each \(j \in C_k \), 
\begin{equation}\label{sparse.weights}
w_{kj} = \left. \mathbbm{1} \left\{ j \in \underset{j' \in C_k}{\arg \max}  \left\{ \hat{\Pi}_B(j') \right\} \right\} \middle/ \left| \underset{j' \in C_k}{\arg \max}  \left\{ \hat{\Pi}_B(j')  \right\} \right| \right. .
\end{equation}

\end{itemize}
Briefly, the first proposal assigns weights to each feature in the cluster in proportion to how frequently they were selected, the second proposal assigns equal weight to each feature in the cluster, and the third proposal assigns equal weight to all of the most frequently selected individual cluster members, and 0 weight to the rest (in particular, if one cluster member was selected most frequently with no ties, weight 1 is assigned to that cluster member.) We interpret cluster stability selection as having rejected any features that are assigned weight 0 in whatever weighting scheme is used. In Section \ref{sec.weight.choices} we will discuss each of the weighting schemes in more detail.

Because each cluster member is a representation of the same latent signal, estimating a regression coefficient for each individual selected cluster member separately would result in extra degrees of freedom that might hurt predictive performance. Instead, we interpret \(\boldsymbol{X}_{\cdot C_k}^{\text{rep}}\) as representing our best approximation of the lurking signal and recommend using this one feature rather than the raw selected features.

Finally, to obtain a selected set, one could pre-select a threshold selection proportion \(\tau\) \citep{meinshausen-2010, shah_samworth_2012}, or the practitioner could  interpret the output as a set of at most \(\min\{K, 2B + 1\}\) candidate selected sets defined by 
\[
\bigcup_{k \in \mathcal{S}_i} C_k \qquad \text{where} \qquad \mathcal{S}_i  := \left\{k : \hat{\Theta}_B(C_k) \geq 1 -  \frac{i}{2B} \right\}  , \qquad i \in \{0, \ldots, 2B\}.
\]
In particular, one could pre-specify a desired model size \(s\) and use the features with the \(s\) highest sample proportions \citep{Kim2019}.

Note that \(\mathcal{C} = \{ \{1\}, \ldots, \{p\}\}\) is within our framework. In this special case, our proposal reduces to a a more powerful modified version of stability selection in its handling of the set $\Lambda$, in the sense that
\begin{align*}
& \frac{1}{2B} \sum_{b=1}^B \left[  \mathbbm{1} \left\{   j \in  \bigcup_{\lambda \in \Lambda}\hat{S}^{\lambda}\left(A_b \right)  \right\}  +  \mathbbm{1} \left\{   j \in \bigcup_{\lambda \in \Lambda} \hat{S}^{\lambda}\left(\overline{A}_b \right) \right\} \right] 
\\ \geq ~ & \max_{\lambda \in \Lambda} \left\{ \frac{1}{2B} \sum_{b=1}^B \left[  \mathbbm{1} \left\{   j \in \hat{S}^{\lambda}\left(A_b \right)  \right\} +  \mathbbm{1} \left\{ j \in \hat{S}^{\lambda}\left(\overline{A}_b \right) \right\} \right]  \right\},
\end{align*}
so for a fixed threshold \(\tau\), more features will exceed \(\tau\) in our proposal than in a complementary pairs subsampling version of the \citet{meinshausen-2010} proposal. 

Algorithm \ref{css.algorithm} spells out the procedure.

\begin{algorithm}[H]\label{css.algorithm}
 \SetKwInOut{Input}{input}\SetKwInOut{Output}{output}
\Input{Data \((\boldsymbol{X}, \boldsymbol{y}) \in \mathbb{R}^{n \times p} \times \mathcal{Y}^n\);
 \\ a selection method \(\hat{S}^{\lambda}\), as in \eqref{sel.proced}; 
\\  a set of tuning parameters \(\Lambda\);
\\ a number of subsamples \(B \in \mathbb{N}\); 
\\ clusters \(\mathcal{C} = \{C_1, \ldots, C_K\}\) partitioning \([p]\); 
\\ a selected weighting scheme; one of \eqref{weight.avg.weights}, \eqref{simp.avg.weights}, or \eqref{sparse.weights};
\\  (optionally) a selection threshold \(\tau\);
%\\  (optionally) a desired number of selected clusters \(s\).
}

 initialization\;
 \For{$b\leftarrow 1$ \KwTo $B$}{
\(A_b \leftarrow \text{a random subsample of } [n] \text{ of size } \lfloor n/2 \rfloor  \) \;
\(\overline{A}_b \leftarrow \text{a random subsample of } [n] \setminus A_b \text{ of size } \lfloor n/2 \rfloor  \) \;
compute \(\hat{S}^{\lambda}(A_b)\) and \(\hat{S}^{\lambda}(\overline{A}_b)\) \;
}
 \For{$j\leftarrow 1$ \KwTo $p$}{
calculate \(\hat{\Pi}_B(j)\) as in \eqref{sample.prop} \;
}
 \For{$k\leftarrow 1$ \KwTo $K$}{
calculate \(\hat{\Theta}_B(C_k)\) as in \eqref{theta.hat.exp} \;
calculate the weights \(\boldsymbol{w}_k\) using the chosen weighting scheme \;
compute the cluster representative \(\boldsymbol{X}_{\cdot C_k}^{\text{rep}}\) as in \eqref{synth.proxy.def} \;
\(\hat{C}_k \leftarrow \left\{j \in C_k: w_{kj} \neq 0 \right\} \) \;
}
\uIf{\(\tau\) is provided}{
\( \mathcal{S}  \leftarrow \{k : \hat{\Theta}_B(C_k) \geq \tau \}  \) \;
%\uElseIf{\(s\) is provided}{
% \For{$i\leftarrow 0$ \KwTo $2B$}{
%\( \mathcal{S}  \leftarrow \left\{k : \hat{\Theta}_B(C_k) \geq 1 -  \frac{i}{2B} \right\}  \) \;
%\If{\( \left| \mathcal{S} \right| \geq s\) }{\textbf{break} \;}
%}
%%\( \mathcal{S}  \leftarrow \{k : k \in [K], \hat{\Theta}_B(C_k) \geq \tau \}  \) \;
%}
\Else{
\(\mathcal{S} \leftarrow [K]\) \;
}
}
\Output{\( \left\{ \hat{C}_k,  \hat{\Theta}_B(C_k), \boldsymbol{X}_{\cdot C_k}^{\text{rep}} \right\}_{k \in \mathcal{S}}  \) } 
 \caption{Cluster Stability Selection}
\end{algorithm}

We show next that cluster stability selection has more desirable behavior in the context of Theorem \ref{thm.result.sel} than stability selection. In particular, we will show that we pay no price in guaranteed error control for the increased power our procedure enjoys, and we will compare our method to \citeauthor{meinshausen-2010} and \citeauthor{shah_samworth_2012}'s proposals in more detail.

\subsection{Properties of Cluster Stability Selection}\label{method.props.sec}

Recall that the primary motivation of cluster stability selection is to improve on stability selection in the case of highly correlated features. Corollary \ref{cor.thm.1.risk} suggests that in the setting of Theorem \ref{thm.result.sel}, features 1 and 2 are both better first selections than feature 3. Corollary \ref{cor.stab.sel2} shows that stability selection unfortunately tends to rank feature 3 ahead of feature 1 or 2. In contrast, we provide evidence that cluster stability selection tends to rank the features correctly in this setting:
\begin{proposition}\label{cor.gen.stab.sel}

In the setting of Theorem \ref{thm.result.sel}, consider applying cluster stability selection with the cluster assignments
\[
C_1 =\{1, 2\}, \qquad C_2 = \{3\}.
\]
Define the base procedure \(\hat{S}\) as follows: the selected set on each iteration is the first two features to enter the lasso path. Assume \(\beta_Z\) satisfies the assumptions of Theorem \ref{thm.result.sel} with \(\lfloor n/2 \rfloor\) large enough to satisfy the sample size requirements of Theorem \ref{thm.result.sel}. Then
\[
\E \left[ \hat{\Theta}_{B} \left(C_1 \right)  \right]  \geq \E \left[ \hat{\Theta}_{B} \left(C_2 \right)  \right] 
\]
for \(n\) sufficiently large. 

The result also holds if the base procedure is defined to select the first feature to enter the lasso path.

\end{proposition}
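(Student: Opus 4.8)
The plan is to reduce both expectations to a single-subsample hitting probability and then exploit the fact that only three features are present. Since the $2B$ summands defining $\hat{\Theta}_B(C_k)$ in \eqref{theta.hat.exp} are identically distributed—each $A_b$ and $\overline{A}_b$ is a size-$\lfloor n/2\rfloor$ subsample of i.i.d. rows, and the chosen base procedure is a deterministic function of the subsampled data—linearity of expectation gives
\[
\E\!\left[\hat{\Theta}_B(C_k)\right] = \mathbb{P}\!\left(C_k \cap \hat{S}(A) \neq \emptyset\right)
\]
for a generic size-$\lfloor n/2\rfloor$ subsample $A$. It therefore suffices to compare these two hitting probabilities. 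I work throughout on the probability-one event that the lasso knots are distinct, so that $\hat{S}(A)$ is a.s.\ well defined with the stated cardinality.

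First I would handle the procedure returning the first two features to enter the path, so $\hat{S}(A)$ is a.s.\ a two-element subset of $\{1,2,3\}$ (recall $p=3$ in the setting of Theorem \ref{thm.result.sel}). Any such subset must contain $1$ or $2$, since a two-element subset of $\{1,2,3\}$ meeting neither would have to be contained in $\{3\}$, which is impossible. Hence $C_1=\{1,2\}$ is hit on every subsample, giving $\E[\hat{\Theta}_B(C_1)] = 1 \geq \E[\hat{\Theta}_B(C_2)]$ trivially; no largeness of $n$ is needed for this case.

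For the second claim, where $\hat{S}(A)$ is the single first feature to enter, I would write $\E[\hat{\Theta}_B(C_1)] = \mathbb{P}(\hat{S}(A) \in \{\{1\},\{2\}\})$ and $\E[\hat{\Theta}_B(C_2)] = \mathbb{P}(\hat{S}(A) = \{3\})$, which partition the probability space a.s. Applying Theorem \ref{thm.result.sel} at sample size $\lfloor n/2\rfloor$ (legitimate by the hypothesis on $\lfloor n/2\rfloor$), the event that the first two features are $\boldsymbol{X}_{\cdot 1}$ then $\boldsymbol{X}_{\cdot 3}$—a subset of $\{\hat{S}(A)=\{1\}\}$—has probability at least $\tfrac12 - \delta_n$, where $\delta_n := c_3(\beta_Z^2 + 1 + \sigma_\epsilon^2)^{7/2}(\log\lfloor n/2\rfloor)^{3/2}/\lfloor n/2\rfloor^{1/4}$. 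By the exchangeability of $\boldsymbol{X}_{\cdot 1}$ and $\boldsymbol{X}_{\cdot 2}$ noted after Theorem \ref{thm.result.sel}, also $\mathbb{P}(\hat{S}(A)=\{2\}) \geq \tfrac12 - \delta_n$. These two events being disjoint, $\E[\hat{\Theta}_B(C_1)] \geq 1 - 2\delta_n$, and hence $\E[\hat{\Theta}_B(C_2)] = 1 - \E[\hat{\Theta}_B(C_1)] \leq 2\delta_n$. Since $\delta_n \to 0$ as $n\to\infty$, for $n$ large enough $\E[\hat{\Theta}_B(C_2)] \leq 2\delta_n < \tfrac12 \leq 1 - 2\delta_n \leq \E[\hat{\Theta}_B(C_1)]$, which is the desired inequality.

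The computations are essentially immediate once the two structural observations are in place: with $p=3$ any size-two selected set automatically meets $C_1$, and Theorem \ref{thm.result.sel} together with exchangeability forces the first-entering feature into $C_1$ with probability approaching one. There is no substantive analytic obstacle; the only point needing care is the bookkeeping of the sample-size substitution $n \mapsto \lfloor n/2\rfloor$ and verifying that the resulting error $\delta_n$ still vanishes, which is precisely the mild ``$n$ sufficiently large'' condition in the statement.
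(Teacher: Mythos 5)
Your proof is correct and follows essentially the same route as the paper's: reduce each $\E[\hat{\Theta}_B(C_k)]$ to a single-subsample hitting probability by linearity of expectation, dispose of the two-feature base procedure by pigeonhole (any two-element subset of $\{1,2,3\}$ meets $C_1$), and handle the one-feature case by applying Theorem \ref{thm.result.sel} at sample size $\lfloor n/2\rfloor$ together with the exchangeability of $\boldsymbol{X}_{\cdot 1}$ and $\boldsymbol{X}_{\cdot 2}$ and disjointness of the two events. The only cosmetic difference is that where you assert the selected set a.s.\ has cardinality two, the paper cites its Lemma \ref{lemma.x1.not.leave} (the first feature cannot leave before a second enters), which is the precise justification for that probability-one claim.
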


\begin{proof} See Appendix \ref{other.other.proofs}.
  \end{proof}

Additionally, we will show in Theorem \ref{ss.thm.1} that cluster stability selection retains analogous error control guarantees to those provided by both \citet{meinshausen-2010} and \citet{shah_samworth_2012}. For a finite set of tuning parameters \(\Lambda\), define the set of selected clusters (that is, the set of clusters from which at least one feature was selected by \(\hat{S}_n^\lambda\) for at least one \(\lambda \in \Lambda\))
\[
\hat{S}_n^{\Lambda; \mathcal{C}}   : = \left\{C \in \mathcal{C} : C \cap  \bigcup_{\lambda \in \Lambda} \hat{S}_n^{\lambda} \neq \emptyset \right\} ,
\]
where \(\mathcal{C} = \{C_1, \ldots, C_K\}\) is a partitioning of \([p]\). We generalize the selection probability for a single feature \eqref{sel.prob} and define 
\[
p_{C, n, \Lambda} := 
%\mathbb{P} \left( \bigcup_{\lambda \in \Lambda} \left\{ C \cap \hat{S}_{n}^\lambda \neq \emptyset \right\} \right) = 
\mathbb{P} \left( C \in \hat{S}_{n}^{\Lambda; \mathcal{C}} \right).
\]
% \(p_{C_k, n, \lambda}\) be the probability any feature in \(C_k\) is selected by \(\hat{S}_n ^\lambda \).
 For a fixed \(\theta \in [0,1]\), let \(L_\theta := \left\{C_k: p_{C_k, \lfloor n/2 \rfloor, \Lambda} \leq \theta \right\}\) denote the set of clusters that have low selection probability under \(\hat{S}_{\lfloor n/2 \rfloor}^{\Lambda; \mathcal{C}}\), and let \(H_\theta := \left\{C_k: p_{C_k, \lfloor n/2 \rfloor, \Lambda} > \theta \right\}\) denote the set of clusters with high selection probability. Denote by \( \hat{S}_{n, \tau}^{\text{CSS}; \Lambda, \mathcal{C}} \subseteq \mathcal{C} \) the set of clusters selected by cluster stability selection in this setting, using a pre-selected \(\tau\).
%First we generalize all of the theoretical guarantees provided by \citet{shah_samworth_2012} for stability selection to the case of cluster stability selection (when using a pre-selected threshold \(\tau\) to select clusters):
%Generalization of Theorem 1 of \citealt{shah_samworth_2012}
\begin{theorem}\label{ss.thm.1}

\begin{enumerate}[(i)]

\item  For \(\tau \in (1/2, 1]\),
\[
\E \left| \hat{S}_{n, \tau}^{\text{CSS}; \Lambda, \mathcal{C}} \cap L_\theta  \right| \leq 
\frac{\theta}{2 \tau -1} \E \left|  \hat{S}_{\lfloor n/2 \rfloor}^{\Lambda; \mathcal{C}} \cap L_\theta \right| .
\]

\item Let \(\hat{N}_{n, \tau}^{\text{CSS}; \Lambda, \mathcal{C}}  = \mathcal{C} \setminus \hat{S}_{n, \tau}^{\text{CSS}; \Lambda, \mathcal{C}}\) and let \(\hat{N}_n^{\Lambda; \mathcal{C}} :=  \mathcal{C}  \setminus \hat{S}_n^{\Lambda; \mathcal{C}}\). For \(\tau \in [0, 1/2)\),
%the expected number of high-probability clusters not selected by cluster stability selection is at most
\[
\E \left| \hat{N}_{n, \tau}^{\text{CSS}; \Lambda, \mathcal{C}}  \cap H_\theta  \right| \leq 
\frac{1 - \theta}{1 - 2 \tau} \E \left|  \hat{N}_{\lfloor n/2 \rfloor}^{\Lambda; \mathcal{C}} \cap H_\theta \right| .
\]

\end{enumerate}

\end{theorem}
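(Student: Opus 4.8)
The plan is to prove both parts by reduction to the single-variable guarantee of \citet{shah_samworth_2012}, with each cluster playing the role of one ``meta-feature.'' First I would collapse the two layers of aggregation in \eqref{theta.hat.exp}---the union over $\lambda\in\Lambda$ and the union over the members of a cluster---into a single ordinary selection procedure. Define $\tilde S(A):=\bigcup_{\lambda\in\Lambda}\hat S^\lambda(A)\subseteq[p]$; since $\Lambda$ is finite and each $\hat S^\lambda$ is a selection procedure in the sense of \eqref{sel.proced}, $\tilde S$ is again one. Relabelling the clusters by $[K]$, define a selection procedure on this new index set by $\tilde T(A):=\{k\in[K]:C_k\cap\tilde S(A)\neq\emptyset\}$; this is exactly $\hat S_n^{\Lambda;\mathcal C}$ written with clusters as indices, and it is itself a valid (possibly randomized) selection procedure. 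The whole cluster-level construction has thereby become an ordinary stability-selection problem for a single procedure on $K$ features.

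Next I would check that every object in the statement matches its $\tilde T$ counterpart. The half-sample selection probability of meta-feature $k$ under $\tilde T$ is $\mathbb P(k\in\tilde T(A))=p_{C_k,\lfloor n/2\rfloor,\Lambda}$, so $L_\theta$ and $H_\theta$ are exactly the low- and high-probability meta-features; the complementary-pairs estimator of this probability is precisely $\hat\Theta_B(C_k)$ from \eqref{theta.hat.exp}; and $\{k:\hat\Theta_B(C_k)\ge\tau\}$ is $\hat S_{n,\tau}^{\text{CSS};\Lambda,\mathcal C}$. Part (i) is then \citeauthor{shah_samworth_2012}'s Theorem~1 applied verbatim to $\tilde T$ with probability threshold $\theta$ and selection cutoff $\tau\in(1/2,1]$.

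For part (ii) I would run the same reduction on the non-selection procedure $\tilde T^c(A):=[K]\setminus\tilde T(A)$, whose half-sample selection probabilities are $1-p_{C_k,\lfloor n/2\rfloor,\Lambda}$ and whose complementary-pairs estimator is $1-\hat\Theta_B(C_k)$. Under $\tilde T^c$ the high-probability clusters $H_\theta$ become low-probability meta-features at level $1-\theta$, non-selection by cluster stability selection ($\hat\Theta_B(C_k)<\tau$, i.e. $1-\hat\Theta_B(C_k)>1-\tau$) becomes selection at cutoff $1-\tau\in(1/2,1]$, and $\hat N_{n,\tau}^{\text{CSS};\Lambda,\mathcal C}$ becomes the thresholded stable set. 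Applying part~(i) to $\tilde T^c$ produces the factor $\tfrac{1-\theta}{2(1-\tau)-1}=\tfrac{1-\theta}{1-2\tau}$ and the claimed bound---exactly mirroring how \citeauthor{shah_samworth_2012} derive their own high-probability guarantee from their low-probability one.

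The main obstacle is verifying that the reduction is airtight rather than the calculations, which are then immediate. Two points need care. First, the stronger \citeauthor{shah_samworth_2012} bound holds for a \emph{single} procedure, so it is essential that \eqref{theta.hat.exp} unions the selected sets across $\lambda$ \emph{before} forming the cluster indicator (rather than maximizing cluster proportions over $\lambda$, as in \citeauthor{meinshausen-2010}'s $\hat\Pi^{\text{(MB)}}$); this is precisely what lets $\tilde S$, and hence $\tilde T$, be one legitimate procedure and is what gives cluster stability selection the stronger inherited guarantees. Second, I would confirm that their Theorem~1 invokes no hypothesis---in particular no exchangeability of the noise, which \citeauthor{meinshausen-2010} require but \citeauthor{shah_samworth_2012} avoid---that $\tilde T$ might fail; since their result assumes only the selection-procedure structure and complementary-pairs subsampling, both of which $\tilde T$ inherits, the reduction goes through. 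If a self-contained argument is preferred over citing their theorem, the only technical step is re-deriving their tail bound in cluster notation, namely applying Markov's inequality to $(2\hat\Theta_B(C_k)-1)_+$ and controlling the simultaneous both-halves selection probability of each low-probability cluster, which is where the factor $\theta$ originates.
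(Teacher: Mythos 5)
Your reduction is exactly the paper's argument: the paper proves Theorem \ref{ss.thm.1} by observing that the cluster-level procedure \(\hat{S}_n^{\Lambda;\mathcal{C}}\) is itself a legitimate selection procedure on the index set \(\mathcal{C}\), so the proofs of Lemma 1 and Theorem 1 of \citet{shah_samworth_2012} go through verbatim once indicators of feature selection are replaced by indicators of cluster selection, \(\hat{\Pi}_B(k)\) by \(\hat{\Theta}_B(C_k)\), and \(p_{k,\lfloor n/2\rfloor}\) by \(p_{C_k,\lfloor n/2\rfloor,\Lambda}\). Your part (i) — apply their Theorem 1(a) as a black box to the meta-procedure \(\tilde{T}\) — is the same idea packaged slightly more cleanly, and your two points of care (taking the union over \(\Lambda\) before forming the cluster indicator, and the absence of any exchangeability hypothesis) are both correct and correctly identified as the reasons the reduction is legitimate.

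One small wrinkle in your part (ii): deriving it by applying part (i) to the complement procedure \(\tilde{T}^c\) does not quite recover the stated bound. Under \(\tilde{T}^c\) the low-probability set at level \(1-\theta\) is \(\{C_k : 1 - p_{C_k,\lfloor n/2\rfloor,\Lambda} \leq 1-\theta\} = \{C_k : p_{C_k,\lfloor n/2\rfloor,\Lambda} \geq \theta\}\), which contains \(H_\theta = \{C_k : p_{C_k,\lfloor n/2\rfloor,\Lambda} > \theta\}\) strictly whenever some cluster sits exactly on the boundary \(p_{C_k,\lfloor n/2\rfloor,\Lambda} = \theta\); the black-box application then yields a right-hand side of \(\frac{1-\theta}{1-2\tau}\,\E\bigl|\hat{N}_{\lfloor n/2\rfloor}^{\Lambda;\mathcal{C}} \cap \{C_k : p_{C_k,\lfloor n/2\rfloor,\Lambda}\geq\theta\}\bigr|\), which is weaker than the claimed bound in that boundary case. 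The fix is the one you already sketch at the end: run the per-cluster argument directly — for each \(C_k \in H_\theta\), independence of the selections on \(A_b\) and \(\overline{A}_b\) gives \(\E[\text{simultaneous non-selection}] = (1-p_{C_k,\lfloor n/2\rfloor,\Lambda})^2 \leq (1-\theta)(1-p_{C_k,\lfloor n/2\rfloor,\Lambda})\), then apply Markov's inequality and sum over \(H_\theta\) only — or equivalently invoke \citeauthor{shah_samworth_2012}'s Theorem 1(b) under the same cluster relabeling, which is precisely what the paper's notation-swap prescription does.
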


%\begin{proof}
%
%See Appendix \ref{proof.ss.thm.1} in the online supplement.
%
%\end{proof}
We see from part (i) of Theorem \ref{ss.thm.1} that for suitable choices of \(\theta\) and \(\tau\) such that \(\theta/(2 \tau -1) < 1\), the expected number of features selected from clusters in \(L_\theta\) (that is, low-selection-probability clusters) is strictly smaller under cluster stability selection than under the base procedure \( \hat{S}_{\lfloor n/2 \rfloor}^{\Lambda; \mathcal{C}}\). Similarly, part (ii) shows that cluster stability selection controls the number of high-selection-probability clusters that are missed by the base procedure.

In Appendix \ref{proof.ss.thm.1} we also present Theorem \ref{ss.thm.2}, which similarly generalizes Theorem 2 and Equations 7 and 8 of \citet{shah_samworth_2012} to our setting. These results provide tighter bounds under stronger assumptions on the distributions of the selection proportions yielded by cluster stability selection. We omit proofs for Theorems \ref{ss.thm.1} and \ref{ss.thm.2} because they can be proven using identical arguments used to prove the corresponding results from \citet{shah_samworth_2012} by replacing indicators of features being selected with indicators of clusters being selected (we spell this point out in more detail in Appendix \ref{proof.ss.thm.1}).

Note that Theorems \ref{ss.thm.1} and \ref{ss.thm.2} are valid for \textit{any} partitioning of \([p]\) into clusters, though the results are more interesting if the clusters are meaningful. Also, these results generalize the results from the original papers in the sense that choosing \((C_1, \ldots, C_p ) = (\{1\}, \ldots, \{p\})\) recovers the original results.

\begin{remark}

%We can achieve a bound that is analogous to the one provided by Theorem 1 of \citet{meinshausen-2010} by applying Theorem \ref{ss.thm.1}(i) with \(\theta = q_\Lambda/K\):
%\[
%\E \left| \hat{S}_{n, \tau}^{\text{CSS}; \Lambda, \mathcal{C}} \cap L_{q_\Lambda/K}  \right| 
%\leq  \frac{1}{2 \tau -1}  \frac{q_\Lambda}{K} \E \left|  \hat{S}_{\lfloor n/2 \rfloor}^\lambda \cap L_{q_\Lambda/K} \right|
% \leq \frac{1}{2 \tau -1}  \frac{q_\Lambda}{K} \E \left|  \hat{S}_{\lfloor n/2 \rfloor}^\lambda  \right| 
% = \frac{1}{2 \tau - 1} \frac{q_{\Lambda}^2}{K}.
%\]
The discussion in the last paragraph of Section 3.1 of \citet{shah_samworth_2012} about how the guarantee of their Theorem 1(a) is at least as good as the \citet{meinshausen-2010} Theorem 1 guarantee, even when \(B = 1\), applies to our Theorem \ref{ss.thm.1}(a) as well in the special case \(\mathcal{C} = \{ \{1\}, \ldots, \{p\}\}\). We also point out that this does not require the same assumptions that (i) the base procedure is `no worse than random guessing" (in the sense that the average selection probability of the ``signal features" \(S \subset [p]\) under \(\hat{S}_{\lfloor n/2 \rfloor}^{\Lambda; \mathcal{C}}\) is at least as large as the average selection probability of the ``noise features" \(N = [p] \setminus S\)) and (ii) the distribution of the noise features is exchangeable; we only require the weaker assumption that \(N \subseteq L_{\theta}\) for \(\theta = \E \left| \hat{S}_{\lfloor n/2 \rfloor}^{\Lambda; \mathcal{C}} \right|/p \).

\end{remark}

%Next we briefly discuss some of the properties of each weighting proposal.

\subsection{Choice of Weights}\label{sec.weight.choices}

The intuition behind regressing against weighted averages of cluster members is that averaging several noisy proxies might improve our approximation of the latent signal. Proposition \ref{proxies.risk.different.weight} below sheds some light on how this might be done optimally. First we will need a notion of prediction risk for cluster representatives.
\begin{definition}[Prediction risk of cluster representatives]\label{def.risk.synth}
Assume the same setup as Definition \ref{def.risk.ind}. For a cluster \(C \subset [p]\) and arbitrary weights \(\boldsymbol{w} \in \Delta^{|C| - 1}\), recall the definition of the cluster representative \eqref{synth.proxy.def}, and define \(\tilde{\boldsymbol{X}}_{\cdot C}^{\text{rep}}\) as an out-of-sample draw (using the same weights). 
Define the prediction risk of a model containing only a cluster representative constructed using these weights as
\begin{equation}\label{pred.risk.synth.proxy.def}
R(C; \boldsymbol{w} ) := \mathbb{E} \left[  \left\lVert \tilde{\boldsymbol{y}} - \hat{\beta} \tilde{\boldsymbol{X}}_{\cdot C}^{\text{rep}}  \right\rVert_2^2\right],
\end{equation}
where \(\hat{\beta} =  \left. \left( \boldsymbol{X}_{\cdot C}^{\text{rep}} \right)^\top
  \boldsymbol{y} \middle/  \left( \boldsymbol{X}_{\cdot C}^{\text{rep}}\right)^\top \boldsymbol{X}_{\cdot C}^{\text{rep}}  \right.\).
\end{definition}
\begin{proposition}\label{proxies.risk.different.weight}

Assume the setup of \eqref{thm.alt.spec.x.gen}, \eqref{thm.alt.spec.gen}, and \eqref{thm.alt.spec.y.gen}. Consider regressing \(\boldsymbol{y}\) against a weighted average of the \(q\) proxies, as in \eqref{synth.proxy.def}. Then
\begin{enumerate}[(i)]
\item
the weights that minimize the prediction risk defined in \eqref{pred.risk.synth.proxy.def} are
\[
w_j^*  = \left.  \frac{1}{ \sigma_{\zeta j}^2}  \middle/ \sum_{j' = 1}^q \frac{1}{ \sigma_{\zeta j'}^2} \right., \qquad j \in [q],
\]
\item the minimum risk (using the optimal weights \(\boldsymbol{w}^* = \{w_j^*\}_{j \in [q]}\)) is
\[
\mathcal{E}_{\text{ideal}} + \frac{n - 1 }{n-2} \cdot \frac{\beta_Z^2   }{1+  \sum_{j=1}^q   \frac{1}{ \sigma_{\zeta j}^2} }    ,
\]
where \( \mathcal{E}_{\text{ideal}}\), defined in \eqref{dist.cond.z.e.mse.result}, is the prediction risk of regressing against \(\boldsymbol{Z}\) directly, and
\item 
for any \(k \in \{q+1, \ldots, p\}\),
\begin{equation}\label{mult.risk.threshold}
R([q]; \boldsymbol{w}^* )  < R(k) \qquad \iff \qquad \frac{ \beta_Z^2 }{ \beta_k^2}
 >    \frac{1+  \sum_{j=1}^q   \frac{1}{ \sigma_{\zeta j}^2} }{ \sum_{j=1}^q   \frac{1}{ \sigma_{\zeta j}^2}    }.
\end{equation}
\end{enumerate}
\end{proposition}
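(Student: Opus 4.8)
The plan is to reduce the entire proposition to the single-proxy analysis already carried out for Proposition \ref{prop.gen.pred.risk}. The crux observation is that because the weights lie on the simplex ($\sum_{j \in C} w_{kj} = 1$), the cluster representative \eqref{synth.proxy.def} satisfies $\boldsymbol{X}_{\cdot C}^{\text{rep}} = \boldsymbol{Z} + \sum_{j \in C} w_{kj} \boldsymbol{\zeta}_j$, so the signal part $\boldsymbol{Z}$ is reproduced exactly and only the noise is altered. Since the $\boldsymbol{\zeta}_j$ are independent $\mathcal{N}(0, \sigma_{\zeta j}^2 \boldsymbol{I}_n)$ and independent of $\boldsymbol{Z}, \boldsymbol{\epsilon}$, and $\{\boldsymbol{X}_{\cdot k}\}_{k > q}$, the aggregate noise $\sum_{j \in C} w_{kj}\boldsymbol{\zeta}_j$ is again Gaussian, independent of everything else, with per-coordinate variance $\sigma_{\mathrm{rep}}^2(\boldsymbol{w}) := \sum_{j \in C} w_{kj}^2 \sigma_{\zeta j}^2$. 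Hence $\boldsymbol{X}_{\cdot C}^{\text{rep}}$ is, jointly with $(\boldsymbol{y}, \boldsymbol{Z}, \{\boldsymbol{X}_{\cdot k}\}_{k>q})$, distributed exactly like a single proxy $\boldsymbol{Z} + \tilde{\boldsymbol{\zeta}}$ with $\tilde{\boldsymbol{\zeta}} \sim \mathcal{N}(0, \sigma_{\mathrm{rep}}^2(\boldsymbol{w})\boldsymbol{I}_n)$. The first step is therefore to note that the entire derivation behind Definition \ref{def.risk.ind} and Proposition \ref{prop.gen.pred.risk} used only this distributional structure, so it applies verbatim to $R(C; \boldsymbol{w})$ after replacing $\sigma_{\zeta j}^2$ by $\sigma_{\mathrm{rep}}^2(\boldsymbol{w})$.

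Next I would recall (from the computation underlying Proposition \ref{prop.gen.pred.risk}) the explicit single-proxy risk as a function of its noise variance $s^2$. Writing $\hat{\beta} = (\boldsymbol{X}^{\text{rep}}_{\cdot C})^\top \boldsymbol{y} / \lVert \boldsymbol{X}^{\text{rep}}_{\cdot C}\rVert_2^2$ and conditioning on $\boldsymbol{X}^{\text{rep}}_{\cdot C}$, Gaussian conditioning gives $\mathbb{E}[\hat{\beta}] = \beta_Z/(1+s^2)$ (the usual attenuation), while the second moment introduces a term $\mathbb{E}[1/\lVert \boldsymbol{X}^{\text{rep}}_{\cdot C}\rVert_2^2] = \{(1+s^2)(n-2)\}^{-1}$ through the inverse-$\chi^2$ moment. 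Assembling these yields $R(C;\boldsymbol{w}) = \mathcal{E}_{\text{ideal}} + \frac{n-1}{n-2}\,\beta_Z^2\,\frac{s^2}{1+s^2}$ with $s^2 = \sigma_{\mathrm{rep}}^2(\boldsymbol{w})$, where setting $s^2 = 0$ recovers $\mathcal{E}_{\text{ideal}}$, the risk of regressing on $\boldsymbol{Z}$ directly. The map $s^2 \mapsto s^2/(1+s^2)$ is strictly increasing, so minimizing $R(C;\boldsymbol{w})$ over $\boldsymbol{w} \in \Delta^{|C|-1}$ is equivalent to minimizing the effective variance $\sigma_{\mathrm{rep}}^2(\boldsymbol{w}) = \sum_{j \in C} w_{kj}^2 \sigma_{\zeta j}^2$ subject to $\sum_{j} w_{kj} = 1$.

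For part (i) I would solve this constrained quadratic program by a Lagrange multiplier (or Cauchy--Schwarz) argument: stationarity gives $w_{kj} \propto 1/\sigma_{\zeta j}^2$, and normalizing produces the inverse-variance weights $w_j^\ast$ stated in the proposition, which automatically lie in the interior of the simplex so the constraint $\boldsymbol{w} \in \Delta^{|C|-1}$ is met. For part (ii) I would substitute these optimal weights, compute $\sigma_{\mathrm{rep}}^2(\boldsymbol{w}^\ast) = \big(\sum_{j} 1/\sigma_{\zeta j}^2\big)^{-1}$, and observe that $s^2/(1+s^2)$ evaluated at this value equals $1/\big(1 + \sum_j 1/\sigma_{\zeta j}^2\big)$, giving the claimed minimum risk directly from the explicit formula. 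For part (iii), rather than recomputing the observed-feature risk $R(k)$, I would simply invoke Proposition \ref{prop.gen.pred.risk} with the effective proxy: since $\boldsymbol{X}^{\text{rep}}_{\cdot[q]}$ with weights $\boldsymbol{w}^\ast$ behaves as a single proxy of noise variance $\big(\sum_j 1/\sigma_{\zeta j}^2\big)^{-1}$, that proposition yields $R([q];\boldsymbol{w}^\ast) < R(k) \iff \beta_Z^2/\beta_k^2 > 1 + \big(\sum_j 1/\sigma_{\zeta j}^2\big)^{-1}$, which is exactly \eqref{mult.risk.threshold}.

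The genuinely new work beyond Proposition \ref{prop.gen.pred.risk} is the simplex minimization in part (i), which is routine. The main obstacle is the bookkeeping in establishing the explicit risk formula with its finite-sample $\frac{n-1}{n-2}$ correction and the dependence of $\mathbb{E}[1/\lVert \boldsymbol{X}^{\text{rep}}_{\cdot C}\rVert_2^2]$ on $s^2$; however, this is precisely the calculation carried out for the single-proxy case, so it is reused rather than redone. I would also note that the differing normalizations in Definitions \ref{def.risk.ind} and \ref{def.risk.synth} are immaterial to the ordering in part (iii), since any common positive constant cancels from both sides of the inequality.
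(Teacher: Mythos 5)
Your proposal is correct and takes essentially the same approach as the paper: the paper's Lemma \ref{lem.opt.weighting.all.positive} is precisely your reduction of the weighted average to a single proxy with effective noise variance \(\sum_{j} w_j^2 \sigma_{\zeta j}^2\), after which the paper likewise exploits monotonicity of \(t \mapsto t/(1+t)\) and a Lagrange-multiplier argument for part (i) and direct substitution for part (ii). The only cosmetic difference is in part (iii), where the paper redoes the comparison algebraically against Lemma \ref{pred.risk.lemma.x1}(ii) rather than invoking Proposition \ref{prop.gen.pred.risk} at the effective noise variance \(\bigl(\sum_j 1/\sigma_{\zeta j}^2\bigr)^{-1}\), but the underlying computation is identical.
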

\begin{proof}See Appendix \ref{risk.proofs}.
 \end{proof}

Proposition \ref{proxies.risk.different.weight}(i) yields the intuitive result that the optimal weights are higher for less noisy proxies and lower for higher noise proxies. Also, note that if any one proxy has noise variance tending towards 0, the optimal weight on that proxy tends toward 1, the prediction risk tends toward \( \mathcal{E}_{\text{ideal}}\), and the quantity on the right side of \eqref{mult.risk.threshold} tends toward 1.

From Proposition \ref{proxies.risk.different.weight}(ii) we see that adding one more proxy with finite noise variance \(\sigma_{\zeta j}^2\) always reduces the prediction risk when optimal weights are used. Also, by comparison of Proposition \ref{proxies.risk.different.weight}(iii) to Proposition \ref{prop.gen.pred.risk}, we see that the signal strength of \(\boldsymbol{Z}\) does not have to be as high for the prediction risk of \(\boldsymbol{X}_{\cdot [q]}^{\text{rep}}\) to be less than any \(R(k), k \in \{q+1, \ldots, p\}\) than it does for any one \(R(j), j \in [q]\) to be less than \(R(k)\), since if \(0 < \sigma_{\zeta j}^2 < \infty\) for all \(j\) it holds that
\[
 \frac{1+  \sum_{j'=1}^q   \frac{1}{ \sigma_{\zeta j'}^2} }{ \sum_{j'=1}^q   \frac{1}{ \sigma_{\zeta j'}^2}    } <  1+ \sigma_{\zeta j}^2 \qquad \forall q > 1, \forall j \in [q].
\]
See also \citet{Park2007} and \citet{Buhlmann2013} (particularly Propositions 4.2 and 4.3) for other theoretical results about the benefits of averaging correlated features (under different assumptions from ours).

This result will be informative in discussing the virtues of each of our proposed weighting schemes \eqref{weight.avg.weights} -- \eqref{sparse.weights}.

\subsubsection{Weighted Averaged Cluster Stability Selection}\label{gen.stab.sel.weight.avg}

Since by Proposition \ref{proxies.risk.different.weight}(i) the optimal weights are higher when the noise level is lower, if the underlying selection procedure is more likely to select lower noise proxies, the weights from weighted averaged cluster stability selection may resemble the optimal weights. We show empirical evidence in Section \ref{weight.avg.sim.study} that weighted averaged cluster stability selection with the lasso does seem to outperform either other weighting scheme when proxy noise levels vary significantly.

It is worth noting that weighted averaging allows features to be dropped from a cluster with nonzero probability. This is useful particularly if the clusters are estimated and might include some irrelevant features. In Section \ref{real.data.study2} we present evidence that this occurs in practice in a real-data example.

\subsubsection{Simple Averaged Cluster Stability Selection}

Weighting scheme \eqref{simp.avg.weights} aligns with an intuitive idea for a practitioner who knows she has several noisy observations of the same quantity: just average them. It is analogous to the cluster representative lasso \citep{Buhlmann2013, Park2007} in which all of the features in a cluster are averaged, except that we run the underlying feature selection method on the original features, forming the cluster representatives after the selection proportions have been determined. 

By Proposition \ref{proxies.risk.different.weight}, simple averaging is optimal if the noise levels of each proxy are equal. If the practitioner is confident in the clustering used and the noise levels within clusters are unknown but believed to be equal, or even close to equal, simple averaging could result in a more favorable bias-variance tradeoff than estimating weights. This is particularly true if all of the proxies have roughly equal noise levels that are also high, making estimation of the individual weights noisy.

\subsubsection{Sparse Cluster Stability Selection}

Sparse cluster stability selection \eqref{sparse.weights} removes all but the most frequently selected individual features from each cluster. Its output resembles stability selection in the sense that sparse cluster stability selection often returns only clusters of size 1 (if the most frequently selected feature within a cluster is not tied with another feature in the cluster). This weighting scheme is in a way analogous to the protolasso \citep{Reid2015}, but the cluster prototype is the most frequently selected feature across all subsamples rather than the cluster member with the greatest marginal correlation with the response. (One advantage cluster stability selection enjoys over the protolasso is that cluster stability selection does not require a notion of correlation with the response.) 

This weighting scheme has the virtue of representing the important signals with as few features as possible. This could be particularly advantageous if the clusters are estimated, rather than known, and might be too large---sparse cluster stability selection may reject irrelevant noise features in a cluster. This is also useful if the clusters are known in advance, but sparsity is desired to aid interpretability, because individual features are expensive to measure, or for any number of other reasons. However, if all of the cluster members are genuine proxies as in our model, the weights in sparse cluster stability selection are always suboptimal by Proposition \ref{proxies.risk.different.weight}(i) (except for the very special case where all the noise levels are equal and every feature happens to tie in selection proportion). Our simulations and real data examples in Section \ref{sims.data} suggest that in practice, sparsity does indeed come at a price to both out-of-sample predictive performance and stability.

\section{Simulations and Data Application}\label{sims.data}

In this section, we discuss two data simulations and a real data application demonstrating cluster stability selection with each weighting scheme.

\subsection{Simulation Study: Sparse Cluster Stability Selection}\label{sim.study.sparse}

In this section, we describe the simulation study from Example \ref{ex.stab.problem} in more detail. We repeat the following procedure 1000 times:

\begin{itemize}

\item

The design matrix \(\boldsymbol{X} \in \mathbb{R}^{200 \times 100}\) has rows \(X_{i \cdot}\) where
\[
 \begin{pmatrix} Z_i
 \\ \boldsymbol{X}_{i \cdot}
 \end{pmatrix} \sim \mathcal{N}_{101} \left( \begin{pmatrix} 0
 \\ \boldsymbol{0}_{100} 
 \end{pmatrix},
 \begin{pmatrix}
 1 & 0.9 \boldsymbol{1}_{10}^\top & \boldsymbol{0}_{90}^\top 
\\ 0.9 \boldsymbol{1}_{10} & 0.9 \boldsymbol{1}_{10}  \boldsymbol{1}_{10}^\top + 0.1 \boldsymbol{I}_{10} & \boldsymbol{0}_{10}  \boldsymbol{0}_{90}^\top 
\\ \boldsymbol{0}_{90} & \boldsymbol{0}_{90} \boldsymbol{0}_{10}  ^\top   & \boldsymbol{I}_{90}
 \end{pmatrix} \right).
\]

\item The response \(\boldsymbol{y}\) is generated by
\begin{align*}
\boldsymbol{\mu} & = 1.5 \boldsymbol{Z} +  \sum_{j=1}^{10} \beta_j \boldsymbol{X}_{\cdot (j + 10)}  \qquad \text{and} \nonumber
\\ \boldsymbol{y} & = \boldsymbol{\mu} + \boldsymbol{\epsilon}, 
\end{align*}
where \(\beta_j = 1/\sqrt{j}\) and \(\boldsymbol{\epsilon} \sim \mathcal{N} \left( \boldsymbol{0}, \sigma_\epsilon^2 \boldsymbol{I}_{200}\right)\), with \(\sigma_\epsilon^2\) determined so that the signal-to-noise ratio is 3:
\[
\sigma_\epsilon^2  = \frac{\lVert \boldsymbol{\mu} \rVert_2^2/200}{3 }.
\]

\item We obtain selected sets using the lasso, stability selection, sparse cluster stability selection, and the protolasso (which we described in the introduction). The protolasso and sparse cluster stability selection are provided with the correct clusters \(\{[10], \{11\}, \ldots, \{100\}\}\). Both stability selection and sparse cluster stability selection use complementary pairs subsampling, \(B = 100\) subsamples\footnote{The results are similar but noisier using the smaller number of subsamples recommended by \citet{shah_samworth_2012}.} of size \(\lfloor n/2 \rfloor = 100\), and the lasso with penalty chosen in advance (separately for each simulation) by cross-validation.

\item For each method, selected sets of each size \(s \in [11]\) are obtained in the following ways:

\begin{itemize}

\item Lasso and protolasso: the first feature set of size \(s\) to appear in the lasso path. 

\item Stability selection: the \(s\) features with the greatest selection proportions. 

\item Sparse cluster stability selection: similar to stability selection, but \(\hat{\Theta}_B(j)\) is used to select \(s\) clusters instead of using \(\hat{\Pi}_B(j)\) to select \(s\) features.

\end{itemize}

(For stability selection and sparse cluster stability selection, sets are not always defined for every \(s\) due to ties in selection proportions.)

\item Finally, a training set of 10,000 observations is generated in the same way that the original 200 observations were generated. For each method and for every model size defined in that method, \(\boldsymbol{y}\) is regressed against the selected features from \(\boldsymbol{X}\) using OLS, and the mean squared error (MSE) of the resulting training set predictions \(\boldsymbol{\hat{y}}\) compared to \(\boldsymbol{\mu}\) is calculated.

\end{itemize}

\begin{figure}[htbp]
\begin{center}
\includegraphics[width=\textwidth]{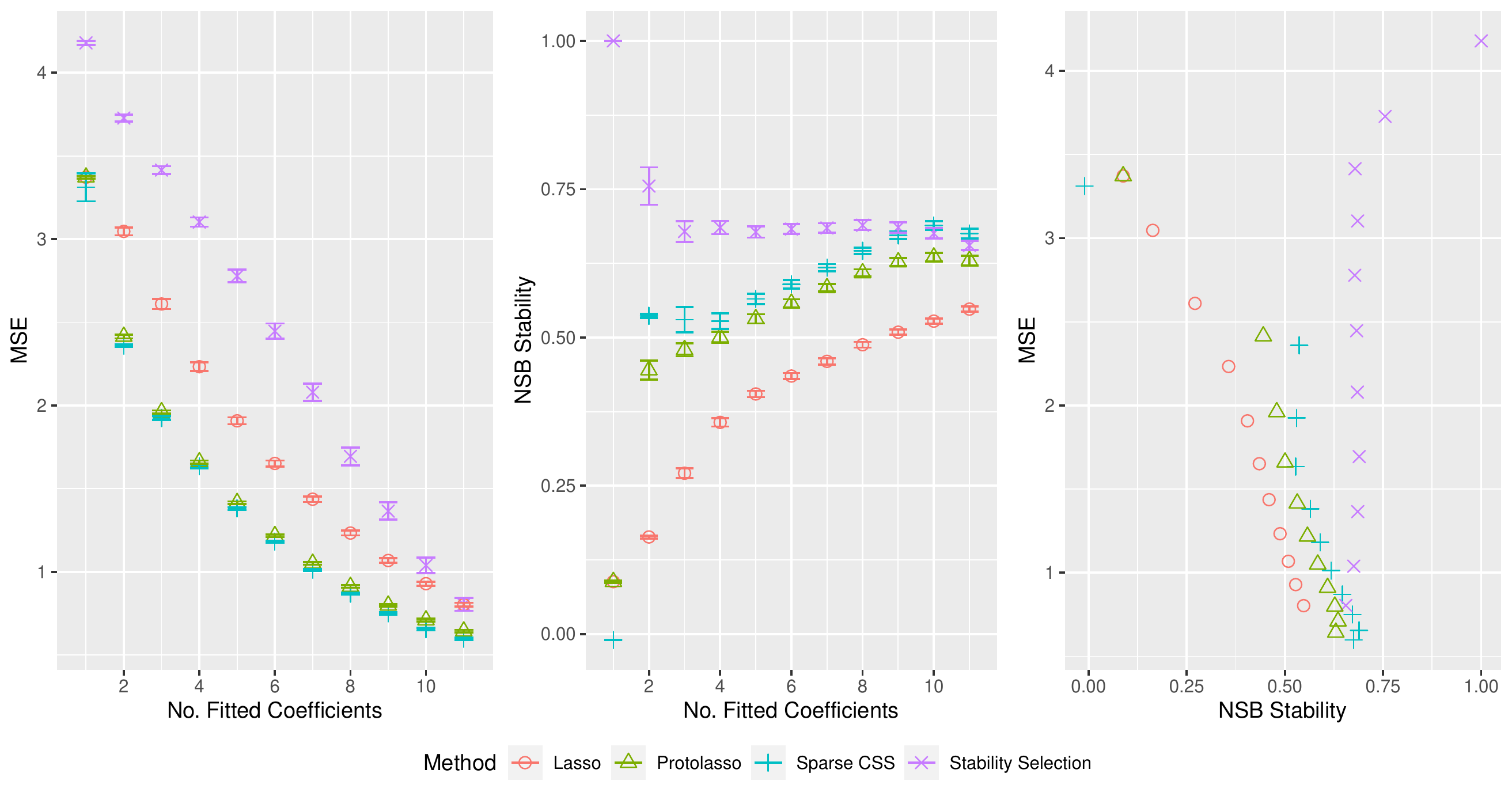}
\caption{Left and center panels: plots of average mean squared error and stability against model size (measured by number of fitted coefficients) across all 1000 simulations for the simulation study from Section \ref{sim.study.sparse}. The error bars show asymptotic \(95\%\) confidence intervals; the error bars for MSE are those implied by the Central Limit Theorem, and the error bars for the stability metric are provided by \citet{Nogueira2018}. Right panel: average MSE plotted against stability. Each point corresponds to one model size.}
\label{3_figures_sparse}
\end{center}
\end{figure}

After completing this procedure 1000 times, the average MSE is calculated for each method for every model size \(s\) (including in the average for each method only those iterations where the model of size \(s\) for that method is defined). We then evaluate the stability of each method across the 1000 simulations using the metric proposed by \citet{Nogueira2018}, which equals 1 if the selected set yielded by a given method is identical across every simulation and has expected value 0 for a ``null" feature selection method that selects features at random.

The results are shown in the left panel of Figure \ref{3_figures_sparse}.\footnote{In the left panel, the error bars for the sparse cluster stability selection model of size 1 are quite wide, and in the center panel, the stability of this model is particularly low, and has a confidence interval of width 0. These oddities are due to the fact that in 1000 simulations, there were only two instances where a sparse CSS model of size 1 was defined---in most of the remaining instances, both the cluster of proxies and the cluster containing only the strongest weak signal feature had selection proportions equal to 1. This is perhaps not shocking because the lasso penalty \(\lambda\) was chosen by cross-validation, which is known to result in larger than optimal model sizes; see, for example, \citet[Prop. 1]{Buhlmann2006} and \citet[Section 2.5.1]{MR2807761}. As a result, the strongest signal features are likely to be selected with very high probability. 
%The fact that only two cluster stability selected models of size 1 were yielded explains the width of the confidence interval in the left panel, and why the stability estimate is close to zero in the second panel.
} Sparse cluster stability selection has better predictive performance than stability selection because stability selection fails to select any proxy feature for \(\boldsymbol{Z}\), and selecting at least one proxy is important because the coefficient on \(\boldsymbol{Z}\) in the true model is large. Sparse cluster stability selection's superior predictive performance over the lasso appears to be due to both the benefits of stability and the fact that the lasso seems to often predict more than one proxy feature (this explains the gap in predictive performance between the protolasso and the lasso). Note that sparse cluster stability selection enjoys a modest, but still statistically significant, improvement over the protolasso in both MSE and stability.

Stability selection is the most stable overall for most model sizes because it tends to ignore the proxies for \(\boldsymbol{Z}\) due to the ``vote-splitting" problem. Instead, the \(j^{\text{th}}\) feature chosen by stability selection tends to be the weak signal feature with the \(j^{\text{th}}\) largest coefficient. Meanwhile, the methods other than stability selection tend to select proxies for \(\boldsymbol{Z}\) early on, and because they select randomly among those proxies, the resulting selected sets are less stable. For this reason, sparse cluster stability selection is not the most stable method overall, but it beats both the lasso and protolasso in stability by statistically significant margins, and even beats stability selection by statistically significant margins for models of size 10 and 11.

In general, practitioners may be interested in simultaneous predictive performance and stability, so we also consider what tradeoff exists between the two across model sizes and methods. The right panel of Figure \ref{3_figures_sparse} plots out-of-sample MSE against stability for each method and each model size. Sparse cluster stability selection clearly dominates the lasso and the protolasso. A couple of the stability selected models have better stability than the most stable sparse cluster stability selection model, but this stability comes at a steep price to MSE.

\subsection{Simple Averaged Cluster Stability Selection}\label{avg.sim.study}

To show the benefits of simple averaged cluster stability selection, we evaluate it in the same simulation study as above. We compare simple averaged cluster stability selection to sparse cluster stability selection, and we also consider the cluster representative lasso \citep{Buhlmann2013, Park2007}, which takes a simple average of the features in the known cluster and fits the lasso on that cluster representative and the remaining original features\footnote{Note that we are using an oracle version of the cluster representative lasso; \citet{Buhlmann2013} estimate the clusters.}.

How to measure the model size of the methods that rely on averaging is ambiguous. When a cluster representative constructed by averaging the original features is added to the model, we could say that the model size increases by the size of the cluster (that is, count the size of the model by the number of included features in the original feature space), or we could say the model size increases by one (that is, count the number of fitted coefficients in the model). We choose the latter approach, but note that this gives the averaging methods an ``unfair advantage" in predictive performance at a fixed model size in the sense that a cluster representative resulting from averaging a number of noisy features is a better predictor than any one noisy feature (Proposition \ref{proxies.risk.different.weight}). (This is another reason to include the cluster representative lasso in this comparison---unlike cluster stability selection, it also enjoys the benefits of averaging features, so the comparison is in that sense more fair.)

\begin{figure}[htbp]
\begin{center}
\includegraphics[width=\textwidth]{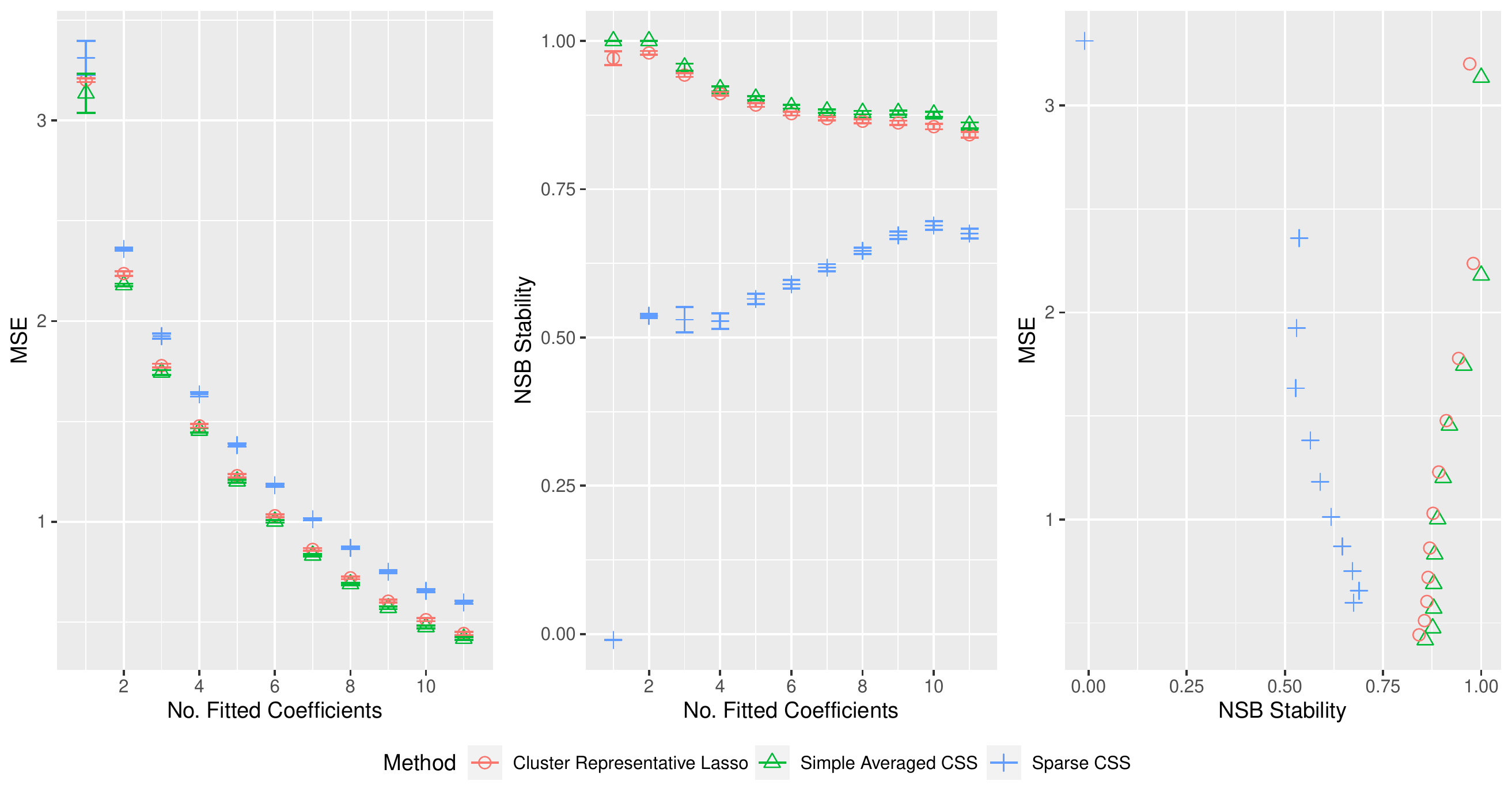}
%[scale=0.65]
\caption{The left two panels are plots of average MSE and stability against model size (measured by number of fitted coefficients) for cluster stability selection, simple averaged cluster stability selection, and the cluster representative lasso across all 1000 simulations for the same simulation study as the one in Section \ref{sim.study.sparse}. The error bars show asymptotic \(95\%\) confidence intervals; the error bars for MSE are those implied by the Central Limit Theorem, and the error bars for the stability metric are provided by \citet{Nogueira2018}. The right panel plots average MSE against model size.}
\label{ranking_avg_mse_stab}
\end{center}
\end{figure}

Figure \ref{ranking_avg_mse_stab} shows the results. (We omit the methods from Figure \ref{3_figures_sparse} other than sparse cluster stability selection for visual clarity, and because we already know that sparse cluster stability selection dominates the other methods.) We see that both of the averaging methods have better predictive performance than sparse cluster stability selection. They also have better stability because they select entire clusters rather than choosing among the cluster members at random. Simple averaged cluster stability selection has better stability and predictive performance (because stability leads to fewer false selections) than the cluster representative lasso. In the right panel, we see that the averaging methods all dominate sparse cluster stability selection. and simple averaged cluster stability selection dominates the cluster representative lasso.

\subsection{Simulation Study: Weighted Averaged Cluster Stability Selection}\label{weight.avg.sim.study}

Lastly, we conduct one more simulation study designed to illustrate the benefits of weighted averaging. The simulation is the same as the simulation study from above, except that the design matrix \(\boldsymbol{X}\) is constructed slightly differently. Rather than observing 10 proxies that all have a correlation of \(0.9\) with \(\boldsymbol{Z}\), \(\boldsymbol{X}\) contains 5 \textit{strong proxies} with a correlation of \(0.9\) with \(\boldsymbol{Z}\) and 10 \textit{weak proxies} with a correlation of \(0.5\) with \(\boldsymbol{Z}\). By Proposition \ref{proxies.risk.different.weight} it is suboptimal to discard the weak proxies altogether, but it is also suboptimal to weight them equally with the strong proxies. Again 10 weak signal features with coefficients \(1/\sqrt{j}\) are observed, along with 75 noise features to yield a total of 100 features.

\begin{figure}[htbp]
\begin{center}
\includegraphics[width=\textwidth]{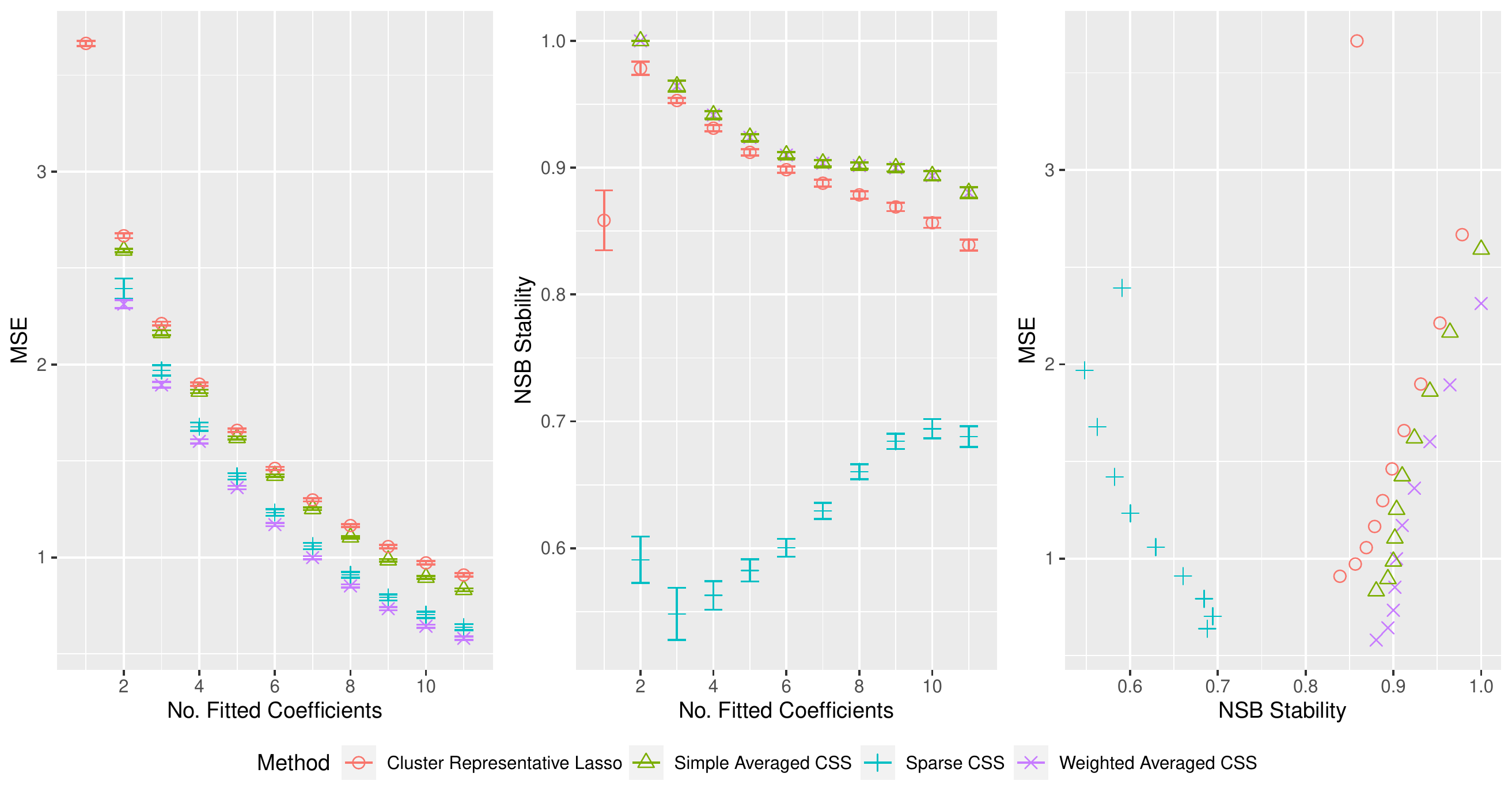}
%[scale=0.65]
\caption{The left two panels are plots of average MSE and stability against model size (measured by number of fitted coefficients) for the methods from the simulation study in Section \ref{weight.avg.sim.study} across all 1000 simulations. The error bars show asymptotic \(95\%\) confidence intervals; the error bars for MSE are those implied by the Central Limit Theorem, and the error bars for the stability metric are provided by \citet{Nogueira2018}. The right panel plots average MSE against model size.}
\label{weighted_avg_mse_stab}
\end{center}
\end{figure}

Figure \ref{weighted_avg_mse_stab} shows the results. We see that weighted averaged cluster stability selection does dominate the other methods, suggesting that the weights it chooses are closer to optimal than either simple averaging or sparse cluster stability selection. Weighted averaged cluster stability selection and simple averaged cluster stability selection have roughly identical stability because they both select every feature in the cluster when they select a cluster. The cluster representative lasso lags behind all cluster stability selection methods in predictive performance both because equal weighting is suboptimal and because it makes more false selections.

\subsection{Data Application}\label{real.data.study2}

%In genome-wide association studies (GWAS), DNA is collected from a random sample of individuals. Then single-nucleotide polymorphism (SNP) arrays are used to extract genetic variants (primarily SNPs) from each individual. An organism's DNA typically has millions of base positions, and at each base position there is typically one more common SNP allele and one less common one. For \textit{diploid} species with two SNPs at each base position, the SNP data for organism \(i\) at base position \(j\) takes on one of the values \(X_{ij} \in \{0, 1, 2\}\) equalling the number of less common alleles at that position.
%
%In a GWAS, DNA samples are taken from \(n\) organisms, yielding \(p\) SNPs. These are stored in a matrix \(\boldsymbol{X}\). A vector of phenotypes \(\boldsymbol{y}\)---one for each organism---is also observed. Examples of phenotypes could include physical traits like adult height or weight, or the presence or absence of a given disease at a fixed age. The goal of GWAS is to identify which SNPs are associated with variation in a given phenotype, typically using regression methods. Identifying SNPs associated with health outcomes or physical traits can be useful for identifying individuals at risk of developing a disease or developing treatments for diseases, among other applications. That is, feature selection methods are of great practical importance in GWAS.

We demonstrate cluster stability selection on an open genome-wide association study data set \citep{Togninalli2017, Alonso-Blanco2016} collected from \(n = 1,058\) accessions of \textit{Arabidopsis thaliana}, a small flowering plant that has been widely studied. The feature matrix \(\boldsymbol{X}\) contains SNP data---\(X_{ij} = 0\) if in accession \(i\) both alleles at base position \(j\) take on the more common value, and \(X_{ij} = 1\) if both alleles take on the more rare value. GWAS data is a natural application of cluster stability selection because nearby SNPs tend to be highly correlated due to \textit{linkage disequilibrium}, which is caused by a variety of mechanisms; see \citet{Nordborg2002} and \citet{Kim2007} for discussion of linkage disequilibrium specifically in \textit{Arabidopsis thaliana}. As a result, clusters of highly correlated SNPs can be identified. For the response, we use the logarithm of the measured flowering time (in days) at \(10^\circ\) C \citep{Alonso-Blanco2016}.\footnote{The imputed genotype was downloaded from \url{https://aragwas.1001genomes.org/} and the phenotypes were downloaded via \url{https://arapheno.1001genomes.org/phenotype/261/}.}

Prior to evaluating our methods on the data, we pre-process the data using standard methods \citep{Candes2018, Sesia2019}. We screen out SNPs where the less common allele appears in fewer than \(1\%\) of observations (that is, the minor allele frequency is less than \(1\%\)). We do not have to screen for missing values or incorrect SNP position labeling because the data set as it is available online has already been cleaned and imputed using standard methods. The Hardy-Weinberg equilibrium test is commonly used for screening SNPs, but it is not applicable for our data because \textit{Arabidopsis thaliana} is almost always homozygous. Finally, for computational speed we retained only the first 1000 SNPs that remained after screening.

We repeat the following procedure (similar to the procedure from the simulation section) 100 times. We randomly divide the data into feature selection and model estimation sets of 423 observations (about \(40\%\) of the data for each) and a test set of the remaining 212 observations. In each iteration we cluster the features using hierarchical clustering on the non-test-set data. For the distance metric, we use one minus the absolute value of the correlation between the SNPs. We use a single-linkage cutoff of \(0.5\), following \citet{Candes2018}. Then for every feature selection method, we use the feature selection set to obtain selected sets of sizes \(\{1, \ldots, 100\}\), providing the estimated clusters to those methods that make use of them. Next, for every method and every model size, we use the model estimation set to estimate linear models by OLS using the selected features. Finally, we use these models to generate predictions for the test set, and evaluate the MSE of the predictions against the actual values. After all simulations are complete, we also evaluate the stability of each method across all simulations, again using the metric proposed by \citet{Nogueira2018}.

\begin{figure}[htbp]
\begin{center}
\includegraphics[width=\textwidth]{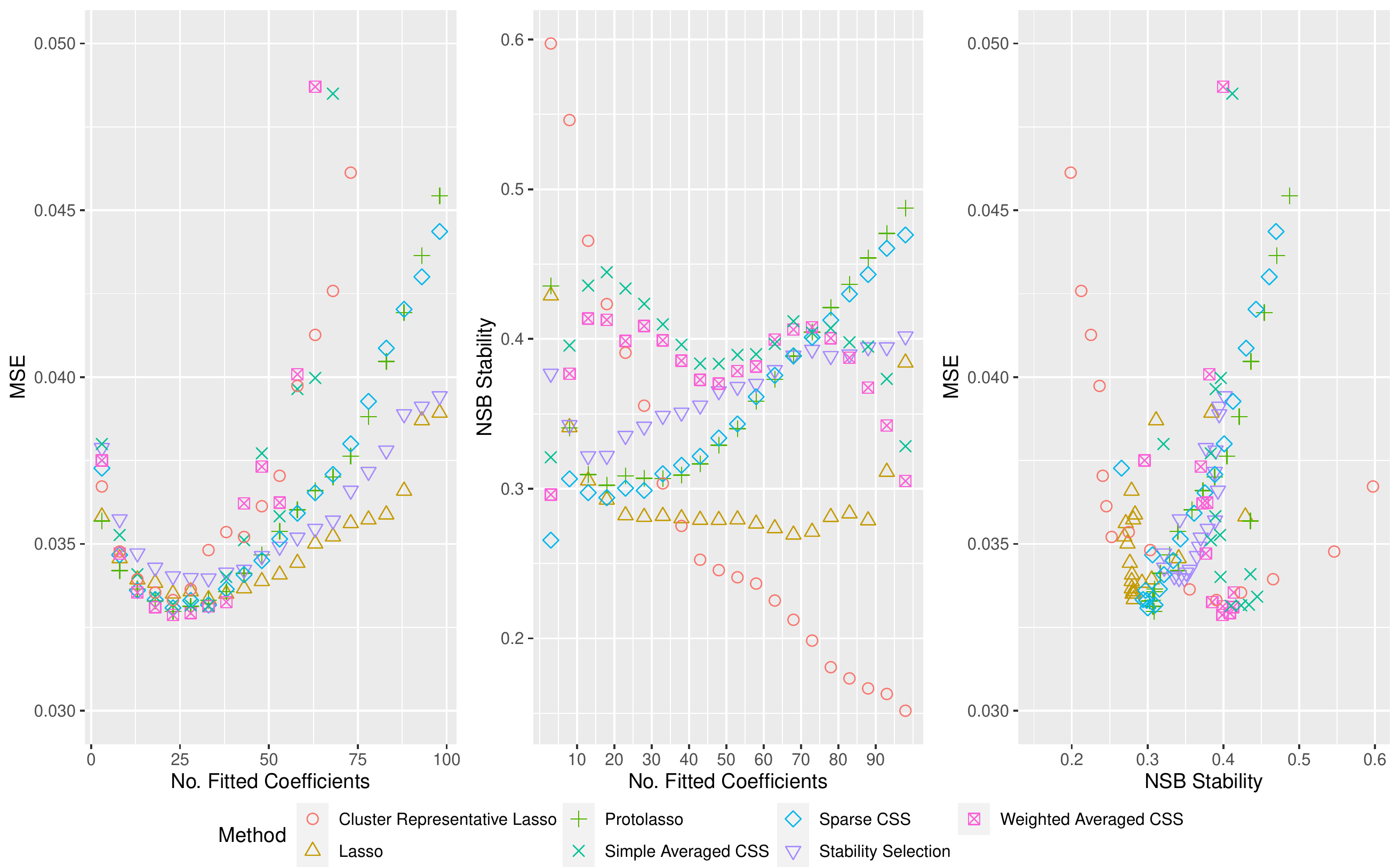}
%[scale=0.65]
\caption{Results from the simulation study in Section \ref{real.data.study2}. The left two panels are plots of average MSE and stability against model size (measured by number of fitted coefficients) for the methods from the simulation study in Section \ref{weight.avg.sim.study} across all 1000 simulations. The right panel plots average MSE against model size. In all cases, points on the plot represent averages across 5 model sizes to smooth the plot (for example, the leftmost points in the left plot are aggregated over model sizes 1 through 5). No error bars are provided in these plots because these represent statistics calculated from subsamples of a fixed data set, not i.i.d. random variables.}
\label{real_data_3_side_plot}
\end{center}
\end{figure}

Figure \ref{real_data_3_side_plot} shows the results. In the left plot, we see that the best-performing model is yielded by weighted averaged cluster stability selection. The remaining cluster stability selection methods are also among the best-performing models. 

The middle plot of Figure \ref{real_data_3_side_plot} shows the stability of each method. At most model sizes (and in particular at the model sizes that are best for predictive performance, in the size range of 20 -- 35), the cluster stability selection with averaging methods are the most stable. (Note that unlike in the simulation study, weighted averaged cluster stability selection has appreciably different stability that simple averaged cluster stability selection, because in this setting weighted averaged cluster stability selection assigns weight 0 to some cluster members with appreciably high probability.)

Examining the third plot in Figure \ref{real_data_3_side_plot}, we see that the cluster stability selection with averaging methods are among the best models for predictive performance, and are more stable than competing methods with similar predictive performance. The models closest to the bottom right corner of the graph (the Pareto frontier of the tradeoff between stability and predictive performance) are mostly yielded by averaging cluster stability selection methods. The three smallest model sizes for the cluster representative lasso are on the Pareto frontier, with higher stability than the CSS methods and impressive predictive performance, but their predictive performance still lags behind cluster stability selection with averaging. Also, these represent smaller models with fewer discoveries.

\section{Conclusion}\label{conclusion}

Stability is a fundamental property in practice and for replicability in science. Practitioners might reasonably hope that if they were to receive another random sample from the same distribution, the main findings of their statistical analysis would be minimally altered.

Stability selection adds stability to the lasso, leading to fewer false selections. However, we have shown that stability selection can miss important features leading to poorly predicting models when there are highly correlated clusters of features. 
Cluster stability selection allows practitioners to exploit knowledge of clustered features in the data to enjoy the benefits of stability selection while still maintaining the lasso's ability to select important clustered features.

\section*{Acknowledgements}

We thank Saharon Rosset for a helpful discussion. This work was supported in part by NSF CAREER Award DMS-1653017.

\bibliographystyle{abbrvnat}
\bibliography{mybib2fin}

\newpage

\appendix

\section{Proof of Theorem \ref{thm.result.sel}}\label{thm.result.sel.proof}

\subsection{Proof of Statement (i)} 

Note that
\begin{align*}
 \frac{100}{\sqrt{n \log n}} & < \frac{19}{5} \sqrt{ \frac{2 + \sigma_\epsilon^2 }{4 c_2 }} \frac{\left(\log n \right)^{3/4}}{n^{1/2}}
 \\ \iff \qquad  \frac{500}{19} \sqrt{ \frac{4 c_2 }{2 + \sigma_\epsilon^2 }}  & <\left(\log n \right)^{5/4}
  \\ \iff \qquad \exp \left\{ \left( \frac{1000}{19} \sqrt{ \frac{ c_2 }{2 + \sigma_\epsilon^2 }} \right)^{4/5} \right\} & < n .
\end{align*}
Because \(c_2 < (e-1)/(8e^2)\) and this quantity is decreasing in \(\sigma_\epsilon^2 \geq 0\), a sufficient condition for this is
\[
n >  81 > \exp \left\{ \left( \frac{1000}{19} \sqrt{ \frac{ e-1 }{8e^2 \cdot 2  }}  \right)^{4/5} \right\}  ,
\]
so 
\[
10 \sigma_\zeta^2(n) =  \frac{100}{\sqrt{n \log n}} < \frac{19}{10}\sqrt{\frac{ 2 + \sigma_\epsilon^2 }{c_2}}   \frac{\left(\log n \right)^{3/4}}{n^{1/2}}
\]
holds for \(n \geq 100\). It remains to show that \(I(n) \subseteq (1,2)\) under our assumptions. It is clear that \(1 + 10 \sigma_\zeta^2(n) > 1\). To see that the upper bound of \(I(n)\) is less than 2, note that since
\[
 \frac{\left(\log n \right)^{3/4}}{n^{1/2}}  < \sqrt{\frac{c_2}{3.61 \left( 5 + \sigma_\epsilon^2 \right)}} =  \frac{10}{19}\sqrt{\frac{c_2}{ 5 + \sigma_\epsilon^2 }} 
 \]
 from \eqref{c4n.conds.max}, we have
 \begin{align*}
 \frac{19}{10}\sqrt{\frac{ 2 + \sigma_\epsilon^2 }{c_2}}   \frac{\left(\log n \right)^{3/4}}{n^{1/2}} & < \sqrt{\frac{2 + \sigma_\epsilon^2}{5  + \sigma_\epsilon^2}} < 1,
 \end{align*} 
so we have that \(\beta_Z \in (1, 2)\) for all \(n\) satisfying the assumptions of Theorem \ref{thm.result.sel}. 
\subsection{Proof of Statement (ii)} Our proof strategy will be to walk through the lasso path as \(\lambda\) decreases from \(\infty\) and the first two features enter to show that a few events are sufficient for \(\boldsymbol{X}_{\cdot 3}\) to be the second feature to enter the lasso path. We then show that one of these events holds with probability tending towards \(1/2\) and the rest hold with probability tending towards 1. Then the final result will come from a union bound. 

Throughout this proof we will refer to the Karush-Kuhn-Tucker (KKT) conditions
\begin{equation}\label{lasso.gen.simplest.kkt.conds}
-\frac{1}{n  \lVert \boldsymbol{X}_{\cdot j} \rVert_2} \boldsymbol{X}_{\cdot j}^\top\left(\boldsymbol{y} - \sum_{\ell =1}^3 \frac{\boldsymbol{X}_{\cdot \ell}}{\lVert \boldsymbol{X}_{\cdot \ell} \rVert_2} \hat{\beta}_\ell(\lambda) \right) + \lambda s_j = 0, \qquad \forall j \in [3]
\end{equation}
where \(\hat{\beta}_\ell(\lambda)\) is the lasso estimated coefficient for feature \(\ell\) at \(\lambda\) and 
\[
s_j \in \begin{cases}
\left\{ \operatorname{sgn} \left(\hat{\beta}_j(\lambda) \right) \right\}, & \hat{\beta}_j(\lambda) \neq 0 \\
[-1, 1], & \hat{\beta}_j(\lambda) = 0
\end{cases}, \qquad \forall j \in [3].
\]
Consider
\begin{equation}\label{def.lambda.1}
\lambda_1
=   \max_j   \left\{ \frac{\left| \boldsymbol{X}_{\cdot j}^\top \boldsymbol{y} \right| }{n \lVert \boldsymbol{X}_{\cdot j} \rVert_2 } \right\} .
\end{equation}
For \(\lambda \geq \lambda_1\), \(\hat{\beta}(\lambda) = 0\) is a solution to \eqref{lasso.solution} because it satisfies \eqref{lasso.gen.simplest.kkt.conds} with
\[
s_j = \frac{\boldsymbol{X}_{\cdot j}^T \boldsymbol{y}}{n \lambda \lVert \boldsymbol{X}_{\cdot j} \rVert_2},
\]
and \(s_j \in [-1, 1]\) as long as \(\lambda \geq \lambda_1\). 
The first feature enters the active set for \(\lambda < \lambda_1\), and is the feature \(j\) attaining the maximum in \eqref{def.lambda.1}. That is, if all of these sample correlations are positive, the first feature to enter the active set is the one with the largest correlation with \(\boldsymbol{y}\). Define the uncentered sample correlations 
%, we know \(\hat{\beta}(\infty) = 0\); in particular, for sufficiently large \(\lambda\), \(\hat{\beta}(\lambda) =0\) satisfies the KKT conditions \eqref{lasso.gen.simplest.kkt.conds}. That is, for sufficiently large \(\lambda\), for every \(j \in [3]\) there exists some \(s_j \in [-1, 1]\) such that
%the c
%\begin{align}
%& \frac{1}{n  \lVert \boldsymbol{X}_{\cdot j} \rVert_2} \boldsymbol{X}_{\cdot j}^\top \boldsymbol{y} = \lambda s_j .\nonumber
%%\\ \iff \qquad &  X_j^\top y = \lambda \sqrt{n} \lVert X_j \rVert_2 s_j \nonumber
%\end{align}
%
%This implies
%
%\begin{align*}
%& \left| \frac{\boldsymbol{X}_{\cdot j}^\top \boldsymbol{y} }{n  \lVert \boldsymbol{X}_{\cdot j} \rVert_2} \right| \leq \lambda .
%%\label{lasso.gen.simplest.ktt.2}
%%\\ \iff \qquad &    \lambda s_j =  \frac{\boldsymbol{X}_{\cdot j}^\top \boldsymbol{y}}{n \lVert \boldsymbol{X}_{\cdot j} \rVert_2 }  
%\end{align*}
%\begin{equation}\label{lasso.gen.lambda.1.exp}
%\end{equation}
%In fact, \(\hat{\beta}(\lambda) =0\) for all \(\lambda \in (\lambda_1, \infty]\), where
%
%
%\lambda_1 = \max_j \left\{ \frac{n^{-1} | X_j^\topy|}{n^{-1/2} \lVert X_j \rVert_2} \right\},
\begin{equation}\label{lemma.unc.samp.corr.def}
\hat{R}_{jy} = \frac{\boldsymbol{X}_{\cdot j}^\top \boldsymbol{y}}{\lVert \boldsymbol{X}_{\cdot j} \rVert_2 \lVert \boldsymbol{y} \rVert_2} 
%= \frac{\tilde{x}_j^\topy}{\lVert \boldsymbol{y} \rVert_2}
, \qquad j \in [3] , \qquad \hat{R}_{12} = \frac{\boldsymbol{X}_{\cdot 2}^\top \boldsymbol{X}_{\cdot 1}}{\lVert \boldsymbol{X}_{\cdot 2} \rVert_2 \lVert \boldsymbol{X}_{\cdot 1} \rVert_2} 
.
\end{equation}
Define the events
\begin{center}
\noindent\fbox{
\parbox{0.9\textwidth}{
\begin{align}
\mathcal{A}_{12} := & \left\{\hat{R}_{1y} -  \hat{R}_{2y}  >   0 \right\},  \nonumber % \label{lasso.gen.def.a1.eta}
\\ \mathcal{A}_{13} := &  \left\{ \hat{R}_{1y} - \hat{R}_{3y} > 0 \right\}, \qquad \text{and} \nonumber % \label{lasso.gen.def.a13.eta}
\\
S_1 :=  & \left\{ \hat{R}_{1y} > 0, \hat{R}_{2y} > 0, \hat{R}_{3y} > 0,  \hat{R}_{12} > 0 \right\}. \nonumber % \label{lasso.gen.def.a1.s1}
\end{align}
}
}
\end{center}
Note that under \(\mathcal{A}_{12} \cap \mathcal{A}_{13}  \cap S_1\), \(\boldsymbol{X}_{\cdot 1}\) is the first feature to enter the active set and \(s_1 = \operatorname{sgn}(\boldsymbol{X}_{\cdot 1}^\top \boldsymbol{y}) = 1\).

Next we will consider the second feature to enter the active set. Denote by \(\operatorname{supp} \left( \hat{\beta}(\lambda) \right) \subset [3]\) the active set at \(\lambda\). Let \(\lambda_2\) be the first (greatest) \(\lambda < \lambda_1\) where \( \operatorname{supp} \left( \hat{\beta}(\lambda) \right) \neq \operatorname{supp} \left(  \hat{\beta}(\lambda_1) \right) \). At each knot in the lasso path, a feature may either enter or leave the active set. We show that \(\boldsymbol{X}_{\cdot 1}\) cannot leave the active set before another feature enters:
\begin{lemma}\label{lemma.x1.not.leave}

The first feature to enter the lasso path cannot leave at the second knot. (That is, at the second knot in the lasso path, a second feature enters the model with probability one.)

\end{lemma}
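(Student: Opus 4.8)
The plan is to exploit the fact that, on the stretch of the path where only the first feature is active, its coefficient is given in closed form by a univariate soft-thresholding formula, and that this formula is strictly monotone in \(\lambda\), moving away from zero as \(\lambda\) decreases. Since a feature can leave the active set only when its coefficient passes through zero, the first feature cannot leave before \(\lambda\) returns to \(\lambda_1\); hence the support change at \(\lambda_2\) must be the entry of a new feature.

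Concretely, first I would restrict attention to \(\lambda\) just below \(\lambda_1\), where by definition the active set is the singleton \(\{j^*\}\) consisting of the first feature to enter (in the events of interest, \(j^* = 1\) with \(s_1 = 1\)). Writing \(\tilde{\boldsymbol{X}}_{\cdot j^*} = \boldsymbol{X}_{\cdot j^*}/\lVert \boldsymbol{X}_{\cdot j^*} \rVert_2\) for the scaled feature (which has unit norm), the KKT condition \eqref{lasso.gen.simplest.kkt.conds} for the single active coordinate reduces to \(\tfrac{1}{n}\bigl(\tilde{\boldsymbol{X}}_{\cdot j^*}^\top \boldsymbol{y} - \hat{\beta}_{j^*}(\lambda)\bigr) = \lambda \,\operatorname{sgn}\bigl(\hat{\beta}_{j^*}(\lambda)\bigr)\), which solves to \(\hat{\beta}_{j^*}(\lambda) = \tilde{\boldsymbol{X}}_{\cdot j^*}^\top \boldsymbol{y} - n\lambda\,\operatorname{sgn}(\tilde{\boldsymbol{X}}_{\cdot j^*}^\top \boldsymbol{y})\), valid as long as this keeps its sign, with the inactive KKT slacks \(|s_\ell| < 1\) satisfied by continuity from \(\lambda_1\).

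Next I would read off the two structural facts from this expression. First, \(\hat{\beta}_{j^*}(\lambda)\) is affine in \(\lambda\) with magnitude strictly increasing as \(\lambda\) decreases from \(\lambda_1\); in particular \(\hat{\beta}_{j^*}(\lambda) \neq 0\) for every \(\lambda < \lambda_1\) on this stretch, since it vanishes only at \(\lambda = \lambda_1\) (where \(\lambda_1 = |\tilde{\boldsymbol{X}}_{\cdot j^*}^\top \boldsymbol{y}|/n\)). Second, because the lasso path is continuous in \(\lambda\), the only way feature \(j^*\) could leave the active set at a knot is for \(\hat{\beta}_{j^*}\) to reach zero there; the monotonicity just established rules this out. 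Consequently, the support can change at \(\lambda_2\) only by some inactive slack \(|s_\ell|\) reaching \(1\), i.e., by a new feature entering.

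Finally I would dispatch the ``with probability one'' qualifier using genericity of the Gaussian design: with probability one the lasso solution is unique along the path and no two features share the same correlation with \(\boldsymbol{y}\) or enter at exactly the same \(\lambda\), so the active set is a well-defined singleton on \((\lambda_2, \lambda_1)\) and exactly one feature enters at \(\lambda_2\). The main obstacle is the bookkeeping needed to guarantee the single-variable closed form is valid on the entire interval \((\lambda_2, \lambda_1)\) — i.e., tying the definition of \(\lambda_2\) to the first moment an inactive slack hits \(1\) rather than to the active coefficient hitting zero — and arguing cleanly that ``leaving the active set'' forces a zero crossing of a continuous coefficient path; once that setup is in place, the monotonicity computation finishes the argument immediately.
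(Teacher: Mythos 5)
Your proposal is correct and follows essentially the same route as the paper: both solve the single-active-feature KKT condition to get the affine coefficient $\hat{\beta}_{j^*}(\lambda) = \tilde{\boldsymbol{X}}_{\cdot j^*}^\top \boldsymbol{y} - n\lambda s_{j^*}$, and both observe that this coefficient vanishes exactly at $\lambda_1$ (the paper phrases this as the crossing knot $\lambda_2^{\text{cross}}$ of \citet{Tibshirani2013} coinciding with $\lambda_1$), so the first feature cannot return to zero before a second feature enters. The paper's version is just a more compact packaging of your monotonicity argument via the crossing-time formula.
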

(The proofs of all lemmas stated in the proof of this theorem are provided in Appendix \ref{sec.lemma.prob}.) That is, at \(\lambda_2\), \(\boldsymbol{X}_{\cdot 1}\) is never removed from the active set; instead, either \(\boldsymbol{X}_{\cdot 2}\) or \(\boldsymbol{X}_{\cdot 3}\) enters. In particular, for \(\boldsymbol{X}_{\cdot 2}\) and \(\boldsymbol{X}_{\cdot 3}\) there exist knots \(\lambda_2^{(2)}\) and \(\lambda_2^{(3)}\) determined by the KKT conditions \eqref{lasso.gen.simplest.kkt.conds} such that the next feature to enter the active set is \(\underset{i \in \{2, 3\}}{\arg \max} \left\{\lambda_2^{(i)}:  \lambda_2^{(i)} < \lambda_1 \right\}\) \citep{Tibshirani2013}. Therefore to show that \(\boldsymbol{X}_{\cdot 3}\) enters before \(\boldsymbol{X}_{\cdot 2}\), it is enough to show

\begin{equation}\label{lambda.ineq.order}
\lambda_2^{(2)} < \lambda_2^{(3)} < \lambda_1.
\end{equation}

We will calculate \(\lambda_2^{(2)}\) and \(\lambda_2^{(3)}\) to determine which feature enters next. We will show in Lemma \ref{lemma.e.event} that on the event
\begin{center}
\noindent\fbox{
\parbox{0.9\textwidth}{
% \begin{equation}\label{lasso.gen.def.a3.prime}
 \[
% \boxed{
% \mathcal{A}_2 := \{ s_3 = 1\}.
 \mathcal{A}_3 := \left\{ \hat{R}_{3y} -   \hat{R}_{13}  \hat{R}_{1y} 
 \geq 0 \right\},
% }
\]
% \end{equation}
 }
 }
 \end{center}
we have
\begin{equation} \label{lasso.gen.lambda.3.exp}
 \lambda_2^{(3)} =  \frac{\lVert \boldsymbol{y} \rVert_2}{n} \frac{ \hat{R}_{3y}  - \hat{R}_{13} \hat{R}_{1y} }{1 -  \hat{R}_{13}}.
 \end{equation}
Define the events
\begin{center}
\noindent\fbox{
\parbox{0.9\textwidth}{
%\begin{equation}\label{lasso.gen.cond.sel.order2}
\[
\mathcal{E}_1 := 
\left\{
\frac{\hat{R}_{1y} - \hat{R}_{2y}}{1  - \hat{R}_{12}} 
> 
\frac{\hat{R}_{1y} - \hat{R}_{3y}}{1  - \hat{R}_{13}} 
\right\}  , \qquad  \mathcal{E}_2 := \left\{
\frac{\hat{R}_{1y} + \hat{R}_{2y}}{1  + \hat{R}_{12}} 
> 
\frac{\hat{R}_{1y} - \hat{R}_{3y}}{1  - \hat{R}_{13}} 
\right\} .
\]
%\end{equation}
}
}
\end{center}
We show that \(\mathcal{A}_3  \cap \mathcal{E}_1  \cap \mathcal{E}_2\), along with the other events so far, is sufficient to ensure that \(\boldsymbol{X}_{\cdot 3}\) enters the active set before \(\boldsymbol{X}_{\cdot 2}\):
 \begin{lemma}\label{lemma.e.event}
 
Under the event \( \mathcal{A}_{12} \cap \mathcal{A}_{13}  \cap S_1  \cap \mathcal{A}_3 \cap \mathcal{E}_1  \cap \mathcal{E}_2 \), \eqref{lambda.ineq.order} and \eqref{lasso.gen.lambda.3.exp} hold, and the first two features in the lasso selection path are \(\boldsymbol{X}_{\cdot 1}\) followed by \(\boldsymbol{X}_{\cdot 3}\).
 
 \end{lemma}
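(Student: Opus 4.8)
The plan is to trace the lasso solution over the interval $\lambda \in (\lambda_2, \lambda_1]$, during which $\boldsymbol{X}_{\cdot 1}$ is the only active feature, and to compute explicitly the value of $\lambda$ at which each of $\boldsymbol{X}_{\cdot 2}$ and $\boldsymbol{X}_{\cdot 3}$ would next violate its KKT inequality. Writing $u_j := \boldsymbol{X}_{\cdot j}/\lVert \boldsymbol{X}_{\cdot j}\rVert_2$ and $c_j := \tfrac1n u_j^\top \boldsymbol{y} = \tfrac{\lVert \boldsymbol{y}\rVert_2}{n}\hat{R}_{jy}$, the events $\mathcal{A}_{12}\cap\mathcal{A}_{13}\cap S_1$ already give that $\boldsymbol{X}_{\cdot 1}$ enters first with $s_1 = 1$ and that $\lambda_1 = c_1$. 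First I would solve the active KKT equation $\tfrac1n u_1^\top(\boldsymbol{y} - u_1\hat\beta_1(\lambda)) = \lambda$ to get $\hat\beta_1(\lambda) = n(c_1 - \lambda)$, valid while only feature $1$ is active (Lemma \ref{lemma.x1.not.leave} guarantees feature $1$ does not leave first, so this branch persists down to $\lambda_2$).

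Next, substituting this coefficient, the correlation of the current residual with an inactive feature $j \in \{2,3\}$ is $g_j(\lambda) := \tfrac1n u_j^\top(\boldsymbol{y} - u_1\hat\beta_1(\lambda)) = c_j - \rho_{j1}(c_1 - \lambda)$, where $\rho_{j1} := u_j^\top u_1 = \hat{R}_{j1}$. Feature $j$ enters at the largest $\lambda < \lambda_1$ solving $|g_j(\lambda)| = \lambda$; the two candidates are $\lambda_2^{(j),+} = \tfrac{\lVert\boldsymbol{y}\rVert_2}{n}\tfrac{\hat{R}_{jy} - \rho_{j1}\hat{R}_{1y}}{1 - \rho_{j1}}$ and $\lambda_2^{(j),-} = \tfrac{\lVert\boldsymbol{y}\rVert_2}{n}\tfrac{\rho_{j1}\hat{R}_{1y} - \hat{R}_{jy}}{1 + \rho_{j1}}$, and $\lambda_2^{(j)}$ is their maximum. (Here $1 - \rho_{j1} > 0$ almost surely, since $\boldsymbol{X}_{\cdot 1},\boldsymbol{X}_{\cdot 2},\boldsymbol{X}_{\cdot 3}$ are a.s.\ not perfectly collinear.) For $j = 3$, event $\mathcal{A}_3$ makes the numerator of $\lambda_2^{(3),+}$ nonnegative, whence $\lambda_2^{(3),+} \geq 0 \geq \lambda_2^{(3),-}$; this identifies $\lambda_2^{(3)} = \lambda_2^{(3),+}$ and establishes \eqref{lasso.gen.lambda.3.exp}.

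The crux is then to show $\lambda_2^{(2)} < \lambda_2^{(3)} < \lambda_1$. The key algebraic observation is the complementary rewriting $\tfrac{n}{\lVert\boldsymbol{y}\rVert_2}\,\lambda_2^{(j),\pm} = \hat{R}_{1y} - \tfrac{\hat{R}_{1y}\mp\hat{R}_{jy}}{1\mp\rho_{j1}}$, obtained by adding and subtracting $\hat{R}_{1y}$ in each numerator. With this, the inequality $\lambda_2^{(2),+} < \lambda_2^{(3)}$ collapses exactly to $\tfrac{\hat{R}_{1y}-\hat{R}_{2y}}{1-\hat{R}_{12}} > \tfrac{\hat{R}_{1y}-\hat{R}_{3y}}{1-\hat{R}_{13}}$, i.e.\ event $\mathcal{E}_1$, while $\lambda_2^{(2),-} < \lambda_2^{(3)}$ collapses to $\tfrac{\hat{R}_{1y}+\hat{R}_{2y}}{1+\hat{R}_{12}} > \tfrac{\hat{R}_{1y}-\hat{R}_{3y}}{1-\hat{R}_{13}}$, i.e.\ event $\mathcal{E}_2$. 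Since $\lambda_2^{(2)} = \max(\lambda_2^{(2),+}, \lambda_2^{(2),-})$, intersecting $\mathcal{E}_1\cap\mathcal{E}_2$ yields $\lambda_2^{(2)} < \lambda_2^{(3)}$. Finally $\lambda_2^{(3)} < \lambda_1$ follows from the same rewriting, since $\mathcal{A}_{13}$ gives $\hat{R}_{1y}-\hat{R}_{3y} > 0$ and $1-\hat{R}_{13} > 0$, so $\tfrac{n}{\lVert\boldsymbol{y}\rVert_2}\lambda_2^{(3)} = \hat{R}_{1y} - \tfrac{\hat{R}_{1y}-\hat{R}_{3y}}{1-\hat{R}_{13}} < \hat{R}_{1y} = \tfrac{n}{\lVert\boldsymbol{y}\rVert_2}\lambda_1$. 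This gives \eqref{lambda.ineq.order}; combined with the knot-ordering characterization cited just before the lemma (and Lemma \ref{lemma.x1.not.leave}), the second feature to enter is $\boldsymbol{X}_{\cdot 3}$, so the first two selected features are $\boldsymbol{X}_{\cdot 1}$ followed by $\boldsymbol{X}_{\cdot 3}$.

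The main obstacle I expect is the sign handling for feature $2$: unlike feature $3$, where $\mathcal{A}_3$ pins down which KKT face is crossed first, we have no control over the sign of feature $2$'s incoming coefficient, so both crossing points $\lambda_2^{(2),\pm}$ must be dominated by $\lambda_2^{(3)}$ --- which is precisely why two separate events $\mathcal{E}_1$ and $\mathcal{E}_2$ are needed rather than one. The remaining care is purely bookkeeping: tracking the common factor $\lVert\boldsymbol{y}\rVert_2/n$ and verifying $1-\hat{R}_{12}, 1-\hat{R}_{13} > 0$ almost surely so that no denominator changes sign during the manipulations.
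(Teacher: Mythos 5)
Your proposal is correct and follows essentially the same route as the paper's proof: you derive $\hat{\beta}_1(\lambda)$ from the active KKT condition, identify the two sign-dependent crossing points for each inactive feature (your $\lambda_2^{(j),\pm}$ are exactly the paper's $\max_{s_2 \in \{-1,1\}}$ formulation), use $\mathcal{A}_3$ to select the positive branch and obtain \eqref{lasso.gen.lambda.3.exp}, and apply the same add-and-subtract-$\hat{R}_{1y}$ rewriting to reduce $\lambda_2^{(2)} < \lambda_2^{(3)}$ to $\mathcal{E}_1 \cap \mathcal{E}_2$ and $\lambda_2^{(3)} < \lambda_1$ to $\mathcal{A}_{13}$. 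No gaps.
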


We will be almost done if we can show that \( \mathcal{A}_{12}\) occurs with probability tending towards \(1/2\) and the rest of the events occur with probability tending towards 1. It is straightforward that \(\mathbb{P}(\mathcal{A}_{12}) = 1/2\) for all \(n\) by exchangeability of \(\boldsymbol{X}_{\cdot 1}\) and \(\boldsymbol{X}_{\cdot 2}\). However, the event \(\mathcal{E}_1\) complicates our analysis. Specifically, since \(\E[\hat{R}_{1y} - \hat{R}_{3y}] = \rho_{1y}(n) - \rho_{3y} > 0\) and asymptotically \(1 - \hat{R}_{13} \to 1 - \rho_{13} = 1\) and \(\E[1 - \hat{R}_{12}] = 1 - \rho_{12}(n) \to 0\), for \(\mathcal{E}_1\) to hold we need \(\hat{R}_{1y} - \hat{R}_{2y}\) to vanish slowly enough in \(n\). To sort this out, we define
%it turns out that showing this will be complicated by the fact that \(\hat{R}_{1y} - \hat{R}_{2y}\) appears in both \(\mathcal{A}_{12}(n)\) and the left side of \(\mathcal{E}_1\). 
\[
\mathcal{A}_{12}(n) := \left\{\hat{R}_{1y} -  \hat{R}_{2y}  >   \eta(n) \right\}
\]
and
\begin{equation}\label{lasso.gen.cond.sel.order3}
\tilde{\mathcal{E}}_1(n) := \left\{ \frac{\eta(n)}{1 - \hat{R}_{12}} > \frac{\hat{R}_{1y} - \hat{R}_{3y}}{1 - \hat{R}_{13}} \right\}
\end{equation}
for some function \(\eta: \mathbb{N} \to \mathbb{R}_{++}\). Note that \(\mathcal{A}_{12}(n)  \cap \tilde{\mathcal{E}}_1(n) \) implies \( \mathcal{A}_{12} \cap \mathcal{E}_1\), so we can change our focus to bounding the probability of the event \(\mathcal{A}_{12}(n) \cap \mathcal{A}_{13} \cap S_1  \cap \mathcal{A}_3  \cap \tilde{\mathcal{E}}_1(n) \cap \mathcal{E}_2\). As discussed, if \(\eta(n)\) goes to 0 too quickly, \(\tilde{\mathcal{E}}_1(n)\) could fail to hold with high probability. But as \(n \to \infty\), \(\hat{R}_{1y} - \hat{R}_{2y}\) will concentrate around its expectation, 0, with high probability, so \( \eta(n)\) will need to tend towards 0 quickly enough for \(\mathcal{A}_{12}(n)\) to hold with high probability, 

%In particular, under our assumptions the probability of \(\tilde{\mathcal{E}}_1(\eta)\) will tend towards 1. Examining the right side of \eqref{lasso.gen.cond.sel.order3}, \(\hat{R}_{13}\) is an estimator for \(\rho_{13} = 0\), so the right side is roughly \(\rho_{1y}(n) - \rho_{3y} > 0\). In the denominator of the left side, \( \hat{R}_{12}\) tends to 1 because \(\sigma_\zeta^2(n)\) (the variance of the noise added to \(\boldsymbol{Z}\) to get \(\boldsymbol{X}_{\cdot 1}\) and \(\boldsymbol{X}_{\cdot 2}\)) tends to 0. Specifically, the rate of convergence of the denominator to 0 will be driven by the quantity
%\[
% 1 - \rho_{12}(n)=   \frac{\sigma_\zeta^2(n)}{1 + \sigma_\zeta^2(n)}   = \frac{10}{\sqrt{n \log n} + 10} 
%\]
%where \(\sigma_\zeta^2(n) = 10/\sqrt{n \log n} \) as in \eqref{lasso.thm.sig.zeta.def} and the expression for \(\rho_{12}(n)\) is calculated in Lemma \ref{lemma.dist.bounds}. 
Soon we will concern ourselves with a good choice of \(\eta(n)\), but for now, the following result allows us to bound \(\mathbb{P} \left( \mathcal{A}_{12}(n) \right)\) for an arbitrary \(\eta\).
\begin{proposition}\label{lemma.prob.a1.eta} Suppose
\begin{equation}\label{lemma.cov.matrix}
\begin{pmatrix}
y_i \\
X_{i1} \\
X_{i2} 
\end{pmatrix} \sim \mathcal{N} \left( \boldsymbol{0}, \begin{bmatrix}
\Sigma_{yy} & \Sigma_{1y} & \Sigma_{2y} \\
\Sigma_{1y} & \Sigma_{11} & \Sigma_{12}  \\
\Sigma_{2y} & \Sigma_{12} & \Sigma_{22} 
\end{bmatrix} \right), \qquad \forall i \in [n],
\end{equation}
are \(n\) i.i.d. draws with \(\Sigma_{yy} > \Sigma_{1y} = \Sigma_{2y} > \Sigma_{11} = \Sigma_{22} > \Sigma_{12}  \geq 1\). Let \( \boldsymbol{y} := (y_1, \ldots, y_n)^\top\) and \(\boldsymbol{X}_{\cdot j} := (X_{1j}, \ldots, X_{nj})^\top, j \in [3]\). Assume \(n \geq 100\). Define the uncentered sample correlations as in \eqref{lemma.unc.samp.corr.def}. Then for any \(\eta > 0\),
 \begin{align}
 \mathbb{P} \left(  \hat{R}_{1y} - \hat{R}_{2y} \leq \eta \right)   \leq  & \Phi\left( \frac{ \eta \sqrt{n}}{\tilde{\sigma}}\right) +   \left(    \frac{1463}{ \tilde{\sigma}} + 14 \right)  \frac{\Sigma_{yy}^3}{n^{1/2}}   , \label{lemma.prob.ineq.1}
 \end{align}
where \(\tilde{\sigma}\) is defined in \eqref{lemma.def.tilde.sigma} and \(\Phi\) is the distribution function for a standard Gaussian random variable.

\end{proposition}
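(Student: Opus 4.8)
The plan is to show that $\hat{R}_{1y} - \hat{R}_{2y}$ is, up to a negligible remainder, a normalized average of i.i.d.\ mean-zero random variables, and then to invoke a Berry--Esseen bound. First I would write each uncentered correlation as a smooth function of the sample second moments $\hat{a}_j = \tfrac{1}{n}\boldsymbol{X}_{\cdot j}^\top \boldsymbol{y}$, $\hat{b}_j = \tfrac{1}{n}\lVert \boldsymbol{X}_{\cdot j}\rVert_2^2$ (for $j \in \{1,2\}$), and $\hat{c} = \tfrac{1}{n}\lVert \boldsymbol{y}\rVert_2^2$, namely $\hat{R}_{jy} = \hat{a}_j / \sqrt{\hat{b}_j \hat{c}}$. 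Each of these is a sample mean of products of the jointly Gaussian observations, with population means $\Sigma_{1y} = \Sigma_{2y}$, $\Sigma_{11} = \Sigma_{22}$, and $\Sigma_{yy}$. A first-order (delta-method) expansion of $g(\hat{a}_1, \hat{a}_2, \hat{b}_1, \hat{b}_2, \hat{c}) := \hat{R}_{1y} - \hat{R}_{2y}$ about these means gives $\hat{R}_{1y} - \hat{R}_{2y} = L_n + \mathrm{Rem}_n$, where $L_n = \tfrac{1}{n}\sum_{i=1}^n \psi(y_i, X_{i1}, X_{i2})$ is an average of i.i.d.\ terms, $\psi$ being a fixed quadratic form built from the differences $X_{i1}y_i - X_{i2}y_i$ and $X_{i1}^2 - X_{i2}^2$, and $\mathrm{Rem}_n$ collects the higher-order Taylor terms. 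Crucially, because $X_{i1}$ and $X_{i2}$ are exchangeable given $y_i$, the shared $\hat{c}$-fluctuation cancels in the difference and $\E[\psi] = 0$, while the population correlations coincide so the constant term vanishes. This same exchangeability shows $\hat{R}_{1y} - \hat{R}_{2y}$ is symmetric about $0$, which is what forces the leading constant $\Phi(0) = 1/2$ to be correct.

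Second, I would apply a Berry--Esseen bound to $L_n$, writing $\tilde{\sigma}^2 = \Var(\psi)$ (the quantity in \eqref{lemma.def.tilde.sigma}), to obtain $\mathbb{P}(L_n \le s) \le \Phi(s\sqrt{n}/\tilde{\sigma}) + c\,\E|\psi|^3 / (\tilde{\sigma}^3 \sqrt{n})$ for the classical Berry--Esseen constant $c$. The third absolute moment $\E|\psi|^3$ is a moment of a quadratic form in Gaussians whose coefficients are bounded (since $\Sigma_{11} = \Sigma_{22} > \Sigma_{12} \ge 1$ forces the normalizing factors $1/\Sigma_{\cdot y}$ and $1/\Sigma_{xx}$ below $1$) and whose variances are all dominated by $\Sigma_{yy}$ under the ordering $\Sigma_{yy} > \Sigma_{1y} = \Sigma_{2y} > \Sigma_{11} = \Sigma_{22} > \Sigma_{12} \ge 1$; this bounds $\E|\psi|^3$ by a constant multiple of $\Sigma_{yy}^3$, which is the source of both the $\Sigma_{yy}^3$ factor and the $1/\tilde{\sigma}$ dependence in the stated error.

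Third, I would control $\mathrm{Rem}_n$, which is the genuine obstacle, since the random denominators $\hat{b}_j, \hat{c}$ prevent a clean exact reduction. I would restrict to a high-probability event on which these sample second moments lie within a constant factor of their population values, so that the quadratic Taylor terms are uniformly bounded; off this event, the probability is controlled by sub-exponential concentration of the sample moments, each being an average of products of Gaussians. Writing $\{\hat{R}_{1y} - \hat{R}_{2y} \le \eta\} \subseteq \{L_n \le \eta + r\} \cup \{|\mathrm{Rem}_n| > r\}$ for a suitable threshold $r$, I would bound the first event by Berry--Esseen together with the Lipschitz property of $\Phi$ (so that $\Phi((\eta+r)\sqrt{n}/\tilde{\sigma}) \le \Phi(\eta\sqrt{n}/\tilde{\sigma}) + r\sqrt{n}/(\tilde{\sigma}\sqrt{2\pi})$), and the second by concentration, then choose $r$ to balance the two contributions.

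The hard part will be this third step: the nonlinear denominators force the remainder to be bounded with explicit constants and then traded off against the Berry--Esseen shift in a way that preserves the $n^{-1/2}$ rate. Keeping every constant explicit (so as to arrive at the stated $1463$ and $14$) and verifying that the various moment bounds all collapse into a single $\Sigma_{yy}^3$ factor is where the bulk of the careful but routine calculation lies; the only conceptual input beyond bookkeeping is the exchangeability cancellation that makes $\E[\psi] = 0$ and renders the statistic symmetric, pinning down the $\Phi(\eta\sqrt{n}/\tilde{\sigma})$ term.
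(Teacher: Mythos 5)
Your high-level strategy---express $\hat R_{1y}-\hat R_{2y}$ as a smooth function of sample second moments, linearize, and apply a Berry--Esseen bound---is the same as the paper's, and your linear part $\psi$ agrees with the tangent functional $L(\boldsymbol{V}_1)$ in \eqref{lemma.tangent.plane} (you are right that the $\lVert\boldsymbol{y}\rVert_2^2$-fluctuation cancels because $\Sigma_{1y}=\Sigma_{2y}$ and $\Sigma_{11}=\Sigma_{22}$). The genuine gap is in your third step. The paper never controls the Taylor remainder by an event decomposition: it invokes Theorem 2.11 of \citet{pinelis2016} (restated as Theorem \ref{pinelis.thm.2.9}), a delta-method Berry--Esseen theorem whose conclusion is a uniform bound $\mathcal{C}/\sqrt{n}$ with $\mathcal{C}$ explicit in the Hessian bound $M_\epsilon$ and the moments $\E\lVert\boldsymbol{V}_1\rVert^2$, $\E\lVert\boldsymbol{V}_1\rVert^3$. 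Your proposed substitute---restrict to an event where the sample moments are within $t$ of their means, bound $|\mathrm{Rem}_n|\le M t^2$ there, and trade the tail probability against the Lipschitz shift $M t^2\sqrt{n}/(\tilde\sigma\sqrt{2\pi})$---cannot yield \eqref{lemma.prob.ineq.1}: to push the tail probability down to $O(n^{-1/2})$ you need $n t^2$ of order at least $\log n$, which makes the shift term of order $(\log n)/(\tilde\sigma\sqrt{n})$; replacing exponential concentration by a fixed-moment Chebyshev bound loses a power of $n$ instead. Either way you prove a strictly weaker statement (an extra $\log n$ multiplying the $1/\tilde\sigma$ term at best), not the clean $n^{-1/2}$ bound with the explicit constants $1463$ and $14$. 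Removing that loss is precisely the nontrivial content of the cited theorem, which proceeds by an Esseen-type smoothing argument applied to the full nonlinear statistic rather than by a union bound; so this step is not ``careful but routine calculation''---you would have to either reprove such a result or cite it.

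There is also a quantitative misattribution in your second step that would be fatal in the regime where the proposition is actually used. You bound the classical Berry--Esseen error by $c\,\E|\psi|^3/(\tilde\sigma^3\sqrt{n})$ with $\E|\psi|^3 \leq C\Sigma_{yy}^3$, and assert this is the source of both the $\Sigma_{yy}^3$ factor and the $1/\tilde\sigma$ dependence. But that combination carries the \emph{third} power of $1/\tilde\sigma$. In the application (Lemma \ref{lemma.jacob}), $\tilde\sigma=\tilde\sigma(n)$ is only of order $(n\log n)^{-1/4}$, so $\Sigma_{yy}^3/(\tilde\sigma^3\sqrt{n})$ grows like $n^{1/4}(\log n)^{3/4}$ and the bound would be vacuous. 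The paper avoids this by exploiting that $L(\boldsymbol{V}_1)$ factors as a product of two \emph{independent} Gaussians, so the normalized ratio $\E|L(\boldsymbol{V}_1)|^3/\tilde\sigma^3$ is bounded by the universal constant $8/\pi$ (Lemma \ref{calc.lemma.delt.meth.sigma.tilde}, Inequality \eqref{lemma.bound.ratio}); the Berry--Esseen ratio term then contributes only an absolute constant, and the $\Sigma_{yy}^3/\tilde\sigma$ in \eqref{lemma.prob.ineq.1} comes instead from the remainder terms of the form $M_\epsilon/(2\tilde\sigma)$ times $\E\lVert\boldsymbol{V}_1\rVert^2 < 12\Sigma_{yy}^2$ and $(\E\lVert\boldsymbol{V}_1\rVert^3)^{2/3}$, with $M_\epsilon=36\Sigma_{yy}$ (Lemmas \ref{lemma.calc.v.exp}, \ref{lemma.smoothness.2.1}, and \ref{lemma.ub.mathcal.c}). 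Without this normalized-third-moment observation, your sketch cannot produce a bound of the stated form.
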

\begin{proof} Provided in Appendix \ref{proof.prop}.
\end{proof}
To briefly summarize how we prove Proposition \ref{lemma.prob.a1.eta}, note that \( \hat{R}_{1y} - \hat{R}_{2y} \) is a well-behaved functional of the jointly Gaussian data which, when scaled by \(\sqrt{n}\), converges to a normal distribution by the delta method. Our proof relies on a Berry-Esseen-type result for the delta method due to \citet{pinelis2016}. Because in our setting the correlation between \(\boldsymbol{X}_{\cdot 1}\) and \(\boldsymbol{X}_{\cdot 2}\) varies with \(n\), in our setting \(1/\tilde{\sigma} = \mathcal{O}( (n \log n)^{1/4})\), so examining the argument of \(\Phi(\cdot)\) in \eqref{lemma.prob.ineq.1} we see we will require a choice of \( \eta = \eta(n)\) that goes to 0 quickly enough that \(\eta(n) \cdot n^{3/4} (\log n)^{1/4} \to 0\); then \( \mathbb{P} \left( \mathcal{A}_{12}(n) \right) \to 1/2\).

Meanwhile, all of the events besides \(\mathcal{A}_{12}(n)\) can be shown to hold if the sample correlations concentrate around their expectations due to the following lemma:

\begin{lemma}\label{lem.ineq.assum}

Under the assumptions of Theorem \ref{thm.result.sel}, the following identities and inequalities hold:
%From the setup and assumptions of Theorem \ref{thm.result.sel}, 
\begin{align}
\rho_{12}(n) & > \delta (n) , \label{itm:3}
\\ \rho_{1y}(n) & = \rho_{2y}(n) > \rho_{3y}  >3 \delta (n) > 0,  \label{itm:1} 
 \\ \rho_{1y}(n) -  \rho_{3y}  & \geq 2\delta (n) ,  \text{ and}  \label{itm:7}
\\ \rho_{1y}(n) -  \rho_{3y}  & \leq \frac{19}{5} \left(\log n \right)^{1/4} \delta (n)  , \label{itm:8a}
%\\ \rho_{1y}(n) \left( 1 - \rho_{12}(n)  \right) &\leq 3 \delta (n) , \label{itm:4}
\end{align}
where
\begin{equation}\label{def.delta.n}
\delta(n) := \sqrt{\frac{(\beta_Z^2 + 1 + \sigma_\epsilon^2) \log n}{4 c_2 n}}  =\frac{1}{20}\sqrt{\frac{\beta_Z^2 + 1 + \sigma_\epsilon^2}{c_2}} \log(n) \sigma_\zeta^2(n) ,
\end{equation}
where \(c_2 \in \left(0, \frac{e-1}{8e^2}\right)\) is defined in \eqref{defn.c2}.
\end{lemma}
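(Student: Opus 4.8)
The plan is to turn the lemma into a sequence of deterministic algebraic inequalities, so the first step is to record closed forms for the three distinct population correlations in \eqref{lemma.unc.samp.corr.def}. Since $\boldsymbol{Z}, \boldsymbol{X}_{\cdot 3}, \boldsymbol{\epsilon}, \boldsymbol{\zeta}_1, \boldsymbol{\zeta}_2$ are mutually independent and mean zero with $\Var(Z_1) = \Var(X_{13}) = 1$, substituting $\boldsymbol{X}_{\cdot j} = \boldsymbol{Z} + \boldsymbol{\zeta}_j$ and $\boldsymbol{y} = \beta_Z \boldsymbol{Z} + \boldsymbol{X}_{\cdot 3} + \boldsymbol{\epsilon}$ gives $\rho_{12}(n) = 1/(1 + \sigma_\zeta^2(n))$, $\rho_{1y}(n) = \rho_{2y}(n) = \beta_Z / \sqrt{(1 + \sigma_\zeta^2(n))(\beta_Z^2 + 1 + \sigma_\epsilon^2)}$, and $\rho_{3y} = 1/\sqrt{\beta_Z^2 + 1 + \sigma_\epsilon^2}$. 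The equality $\rho_{1y}(n) = \rho_{2y}(n)$ in \eqref{itm:1} is then immediate from the exchangeability of $\boldsymbol{\zeta}_1$ and $\boldsymbol{\zeta}_2$, and everything else reduces to manipulating these three expressions against the definition \eqref{def.delta.n} of $\delta(n)$, the rate \eqref{lasso.thm.sig.zeta.def}, and the bounds on $\beta_Z$ and $n$.

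I would then dispatch the claims in increasing order of difficulty. For the positivity part of \eqref{itm:1}, $\rho_{1y}(n) > \rho_{3y}$ is equivalent to $\beta_Z > \sqrt{1 + \sigma_\zeta^2(n)}$, which follows from the lower endpoint of $I(n)$ in \eqref{cond.beta.z} together with $\sqrt{1 + x} \le 1 + x/2$. For $\rho_{3y} > 3\delta(n)$, I would square and use the closed form for $\rho_{3y}$ to reduce it to $n/\log n > 9(\beta_Z^2 + 1 + \sigma_\epsilon^2)^2/(4 c_2)$; bounding $\beta_Z^2 + 1 + \sigma_\epsilon^2 < 5 + \sigma_\epsilon^2$ (valid since $\beta_Z < 2$ by statement (i)), this follows from \eqref{n.large.delta.cond.max} once I check $\max\{2(12 + \sigma_\epsilon^2),\, 5(1 + \sigma_\epsilon^2)\} \ge \tfrac{9}{4}(5 + \sigma_\epsilon^2)$ for every $\sigma_\epsilon^2 \ge 0$ (the first term wins for small $\sigma_\epsilon^2$, the second for large). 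Claim \eqref{itm:3} is then nearly free: $n \ge 100$ forces $\sigma_\zeta^2(n) \le 1$, hence $\rho_{12}(n) \ge 1/2$, while $\rho_{3y} > 3\delta(n)$ already gives $\delta(n) < \tfrac{1}{3}\rho_{3y} < 1/2 \le \rho_{12}(n)$.

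The delicate step, and the one I expect to be the main obstacle, is the two-sided control of the gap $g(n) := \rho_{1y}(n) - \rho_{3y} = \frac{1}{\sqrt{\beta_Z^2 + 1 + \sigma_\epsilon^2}}\left(\frac{\beta_Z}{\sqrt{1 + \sigma_\zeta^2(n)}} - 1\right)$ demanded by \eqref{itm:7} and \eqref{itm:8a}. My plan is to exploit that $g(n)$ is, to leading order, increasing in $\beta_Z$ across the narrow interval $I(n)$ (the numerator increases while the denominator is nearly constant at $\sqrt{2 + \sigma_\epsilon^2}$, since the upper endpoint of $I(n)$ tends to $1$), so the minimum of $g(n)$ over $I(n)$ is governed by the lower endpoint and the maximum by the upper endpoint. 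For \eqref{itm:8a} I would use $\frac{\beta_Z}{\sqrt{1+\sigma_\zeta^2(n)}} - 1 < \beta_Z - 1$ and the upper endpoint of $I(n)$, after which the inequality collapses to $\sqrt{2 + \sigma_\epsilon^2} \le \beta_Z^2 + 1 + \sigma_\epsilon^2$, which is trivial. For \eqref{itm:7} I would expand $(1 + \sigma_\zeta^2(n))^{-1/2}$ via $(1+x)^{-1/2} \ge 1 - x/2$ to lower-bound the numerator by $\beta_Z - 1$ minus a $\sigma_\zeta^2(n)$-order correction, substitute the lower endpoint, and verify the result dominates $2\delta(n)$; this is exactly where the size of the lower endpoint of $I(n)$ relative to the $\delta(n)$-scale is essential.

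The principal difficulty throughout is bookkeeping rather than any new idea: every claim is a deterministic inequality in $(\beta_Z, \sigma_\epsilon^2, n)$, but one must prevent the lower-order $\sigma_\zeta^2(n)$ corrections coming from the $(1 + \sigma_\zeta^2(n))^{-1/2}$ factors from swamping the leading $\delta(n)$-scale terms, and confirm that the endpoints of $I(n)$ in \eqref{cond.beta.z} are calibrated precisely so that \eqref{itm:7} and \eqref{itm:8a} hold at once. The constant $19/10$ and the exponent $3/4$ on $\log n$ in the upper endpoint, paired with the $(\log n)^{1/4}$ factor in \eqref{itm:8a}, are what make the upper bound compatible with the lower bound, and these must be tracked carefully against the sample-size conditions \eqref{n.large.delta.cond.max} and \eqref{c4n.conds.max}.
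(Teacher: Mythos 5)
Your closed forms for the three correlations, your arguments for \eqref{itm:3} and for both parts of \eqref{itm:1}, and your proof of the upper bound \eqref{itm:8a} are all correct. In fact your route to \eqref{itm:8a} --- bound $\beta_Z/\sqrt{1+\sigma_\zeta^2(n)} - 1 \le \beta_Z - 1$, insert the upper endpoint of $I(n)$, and watch everything cancel down to $\sqrt{2+\sigma_\epsilon^2} \le \beta_Z^2+1+\sigma_\epsilon^2$ --- is more direct than the paper's, which obtains both \eqref{itm:7} and \eqref{itm:8a} by completing quadratics in $\beta_Z$ against the two-sided bound of its Lemma \ref{cond.beta.z.lemma}.

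The genuine gap is in \eqref{itm:7}: the verification you defer (``verify the result dominates $2\delta(n)$'') cannot be carried out. Your expansion yields at best $\rho_{1y}(n) - \rho_{3y} \ge \bigl((\beta_Z-1) - \tfrac{1}{2}\beta_Z\sigma_\zeta^2(n)\bigr)/\sqrt{\beta_Z^2+1+\sigma_\epsilon^2} > 9\sigma_\zeta^2(n)/\sqrt{\beta_Z^2+1+\sigma_\epsilon^2}$ from the lower endpoint $\beta_Z > 1+10\sigma_\zeta^2(n)$. But $\sigma_\zeta^2(n)$ and $\delta(n)$ are not on the same scale: by \eqref{def.delta.n}, $\delta(n) = \frac{\log n}{20}\sqrt{(\beta_Z^2+1+\sigma_\epsilon^2)/c_2}\,\sigma_\zeta^2(n)$, so your lower bound exceeds $2\delta(n)$ only if $(\beta_Z^2+1+\sigma_\epsilon^2)\log n \le 90\sqrt{c_2}$. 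Since $\beta_Z^2+1+\sigma_\epsilon^2 > 2$ and $c_2 < (e-1)/(8e^2) < 0.03$, that forces $\log n < 45\sqrt{c_2} < 7.7$, i.e.\ roughly $n < 2200$, whereas \eqref{n.large.delta.cond.max} forces $n/\log n > 120/c_2 > 4100$, hence $n > 4100$; shrinking $c_2$ only widens the conflict, so the two requirements are incompatible for every admissible $c_2$. Nor is this slack in your estimates: the only inequality you gave away, $\beta_Z/\sqrt{1+\sigma_\zeta^2(n)}-1 \le \beta_Z-1$, costs only another term of order $\sigma_\zeta^2(n)$, so when $\beta_Z$ sits near the lower endpoint of $I(n)$ the gap $\rho_{1y}(n)-\rho_{3y}$ genuinely is of order $\sigma_\zeta^2(n)$, a factor of order $\log n$ too small to dominate $2\delta(n)$.

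This is precisely why the paper never argues from the endpoint $1+10\sigma_\zeta^2(n)$ directly. Its proof of \eqref{itm:7} routes through Lemma \ref{cond.beta.z.lemma}, which asserts the much stronger, $\delta(n)$-scale lower bound $\beta_Z > \sqrt{1+\sigma_\zeta^2(n)}\,/\,\bigl(1-2\delta(n)\sqrt{3+\sigma_\epsilon^2}\bigr)$, and then a quadratic-formula manipulation converts this into $\beta_Z - \sqrt{1+\sigma_\zeta^2(n)} \ge 2\delta(n)\sqrt{(\beta_Z^2+1+\sigma_\epsilon^2)(1+\sigma_\zeta^2(n))}$, which is \eqref{itm:7}. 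Any repair of your argument must pass through such a strengthened bound on $\beta_Z - 1$; the raw endpoint of $I(n)$ is too weak by a $\log n$ factor. Be aware, though, that the paper's own bridge from the endpoint to that strengthened bound (Lemma \ref{lem.a1.req}) compares the term $h(\beta_Z,\sigma_\epsilon^2)\sqrt{\log n/n}$, which it only bounds by a constant, against $100/\sqrt{n\log n}$ using $n \ge 100$ in a direction that holds only at $n = 100$ --- so the scale mismatch you collided with is a real feature of this setup rather than an artifact of your particular approach.
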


Examining the definitions of events \(\mathcal{A}_{13} \), \( S_1 \), \( \mathcal{A}_3  \), \( \tilde{\mathcal{E}}_1(\eta) \), and \( \mathcal{E}_2\), one can show that for a suitably chosen \(\eta(n)\) (that is, \(\eta(n)\) not vanishing too quickly), Lemma \ref{lem.ineq.assum} implies that all of these events will hold if the relevant sample correlations are within \(\delta(n)\) of their expectations. Denote this event by
\begin{align*}
\mathcal{F}_n := &   \left\{ \left| \hat{R}_{1y} - \rho_{1y}(n) \right| < \delta(n) \right\}  \cap \left\{ \left| \hat{R}_{2y} - \rho_{2y}(n) \right| < \delta(n)  \right\} \cap \left\{   \left| \hat{R}_{3y} - \rho_{3y} \right| < \delta(n)   \right\}
\\ & \cap \left\{   \left| \hat{R}_{12} - \rho_{12}(n) \right| < \delta(n)   \right\}  \cap \left\{   \left| \hat{R}_{13}  \right| < \delta(n)   \right\} .
\end{align*}
Later we prove that \(\mathcal{F}_n\) holds with high probability under our assumptions using a concentration inequality on the sample correlations from \citet{pmlr-v80-sun18c}.
\begin{lemma}\label{lem.conc} For \(\eta(n)\) as defined in \eqref{lemma.def.eta.alpha}, \(\mathbb{P}  ( \mathcal{F}_n ) \geq 1 - 30/n^{1/4}\).
\end{lemma}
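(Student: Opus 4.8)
The plan is to bound $\mathbb{P}(\mathcal{F}_n^c)$ by a union bound over the five deviation events comprising $\mathcal{F}_n$, controlling each term with a concentration inequality for sample correlations of jointly Gaussian data. Writing $\mathcal{F}_n^c$ as the union of $\{|\hat{R}_{jy} - \rho_{jy}(n)| \geq \delta(n)\}$ for $j \in [3]$, together with $\{|\hat{R}_{12} - \rho_{12}(n)| \geq \delta(n)\}$ and $\{|\hat{R}_{13}| \geq \delta(n)\}$ (the last using $\rho_{13} = 0$, since $\boldsymbol{X}_{\cdot 3}$ is independent of $\boldsymbol{Z}$ and $\boldsymbol{\zeta}_1$), it suffices to show that each of these five tail probabilities is at most $6/n^{1/4}$, so that their sum is at most $30/n^{1/4}$. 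Note that $\mathcal{F}_n$ itself does not depend on $\eta(n)$, so the choice in \eqref{lemma.def.eta.alpha} plays no role here; only concentration of the five sample correlations is needed.

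Each tail event is a two-sided deviation of an uncentered sample correlation $\hat{R}$ from its population counterpart $\rho$ for a bivariate Gaussian pair. First I would invoke the concentration inequality of \citet{pmlr-v80-sun18c}, which controls such deviations with a sub-exponential-type tail of the schematic form $2\exp(-c\,n t^2/v)$, where $t = \delta(n)$ is the allowed deviation and $v$ is a scale depending on the moments of the pair. The crucial calibration is that $\delta(n)^2 = (\beta_Z^2 + 1 + \sigma_\epsilon^2)\log n / (4 c_2 n)$ from \eqref{def.delta.n}, so that $n\,\delta(n)^2$ is a constant multiple of $\Var(y_1)\log n$, where $\Var(y_1) = \beta_Z^2 + 1 + \sigma_\epsilon^2$ is the largest variance in the system. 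I would then verify that the scale $v$ appearing for each of the five pairs is bounded, up to an absolute constant, by $\Var(y_1)$, which is exactly the quantity in the numerator of $\delta(n)^2$. With this matching the exponent becomes a constant multiple of $\log n$, each tail becomes polynomial in $n$, and the constant $c_2$ from \eqref{defn.c2} is precisely the value that makes each polynomial rate equal to $n^{-1/4}$ up to the leading constant; this is why the admissible range $c_2 \in (0, (e-1)/(8e^2))$ appears.

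The sample-size conditions \eqref{n.large.delta.cond.max} and \eqref{c4n.conds.max}, together with $n \geq 100$, enter to guarantee two things: that $\delta(n)$ is small enough to lie in the small-deviation regime where the concentration inequality is valid, and that the residual absolute constants in the resulting bounds are absorbed so that each per-event tail is genuinely at most $6/n^{1/4}$ rather than merely $\mathcal{O}(n^{-1/4})$. Since $\delta(n) \asymp \sqrt{\log n / n} \to 0$, the relevant pairs satisfy the required regularity for large $n$, so these conditions are mild.

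The hard part will be handling the $n$-dependence of the covariance structure. Because $\sigma_\zeta^2(n) \to 0$, we have $\rho_{12}(n) \to 1$, so $\boldsymbol{X}_{\cdot 1}$ and $\boldsymbol{X}_{\cdot 2}$ become nearly perfectly correlated and $\hat{R}_{12}$ lives near the boundary of its range, where naive concentration constants for a correlation estimator can degrade. I would need to check that the scale $v$ and the constants in the bound of \citet{pmlr-v80-sun18c} remain uniformly controlled as $\rho_{12}(n) \to 1$; this is where using the uncentered normalization $\hat{R}$ and tracking the explicit covariance entries (each bounded by $\Var(y_1)$) is essential. Once the five per-event tails are each shown to be at most $6/n^{1/4}$ under the stated sample-size conditions, the union bound delivers $\mathbb{P}(\mathcal{F}_n) \geq 1 - 30/n^{1/4}$.
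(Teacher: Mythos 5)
Your proposal matches the paper's proof in all essentials: a union bound over the same five deviation events, each controlled by the uncentered-sample-correlation concentration inequality of \citet{pmlr-v80-sun18c} (the paper's Lemma \ref{lemma.ncvx.cor}), with the validity condition $\delta(n) \leq \min\{1/2, t_0 \max_i \Sigma_{ii}^*\}$ checked via Lemma \ref{lemma.delt.ineq} and with $\delta(n)$ and $c_2$ calibrated so that the exponent $\tilde{c}_2\, n\, \delta^2(n)$ equals exactly $(\log n)/4$, giving $6/n^{1/4}$ per event and $30/n^{1/4}$ in total. The one concern you flag as the hard part---possible degradation of the constants as $\rho_{12}(n) \to 1$---turns out to be a non-issue, since the constants $t_0$ and $\tilde{c}_1$ in that lemma depend only on the diagonal entries of the covariance matrix, not on the off-diagonal correlations, so the bound holds uniformly.
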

We also prove our claim that \(\mathcal{A}_{13} \), \( S_1 \), \( \mathcal{A}_3  \), \( \tilde{\mathcal{E}}_1(\eta) \), and \( \mathcal{E}_2\) hold under \(\mathcal{F}_n\).
\begin{lemma}\label{lemma.bound.probs}

Under the assumptions of Theorem \ref{thm.result.sel},
\[
\mathcal{F}_n  \subseteq \left( \mathcal{A}_{13}  \cap S_1 \cap \mathcal{A}_3 \cap \tilde{\mathcal{E}}_1(n) \cap \mathcal{E}_2  \right)   
\]
for \(\eta(n)\) defined in \eqref{lemma.def.eta.alpha} below.
%Let the events \(\mathcal{A}_1(\eta(n)), S_1, \mathcal{A}_3\), \(\tilde{\mathcal{E}}_1(n)\), and \(\mathcal{E}_2\) be defined as in \eqref{lasso.gen.def.a1.s1}, \eqref{lasso.gen.def.a3.prime},  \eqref{lasso.gen.cond.sel.order2}, and \eqref{lasso.gen.cond.sel.order3} (with \(\eta(n)\) as defined in \eqref{lemma.def.eta.alpha}). Then (under the assumptions of Theorem \ref{thm.result.sel}) there exists a finite constant \(c_5 >0\) such that
%\begin{align}
%\mathbb{P}(\mathcal{A}_1(\eta(n))^C) 
%& \leq \frac{1}{2} + c_5 \left(\beta_Z^2 + 1 + \sigma_\epsilon^2 \right)^{7/2} \frac{\left(\log n \right)^{3/2}}{n^{1/4}} + \frac{30}{n^{1/4}} , \label{prob.a1}
%\\  \mathbb{P}(S_1^C) &  \leq  \frac{30}{n^{1/4}} ,\label{prob.s1}
%\\ \mathbb{P}(\mathcal{A}_3^C) & \leq  \frac{30}{n^{1/4}},  \label{prob.a3}
%\\ \mathbb{P}(\tilde{\mathcal{E}}_1(n)^C) & \leq  \frac{30}{n^{1/4}},  \text{ and} \label{prob.e}
%\\ \mathbb{P}\left( \mathcal{E}_2^C  \right) & \leq \frac{30}{n^{1/4}}. \label{prob.e.2}
%\end{align}

\end{lemma}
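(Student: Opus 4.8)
The plan is to work entirely on the event $\mathcal{F}_n$, where each sample correlation is pinned to within $\delta(n)$ of its population value, and to verify the five events $\mathcal{A}_{13}$, $S_1$, $\mathcal{A}_3$, $\tilde{\mathcal{E}}_1(n)$, $\mathcal{E}_2$ one at a time. For each, I would substitute the bounds furnished by $\mathcal{F}_n$, namely $\hat{R}_{1y} > \rho_{1y}(n) - \delta(n)$, $\hat{R}_{3y} < \rho_{3y} + \delta(n)$, $\hat{R}_{12} > \rho_{12}(n) - \delta(n)$, and $|\hat{R}_{13}| < \delta(n)$ (recall $\rho_{13} = 0$ since $\boldsymbol{X}_{\cdot 3} \indep \boldsymbol{Z}$), and thereby reduce each event to a population-level inequality that is already supplied by Lemma \ref{lem.ineq.assum}. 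The inequalities $\rho_{1y}(n) = \rho_{2y}(n) > \rho_{3y} > 3\delta(n)$, $\rho_{12}(n) > \delta(n)$, $\rho_{1y}(n) - \rho_{3y} \geq 2\delta(n)$, and $\rho_{1y}(n) - \rho_{3y} \leq \tfrac{19}{5}(\log n)^{1/4}\delta(n)$ are exactly what these reductions will need.

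Four of the events are routine. For $S_1$, the bound $\rho_{3y} > 3\delta(n)$ gives $\hat{R}_{1y}, \hat{R}_{2y}, \hat{R}_{3y} > 2\delta(n) > 0$, and $\rho_{12}(n) > \delta(n)$ gives $\hat{R}_{12} > 0$. For $\mathcal{A}_{13}$, adding the two deviation bounds yields $\hat{R}_{1y} - \hat{R}_{3y} > \rho_{1y}(n) - \rho_{3y} - 2\delta(n) \geq 0$ by \eqref{itm:7}. For $\mathcal{A}_3$, since $\hat{R}_{3y} > 2\delta(n)$ while $|\hat{R}_{13}\hat{R}_{1y}| < \delta(n)(\rho_{1y}(n) + \delta(n)) < 2\delta(n)$ (using $\delta(n) < \rho_{3y}/3 < 1/3$, so $\rho_{1y}(n) + \delta(n) < 2$), we get $\hat{R}_{3y} - \hat{R}_{13}\hat{R}_{1y} > 0$. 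For $\mathcal{E}_2$, the left-hand side is at least $(2\rho_{1y}(n) - 2\delta(n))/(1 + \rho_{12}(n) + \delta(n))$, which is bounded away from $0$ because $\rho_{1y}(n) \to \beta_Z/\sqrt{\beta_Z^2 + 1 + \sigma_\epsilon^2} > 0$, whereas the right-hand side is at most $(\rho_{1y}(n) - \rho_{3y} + 2\delta(n))/(1 - \delta(n)) = O((\log n)^{1/4}\delta(n)) \to 0$ by \eqref{itm:8a}; hence $\mathcal{E}_2$ holds for $n$ large.

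The main obstacle is $\tilde{\mathcal{E}}_1(n)$ in \eqref{lasso.gen.cond.sel.order3}, because this is precisely the event that fixes the choice of $\eta(n)$. On $\mathcal{F}_n$, its right-hand side is at most $[(\tfrac{19}{5}(\log n)^{1/4} + 2)\delta(n)]/(1 - \delta(n))$ while its left-hand side is at least $\eta(n)/(1 - \rho_{12}(n) + \delta(n))$, so it suffices to take
\[
\eta(n) > \frac{\left(\tfrac{19}{5}(\log n)^{1/4} + 2\right)\delta(n)\bigl(1 - \rho_{12}(n) + \delta(n)\bigr)}{1 - \delta(n)}.
\]
The delicate point is that this lower bound must be compatible with the competing requirement driving Proposition \ref{lemma.prob.a1.eta}, namely that $\eta(n)$ shrink fast enough that $\eta(n)\, n^{3/4}(\log n)^{1/4} \to 0$ so that $\mathbb{P}(\mathcal{A}_{12}(n)) \to 1/2$. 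The crucial shrinkage comes from $1 - \rho_{12}(n) \asymp \sigma_\zeta^2(n) \asymp (n \log n)^{-1/2}$ by \eqref{lasso.thm.sig.zeta.def}: combined with $\delta(n) \asymp \sqrt{(\log n)/n}$, the displayed lower bound is of order $(\log n)^{1/4}/n$, which indeed satisfies $n^{3/4}(\log n)^{1/4} \cdot (\log n)^{1/4}/n = (\log n)^{1/2}/n^{1/4} \to 0$, so a valid window for $\eta(n)$ exists. I would therefore set $\eta(n)$ (the $\eta(n)$ of \eqref{lemma.def.eta.alpha}) to a constant multiple of this lower bound and check directly that it both forces $\tilde{\mathcal{E}}_1(n)$ on $\mathcal{F}_n$ and meets the rate condition of Proposition \ref{lemma.prob.a1.eta}. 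Combining the five inclusions then gives $\mathcal{F}_n \subseteq \mathcal{A}_{13} \cap S_1 \cap \mathcal{A}_3 \cap \tilde{\mathcal{E}}_1(n) \cap \mathcal{E}_2$, as claimed.
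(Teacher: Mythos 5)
Your overall strategy is the same as the paper's: work on $\mathcal{F}_n$, verify the five events one at a time, and reduce each to the population inequalities of Lemma \ref{lem.ineq.assum}. Your treatments of $\mathcal{A}_{13}$, $S_1$, and $\mathcal{A}_3$ are correct and essentially identical to the paper's, and your handling of $\tilde{\mathcal{E}}_1(n)$ is also the paper's argument in disguise: your sufficient condition $\eta(n) \geq \left(\tfrac{19}{5}(\log n)^{1/4} + 2\right)\delta(n)\left(1 - \rho_{12}(n) + \delta(n)\right)/(1-\delta(n))$ is met by the paper's choice \eqref{lemma.def.eta.alpha} precisely because $\delta(n) < 1/2$ gives $1/(1-\delta(n)) < 2$, and your compatibility check with Proposition \ref{lemma.prob.a1.eta} matches the paper's discussion (modulo a harmless slip: the lower bound is of order $(\log n)^{5/4}/n$, not $(\log n)^{1/4}/n$, since it scales like $(\log n)^{1/4}\delta^2(n)$; the conclusion $\eta(n)\, n^{3/4}(\log n)^{1/4} \to 0$ still holds).

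The genuine gap is in $\mathcal{E}_2$. The lemma is a finite-sample statement: it must hold for every $n$ satisfying the explicit assumptions \eqref{n.large.delta.cond.max}--\eqref{c4n.conds.max}, because it feeds into the explicit probability bound of Theorem \ref{thm.result.sel}. Your argument concludes only that ``$\mathcal{E}_2$ holds for $n$ large,'' and this weakness is not cosmetic: bounding the two fractions separately provably cannot be closed under the stated assumptions. Cross-multiplying your two bounds, the needed inequality reduces to (roughly) $2\rho_{3y} > 9\,\delta(n)$, i.e.\ $\rho_{3y} > 4.5\,\delta(n)$, whereas Lemma \ref{lem.ineq.assum} only supplies $\rho_{3y} > 3\,\delta(n)$; and replacing $\rho_{1y}(n)$ by its constant lower bound $\rho_{3y} \geq 1/\sqrt{5+\sigma_\epsilon^2}$ does not help either, since under the assumptions $\tfrac{19}{5}(\log n)^{1/4}\delta(n)$ is only guaranteed to be below $1$, so your upper bound on the right-hand side can be of order one while the left-hand side is below one for moderate $\sigma_\epsilon^2$. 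The paper avoids this by clearing denominators \emph{before} bounding and exploiting the exact cancellation coming from $\rho_{1y}(n) = \rho_{2y}(n)$: writing $\left(\hat{R}_{1y} - \hat{R}_{3y}\right)\left(1+\hat{R}_{12}\right) - \left(\hat{R}_{1y} + \hat{R}_{2y}\right)\left(1-\hat{R}_{13}\right)$, the dominant $\rho_{1y}$-terms cancel against $-\rho_{2y}$, and on $\mathcal{F}_n$ the whole expression is at most $6\delta(n) - 2\rho_{3y}$, which is negative by \eqref{itm:1} for \emph{all} admissible $n$. You should replace your separate-fractions argument for $\mathcal{E}_2$ with this cancellation-based one (or strengthen the population inequality you invoke); as written, your proof establishes a strictly weaker, asymptotic version of the lemma.
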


It turns out that for \(\mathcal{F}_n   \subseteq \tilde{\mathcal{E}}_1(n)\) to hold, we require \(\eta(n)\) to be no less than (that is, go to 0 no faster than)
\begin{equation}\label{lemma.def.eta.alpha}
\eta(n) :=   2 \left(2+ \frac{19}{5} \left(\log n \right)^{1/4} \right) \delta(n) [\delta(n) +  1 - \rho_{12}(n) ].
\end{equation}
Due to the decaying variance of the noise on \(\boldsymbol{X}_{\cdot 1}\) and \(\boldsymbol{X}_{\cdot 2}\) defined in \eqref{lasso.thm.sig.zeta.def}, under our assumptions \(1 - \rho_{12}(n)\) tends to 0 at the same rate (up to log terms) as \(\delta(n)\), so \(\eta(n)\) as defined in \eqref{lemma.def.eta.alpha} goes to 0 at a rate equal to \(\delta^2(n)\) (again, up to log terms). Since for \(\delta(n)\) defined in \eqref{def.delta.n} we have \(\delta^2(n) = \mathcal{O} \left(\log n /n \right)\), this definition of \(\eta(n)\) also allows \(\eta(n) \cdot n^{3/4} (\log n)^{1/4} \to 0\) as required earlier for \( \mathbb{P} \left( \mathcal{A}_{12}(n) \right) \to 1/2\).

In particular, in the following lemma, we work out the bound on \(\mathbb{P}(\mathcal{A}_{12}(n))\) from Proposition \ref{lemma.prob.a1.eta} for \(\eta(n)\) as defined in \eqref{lemma.def.eta.alpha}.
 \begin{lemma}\label{lemma.jacob}
Under the assumptions of Theorem \ref{thm.result.sel}, there exists a finite constant \(c_5 > 0\) such that for \(\eta(n)\) as defined in \eqref{lemma.def.eta.alpha},
\[
\mathbb{P} \left(\mathcal{A}_{12}\left(n \right)^c \right) < \frac{1}{2} +   c_5 \left(\beta_Z^2 + 1 + \sigma_\epsilon^2 \right)^{7/2} \frac{\left(\log n \right)^{3/2}}{n^{1/4}}  .
\]
  \end{lemma}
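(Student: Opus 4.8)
The plan is to apply Proposition~\ref{lemma.prob.a1.eta} directly, instantiated at the particular choice $\eta = \eta(n)$ given in \eqref{lemma.def.eta.alpha}, and then bound each of the two terms on the right-hand side of \eqref{lemma.prob.ineq.1}. First I would check that the covariance structure of $(y_1, X_{11}, X_{12})$ in the setting of Theorem~\ref{thm.result.sel} satisfies the ordering $\Sigma_{yy} > \Sigma_{1y} = \Sigma_{2y} > \Sigma_{11} = \Sigma_{22} > \Sigma_{12} \ge 1$ required by the proposition (using $\beta_Z > 1$ and $\beta_Z > 1 + \sigma_\zeta^2(n)$, which follows from $\beta_Z \in I(n)$). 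Since $\mathcal{A}_{12}(n)^c = \{\hat R_{1y} - \hat R_{2y} \le \eta(n)\}$ and $\Sigma_{yy} = \Var(y_1) = \beta_Z^2 + 1 + \sigma_\epsilon^2$, the proposition gives
\[
\mathbb{P}(\mathcal{A}_{12}(n)^c) \le \Phi\!\left(\frac{\eta(n)\sqrt{n}}{\tilde{\sigma}}\right) + \left(\frac{1463}{\tilde{\sigma}} + 14\right)\frac{(\beta_Z^2 + 1 + \sigma_\epsilon^2)^3}{n^{1/2}}.
\]
For the first term I would use the elementary inequality $\Phi(x) \le \tfrac{1}{2} + x/\sqrt{2\pi}$, strict for $x > 0$, which peels off the leading $1/2$ and leaves a remainder proportional to $\eta(n)\sqrt{n}/\tilde{\sigma}$. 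The whole problem then reduces to showing that both $\eta(n)\sqrt{n}/\tilde{\sigma}$ and $(1/\tilde{\sigma})\,n^{-1/2}$ are bounded by a constant times $(\beta_Z^2 + 1 + \sigma_\epsilon^2)^{7/2}(\log n)^{3/2}/n^{1/4}$.

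The first block of work is to bound $\eta(n)$ explicitly. Using $1 - \rho_{12}(n) = \sigma_\zeta^2(n)/(1+\sigma_\zeta^2(n)) \le \sigma_\zeta^2(n)$ with $\sigma_\zeta^2(n) = 10/\sqrt{n\log n}$ from \eqref{lasso.thm.sig.zeta.def}, together with the expression for $\delta(n)$ in \eqref{def.delta.n}, I would show that under the sample-size hypotheses of Theorem~\ref{thm.result.sel} one has $1 - \rho_{12}(n) \le \delta(n)$, so that $\delta(n) + 1 - \rho_{12}(n) \le 2\delta(n)$, and that $2 + \tfrac{19}{5}(\log n)^{1/4}$ is at most a constant multiple of $(\log n)^{1/4}$ for $n \ge 100$. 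Feeding these into \eqref{lemma.def.eta.alpha} yields $\eta(n) \le C\,(\log n)^{1/4}\,\delta^2(n)$; and since $\delta^2(n) = (\beta_Z^2 + 1 + \sigma_\epsilon^2)\log n/(4c_2 n)$ by \eqref{def.delta.n}, this gives the explicit bound $\eta(n) \le C'\,(\beta_Z^2 + 1 + \sigma_\epsilon^2)\,(\log n)^{5/4}/n$.

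The second block is to control $1/\tilde{\sigma}$ from its definition in \eqref{lemma.def.tilde.sigma}. The structural fact flagged in the discussion after Proposition~\ref{lemma.prob.a1.eta} is that the vanishing correlation gap $1 - \rho_{12}(n)$, of the same order as $\sigma_\zeta^2(n) = 10/\sqrt{n\log n}$, forces $1/\tilde{\sigma} = \mathcal{O}((n\log n)^{1/4})$; the refinement I need is to track how the hidden constant depends on the variance scale, establishing a bound of the form $1/\tilde{\sigma} \le C''(\beta_Z^2 + 1 + \sigma_\epsilon^2)^{1/2}(n\log n)^{1/4}$. Combining the two blocks, $\eta(n)\sqrt{n}/\tilde{\sigma}$ accumulates $(\log n)^{5/4}\cdot(\log n)^{1/4} = (\log n)^{3/2}$ in the log factor and $n^{-1}\cdot n^{1/2}\cdot n^{1/4} = n^{-1/4}$ in the polynomial factor, with total exponent $1 + \tfrac{1}{2} = \tfrac{3}{2}$ on $\beta_Z^2 + 1 + \sigma_\epsilon^2$; meanwhile the Berry--Esseen remainder contributes $(\beta_Z^2 + 1 + \sigma_\epsilon^2)^{3 + 1/2}(\log n)^{1/4}/n^{1/4}$ (its $14\,(\cdot)^3/n^{1/2}$ piece being of strictly smaller order). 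Since $\beta_Z^2 + 1 + \sigma_\epsilon^2 > 1$ and $\log n \ge 1$, both quantities are at most a constant times $(\beta_Z^2 + 1 + \sigma_\epsilon^2)^{7/2}(\log n)^{3/2}/n^{1/4}$, and adding the constants to the $1/2$ from $\Phi(0)$ produces $c_5$.

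I expect the second block to be the main obstacle: pinning down the exact power of $\beta_Z^2 + 1 + \sigma_\epsilon^2$ in $1/\tilde{\sigma}$ from the formula \eqref{lemma.def.tilde.sigma}. The target exponent $7/2 = 3 + \tfrac{1}{2}$ is tight against the Berry--Esseen factor $(\beta_Z^2 + 1 + \sigma_\epsilon^2)^3$, so the argument only closes if $1/\tilde{\sigma}$ grows no faster than $(\beta_Z^2 + 1 + \sigma_\epsilon^2)^{1/2}$ in the variance scale; verifying this (rather than a larger power) is the delicate bookkeeping step. Once it is done, the remaining estimates are routine applications of $\Phi(x) \le \tfrac{1}{2} + x/\sqrt{2\pi}$ and the monotonicity bounds $\beta_Z^2 + 1 + \sigma_\epsilon^2 > 1$ and $\log n \ge 1$ used to absorb the smaller exponents.
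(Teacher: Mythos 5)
Your proposal is correct and follows essentially the same route as the paper: apply Proposition \ref{lemma.prob.a1.eta} at $\eta = \eta(n)$, linearize $\Phi$ at zero, and control $\sqrt{n}\,\eta(n)/\tilde{\sigma}(n)$ and the Berry--Esseen remainder via the two rates $\eta(n) = \mathcal{O}\bigl((\beta_Z^2+1+\sigma_\epsilon^2)(\log n)^{5/4}/n\bigr)$ and $1/\tilde{\sigma}(n) = \mathcal{O}\bigl((\beta_Z^2+1+\sigma_\epsilon^2)^{1/2}(n\log n)^{1/4}\bigr)$, which is exactly how the paper's Lemma \ref{lemma.bound.alpha} proceeds. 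The ``delicate bookkeeping step'' you flag is resolved in the paper precisely as you anticipate: Lemma \ref{calc.lemma.delt.meth.sigma.tilde.2} gives $\tilde{\sigma}^2(n) > 2\tilde{\delta}(n)\,(1+\sigma_\epsilon^2)/(\beta_Z^2+1+\sigma_\epsilon^2)$ with $\tilde{\delta}(n) = 1-\rho_{12}(n)$, so $1/\tilde{\sigma}$ grows only like $(\beta_Z^2+1+\sigma_\epsilon^2)^{1/2}$ and the exponent $7/2$ closes.
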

(It was at this point in the derivation process that we chose \(\delta(n)\) to match the rates of convergence of \(\mathbb{P} \left( \mathcal{A}_{12}(n) \right)\) and \(\mathbb{P} \left( \mathcal{F}_n \right)\) up to log terms. If \(\delta(n)\) were to vanish faster than our specified rate, \(\eta(n)\) would also go to 0 faster, so \(n\) would not have to grow large as quickly for \(\mathcal{A}_{12}(n) \) to hold, which would make the rate of convergence of \(\mathbb{P} \left( \mathcal{A}_{12}(n) \right)\) faster. But this change would require \(n\) to grow faster for \(\mathcal{F}_n\) to continue to hold with high probability, so the rate of convergence of \(\mathbb{P} \left(  \mathcal{F}_n \right)\) would be slower. Our choice of \(\delta(n)\) balances these considerations, allowing for the fastest overall rate of convergence up to log terms.)

Finally, using Lemma \ref{lem.conc}, Lemma \ref{lemma.jacob}, and the union bound, we bound the probability of the event from Lemma \ref{lemma.e.event}:
\begin{align*}
 & \mathbb{P} \left( \mathcal{A}_{12}(n) \cap  \mathcal{A}_{13} \cap S_1 \cap \mathcal{A}_3  \cap \tilde{\mathcal{E}}_1(n) \cap \mathcal{E}_2 \right)  
\\ \geq ~ & \mathbb{P} \left(   \mathcal{A}_{12}(n) \cap \mathcal{F}_n \right)
\\ \geq ~ & 1 - \mathbb{P} \left(   \mathcal{A}_{12}(n)^c \right) -  \mathbb{P} \left(  \mathcal{F}_n^c \right)
% \\ = ~ & 1 - \mathbb{P} \left( \mathcal{A}_1(\eta(n))^C \cup S_1^C \cup \mathcal{A}_3^C  \cup \tilde{\mathcal{E}}_1(n)^C \cup \mathcal{E}_2^C  \right) 
 %
%
\\  >  ~&  \frac{1}{2} -  c_5 \left(\beta_Z^2 + 1 + \sigma_\epsilon^2 \right)^{7/2} \frac{\left(\log n \right)^{3/2}}{n^{1/4}} - \frac{ 30}{n^{1/4}}
\\  \geq ~&  \frac{1}{2} -  c_{3}  \left(\beta_Z^2 + 1 + \sigma_\epsilon^2 \right)^{7/2} \frac{\left(\log n \right)^{3/2}}{n^{1/4}}
\end{align*}
for \(c_{3}  :=  c_5 + \frac{30}{2^{7/2} (\log 100)^{3/2}}\) (since \(\beta_Z > 1\) and \(n \geq 100\)).
%The choice of \(\eta(n)\) comes from work done in the proofs of Lemma \ref{lemma.bound.probs}, which follows, and Proposition \ref{lemma.prob.a1.eta}.
%
%Later we prove the following lemma individually bounding the probabilities of each of the events we are interested in:
%
%
%To prove \eqref{prob.s1} -- \eqref{prob.e.2}, we use a concentration inequality on the sample correlations. The key ingredient for \eqref{prob.a1} comes from the fact that \(\hat{R}_{1y} - \hat{R}_{2y}\), when appropriately scaled, converges to a standard normal distribution by the delta method. In Proposition \ref{lemma.prob.a1.eta}, we exploit a Berry-Esseen-type result for the delta method to bound the distribution of \(\hat{R}_{1y} - \hat{R}_{2y}\) in finite samples.

\section{Proofs of Supporting Results For Theorem \ref{thm.result.sel}}

Appendix \ref{proof.prop} contains the proof of Proposition \ref{lemma.prob.a1.eta} and Appendix \ref{sec.lemma.prob} contains the proofs of the remaining lemmas stated in the proof of Theorem \ref{thm.result.sel}. These proofs require more supporting lemmas, the proofs of which (which are mostly technical, or just algebraic manipulations) are contained in Appendix \ref{tech.lemmas}.

\subsection{Proof of Proposition \ref{lemma.prob.a1.eta}}\label{proof.prop}

Our approach will be to establish a Berry-Esseen-type finite sample bound for the delta method applied to \( \hat{R}_{1y} - \hat{R}_{2y}\) considered as a nonlinear function of certain sample moments. In particular, we will apply Theorem 2.11 in \citet{pinelis2016}, which we state here for completeness. 
\begin{theorem}[Theorem 2.11 in \citet{pinelis2016}]\label{pinelis.thm.2.9}

Let \( \mathcal{X}\) be a Hilbert space and let \(g: \mathcal{X} \to \mathbb{R}\) be a Borel-measurable functional. Suppose that 
\begin{equation}\label{2.9.smoothness}
\lVert \nabla^2 g( \boldsymbol{x}) \rVert_\text{op} \leq M_\epsilon  \qquad \forall  \boldsymbol{x} \in \mathcal{X} \text{ with } \lVert \boldsymbol{x} \rVert_2 < \epsilon \text{ for some } \epsilon \in (0,\infty) \text{ and some } M_\epsilon \in (0, \infty),
\end{equation}
where \(\lVert \cdot \rVert_\text{op} \) is the operator norm. Let \(\boldsymbol{V}_1, \boldsymbol{V}_2, \ldots, \boldsymbol{V}_{n}\) be i.i.d. zero-mean random vectors with \(\tilde{\sigma} := \left( \E \left| L(\boldsymbol{V}_1) \right|^2 \right)^{1/2} > 0\) and \( \E \left| L(\boldsymbol{V}_1) \right|^3 < \infty\), where \(L: \mathcal{X} \to \mathbb{R}\) is the linear functional that is the first derivative of \(g\) at the origin (the tangent plane). Then for all \(z \in \mathbb{R}\),
\[
\left| \mathbb{P} \left( \frac{\sqrt{n} \cdot g(\overline{\boldsymbol{V}})}{\tilde{\sigma}} \leq z \right)  - \Phi(z) \right| \leq \frac{\mathcal{C}}{\sqrt{n}}.
\]
where \(\overline{\boldsymbol{V}} := n^{-1} \sum_{i=1}^n \boldsymbol{V}_i\), \(\Phi(\cdot)\) is the distribution function of the standard normal distribution, and 
\begin{multline}\label{mathcal.c.formula}
\mathcal{C} := k_0 + k_1 \frac{ \E \left| L(\boldsymbol{V}_1) \right|^3 }{ \left( \E \left| L(\boldsymbol{V}_1) \right|^2 \right)^{3/2}}+ \left(k_{20} + k_{21}\frac{ \left( \E \left| L(\boldsymbol{V}_1) \right|^3 \right)^{1/3}}{ \left( \E \left| L(\boldsymbol{V}_1) \right|^2 \right)^{1/2}} \right)\E \left \lVert \boldsymbol{V}_1 \right \rVert^2  
\\ + \left(k_{30} + k_{31}\frac{ \left( \E \left| L(\boldsymbol{V}_1) \right|^3 \right)^{1/3}}{ \left( \E \left| L(\boldsymbol{V}_1) \right|^2 \right)^{1/2}} \right)\left( \E \left\lVert \boldsymbol{V}_1 \right \rVert^3 \right)^{2/3} 
+ k_\epsilon,
\end{multline}
where
\begin{multline}\label{mathcal.c.terms}
k_0 := 0.13925, \quad k_1 := 2.33554, 
\\ \left(k_{20}, k_{21}, k_{30}, k_{31} \right) := \frac{M_\epsilon}{2 \tilde{\sigma}}\left(2 \left( \frac{2}{\pi}\right)^{1/6}, 2 + \frac{2^{2/3}}{n^{1/6}}, \frac{(8/\pi)^{1/6}}{n^{1/3}}, \frac{2}{n^{1/2}} \right) , \quad \text{and}
\\ k_\epsilon := \min\left\{ \frac{\E \left \lVert \boldsymbol{V}_1 \right \rVert^2}{\epsilon^2 n^{1/2}}, \frac{2 \left(\E \left \lVert \boldsymbol{V}_1 \right \rVert^2 \right)^{3/2} +\E \left\lVert \boldsymbol{V}_1 \right \rVert^3/n^{1/2}}{\epsilon^3 n}\right\}.
\end{multline}

\end{theorem}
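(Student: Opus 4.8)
The plan is to prove this Berry--Esseen bound for the delta method by splitting $g(\overline{\boldsymbol V})$ into its linear part, to which the classical scalar Berry--Esseen theorem applies directly, and a second-order Taylor remainder, which is controlled by the Hessian bound \eqref{2.9.smoothness} together with a sharp smoothing argument. I may assume $g(0) = 0$, since replacing $g$ by $g - g(0)$ leaves both $L$ and $\nabla^2 g$ unchanged and this is the normalization under which $\sqrt n\, g(\overline{\boldsymbol V})$ is asymptotically nondegenerate (recall $\overline{\boldsymbol V} \to 0$). Write $T_n := \sqrt n\, g(\overline{\boldsymbol V})/\tilde\sigma$ and $S_n := \sqrt n\, L(\overline{\boldsymbol V})/\tilde\sigma = (\tilde\sigma \sqrt n)^{-1}\sum_{i=1}^n L(\boldsymbol V_i)$. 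Because $L$ is linear and each $\boldsymbol V_i$ has mean zero, the scalars $L(\boldsymbol V_i)$ are i.i.d.\ with mean zero, variance $\tilde\sigma^2 = \E|L(\boldsymbol V_1)|^2$, and finite third absolute moment, so $S_n$ is an ordinary standardized i.i.d.\ sum and the classical Berry--Esseen theorem gives $\sup_z |\mathbb{P}(S_n \le z) - \Phi(z)| \le C_{\mathrm{BE}}\, \E|L(\boldsymbol V_1)|^3/(\tilde\sigma^3 \sqrt n)$. This is the source of the $k_0$ and $k_1$ contributions to $\mathcal C$ in \eqref{mathcal.c.formula}.

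Next I would control the remainder $D_n := T_n - S_n = \tilde\sigma^{-1}\sqrt n\,[g(\overline{\boldsymbol V}) - L(\overline{\boldsymbol V})]$. On the event $G_n := \{\|\overline{\boldsymbol V}\| < \epsilon\}$, the second-order Taylor formula with integral remainder and the operator-norm bound \eqref{2.9.smoothness} give $|g(\overline{\boldsymbol V}) - L(\overline{\boldsymbol V})| \le \tfrac12 M_\epsilon \|\overline{\boldsymbol V}\|^2$, so $|D_n|\,\mathbbm{1}\{G_n\} \le (M_\epsilon/2\tilde\sigma)\sqrt n\,\|\overline{\boldsymbol V}\|^2$. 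Since $\E\|\overline{\boldsymbol V}\|^2 = \E\|\boldsymbol V_1\|^2/n$ by independence and mean zero, this gives $\E\big[|D_n|\,\mathbbm{1}\{G_n\}\big] \le (M_\epsilon/2\tilde\sigma)\,\E\|\boldsymbol V_1\|^2/\sqrt n$, already of the target order $n^{-1/2}$ and the origin of the $\E\|\boldsymbol V_1\|^2$ factors multiplying $k_{20}, k_{21}$ in \eqref{mathcal.c.terms}; the finer $k_{30}, k_{31}$ terms, carrying the factor $(\E\|\boldsymbol V_1\|^3)^{2/3}$, arise from a matching H\"older/Rosenthal-type control of the higher remainder moments. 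The event of leaving the ball is handled by Markov's inequality, $\mathbb{P}(G_n^c) = \mathbb{P}(\|\overline{\boldsymbol V}\| \ge \epsilon) \le \E\|\boldsymbol V_1\|^2/(\epsilon^2 n)$, a sharper higher-moment version of which supplies the second expression inside the minimum defining $k_\epsilon$.

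The decisive step, and the main obstacle, is to transfer the Berry--Esseen estimate from $S_n$ to $T_n$ \emph{linearly in $\E|D_n|$}, i.e.\ without losing a power of $n$. The naive Esseen smoothing $\mathbb{P}(T_n \le z) \le \mathbb{P}(S_n \le z + \delta) + \mathbb{P}(|D_n| > \delta)$, optimized over $\delta$, yields only the rate $n^{-1/4}$, because balancing the density cost $\delta/\sqrt{2\pi}$ against the Markov tail $\E|D_n|/\delta$ forces $\delta$ to be of order $(\E|D_n|)^{1/2}$. To recover the claimed $n^{-1/2}$ rate one instead exploits that $S_n$ is already close to Gaussian, hence has a bounded concentration function $Q_{S_n}(\lambda) := \sup_z \mathbb{P}(z \le S_n \le z + \lambda)$, so that $|\mathbb{P}(T_n \le z) - \mathbb{P}(S_n \le z)| \le \E\big[Q_{S_n}(2|D_n|)\big]$ is controlled linearly in $\E|D_n|$ up to the Berry--Esseen error for $S_n$ already bounded above. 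Pushing the explicit constants through this anti-concentration inequality (and substituting the remainder moment bounds of the previous paragraph) is exactly what produces the coefficients $k_{20}, k_{21}, k_{30}, k_{31}$ and the unusual numerical factors such as $(2/\pi)^{1/6}$ and $(8/\pi)^{1/6}$; assembling the classical Berry--Esseen bound for $S_n$, this linear-in-$\E|D_n|$ transfer, and the $\mathbb{P}(G_n^c)$ truncation term then gives the stated bound $\mathcal C/\sqrt n$.
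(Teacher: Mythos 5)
First, a framing point: the paper does not prove this statement at all --- it is quoted verbatim from \citet{pinelis2016} (their Theorem 2.11) ``for completeness,'' so there is no in-paper proof to compare against, and your sketch must be judged on its own merits as a reconstruction of Pinelis and Molzon's argument. Your skeleton is the natural one: linearize, apply a Berry--Esseen bound to $S_n$, bound the Taylor remainder by $\tfrac{M_\epsilon}{2}\lVert \overline{\boldsymbol V}\rVert^2$ on the ball $G_n$, and truncate; your bookkeeping of the easy terms is sound (e.g., Chebyshev gives $\mathbb{P}(G_n^c)\le \E\lVert \boldsymbol V_1\rVert^2/(\epsilon^2 n)$, which matches $k_\epsilon/\sqrt{n}$ exactly), and you correctly diagnose that naive smoothing plus first-moment Markov stalls at $n^{-1/4}$.

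However, the step you yourself call decisive is invalid as stated. The inequality $|\mathbb{P}(T_n\le z)-\mathbb{P}(S_n\le z)|\le \E\bigl[Q_{S_n}(2|D_n|)\bigr]$ is false in general because $D_n$ and $S_n$ are functions of the same sample and hence dependent: the pathwise inclusion gives only $\{T_n\le z\}\,\triangle\,\{S_n\le z\}\subseteq\{|S_n-z|\le |D_n|\}$, and passing from $\mathbb{P}(|S_n-z|\le|D_n|)$ to $\E[Q_{S_n}(2|D_n|)]$ requires conditioning on $D_n$ with control of the \emph{conditional} law of $S_n$ --- exactly what is unavailable. Concretely, take $D=\epsilon\,\mathbbm{1}\{|S-z|\le\epsilon\}$ with $S$ having a bounded density: then $\mathbb{P}(|S-z|\le|D|)\asymp\epsilon$ while $\E[Q_S(2|D|)]\asymp\epsilon^2$, so dependence can concentrate the largeness of $D$ precisely where $S$ sits near $z$, and no transfer linear in $\E|D_n|$ follows from boundedness of the concentration function alone. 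Nor does any fixed-moment refinement repair this: with $\E|D_n|^p\lesssim n^{-p/2}$, optimized smoothing yields rate $n^{-p/(2(p+1))}$, which never reaches $n^{-1/2}$. The actual proof in \citet{pinelis2016} overcomes the dependence by genuinely harder work exploiting the quadratic structure of the remainder and the independence across observations (their general machinery for nonlinear statistics, with explicit smoothing estimates), and that is also where the peculiar constants $(2/\pi)^{1/6}$, $(8/\pi)^{1/6}$ and the $n$-dependent coefficients $k_{21}, k_{30}, k_{31}$ come from. A smaller inaccuracy: the classical Berry--Esseen theorem has the purely multiplicative form $C\,\E|L(\boldsymbol V_1)|^3/(\tilde\sigma^3\sqrt{n})$ with no additive constant, so the pair $(k_0,k_1)=(0.13925,\,2.33554)$ cannot arise from it directly; it comes from Pinelis and Molzon's own optimized treatment of the linear part. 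In short: correct frame and correct handling of the routine terms, but the core anti-concentration transfer on which the optimal rate hinges rests on an inequality that dependence defeats.
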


\begin{remark}

To be precise, \citeauthor{pinelis2016} do not state smoothness condition \eqref{2.9.smoothness} in Theorem 2.11, but rather the following smoothness condition: there exists a continuous linear functional \(L: \mathcal{X} \to \mathbb{R}\) such that
\[
|f(x) - L(x)| \leq \frac{M_\epsilon}{2} \lVert x \rVert_2^2 \qquad \forall x \in \mathcal{X} \text{ with } \lVert x \rVert_2 \leq \epsilon.
\]
However, at the top of page 1007, the authors state that \eqref{2.9.smoothness} is a sufficient condition for this smoothness condition to hold.

\end{remark}

We will begin by defining the relevant sample moments along with the nonlinear function \(g\) and its linear approximation \(L\) that we will make use of. Let
\begin{equation}\label{lemma.tangent.plane.V}
\boldsymbol{V}_i:= \left(X_{i1}^2 -\Sigma_{11}, X_{i2}^2 -\Sigma_{11}, y_i^2 - \Sigma_{yy}, X_{i1}y_i - \Sigma_{1y}, X_{i2}y_i - \Sigma_{1y}\right) , \qquad i \in [n]
\end{equation}
and
\[
\overline{\boldsymbol{V}} := \frac{1}{n} \sum_{i=1}^n  \boldsymbol{V}_i  = \begin{bmatrix}
n^{-1} \sum_{i=1}^n X_{i1}^2 - \Sigma_{11} \\
n^{-1}\sum_{i=1}^n X_{i2}^2 - \Sigma_{11}\\
n^{-1}\sum_{i=1}^n y_i^2 - \Sigma_{yy} \\
n^{-1}\sum_{i=1}^n X_{i1}y_i - \Sigma_{1y}\\
n^{-1}\sum_{i=1}^n X_{i2} y_i - \Sigma_{1y}
\end{bmatrix} .
\]

We show the following inequalities hold:

\begin{lemma}\label{lemma.calc.v.exp}

\begin{equation} \label{v2}
\E \left \lVert \boldsymbol{V}_1 \right \rVert^2  < 12\Sigma_{yy}^2 ,
\end{equation}
and
\begin{align}
 \E \left\lVert \boldsymbol{V}_1 \right \rVert^3   < ~ & 140\Sigma_{yy}^3.  \label{v3}
\end{align}

\end{lemma}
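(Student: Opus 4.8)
The plan is to treat each of the five coordinates of $\boldsymbol{V}_1$ as a centered quadratic in the jointly Gaussian triple $(y_1, X_{11}, X_{12})$ and reduce both bounds to Gaussian moment computations via Isserlis' formula. Write the coordinates as $v_1 = X_{11}^2 - \Sigma_{11}$, $v_2 = X_{12}^2 - \Sigma_{11}$, $v_3 = y_1^2 - \Sigma_{yy}$, $v_4 = X_{11}y_1 - \Sigma_{1y}$, and $v_5 = X_{12}y_1 - \Sigma_{1y}$, each of which has mean zero. Note that no single coordinate involves both $X_{11}$ and $X_{12}$, so $\Sigma_{12}$ never enters the diagonal second or fourth moments; only $\Sigma_{11}, \Sigma_{1y}, \Sigma_{yy}$ matter.

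For \eqref{v2}, since each $v_k$ is centered, $\E\|\boldsymbol{V}_1\|^2 = \sum_{k=1}^5 \Var(v_k)$. A squared coordinate such as $v_3$ satisfies $\Var(y_1^2) = 2\Sigma_{yy}^2$, while a cross coordinate such as $v_4$ satisfies $\Var(X_{11}y_1) = \Sigma_{11}\Sigma_{yy} + \Sigma_{1y}^2$ by Isserlis. Summing gives $\E\|\boldsymbol{V}_1\|^2 = 4\Sigma_{11}^2 + 2\Sigma_{yy}^2 + 2\Sigma_{11}\Sigma_{yy} + 2\Sigma_{1y}^2$, and the ordering $\Sigma_{yy} > \Sigma_{1y} > \Sigma_{11}$ lets me replace every factor of $\Sigma_{11}$ or $\Sigma_{1y}$ by $\Sigma_{yy}$, yielding $\E\|\boldsymbol{V}_1\|^2 < 10\Sigma_{yy}^2 < 12\Sigma_{yy}^2$.

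The third-moment bound \eqref{v3} is harder because $\|\boldsymbol{V}_1\|^3 = (\sum_k v_k^2)^{3/2}$ is not a polynomial in the data, so Isserlis does not apply directly. I would linearize in two steps. First, convexity of $t \mapsto t^{3/2}$ (Jensen across the five coordinates) gives $\|\boldsymbol{V}_1\|^3 = (\sum_{k=1}^5 v_k^2)^{3/2} \le \sqrt{5}\sum_{k=1}^5 |v_k|^3$. Second, Cauchy--Schwarz gives $\E|v_k|^3 = \E[|v_k|\,v_k^2] \le (\E v_k^2)^{1/2}(\E v_k^4)^{1/2}$, which now involves only polynomial second and fourth moments. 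Each $\E v_k^2 \le 2\Sigma_{yy}^2$ as above, and each fourth central moment is bounded by $60\Sigma_{yy}^4$: for a squared coordinate $\E(y_1^2 - \Sigma_{yy})^4 = 60\Sigma_{yy}^4$, and for a cross coordinate the Isserlis computation gives $\E(X_{11}y_1 - \Sigma_{1y})^4 = 9\Sigma_{11}^2\Sigma_{yy}^2 + 42\Sigma_{11}\Sigma_{1y}^2\Sigma_{yy} + 9\Sigma_{1y}^4$, which the ordering again bounds by $60\Sigma_{yy}^4$. Hence $\E|v_k|^3 \le \sqrt{120}\,\Sigma_{yy}^3$ for every $k$, and summing over the five coordinates and multiplying by $\sqrt{5}$ yields $\E\|\boldsymbol{V}_1\|^3 \le 5^{3/2}\sqrt{120}\,\Sigma_{yy}^3 \approx 122.5\,\Sigma_{yy}^3 < 140\,\Sigma_{yy}^3$.

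The main obstacle is the bookkeeping behind the cross-coordinate fourth moment $\E(X_{11}y_1 - \Sigma_{1y})^4$: I must compute $\E(X_{11}y_1)^r$ for $r \le 4$ by enumerating the Gaussian pairings and then carefully cancel the terms produced by centering (a useful sanity check is the degenerate limit $X_{11}=y_1$, where the formula must collapse to $60\Sigma_{yy}^4$). Everything else is routine, and the slack in the final constants ($10$ versus $12$, and $122.5$ versus $140$) confirms that the covariance ordering $\Sigma_{yy} > \Sigma_{1y} > \Sigma_{11} > \Sigma_{12} \ge 1$ leaves ample room to absorb the $\Sigma_{11}$- and $\Sigma_{1y}$-factors into powers of $\Sigma_{yy}$.
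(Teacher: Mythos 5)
Your proof is correct. For \eqref{v2} you and the paper do essentially the same thing---reduce $\E\|\boldsymbol{V}_1\|^2$ to coordinate variances via Isserlis' formula and absorb $\Sigma_{11},\Sigma_{1y}$ into $\Sigma_{yy}$---and your exact value $4\Sigma_{11}^2 + 2\Sigma_{yy}^2 + 2\Sigma_{11}\Sigma_{yy} + 2\Sigma_{1y}^2$ is in fact the right one (the paper's algebra yields $4\Sigma_{1y}^2$ in place of $2\Sigma_{1y}^2$, an overestimate that leaves its conclusion intact). For \eqref{v3}, however, your route is genuinely different and, as it happens, more careful than the paper's. The paper writes $\E\|\boldsymbol{V}_1\|^3$ as if it equaled $\sum_k \E|v_k|^3$ and then bounds each summand by binomial expansion, Gaussian absolute moments, and Cauchy--Schwarz (e.g.\ $\E|X_{11}y_1|^3 \le \sqrt{\E X_{11}^6\,\E y_1^6}$), summing constants to reach $140$. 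But since $\sum_k|v_k|^3 \le \bigl(\sum_k v_k^2\bigr)^{3/2}$, identifying the cube of the Euclidean (Hilbert-space) norm with the sum of cubed coordinates goes in the wrong direction; this is exactly the point your power-mean/Jensen step addresses, paying the factor $\sqrt 5$ in $\bigl(\sum_k v_k^2\bigr)^{3/2} \le \sqrt 5\sum_k|v_k|^3$ so that you genuinely bound the norm that Pinelis's theorem uses. Your remaining steps check out: $\E|v_k|^3 \le (\E v_k^2)^{1/2}(\E v_k^4)^{1/2}$ by Cauchy--Schwarz, the $\chi_1^2$ fourth central moment giving $60\Sigma_{yy}^4$ for the squared coordinates, and the cross-coordinate formula $\E(X_{11}y_1-\Sigma_{1y})^4 = 9\Sigma_{11}^2\Sigma_{yy}^2 + 42\Sigma_{11}\Sigma_{1y}^2\Sigma_{yy} + 9\Sigma_{1y}^4$ (the Isserlis pairing counts are $9$, $72$, $24$, and $9+42+9=60$ confirms your degenerate-limit sanity check), giving $\E\|\boldsymbol{V}_1\|^3 \le 5^{3/2}\sqrt{120}\,\Sigma_{yy}^3 = \sqrt{15000}\,\Sigma_{yy}^3 < 140\,\Sigma_{yy}^3$. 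What the paper's term-by-term expansion buys is lighter moment bookkeeping (no fourth central moments); what yours buys is a bound on the correct norm, under the same constant $140$ and with room to spare.
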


(The proofs of all lemmas are contained in Appendix \ref{tech.lemmas}.) Define \(g: (-\Sigma_{11}, \infty)^3 \times \mathbb{R}^2 \to \mathbb{R}\) as 
\begin{align}
g(\boldsymbol{u})  & := \frac{u_4 + \Sigma_{1y}}{\sqrt{(u_1 + \Sigma_{11})( u_3 + \Sigma_{yy})}} - \frac{u_5 + \Sigma_{1y}}{\sqrt{(u_2 + \Sigma_{11}) (u_3 + \Sigma_{yy})}} \label{lemma.3.g.def}
\\ \implies g(\overline{\boldsymbol{V}}) & = \frac{n^{-1} \sum_{i=1}^n X_{i1}y_i }{\sqrt{n^{-1} \sum_{i=1}^n X_{i1}^2 \cdot  n^{-1}  \sum_{i=1}^n y_i^2 }} - \frac{n^{-1} \sum_{i=1}^n X_{i2} y_i }{\sqrt{n^{-1} \sum_{i=1}^n X_{i2}^2 \cdot n^{-1}  \sum_{i=1}^n y_i^2}} \nonumber
\\ & =   \hat{R}_{1y} - \hat{R}_{2y} , \nonumber
\end{align}
the difference of the (uncentered) sample correlations as defined in \eqref{lemma.unc.samp.corr.def}. Let \(\boldsymbol{\mu} := \E \overline{\boldsymbol{V}} = \boldsymbol{0} \in \mathbb{R}^5\).  We have 
 \begin{equation}\label{lemma.gradient}
\nabla g(\boldsymbol{u}) = \begin{bmatrix}
\frac{u_4 + \Sigma_{1y}}{\sqrt{u_3 + \Sigma_{yy}}} \left(- \frac{1}{2} (u_1 + \Sigma_{11})^{-3/2} \right) \\
\frac{u_5 + \Sigma_{1y}}{\sqrt{u_3 + \Sigma_{yy}}} \left( \frac{1}{2} (u_2 + \Sigma_{11})^{-3/2} \right) \\
\frac{1}{2} \left(  \frac{u_5 + \Sigma_{1y}}{\sqrt{u_2 + \Sigma_{11} }} -  \frac{u_4 + \Sigma_{1y}}{\sqrt{u_1 + \Sigma_{11} }}    \right) (u_3 + \Sigma_{yy})^{-3/2} \\
 \frac{1}{\sqrt{(u_1 + \Sigma_{11})( u_3 + \Sigma_{yy})}} \\
- \frac{1}{\sqrt{(u_2 + \Sigma_{11}) (u_3 + \Sigma_{yy})}} 
\end{bmatrix} 
\implies \nabla g(\boldsymbol{\mu}) 
=  \begin{bmatrix}
- \frac{1}{2}   \frac{\Sigma_{1y}}{\sqrt{\Sigma_{yy} \Sigma_{11}^3}} \\
\frac{1}{2}   \frac{\Sigma_{1y}}{\sqrt{\Sigma_{yy}\Sigma_{11}^3}} \\
0 \\
\frac{1}{\sqrt{\Sigma_{11} \Sigma_{yy}}}\\
- \frac{1}{\sqrt{ \Sigma_{11} \Sigma_{yy}}}
\end{bmatrix}.
\end{equation}

Next, the tangent plane to \(g\) at \(\boldsymbol{0}\) is given by
\begin{align}
L(\boldsymbol{u})  & := g(\boldsymbol{0}) +  \nabla g(\boldsymbol{0})^\top \boldsymbol{u}  \nonumber
\\& = 0 -\frac{1}{2} \frac{\Sigma_{1y}}{\sqrt{\Sigma_{yy} \Sigma_{11}^3}} u_1 + \frac{1}{2} \frac{\Sigma_{1y}}{\sqrt{\Sigma_{yy} \Sigma_{11}^3}}  u_2 + \frac{1}{\sqrt{ \Sigma_{11} \Sigma_{yy}}} u_4 - \frac{1}{\sqrt{\Sigma_{11} \Sigma_{yy}}}u_5 \nonumber
\\ & =\frac{1}{2}\frac{\Sigma_{1y}}{\sqrt{\Sigma_{yy} \Sigma_{11}^3}}( u_2 - u_1 )  +  \frac{1}{\sqrt{ \Sigma_{11} \Sigma_{yy}}} \left( u_4 - u_5 \right); \label{lemma.tangent.plane}
\end{align}
in particular,
\[
L ( \boldsymbol{V}_1) = \frac{1}{2} \frac{\Sigma_{1y}}{\sqrt{\Sigma_{yy} \Sigma_{11}^3}}( X_{12}^2  - X_{11}^2)  +  \frac{1}{\sqrt{ \Sigma_{11} \Sigma_{yy}}} \left( X_{11}y_1 - X_{12}y_1 \right),
\]
so
\[
L ( \E \boldsymbol{V}_1) = \frac{1}{2}  \frac{\Sigma_{1y}}{\sqrt{\Sigma_{yy} \Sigma_{11}^3}}(\E X_{12}^2  - \E X_{11}^2)  +  \frac{1}{\sqrt{ \Sigma_{11} \Sigma_{yy}}} \left( \E (X_{11}y_1) -  \E(X_{12}y_1 )\right)= 0.
\]

Later we will prove the following result:
\begin{lemma}\label{calc.lemma.delt.meth.sigma.tilde}
Under the assumptions of Proposition \ref{lemma.prob.a1.eta},  for \(L:(-1,\infty)^3 \times \mathbb{R}^2 \to \mathbb{R}\) defined in \eqref{lemma.tangent.plane} and \(V\) defined in \eqref{lemma.tangent.plane.V},
\begin{equation}\label{lemma.def.tilde.sigma}
\tilde{\sigma} := \sqrt{\E \left| L(\boldsymbol{V}_1) \right|^2} =  \sqrt{ \left( 1 - \rho_{12} \right) \left(  -3 \rho_{1y}^2 + \rho_{1y}^2 \rho_{12} + 2 \right) } \in \left( 0, \sqrt{2} \right],
\end{equation}
where \(\rho_{12}\) is the correlation between \(\boldsymbol{X}_{\cdot 1}\) and \(\boldsymbol{X}_{\cdot 2}\), and \(\rho_{1y}\) is the correlation between \(\boldsymbol{y}\) and \(\boldsymbol{X}_{\cdot 1}\). Further, \(  \E \left| L(\boldsymbol{V}_1) \right|^3\) is finite (in particular, \(\E \left| L(\boldsymbol{V}_1) \right|^3  \leq 16 \sqrt{2}/\pi\)), and
\begin{equation}\label{lemma.bound.ratio}
\frac{ \left( \E \left| L(\boldsymbol{V}_1) \right|^3 \right)^{1/3}}{ \left( \E \left| L(\boldsymbol{V}_1) \right|^2 \right)^{1/2}} \leq \sqrt[3]{\frac{8}{\pi}}.
\end{equation}

\end{lemma}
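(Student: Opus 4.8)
The plan is to reduce $L(\boldsymbol{V}_1)$ to a product of two \emph{independent} Gaussians, after which every moment factors and the computation becomes routine. First I would substitute the explicit tangent plane \eqref{lemma.tangent.plane} into the definition of $\boldsymbol{V}_1$ in \eqref{lemma.tangent.plane.V} (the centering constants $\Sigma_{11}$ and $\Sigma_{1y}$ cancel inside the differences $X_{12}^2-X_{11}^2$ and $X_{11}y_1-X_{12}y_1$) and standardize by setting $a := X_{11}/\sqrt{\Sigma_{11}}$, $b := X_{12}/\sqrt{\Sigma_{11}}$, $c := y_1/\sqrt{\Sigma_{yy}}$. These are mean-zero, unit-variance jointly Gaussian with $\mathrm{Corr}(a,b)=\rho_{12}$ and $\mathrm{Corr}(a,c)=\mathrm{Corr}(b,c)=\rho_{1y}$, the last equality using $\Sigma_{1y}=\Sigma_{2y}$ and $\Sigma_{11}=\Sigma_{22}$. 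A short calculation shows that all the covariance prefactors collapse, leaving $L(\boldsymbol{V}_1)=\tfrac{\rho_{1y}}{2}(b^2-a^2)+(a-b)c$.

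The crucial step is to factor this expression. Writing $b^2-a^2=-(a-b)(a+b)$ gives $L(\boldsymbol{V}_1)=(a-b)\bigl(c-\tfrac{\rho_{1y}}{2}(a+b)\bigr)=:UW$, a product of the jointly Gaussian variables $U:=a-b$ and $W:=c-\tfrac{\rho_{1y}}{2}(a+b)$. I would then verify that $U$ and $W$ are uncorrelated: $\Cov(a-b,c)=\rho_{1y}-\rho_{1y}=0$ and $\Cov(a-b,a+b)=\Var(a)-\Var(b)=0$, so $\Cov(U,W)=0$. Both identities are consequences of the feature-$1$/feature-$2$ symmetry built into the assumptions of Proposition \ref{lemma.prob.a1.eta}. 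Since $(U,W)$ is jointly Gaussian and uncorrelated, $U$ and $W$ are \emph{independent}.

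With independence in hand the remaining computations factor cleanly. Direct variance calculations give $\sigma_U^2=\Var(a-b)=2(1-\rho_{12})$ and $\sigma_W^2=1-\tfrac12\rho_{1y}^2(3-\rho_{12})$, so $\tilde\sigma^2=\E[U^2]\,\E[W^2]=\sigma_U^2\sigma_W^2$ reproduces exactly the expression in \eqref{lemma.def.tilde.sigma}. For the range $(0,\sqrt2\,]$ I would note that $\sigma_U^2<2$ (as $\rho_{12}>0$) and $\sigma_W^2\le 1$, whence $\tilde\sigma^2<2$; strict positivity follows because positive-definiteness of the $3\times3$ correlation matrix of $(a,b,c)$ forces $\rho_{1y}^2<(1+\rho_{12})/2$, which one checks implies $\sigma_W^2>0$ (equivalently, $U$ and $W$ are nondegenerate). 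Finally, using independence and the standard-normal identity $\E\lvert\mathcal{N}(0,\sigma^2)\rvert^3=\tfrac{2\sqrt2}{\sqrt\pi}\,\sigma^3$,
\[
\E\lvert L(\boldsymbol{V}_1)\rvert^3=\E\lvert U\rvert^3\,\E\lvert W\rvert^3=\tfrac{8}{\pi}\,\sigma_U^3\sigma_W^3=\tfrac{8}{\pi}\,\tilde\sigma^3\le\tfrac{8}{\pi}(\sqrt2)^3=\tfrac{16\sqrt2}{\pi},
\]
and the ratio in \eqref{lemma.bound.ratio} equals $(\tfrac{8}{\pi}\tilde\sigma^3)^{1/3}/\tilde\sigma=(8/\pi)^{1/3}$ exactly.

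The main obstacle is recognizing the factorization $L(\boldsymbol{V}_1)=UW$ together with the independence of $U$ and $W$; without it, the third absolute moment of a quadratic-plus-bilinear form in correlated Gaussians is awkward to handle, whereas the product structure makes both $\tilde\sigma^2$ and $\E\lvert L(\boldsymbol{V}_1)\rvert^3$ collapse to products of one-dimensional Gaussian moments. The only genuine subtlety beyond this is establishing strict positivity of $\tilde\sigma$, where I would lean on the positive-definiteness of the covariance matrix rather than on any ad hoc bound.
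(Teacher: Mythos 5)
Your proof is correct and takes essentially the same route as the paper's: the identical factorization of \(L(\boldsymbol{V}_1)\) into a product of two uncorrelated, hence independent, jointly Gaussian factors (you work with standardized variables \(a,b,c\) while the paper works with the raw \(X_{11}, X_{12}, y_1\), but the product structure and the factored variance and third-moment computations are the same). Your extra argument for strict positivity of \(\tilde{\sigma}\) is a welcome addition that the paper's proof omits; just note that the assumptions of Proposition \ref{lemma.prob.a1.eta} only guarantee positive \emph{semi}-definiteness of the correlation matrix, which still suffices, since \(\rho_{1y}^2 \leq (1+\rho_{12})/2\) together with \(\rho_{12} < 1\) already forces \(\sigma_W^2 \geq (1-\rho_{12})^2/4 > 0\).
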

In order to apply Theorem \ref{pinelis.thm.2.9}, we must show that \(g(\cdot)\) satisfies \eqref{2.9.smoothness}; we show this is the case in the proof of the following lemma:
\begin{lemma}\label{lemma.smoothness.2.1}
Under the assumptions of Proposition \ref{lemma.prob.a1.eta}, the functional \(g: (-1, \infty)^3 \times \mathbb{R}^2  \to \mathbb{R}\) as defined in \eqref{lemma.3.g.def} satisfies \eqref{2.9.smoothness} with \(\epsilon = \Sigma_{11}/2\) and \(M_\epsilon := 36 \Sigma_{yy}.\)
%\begin{equation}\label{def.m.epsilon}
\end{lemma}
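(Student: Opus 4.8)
The plan is to verify the smoothness condition \eqref{2.9.smoothness} directly, by computing the Hessian $\nabla^2 g$ explicitly, bounding each of its entries uniformly over the ball $\{\boldsymbol{u}: \|\boldsymbol{u}\|_2 < \Sigma_{11}/2\}$, and then converting these entrywise bounds into a bound on the operator norm. The structural observation that keeps this manageable is that $g$ in \eqref{lemma.3.g.def} splits as $g = g_1 - g_2$, where $g_1$ depends only on $(u_1,u_3,u_4)$ and $g_2$ only on $(u_2,u_3,u_5)$, and each of $g_1,g_2$ is affine in its numerator variable ($u_4$ or $u_5$). Consequently most second partials vanish: $\partial^2 g/\partial u_4^2 = \partial^2 g/\partial u_5^2 = 0$, and the only coupling between the two blocks is through the shared coordinate $u_3$. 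Writing $a_1 = u_1 + \Sigma_{11}$, $b = u_3 + \Sigma_{yy}$, $c_1 = u_4 + \Sigma_{1y}$ (and analogously $a_2, c_2$ for the second block), I would differentiate the gradient \eqref{lemma.gradient} once more to record the handful of nonzero entries, each a monomial in $a_1^{-1/2}, b^{-1/2}, c_1$ (resp.\ $a_2, c_2$) with an explicit rational coefficient.

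The radius $\epsilon = \Sigma_{11}/2$ is the right choice precisely because it keeps every denominator bounded away from zero on the ball. For $\|\boldsymbol{u}\|_2 < \Sigma_{11}/2$ one has $|u_i| < \Sigma_{11}/2$, hence $a_1 > \Sigma_{11}/2 \geq 1/2$ (using $\Sigma_{11}\geq 1$) and, using $\Sigma_{11} < \Sigma_{yy}$, $b > \Sigma_{yy} - \Sigma_{11}/2 > \Sigma_{yy}/2$; the numerator satisfies $|c_1| < \Sigma_{1y} + \Sigma_{11}/2 < \tfrac{3}{2}\Sigma_{yy}$ since $\Sigma_{1y} < \Sigma_{yy}$. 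Substituting these into each entry, the dominant contributions are the $(u_1,u_1)$ and $(u_2,u_2)$ entries of the form $\tfrac{3}{4}c_1 a_1^{-5/2} b^{-1/2}$, which are bounded by a constant multiple of $\Sigma_{yy}^{1/2}$ (the $\Sigma_{yy}$ from $c_1$ partially cancels the $\Sigma_{yy}^{-1/2}$ from $b^{-1/2}$, while the $a_1$ factors are $O(1)$); every remaining entry carries strictly more negative powers of $b$ and is $O(\Sigma_{yy}^{-1/2})$ or smaller.

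To pass from entrywise bounds to the operator norm, I would use that $\nabla^2 g$ is symmetric, so $\|\nabla^2 g\|_{\text{op}}$ is at most its maximum absolute row sum; alternatively one may bound $\|\nabla^2 g\|_{\text{op}} \leq \|\nabla^2 g_1\|_{\text{op}} + \|\nabla^2 g_2\|_{\text{op}}$, treating each block as a $3\times 3$ matrix. Either way, the maximal row sum is governed by the $(u_1,u_1)$/$(u_2,u_2)$ entries and is of order $\Sigma_{yy}^{1/2}$. Since $\Sigma_{yy}\geq 1$, every $\Sigma_{yy}^{-1/2}$ and $\Sigma_{yy}^{1/2}$ factor can be absorbed into a single multiple of $\Sigma_{yy}$, and the resulting sum stays comfortably below $36\Sigma_{yy}$, which yields the claimed $M_\epsilon$.

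I expect no conceptual obstacle here; the only real work is bookkeeping. The care points are listing all nonzero second partials correctly, applying the ball bounds consistently (in particular the lower bound $b > \Sigma_{yy}/2$, which is where the ordering assumption $\Sigma_{11} < \Sigma_{yy}$ enters), and chasing constants so the final row-sum bound does not exceed $36\Sigma_{yy}$. Because the stated constant is deliberately loose, minor inefficiencies in the entrywise estimates are harmless.
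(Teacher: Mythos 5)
Your proposal is correct and follows essentially the same route as the paper: compute the Hessian explicitly, use the ball of radius $\Sigma_{11}/2$ together with the ordering $\Sigma_{11} < \Sigma_{1y} < \Sigma_{yy}$ and $\Sigma_{11} > 1$ to bound each nonzero entry, then convert entrywise bounds to an operator-norm bound. The only (immaterial) differences are bookkeeping: you retain the factor $\tilde{u}_3 > \Sigma_{yy}/2$ to get sharper per-entry bounds of order $\Sigma_{yy}^{1/2}$ and pass to the operator norm via row sums, whereas the paper lower-bounds all denominators by $1/2$, gets every entry below $9\Sigma_{yy}$, and uses $\lVert \cdot \rVert_{\text{op}} \leq \lVert \cdot \rVert_F \leq \sqrt{15}\,\max_{i,j}|\cdot| < 36\Sigma_{yy}$.
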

Then Theorem \ref{pinelis.thm.2.9} yields that for all \(z \in \mathbb{R}\)
\begin{align*}
& \left| \mathbb{P} \left(  \frac{ \sqrt{n}(\hat{R}_{1y} - \hat{R}_{2y})}{\tilde{\sigma}} \leq z \right)  - \Phi(z) \right| \leq \frac{\mathcal{C}}{\sqrt{n}}
\\
\implies \qquad & \mathbb{P} \left(  \hat{R}_{1y} - \hat{R}_{2y} \leq \frac{\tilde{\sigma} z}{\sqrt{n}} \right)   \leq  \Phi(z) + \frac{\mathcal{C}}{\sqrt{n}}
\\ \iff   \qquad & \mathbb{P} \left(  \hat{R}_{1y} - \hat{R}_{2y} \leq \eta \right)   \leq  \Phi\left( \frac{ \eta \sqrt{n}}{\tilde{\sigma}}\right) + \frac{\mathcal{C}}{\sqrt{n}},
\end{align*}
where \(\eta := \tilde{\sigma} z/\sqrt{n}\) may equal any real number since \(z\) may equal any real number. We upper bound \(\mathcal{C}\) in the following result:
\begin{lemma}\label{lemma.ub.mathcal.c}
Under the assumptions of Proposition \ref{lemma.prob.a1.eta},
\begin{align*}
\mathcal{C}  < ~ &  \Sigma_{yy}^3  \left( \frac{1462.717}{ \tilde{\sigma}}   + 13.859 \right)  ,
\end{align*}
where \(\tilde{\sigma}\) is defined in \eqref{lemma.def.tilde.sigma}. 
\end{lemma}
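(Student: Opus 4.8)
The plan is to bound $\mathcal{C}$ term-by-term by substituting into the explicit formula \eqref{mathcal.c.formula} the constants from \eqref{mathcal.c.terms}, the smoothness parameters $\epsilon = \Sigma_{11}/2$ and $M_\epsilon = 36\Sigma_{yy}$ from Lemma \ref{lemma.smoothness.2.1}, and the moment bounds $\E\|\boldsymbol{V}_1\|^2 < 12\Sigma_{yy}^2$ and $\E\|\boldsymbol{V}_1\|^3 < 140\Sigma_{yy}^3$ from Lemma \ref{lemma.calc.v.exp} together with the ratio bound $(\E|L(\boldsymbol{V}_1)|^3)^{1/3}/(\E|L(\boldsymbol{V}_1)|^2)^{1/2} \leq (8/\pi)^{1/3}$ from Lemma \ref{calc.lemma.delt.meth.sigma.tilde}. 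Inspecting \eqref{mathcal.c.terms}, every one of $k_{20}, k_{21}, k_{30}, k_{31}$ carries the common prefactor $M_\epsilon/(2\tilde{\sigma}) = 18\Sigma_{yy}/\tilde{\sigma}$, whereas $k_0$, the $k_1$ term, and $k_\epsilon$ do not involve $\tilde{\sigma}$. This suggests splitting $\mathcal{C}$ into a group of four terms proportional to $1/\tilde{\sigma}$ and a remaining group, which will produce the coefficients $1462.717$ and $13.859$ respectively.

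For the $1/\tilde{\sigma}$ group, I would bound each of the four products separately. In $k_{20}\E\|\boldsymbol{V}_1\|^2$ and in the $k_{21}$ term I substitute $\E\|\boldsymbol{V}_1\|^2 < 12\Sigma_{yy}^2$, while in $k_{30}(\E\|\boldsymbol{V}_1\|^3)^{2/3}$ and the $k_{31}$ term I use $(\E\|\boldsymbol{V}_1\|^3)^{2/3} < 140^{2/3}\Sigma_{yy}^2$; in every case the factor $M_\epsilon = 36\Sigma_{yy}$ supplies the missing power, so each product is of order $\Sigma_{yy}^3/\tilde{\sigma}$. The $n$-dependent factors $2^{2/3}/n^{1/6}$, $1/n^{1/3}$, and $2/n^{1/2}$ appearing in $k_{21}, k_{30}, k_{31}$ are strictly decreasing in $n$, so under $n \geq 100$ each may be replaced by its value at $n = 100$; likewise the ratio factors in $k_{21}$ and $k_{31}$ are replaced by $(8/\pi)^{1/3}$. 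Summing the four resulting coefficients of $\Sigma_{yy}^3/\tilde{\sigma}$ yields the claimed $1462.717$.

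For the remaining group, $k_0 = 0.13925$ is already constant; the $k_1$ term equals $k_1 \big((\E|L(\boldsymbol{V}_1)|^3)^{1/3}/(\E|L(\boldsymbol{V}_1)|^2)^{1/2}\big)^3 \leq k_1 \cdot 8/\pi$ by the ratio bound; and for $k_\epsilon$ I take the first of the two terms in its defining minimum, bounding it by $\E\|\boldsymbol{V}_1\|^2/(\epsilon^2 n^{1/2}) < 48\Sigma_{yy}^2/(\Sigma_{11}^2 n^{1/2})$. The covariance ordering $\Sigma_{yy} > \Sigma_{1y} > \Sigma_{11} > \Sigma_{12} \geq 1$ forces $\Sigma_{11} > 1$ and $\Sigma_{yy} > 1$, so this is below $48\Sigma_{yy}^2/10 \leq 4.8\Sigma_{yy}^3$, and the two constant contributions are likewise absorbed into $\Sigma_{yy}^3$ using $\Sigma_{yy}^3 > 1$. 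Collecting these three pieces produces a coefficient of $\Sigma_{yy}^3$ below $13.859$.

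The proof involves no genuine difficulty beyond careful bookkeeping. The only points needing attention are verifying that each $n$-dependent factor is monotone, so that $n = 100$ is indeed the worst case, and that every lower power of $\Sigma_{yy}$ (and the $\Sigma_{11}$ appearing in $k_\epsilon$) can legitimately be upgraded to $\Sigma_{yy}^3$ via the strict inequalities $\Sigma_{yy} > 1$ and $\Sigma_{11} > 1$. Thus the main obstacle is simply the arithmetic precision needed to confirm that the accumulated constants do not exceed the stated $1462.717$ and $13.859$.
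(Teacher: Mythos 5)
Your proposal is correct and follows essentially the same route as the paper: the same split of $\mathcal{C}$ into the four terms carrying the prefactor $M_\epsilon/(2\tilde{\sigma}) = 18\Sigma_{yy}/\tilde{\sigma}$ (each bounded via Lemmas \ref{lemma.calc.v.exp}, \ref{calc.lemma.delt.meth.sigma.tilde}, and \ref{lemma.smoothness.2.1}, with the $n$-dependent factors evaluated at the worst case $n=100$, summing to $1462.717\,\Sigma_{yy}^3/\tilde{\sigma}$) and the remaining group $k_0 + k_1 \cdot 8/\pi + k_\epsilon$ absorbed into $13.859\,\Sigma_{yy}^3$ using $\Sigma_{yy} > 1$. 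The only (harmless) deviation is that you bound $k_\epsilon$ by the first branch of its defining minimum, yielding roughly $4.8\,\Sigma_{yy}^3$ for $n \geq 100$ with $\Sigma_{11} > 1$, whereas the paper uses the second branch to get $777.108\,\Sigma_{yy}^3/n \leq 7.772\,\Sigma_{yy}^3$; either choice lands comfortably under the stated constant.
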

This yields \eqref{lemma.prob.ineq.1}.

\subsection{Proofs of Lemmas Stated in Proof of Theorem \ref{thm.result.sel}}\label{sec.lemma.prob}

\begin{proof}[Proof of Lemma \ref{lemma.x1.not.leave}] Suppose (without loss of generality) that feature \(\boldsymbol{X}_{\cdot 1}\) enters the lasso path first. Define 
\[
 \lambda_2^{\text{cross}} := \frac{\boldsymbol{X}_{\cdot 1}^\top \boldsymbol{y}}{n \lVert  \boldsymbol{X}_{\cdot 1} \rVert_2 s_1}      .
 \]
The feature \(\boldsymbol{X}_{\cdot 1}\) would be removed from the active set before \(\boldsymbol{X}_{\cdot 2}\) or \(\boldsymbol{X}_{\cdot 3}\) were added to the active set under the event
\[
 \left\{ \lambda_2^{(2)} \vee \lambda_2^{(3)} <  \lambda_2^{\text{cross}} < \lambda_1 \right\} \qquad  \text{\citep{Tibshirani2013}.}
\] 
But since feature \(\boldsymbol{X}_{\cdot 1}\) entered first,
\[
\lambda_1 =   \max_j   \left\{ \frac{\left| \boldsymbol{X}_{\cdot j}^\top \boldsymbol{y} \right| }{n \lVert \boldsymbol{X}_{\cdot j} \rVert_2 } \right\}  =  \frac{\left| \boldsymbol{X}_{\cdot 1}^\top \boldsymbol{y} \right| }{n \lVert \boldsymbol{X}_{\cdot 1} \rVert_2 } = \frac{\boldsymbol{X}_{\cdot 1}^\top \boldsymbol{y}}{n \lVert  \boldsymbol{X}_{\cdot 1} \rVert_2 s_1} = \lambda_2^{\text{cross}},
\]
so the probability of this event is 0.

\end{proof}

\begin{proof}[Proof of Lemma \ref{lemma.e.event}]  We will first verify that \eqref{lasso.gen.lambda.3.exp} holds under the specified events. Then we will use \eqref{lasso.gen.lambda.3.exp} to show that \eqref{lambda.ineq.order} holds as well.

We will want to know the coefficient \( \hat{\beta}_1(\lambda) \) on \(\boldsymbol{X}_{\cdot 1}\) in the part of the lasso path before the second feature enters the active set. Conditional on \(\mathcal{A}_{12} \cap \mathcal{A}_{13} \cap S_1\), for \(\lambda \in [\lambda_2, \lambda_1]\), by the KKT conditions \eqref{lasso.gen.simplest.kkt.conds} it holds that 
\begin{align}
\lambda s_1  & =  \frac{1}{n \lVert  \boldsymbol{X}_{\cdot 1} \rVert_2} \boldsymbol{X}_{\cdot 1}^\top \left(\boldsymbol{y} - \frac{\boldsymbol{X}_{\cdot 1}}{ \lVert \boldsymbol{X}_{\cdot 1} \rVert_2} \hat{\beta}_1(\lambda) \right)  \nonumber
%\\ \iff \qquad  \lambda & =   \frac{\boldsymbol{X}_{\cdot 1}^\top\boldsymbol{y}}{n \lVert  \boldsymbol{X}_{\cdot 1} \rVert_2} - \frac{1}{n} \hat{\beta}_1(\lambda)   \nonumber
\\ \iff  \qquad \hat{\beta}_1(\lambda)  & = n\left( \frac{\boldsymbol{X}_{\cdot 1}^\top\boldsymbol{y}}{n \lVert  \boldsymbol{X}_{\cdot 1} \rVert_2} - \lambda  \right)  \nonumber
\\  & = \lVert \boldsymbol{y} \rVert_2 \hat{R}_{1y} - n \lambda \label{lass.gen.beta.1} ,
\end{align}
and \(\hat{\beta}_j(\lambda) = 0\) for \(j \neq 1\). Now we are prepared to show \eqref{lasso.gen.lambda.3.exp}. Using \eqref{lasso.gen.simplest.kkt.conds}, for \(s_2 \in \{-1, 1\}\) the knot \(\lambda_2^{(2)}\) satisfies 
\begin{align}
 0 & = \frac{-1}{n  \lVert \boldsymbol{X}_{\cdot 2} \rVert_2} \boldsymbol{X}_{\cdot 2}^\top\left(\boldsymbol{y} - \frac{\boldsymbol{X}_{\cdot 1}}{\lVert \boldsymbol{X}_{\cdot 1} \rVert_2}  \hat{\beta}_1\left(\lambda_2^{(2)} \right)\right)+ \lambda_2^{(2)} s_2  \nonumber
\\ \stackrel{(a)}{\iff} \qquad \lambda_2^{(2)} & =  \left| \frac{1}{n \lVert \boldsymbol{X}_{\cdot 2} \rVert_2} \boldsymbol{X}_{\cdot 2}^\top\left(\boldsymbol{y} - \frac{\boldsymbol{X}_{\cdot 1}}{\lVert \boldsymbol{X}_{\cdot 1} \rVert_2}  \hat{\beta}_1\left(\lambda_2^{(2)}\right)  \right)\right|    \nonumber
 \\ \iff \qquad   \lambda_2^{(2)} & =  \left| \frac{1}{n \lVert \boldsymbol{X}_{\cdot 2} \rVert_2} \boldsymbol{X}_{\cdot 2}^\top\left(\boldsymbol{y} - \frac{\boldsymbol{X}_{\cdot 1}}{\lVert \boldsymbol{X}_{\cdot 1} \rVert_2}  \left[ \lVert \boldsymbol{y} \rVert_2 \hat{R}_{1y} - n \lambda_2^{(2)} \right]  \right)\right|   \nonumber
\\ 
\iff \qquad   \lambda_2^{(2)} & =   \left| \frac{\boldsymbol{X}_{\cdot 2}^\top \boldsymbol{y}}{n  \lVert \boldsymbol{X}_{\cdot 2} \rVert_2 }  - \frac{\boldsymbol{X}_{\cdot 2}^\top \boldsymbol{X}_{\cdot 1}}{n \lVert \boldsymbol{X}_{\cdot 2} \rVert_2 \lVert \boldsymbol{X}_{\cdot 1} \rVert_2}\lVert \boldsymbol{y} \rVert_2 \hat{R}_{1y}   +  \frac{\boldsymbol{X}_{\cdot 2}^\top\boldsymbol{X}_{\cdot 1} }{ \lVert \boldsymbol{X}_{\cdot 2} \rVert_2  \lVert \boldsymbol{X}_{\cdot 1} \rVert_2} \lambda_2^{(2)} \right|  \nonumber
\\ 
\iff \qquad  \lambda_2^{(2)} & =    \left|  \frac{\lVert \boldsymbol{y} \rVert_2}{n}\left( \hat{R}_{2y}  - \hat{R}_{12} \hat{R}_{1y} \right)  + \hat{R}_{12} \lambda_2^{(2)} \right| , \nonumber
\end{align}
where (a) follows because \(s_2 \in \{-1, 1\}\) takes on the value that ensures \(\lambda_2^{(2)}\) is positive and we used \eqref{lass.gen.beta.1} and the uncentered sample correlations from \eqref{lemma.unc.samp.corr.def}.  Similarly, the knot for \(\boldsymbol{X}_{\cdot 3}\) is at \(\lambda_2^{(3)}\) satisfying
\begin{align}
\lambda_2^{(3)} & =  \left| \frac{\lVert \boldsymbol{y} \rVert_2}{n} \left( \hat{R}_{3y} -     \hat{R}_{13} \hat{R}_{1y} \right) + \hat{R}_{13} \lambda_2^{(3)} \right| , \nonumber
\end{align}
and if \(\mathcal{A}_3\) holds, we can write
 \begin{align}
 \lambda_2^{(3)} & =    \left|  \frac{\lVert \boldsymbol{y} \rVert_2}{n} \left( \hat{R}_{3y}  - \hat{R}_{13} \hat{R}_{1y}  \right)  + \hat{R}_{13} \lambda_2^{(3)} \right| \nonumber
 \\ & =      \frac{\lVert \boldsymbol{y} \rVert_2}{n} \left( \hat{R}_{3y}  - \hat{R}_{13} \hat{R}_{1y}  \right)  + \hat{R}_{13} \lambda_2^{(3)} \nonumber
 \\ \iff \qquad  
 \lambda_2^{(3)} & =  \frac{\lVert \boldsymbol{y} \rVert_2}{n} \frac{ \hat{R}_{3y}  - \hat{R}_{13} \hat{R}_{1y} }{1 -  \hat{R}_{13}}, \nonumber
 \end{align}
 which is \eqref{lasso.gen.lambda.3.exp}. 

 Now we seek to verify \eqref{lambda.ineq.order}. To see that the right side of \eqref{lambda.ineq.order} holds, note that
% also that if \(s_3 = 1\), then
  \begin{align*}
 \lambda_2^{(3)} < \lambda_1 \qquad \iff \qquad &   \frac{\lVert \boldsymbol{y} \rVert_2}{n} \frac{ \hat{R}_{3y}  - \hat{R}_{13} \hat{R}_{1y} }{1 -  \hat{R}_{13}}< \frac{ \lVert \boldsymbol{y} \rVert_2}{n} \hat{R}_{1y}
 \\  \iff \qquad &  \hat{R}_{3y} -   \hat{R}_{13}  \hat{R}_{1y}  <  \hat{R}_{1y} \left( 1 - \hat{R}_{13} \right)
  \\  \iff \qquad &  \hat{R}_{3y }<  \hat{R}_{1y} ,
 \end{align*}
 which is true under \(\mathcal{A}_{13}\). It only remains to show \( \lambda_2^{(2)} < \lambda_2^{(3)} \). Observe that we can write
 \[
 \lambda_2^{(2)} =   \max_{s_2 \in \{-1, 1\}} \left\{ \frac{n^{-1} \lVert \boldsymbol{y} \rVert_2 \left(\hat{R}_{2y} -   \hat{R}_{12}  \hat{R}_{1y} \right)}{s_2 - \hat{R}_{12}} \right\},
 \]
 so under the assumed events we have
%  We then have that \eqref{lambda.ineq.order} holds and \(\boldsymbol{X}_{\cdot 3}\) enters the active set before \(\boldsymbol{X}_{\cdot 2}\) under the event \(
\begin{align}
& \lambda_2^{(2)} < \lambda_2^{(3)} \nonumber
\\ \iff \qquad &  \max_{s_2 \in \{-1, 1\}} \left\{ \frac{n^{-1} \lVert \boldsymbol{y} \rVert_2 \left(\hat{R}_{2y} -   \hat{R}_{12}  \hat{R}_{1y} \right)}{s_2 - \hat{R}_{12}} \right\}
<
  \frac{\lVert \boldsymbol{y} \rVert_2}{n} \frac{ \hat{R}_{3y}  - \hat{R}_{13} \hat{R}_{1y} }{1 -  \hat{R}_{13}} \nonumber
 \\ \iff \qquad &  \max_{s_2 \in \{-1, 1\}} \left\{ \frac{\hat{R}_{2y} -   \hat{R}_{12}  \hat{R}_{1y}  }{s_2 - \hat{R}_{12}} \right\} 
<
 \frac{\lVert \boldsymbol{y} \rVert_2}{n} \frac{ \hat{R}_{3y}  - \hat{R}_{13} \hat{R}_{1y} }{1 -  \hat{R}_{13}} . \label{lasso.gen.cond.sel.order}
\end{align}
We can write the argument of the left side of \eqref{lasso.gen.cond.sel.order} as
\[
\frac{\hat{R}_{2y} -   \hat{R}_{12}  \hat{R}_{1y}}{s_2  - \hat{R}_{12}} 
= \frac{\left(s_2  -   \hat{R}_{12}\right)  \hat{R}_{1y} + \left(\hat{R}_{2y} - s_2\hat{R}_{1y} \right)}{s_2  - \hat{R}_{12}} 
=\hat{R}_{1y} + \frac{\hat{R}_{2y} - s_2\hat{R}_{1y}}{s_2  - \hat{R}_{12}} 
\]
and similarly the right side is
\[
\frac{ \hat{R}_{3y} -   \hat{R}_{13}  \hat{R}_{1y}}{1 - \hat{R}_{13}}
%= \frac{\left(1  -   \hat{R}_{13}\right)  \hat{R}_{1y} + \left(\hat{R}_{3y} - \hat{R}_{1y} \right)}{1  - \hat{R}_{13}} 
=\hat{R}_{1y} + \frac{\hat{R}_{3y} - \hat{R}_{1y}}{1  - \hat{R}_{13}} 
\]
%\[
%\hat{R}_{1y} + \frac{\hat{R}_{3y} - \hat{R}_{1y}}{1  - \hat{R}_{13}} 
%\]
which means that \eqref{lasso.gen.cond.sel.order}, and therefore \eqref{lambda.ineq.order}, are equivalent on the event  \(\mathcal{A}_{12} \cap \mathcal{A}_{13} \cap S_1 \cap \mathcal{A}_3 \cap S_1 \) to
\begin{align*}
& \left\{  \max_{s_2 \in \{-1, 1\}} \left\{  \frac{\hat{R}_{2y} - s_2\hat{R}_{1y}}{s_2  - \hat{R}_{12}}  \right\}
<
  \frac{\hat{R}_{3y} - \hat{R}_{1y}}{1  - \hat{R}_{13}}  \right\} 
\\ \iff \qquad & \left\{
\frac{\hat{R}_{2y} - \hat{R}_{1y}}{1  - \hat{R}_{12}} 
< 
\frac{\hat{R}_{3y} - \hat{R}_{1y}}{1  - \hat{R}_{13}} 
\right\} \cap  \left\{
\frac{\hat{R}_{2y} + \hat{R}_{1y}}{-1  - \hat{R}_{12}} 
< 
\frac{\hat{R}_{3y} - \hat{R}_{1y}}{1  - \hat{R}_{13}} 
\right\}
\\  \iff \qquad & 
\left\{
\frac{\hat{R}_{1y} - \hat{R}_{2y}}{1  - \hat{R}_{12}} 
> 
\frac{\hat{R}_{1y} - \hat{R}_{3y}}{1  - \hat{R}_{13}} 
\right\}  \cap \left\{
\frac{\hat{R}_{1y} + \hat{R}_{2y}}{1  + \hat{R}_{12}} 
> 
\frac{\hat{R}_{1y} - \hat{R}_{3y}}{1  - \hat{R}_{13}} 
\right\} 
\\ & = \mathcal{E}_1 \cap \mathcal{E}_2.
\end{align*}

\end{proof}

\begin{proof}[Proof of Lemma \ref{lem.ineq.assum}] We begin by stating a few results we will require. The proofs of these lemmas are provided in Appendix \ref{tech.lemmas}. 

\begin{remark} The inequality
\[
\max\left\{2 \left(12 + \sigma_\epsilon^2\right), 5\left(1 + \sigma_\epsilon^2 \right) \right\} > \frac{9}{4} \left(5 + \sigma_\epsilon^2\right)
\]
along with \(\beta_Z^2 < 4\) (from Statement i) yields the following inequalities from \eqref{n.large.delta.cond.max} and \eqref{c4n.conds.max} that we will make use of throughout the proof:
%\begin{lemma}\label{very.first.lemma} Under the assumptions of Theorem \ref{thm.result.sel},
% \begin{equation}\label{n.exp.size.req}
%n > \exp \left\{ \left(   \frac{100}{19 \log (100)}\sqrt{\frac{c_2}{\beta_Z^2 + 1 + \sigma_\epsilon^2}}  +  \frac{10}{19} \sqrt{ 3 + \sigma_\epsilon^2 } \right)^4 \right\},
%\end{equation}
\begin{equation}\label{n.large.delta.cond}
\frac{n}{\log n} > \frac{\beta_Z^2 + 1 + \sigma_\epsilon^2}{c_2}\max \left\{ \frac{1}{4 t_0^2 (2 + \sigma_\epsilon^2 )^2}, 2\left(12 + \sigma_\epsilon^2 \right),  5 \left(1 +  \sigma_\epsilon^2\right), \frac{9}{4}  \left( \beta_Z^2 + 1 + \sigma_\epsilon^2 \right) \right\}
%%%%% Note: to see why we can't do without any of these conditions, examine https://www.desmos.com/calculator/piz14ehzfo.
\end{equation}
and
\begin{equation}\label{c4n.conds}
 \frac{n}{\left(\log n \right)^{3/2}}  > \frac{3.61}{c_2} \left( \beta_Z^2 + 1 + \sigma_\epsilon^2 \right) .
\end{equation}

%\end{lemma}
\end{remark}

\begin{lemma}\label{lemma.delt.ineq} Under the assumptions of Theorem \ref{thm.result.sel},
\begin{equation} \label{beta.u.bound.ineq2}
\delta(n) < \min \left\{  \frac{5}{19 \left(\log n \right)^{1/4}},  \frac{1}{\sqrt{20 \left( 1 + \sigma_\epsilon^2 \right)}} ,  t_0\left(\beta_Z^2 + 1 + \sigma_\epsilon^2 \right) \right\}, 
\end{equation}
where \(\delta(n)\) is defined in \eqref{def.delta.n}. This implies that \(1 - 2 \delta(n) \sqrt{3 + \sigma_\epsilon^2} > \frac{1}{5}\). Further, the following inequalities hold:
\begin{align}
  \sqrt{ 1 + \sigma_\zeta^2(n)} - 1 & < 20\sqrt{\frac{c_2}{\beta_Z^2 + 1 + \sigma_\epsilon^2}} \frac{\delta(n)}{\log n}  ,  \quad \text{and}   \label{lemma.ineq.sig.zeta.sqrt}
\\ \sigma_\zeta^2(n)  &  < 1 ,\label{sigma.zeta.sq.l.1.lemma}
\end{align}
 where \(\sigma_\zeta^2(n)\) is defined in \eqref{lasso.thm.sig.zeta.def}.

\end{lemma}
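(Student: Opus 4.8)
The plan is to treat the three bounds comprising the minimum separately, reducing each to one of the two sample-size conditions \eqref{n.large.delta.cond} and \eqref{c4n.conds}, and then to deduce the remaining claims as elementary consequences. Throughout I will use the closed form $\delta(n)^2 = (\beta_Z^2 + 1 + \sigma_\epsilon^2)\log n/(4 c_2 n)$ recorded in \eqref{def.delta.n}, which makes every comparison a matter of squaring and rearranging.

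For the first term, squaring shows that $\delta(n) < 5/(19(\log n)^{1/4})$ is equivalent to $n/(\log n)^{3/2} > 3.61\,(\beta_Z^2 + 1 + \sigma_\epsilon^2)/c_2$, which is exactly \eqref{c4n.conds}. For the second term, $\delta(n) < 1/\sqrt{20(1+\sigma_\epsilon^2)}$ is equivalent to $n/\log n > 5(1+\sigma_\epsilon^2)(\beta_Z^2+1+\sigma_\epsilon^2)/c_2$, which is the branch of the maximum in \eqref{n.large.delta.cond} carrying the factor $5(1+\sigma_\epsilon^2)$. For the third term, $\delta(n) < t_0(\beta_Z^2+1+\sigma_\epsilon^2)$ reduces to $n/\log n > 1/\bigl(4 c_2 t_0^2(\beta_Z^2+1+\sigma_\epsilon^2)\bigr)$; the first branch of \eqref{n.large.delta.cond} supplies $n/\log n > (\beta_Z^2+1+\sigma_\epsilon^2)/\bigl(4 c_2 t_0^2(2+\sigma_\epsilon^2)^2\bigr)$, and since $\beta_Z > 1$ forces $(\beta_Z^2+1+\sigma_\epsilon^2)^2 \ge (2+\sigma_\epsilon^2)^2$, this branch already dominates the requirement. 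Together these establish the displayed minimum \eqref{beta.u.bound.ineq2}.

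Next, the implication $1 - 2\delta(n)\sqrt{3+\sigma_\epsilon^2} > 1/5$ rearranges to $\delta(n) < 2/(5\sqrt{3+\sigma_\epsilon^2})$; since the second term already gives $\delta(n) < 1/\sqrt{20(1+\sigma_\epsilon^2)}$, it suffices to verify the constant inequality $1/\sqrt{20(1+\sigma_\epsilon^2)} \le 2/(5\sqrt{3+\sigma_\epsilon^2})$, which cross-multiplies to $25(3+\sigma_\epsilon^2) \le 80(1+\sigma_\epsilon^2)$, i.e.\ $0 \le 5 + 55\sigma_\epsilon^2$, true for every $\sigma_\epsilon^2 \ge 0$. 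For \eqref{lemma.ineq.sig.zeta.sqrt} I will substitute the second expression for $\delta(n)$ from \eqref{def.delta.n}, which collapses the right-hand side $20\sqrt{c_2/(\beta_Z^2+1+\sigma_\epsilon^2)}\,\delta(n)/\log n$ exactly to $\sigma_\zeta^2(n)$; the claim then becomes $\sqrt{1+\sigma_\zeta^2(n)}-1 < \sigma_\zeta^2(n)$, which follows from the elementary fact that $\sqrt{1+x} < 1+x$ for all $x > 0$. Finally, \eqref{sigma.zeta.sq.l.1.lemma} is immediate from \eqref{lasso.thm.sig.zeta.def}: $\sigma_\zeta^2(n) = 10/\sqrt{n\log n} < 1$ is equivalent to $n\log n > 100$, which holds because $n \ge 100$ yields $n\log n \ge 100\log 100 > 100$.

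There is no serious obstacle here; the lemma is essentially a bookkeeping exercise that isolates exactly those consequences of the sample-size assumptions that later steps of the main proof invoke. The only places demanding slight care are the third term of the minimum, where one must notice that the $(2+\sigma_\epsilon^2)^2$ appearing in \eqref{n.large.delta.cond} may be replaced by the larger $(\beta_Z^2+1+\sigma_\epsilon^2)^2$ precisely because $\beta_Z > 1$, and the recognition that the right-hand side of \eqref{lemma.ineq.sig.zeta.sqrt} is nothing but $\sigma_\zeta^2(n)$ written in disguise.
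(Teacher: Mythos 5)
Your proposal is correct and follows essentially the same route as the paper: each term of the minimum is reduced by squaring to the appropriate branch of \eqref{n.large.delta.cond} or to \eqref{c4n.conds}, the implication $1 - 2\delta(n)\sqrt{3+\sigma_\epsilon^2} > \frac{1}{5}$ is obtained from the second bound via the same constant comparison $25(3+\sigma_\epsilon^2) \le 80(1+\sigma_\epsilon^2)$, and \eqref{lemma.ineq.sig.zeta.sqrt} is recognized, exactly as in the paper, as the statement $\sqrt{1+\sigma_\zeta^2(n)} - 1 < \sigma_\zeta^2(n)$ after rewriting the right-hand side via \eqref{def.delta.n}. The only cosmetic difference is in the third term, where the paper first derives $\delta(n) < t_0(2+\sigma_\epsilon^2)$ and then invokes $\beta_Z > 1$, while you compare the two thresholds directly using the same fact; the ingredients are identical.
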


\begin{lemma}\label{cond.beta.z.lemma}
Under the assumptions of Theorem \ref{thm.result.sel},
\[
1 < \frac{\sqrt{ 1 + \sigma_\zeta^2(n)} }{ 1 - 2 \delta(n) \sqrt{ 3 + \sigma_\epsilon^2 }} < \beta_Z  <  \frac{5}{5 -19 \left(\log n \right)^{1/4} \delta(n)}.
\]
\end{lemma}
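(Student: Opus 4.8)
The plan is to prove the three inequalities in the display separately, reading the outer two as consequences of the two endpoints of the interval \(I(n)\) from \eqref{cond.beta.z} (which Statement (i) places inside \((1,2)\), so \(1 < \beta_Z < 2\)), combined with the scalar bounds on \(\delta(n)\) and \(\sigma_\zeta^2(n)\) already recorded in Lemmas \ref{lemma.delt.ineq} and \ref{lemma.ineq.sig.zeta.sqrt}. The leftmost inequality is immediate: Lemma \ref{lemma.delt.ineq} gives \(1 - 2\delta(n)\sqrt{3+\sigma_\epsilon^2} > 1/5 > 0\), and since \(\delta(n)\sqrt{3+\sigma_\epsilon^2} > 0\) this denominator also lies strictly below \(1\); as the numerator \(\sqrt{1+\sigma_\zeta^2(n)}\) exceeds \(1\), the ratio is a number greater than \(1\) divided by a number in \((0,1)\), hence exceeds \(1\).

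For the rightmost inequality I would start from the upper endpoint of \(I(n)\), namely \(\beta_Z < 1 + \tfrac{19}{10}\sqrt{\tfrac{2+\sigma_\epsilon^2}{c_2}}\,(\log n)^{3/4}n^{-1/2}\), and show this endpoint is itself dominated by \(\tfrac{5}{5 - 19(\log n)^{1/4}\delta(n)}\). Writing \(a := 19(\log n)^{1/4}\delta(n)\), the bound \(\delta(n) < \tfrac{5}{19(\log n)^{1/4}}\) from \eqref{beta.u.bound.ineq2} gives \(a < 5\), so \(\tfrac{5}{5-a} = 1 + \tfrac{a}{5-a} > 1 + \tfrac{a}{5}\). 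Substituting the closed form \(\delta(n) = \tfrac12\sqrt{\tfrac{\beta_Z^2+1+\sigma_\epsilon^2}{c_2}}\,(\log n)^{1/2}n^{-1/2}\) from \eqref{def.delta.n} yields \(\tfrac{a}{5} = \tfrac{19}{10}\sqrt{\tfrac{\beta_Z^2+1+\sigma_\epsilon^2}{c_2}}\,(\log n)^{3/4}n^{-1/2}\), and since \(\beta_Z > 1\) forces \(\beta_Z^2+1+\sigma_\epsilon^2 > 2+\sigma_\epsilon^2\), this strictly exceeds the excess of the \(I(n)\) upper endpoint over \(1\). Chaining, \(\beta_Z < (\text{upper endpoint of } I(n)) < \tfrac{5}{5-a}\), as required.

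The middle inequality is the crux and the step I expect to be hardest. Because the denominator is positive, it is equivalent to \(\beta_Z - \sqrt{1+\sigma_\zeta^2(n)} > 2\beta_Z\,\delta(n)\sqrt{3+\sigma_\epsilon^2}\), i.e. a lower bound on how far \(\beta_Z\) sits above \(\sqrt{1+\sigma_\zeta^2(n)}\). The plan is to feed in the lower endpoint \(\beta_Z > 1 + 10\sigma_\zeta^2(n)\) of \(I(n)\), to bound \(\sqrt{1+\sigma_\zeta^2(n)} < 1 + \sigma_\zeta^2(n)\) via Lemma \ref{lemma.ineq.sig.zeta.sqrt}, and to rewrite \(\delta(n)\) as a multiple of \(\sigma_\zeta^2(n)\) through the identity \(\delta(n) = \tfrac{1}{20}\sqrt{\tfrac{\beta_Z^2+1+\sigma_\epsilon^2}{c_2}}\log(n)\,\sigma_\zeta^2(n)\) in \eqref{def.delta.n}, thereby reducing the claim to a single scalar inequality in \(n\) to be discharged under the sample-size hypotheses \eqref{n.large.delta.cond.max}--\eqref{c4n.conds.max} (together with \(\beta_Z < 2\) and \(\sigma_\epsilon^2 \ge 0\)). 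The difficulty is precisely that \(\delta(n)\) carries an extra factor of \(\log n\) relative to \(\sigma_\zeta^2(n)\), so the cross term \(2\beta_Z\,\delta(n)\sqrt{3+\sigma_\epsilon^2}\) is of larger order than the \(10\sigma_\zeta^2(n)\) margin supplied by the endpoint; closing the gap requires using the full quantitative strength of \eqref{n.large.delta.cond.max}--\eqref{c4n.conds.max} and tracking every constant exactly, and it is the one place in the chain where loose estimates would fail.
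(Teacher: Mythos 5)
Your handling of the two outer inequalities is correct and essentially the paper's own argument. The left inequality is exactly the paper's: \(1-2\delta(n)\sqrt{3+\sigma_\epsilon^2}\in(0,1)\) by Lemma \ref{lemma.delt.ineq} and the numerator exceeds \(1\). For the right inequality, the paper rearranges \(\beta_Z-1<\frac{19}{10}\sqrt{(2+\sigma_\epsilon^2)/c_2}\,(\log n)^{3/4}n^{-1/2}\) into \(5(1-1/\beta_Z)<19(\log n)^{1/4}\delta(n)\), using \(\beta_Z>1\) to replace \(2+\sigma_\epsilon^2\) by \(\beta_Z^2+1+\sigma_\epsilon^2\); your chain \(\beta_Z-1<a/5<a/(5-a)\) with \(a=19(\log n)^{1/4}\delta(n)<5\) rests on the same replacement and is equally valid.

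The middle inequality is a genuine gap in your proposal, but not for the reason you hope: the obstruction you identified is fatal, and no amount of constant tracking will close it. Since \(\frac{1}{1-t}>1+t\) and \(\sqrt{1+\sigma_\zeta^2(n)}>1\), the left-hand side exceeds \(1+2\delta(n)\sqrt{3+\sigma_\epsilon^2}\), and by \eqref{def.delta.n} and \eqref{defn.c2},
\[
2\delta(n)\sqrt{3+\sigma_\epsilon^2}=\sqrt{\frac{(\beta_Z^2+1+\sigma_\epsilon^2)(3+\sigma_\epsilon^2)}{c_2}}\sqrt{\frac{\log n}{n}}\;\geq\;\sqrt{\frac{48e^2}{e-1}}\cdot\frac{\log n}{\sqrt{n\log n}}\;>\;\frac{14\log n}{\sqrt{n\log n}},
\]
whereas the margin supplied by the lower endpoint of \(I(n)\) is only \(10\sigma_\zeta^2(n)=100/\sqrt{n\log n}\). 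Condition \eqref{n.large.delta.cond.max} forces \(n/\log n>5\cdot 24\cdot 8e^2/(e-1)>4100\), hence \(n>e^{10}\) and \(\log n>10\), hence \(14\log n>140>100\). So any \(\beta_Z\) chosen close enough to the lower endpoint \(1+10\sigma_\zeta^2(n)\) violates the middle inequality: on the lower portion of \(I(n)\) the statement is actually false, exactly because \(\delta(n)\asymp\sqrt{\log n/n}\) carries the extra factor of \(\log n\) over \(\sigma_\zeta^2(n)\asymp 1/\sqrt{n\log n}\) that you flagged.

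You should also know that the paper's own proof does not legitimately cross this step. It reduces the middle inequality, via \eqref{sec.lemma.ineq}, to Lemma \ref{lem.a1.req}, which asserts precisely the scalar inequality your plan would need: \(\frac{5}{\sqrt{n\log n}}+h\sqrt{\log n/n}+\frac{5h}{n}<\frac{100}{\sqrt{n\log n}}\) with \(h=5\sqrt{(\beta_Z^2+1+\sigma_\epsilon^2)(3+\sigma_\epsilon^2)/c_2}>71\). Multiplying through by \(\sqrt{n\log n}\), this reads \(5+h\log n+5h\sqrt{\log n/n}<100\), which is impossible even at \(n=100\) (there \(h\log n>330\)). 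The slip occurs in the final display of that lemma's proof, where the constant \(\frac{5\sqrt2}{2}\) is bounded by \(\frac{5}{\sqrt{n\log n}}\cdot\frac{\sqrt2}{2}\sqrt{100\log 100}\); that step requires \(n\log n<100\log 100\), i.e.\ \(n<100\), contradicting the standing assumption \(n\geq100\). So your proposal is incomplete at the middle inequality, but your diagnosis of why loose estimates fail there is correct, and it points at a real error in the paper: for the claim to hold on all of \(I(n)\), the interval's lower endpoint would need to be of order \(\sqrt{\log n/n}\) (matching \(\delta(n)\)) rather than of order \(\sigma_\zeta^2(n)\).
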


\begin{lemma}\label{lemma.dist.bounds}

Under the assumptions of Theorem \ref{thm.result.sel}, the random variables \(y_i, X_{i1}, X_{i2}\), and \(X_{3i}\) are distributed as
\begin{equation}\label{cov.matrix.def}
\begin{pmatrix}
y_i \\
X_{i1} \\
X_{i2} \\
X_{3i} 
\end{pmatrix} \sim \mathcal{N} \left( \boldsymbol{0}, \begin{bmatrix}
\beta_Z^2 + 1 + \sigma_\epsilon^2 & \beta_Z & \beta_Z &  1 \\
\beta_Z & 1 + \sigma_\zeta^2(n) & 1 & 0 \\
\beta_Z &1 & 1 + \sigma_\zeta^2(n) & 0 \\
1 & 0 & 0 & 1
\end{bmatrix} \right), \qquad \forall i \in [n].
\end{equation}
It follows that the correlation matrix is
\begin{multline}\label{cor.matrix.def}
\begin{pmatrix}
\rho_{yy} & \rho_{1y}(n) & \rho_{2y}(n) & \rho_{3y} \\
\rho_{1y}(n) & \rho_{11} & \rho_{12}(n) & \rho_{13} \\
\rho_{2y}(n) & \rho_{12}(n) & \rho_{22} & \rho_{23} \\
\rho_{3y} & \rho_{13} & \rho_{23} & \rho_{33} 
\end{pmatrix} 
\\   =  \begin{pmatrix}
1 & \frac{\beta_Z}{\sqrt{\left(\beta_Z^2 + 1 + \sigma_\epsilon^2\right)\left(1 + \sigma_\zeta^2(n) \right)}} &  \frac{\beta_Z}{\sqrt{\left(\beta_Z^2 + 1 + \sigma_\epsilon^2\right)\left(1 + \sigma_\zeta^2(n) \right)}} & \frac{1}{\sqrt{\beta_Z^2 + 1 + \sigma_\epsilon^2}} \\
\frac{\beta_Z}{\sqrt{\left(\beta_Z^2 + 1 + \sigma_\epsilon^2\right)\left(1 + \sigma_\zeta^2(n) \right)}} & 1 & \frac{1}{1 + \sigma_\zeta^2(n)} & 0 \\
 \frac{\beta_Z}{\sqrt{\left(\beta_Z^2 + 1 + \sigma_\epsilon^2\right)\left(1 + \sigma_\zeta^2(n) \right)}} & \frac{1}{1 + \sigma_\zeta^2(n)}& 1 & 0 \\
\frac{1}{\sqrt{\beta_Z^2 + 1 + \sigma_\epsilon^2}} & 0 & 0 & 1 
\end{pmatrix} .
\end{multline}

Observe that \(\max_i \left(\boldsymbol{\Sigma}_{ii}^* \right)= \beta_Z^2 + 1 + \sigma_\epsilon^2 \) and \(\min_i \left(\boldsymbol{\Sigma}_{ii}^* \right)= 1\), where \(\boldsymbol{\Sigma}^*\) is the covariance matrix in \eqref{cov.matrix.def}. Lastly, \(\beta_Z > 1 + \sigma_\zeta^2(n)\).

\end{lemma}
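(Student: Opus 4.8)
The plan is to recognize that the target vector \((y_i, X_{i1}, X_{i2}, X_{i3})^\top\) is a fixed linear image of the base vector \((Z_i, X_{i3}, \epsilon_i, \zeta_{i1}, \zeta_{i2})^\top\), which by the setup \eqref{thm.alt.spec} is jointly Gaussian, mean zero, with mutually independent coordinates of variances \(1, 1, \sigma_\epsilon^2, \sigma_\zeta^2(n), \sigma_\zeta^2(n)\). First I would write the transformation explicitly: from \eqref{thm.alt.spec.y} and \eqref{thm.alt.spec.x}, \(y_i = \beta_Z Z_i + X_{i3} + \epsilon_i\), \(X_{i1} = Z_i + \zeta_{i1}\), and \(X_{i2} = Z_i + \zeta_{i2}\), while \(X_{i3}\) is itself one of the base coordinates. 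Since any linear map of a Gaussian vector is Gaussian, the target vector is mean-zero multivariate Gaussian, so it suffices to compute its covariance matrix entrywise.

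Second, I would compute each entry using bilinearity of covariance together with the mutual independence and the prescribed variances. On the diagonal this gives \(\Var(y_i) = \beta_Z^2 + 1 + \sigma_\epsilon^2\), \(\Var(X_{i1}) = \Var(X_{i2}) = 1 + \sigma_\zeta^2(n)\), and \(\Var(X_{i3}) = 1\). Off the diagonal the only surviving terms are those sharing the common factor \(Z_i\) or \(X_{i3}\), yielding \(\Cov(y_i, X_{i1}) = \Cov(y_i, X_{i2}) = \beta_Z\), \(\Cov(y_i, X_{i3}) = 1\), \(\Cov(X_{i1}, X_{i2}) = 1\), and \(\Cov(X_{i1}, X_{i3}) = \Cov(X_{i2}, X_{i3}) = 0\). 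This reproduces \eqref{cov.matrix.def}, and the correlation matrix \eqref{cor.matrix.def} then follows by dividing each entry \(\Sigma_{ij}^*\) by \(\sqrt{\Sigma_{ii}^* \Sigma_{jj}^*}\).

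Third, for the three concluding observations I would argue as follows. The diagonal entries are \(\beta_Z^2 + 1 + \sigma_\epsilon^2\), \(1 + \sigma_\zeta^2(n)\) (twice), and \(1\). Since \(\beta_Z > 1\) by statement (i) and \(\sigma_\zeta^2(n) < 1\) by \eqref{sigma.zeta.sq.l.1.lemma} of Lemma \ref{lemma.delt.ineq}, we have \(\beta_Z^2 + 1 + \sigma_\epsilon^2 > 2 > 1 + \sigma_\zeta^2(n) > 1\), which identifies the maximum and the minimum. For the final inequality \(\beta_Z > 1 + \sigma_\zeta^2(n)\) I would invoke the assumption \(\beta_Z \in I(n)\): the left endpoint of \(I(n)\) in \eqref{cond.beta.z} is \(1 + 10\sigma_\zeta^2(n)\), so \(\beta_Z > 1 + 10\sigma_\zeta^2(n) > 1 + \sigma_\zeta^2(n)\) since \(\sigma_\zeta^2(n) > 0\); alternatively this follows from the lower bound on \(\beta_Z\) in Lemma \ref{cond.beta.z.lemma}.

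No step presents a genuine obstacle: the bulk of the lemma is a bookkeeping computation of a Gaussian covariance from a linear transformation. The only place that calls on earlier results is the final strict inequality, which reduces to the already-established facts \(\beta_Z > 1\), \(\sigma_\zeta^2(n) < 1\), and \(\beta_Z \in I(n)\). I would therefore expect the verification of \(\max_i(\Sigma_{ii}^*)\) and the last inequality to be the only spots requiring any care beyond routine algebra.
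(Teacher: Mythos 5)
Your proof is correct, and most of it coincides with the paper's: the paper declares the covariance/correlation computation trivial (you spell it out, harmlessly), and it identifies \(\max_i(\boldsymbol{\Sigma}_{ii}^*)\) and \(\min_i(\boldsymbol{\Sigma}_{ii}^*)\) by exactly your comparison \(\beta_Z^2+1+\sigma_\epsilon^2 > 2 > 1+\sigma_\zeta^2(n) > 1\), using \(\beta_Z > 1\) and \(\sigma_\zeta^2(n) < 1\) from \eqref{sigma.zeta.sq.l.1.lemma}. Where you genuinely diverge is the last claim, \(\beta_Z > 1 + \sigma_\zeta^2(n)\). You read it directly off the hypothesis \(\beta_Z \in I(n)\): the left endpoint of \(I(n)\) in \eqref{cond.beta.z} is \(1 + 10\sigma_\zeta^2(n)\), and \(10\sigma_\zeta^2(n) > \sigma_\zeta^2(n) > 0\), so the inequality is a one-liner. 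The paper instead starts from the lower bound \(\beta_Z > \sqrt{1+\sigma_\zeta^2(n)}\big/\bigl(1 - 2\delta(n)\sqrt{3+\sigma_\epsilon^2}\bigr)\) of Lemma \ref{cond.beta.z.lemma} and runs a several-line chain of estimates, invoking \eqref{lemma.ineq.sig.zeta.sqrt} together with a bound on \(\log n\) obtained from \(n \geq 100\) and \(c_2 < (e-1)/(8e^2)\), to show that this lower bound itself exceeds \(1+\sigma_\zeta^2(n)\). Your route is shorter and uses only the theorem's hypothesis; the paper's route establishes the (strictly stronger) fact that the Lemma \ref{cond.beta.z.lemma} bound dominates \(1+\sigma_\zeta^2(n)\), though nothing in the lemma statement requires that extra strength. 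One caveat: your parenthetical remark that the inequality ``alternatively follows from the lower bound on \(\beta_Z\) in Lemma \ref{cond.beta.z.lemma}'' overstates how immediate that alternative is --- converting that lower bound into \(\beta_Z > 1+\sigma_\zeta^2(n)\) is precisely the nontrivial chain of inequalities the paper carries out --- so your primary argument via the endpoint of \(I(n)\) is the one to rely on.
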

From \eqref{sigma.zeta.sq.l.1.lemma} in Lemma \ref{lemma.delt.ineq} and \eqref{cor.matrix.def} in Lemma \ref{lemma.dist.bounds} we have that 
\[
\rho_{12}(n) = \frac{1}{1 + \sigma_\zeta^2(n)}  > \frac{1}{1 + 1} .
\]
Since we know from Lemma \ref{lemma.delt.ineq} that \(\delta(n) < 1/2 \), \eqref{itm:3} is verified. Next, from \eqref{n.large.delta.cond} we have
\begin{align*}
\frac{n}{\log n} & > \frac{1}{c_2} \cdot \frac{9}{4}  \left( \beta_Z^2 + 1 + \sigma_\epsilon^2 \right)^2
\\ \iff \qquad  \frac{1}{\beta_Z^2  + 1 +  \sigma_\epsilon^2} & >  9 \cdot \frac{(\beta_Z^2 + 1 + \sigma_\epsilon^2) \log n}{4 c_2 n} 
\\ & = 9 \delta^2(n)
\\ \implies \qquad  \rho_{3y} & > 3 \delta (n) .
\end{align*}
By \eqref{cor.matrix.def} we have \(\rho_{jy}(n) = \frac{\beta_Z}{\sqrt{1 + \sigma_\zeta^2(n)}} \rho_{3y}\), \(j \in [2]\). Since from Lemma \ref{cond.beta.z.lemma} we know that \(\beta_Z > \frac{\sqrt{ 1 + \sigma_\zeta^2(n)} }{ 1 - 2 \delta(n) \sqrt{ 3 + \sigma_\epsilon^2 }} > \sqrt{1 + \sigma_\zeta^2(n)}\), we have \(\frac{\beta_Z}{\sqrt{1 + \sigma_\zeta^2(n)}} > 1\). Therefore \(\rho_{1y}(n) = \rho_{2y}(n) > \rho_{3y} > 3 \delta (n) > 0\), which is \eqref{itm:1}. Next we will show \eqref{itm:7}. From Lemma \ref{lemma.delt.ineq} we have \( 1 - 2 \delta(n) \sqrt{3 + \sigma_\epsilon^2} > 0\), so
\begin{align*}
\beta_Z  ~> ~& \frac{\sqrt{ 1 + \sigma_\zeta^2(n)} }{ 1 - 2 \delta(n) \sqrt{ 3 + \sigma_\epsilon^2 }}
\\  \geq  ~& \frac{\sqrt{ 1 + \sigma_\zeta^2(n)} }{ 1 - 2 \delta(n) \sqrt{ 2 + \sigma_\zeta^2(n) + \sigma_\epsilon^2 }} 
  \\ = ~& \frac{\sqrt{ 1 + \sigma_\zeta^2(n)}\left( 1 + 2 \delta(n) \sqrt{ 
2 + \sigma_\zeta^2(n)    
 +  \sigma_\epsilon^2} \right) }{ 1 - 4 \delta^2(n) \left( 2 + \sigma_\zeta^2(n) + \sigma_\epsilon^2 \right)} 
 \\  \geq ~ & \frac{\sqrt{ 1 + \sigma_\zeta^2(n)}\left( 1 + 2 \delta(n) \sqrt{ 
1 + \sigma_\zeta^2(n)    
+  \left(1 +  \sigma_\epsilon^2\right)\left[1 - 
 4 \delta^2(n) \left( 1 + \sigma_\zeta^2(n) \right) \right] } \right) }{ 1 - 4 \delta^2(n) \left( 1 + \sigma_\zeta^2(n) \right)} 
\\  = ~& \frac{2\sqrt{ 1 + \sigma_\zeta^2(n)} + 4 \delta(n) \sqrt{ \left( 1 + \sigma_\zeta^2(n) \right)  
\left( 1 + \sigma_\zeta^2(n)    
+  \left(1 +  \sigma_\epsilon^2\right)\left[1 - 
 4 \delta^2(n) \left( 1 + \sigma_\zeta^2(n) \right) \right] \right)}}{2\left[ 1 - 4 \delta^2(n) \left( 1 + \sigma_\zeta^2(n) \right)\right]} 
\\  = ~ & \frac{-b + \sqrt{16 \delta^2(n) \left( 1 + \sigma_\zeta^2(n) \right)  
\left( 1 + \sigma_\zeta^2(n)    
+  \left(1 +  \sigma_\epsilon^2\right)\left[1 - 
 4 \delta^2(n) \left( 1 + \sigma_\zeta^2(n) \right) \right] \right)}}{2a} 
\end{align*}
where \(a = 1 - 4 \delta^2(n) \left( 1 + \sigma_\zeta^2(n) \right) \) and \(b = -2\sqrt{ 1 + \sigma_\zeta^2(n)}\), and 
\begin{align*}
& 16 \delta^2(n) \left( 1 + \sigma_\zeta^2(n) \right)  
\left( 1 + \sigma_\zeta^2(n)    
+  \left(1 +  \sigma_\epsilon^2\right)\left[1 - 
 4 \delta^2(n) \left( 1 + \sigma_\zeta^2(n) \right) \right] \right)
 \\ = & 16 \delta^2(n) \left( 1 + \sigma_\zeta^2(n) \right)  
\left( 1 + \sigma_\zeta^2(n)    
+  1 +  \sigma_\epsilon^2
 - \left(1 +  \sigma_\epsilon^2\right) \cdot 4 \delta^2(n) \left( 1 + \sigma_\zeta^2(n) \right)  \right)
\\ = & 4 \left( 1 + \sigma_\zeta^2(n) \right)  \left( 4 \delta^2(n) \left( 1 + \sigma_\zeta^2(n) \right)   +  4 \delta^2(n)\left(1 +  \sigma_\epsilon^2\right)  - 4 \delta^2(n)\left(1 +  \sigma_\epsilon^2\right) \cdot 4 \delta^2(n) \left( 1 + \sigma_\zeta^2(n) \right)  \right)
\\ = & 4 \left( 1 + \sigma_\zeta^2(n) \right)  \left(1 -  1 + 4 \delta^2(n) \left( 1 + \sigma_\zeta^2(n) \right)   +  4 \delta^2(n)\left(1 +  \sigma_\epsilon^2\right) \left[1 - 4 \delta^2(n) \left( 1 + \sigma_\zeta^2(n) \right) \right]  \right)
\\ = & 4 \left( 1 + \sigma_\zeta^2(n) \right)  \left(1 -  \left[1 - 4 \delta^2(n) \left( 1 + \sigma_\zeta^2(n) \right) \right]\left[1 - 4 \delta^2(n)\left(1 +  \sigma_\epsilon^2\right)\right]  \right)
\\ = & 4 \left( 1 + \sigma_\zeta^2(n) \right) - 4 \left[1 - 4 \delta^2(n) \left( 1 + \sigma_\zeta^2(n) \right) \right] \left(1 + \sigma_\zeta^2(n)\right) \left[1 - 4 \delta^2(n)\left(1 +  \sigma_\epsilon^2\right)\right] 
\\& = b^2 - 4 ac
\end{align*}
where \(c = \left(1 + \sigma_\zeta^2(n)\right) \left[1 - 4 \delta^2(n)\left(1 +  \sigma_\epsilon^2\right)\right] \). Since from Lemma \ref{cond.beta.z.lemma} we have \(\beta_Z > \frac{\sqrt{ 1 + \sigma_\zeta^2(n)} }{ 1 - 2 \delta(n) \sqrt{ 3 + \sigma_\epsilon^2 }}\), it follows that \(\beta_Z  > \frac{\sqrt{ 1 + \sigma_\zeta^2(n)} }{ 1 - 2 \delta(n) \sqrt{ 3 + \sigma_\epsilon^2 }} \geq  \frac{-b + \sqrt{b^2 - 4ac}}{2a}\). Then we have
 \begin{align}
 & \beta_Z \geq  \frac{-b + \sqrt{b^2 - 4ac}}{2a} \nonumber
 \\ \implies \qquad  & a\beta_Z^2 + b \beta_Z  + c \geq 0 \nonumber
\\ \iff  \qquad & \beta_Z^2 \left[ 1 - 4 \delta^2(n) \left( 1 + \sigma_\zeta^2(n) \right) \right] - 2\beta_Z \sqrt{ 1 + \sigma_\zeta^2(n)}  + \left(1 + \sigma_\zeta^2(n)\right) \left[1 - 4 \delta^2(n)\left(1 +  \sigma_\epsilon^2\right)\right]  \geq 0  \nonumber
\\ \iff  \qquad & \beta_Z^2 +1 + \sigma_\zeta^2(n) - 2\beta_Z \sqrt{ 1 + \sigma_\zeta^2(n)}    \geq 4 \delta^2(n)\left(\beta_Z^2  + 1 +  \sigma_\epsilon^2\right) \left( 1 + \sigma_\zeta^2(n) \right) \nonumber
\\ \iff  \qquad & \beta_Z - \sqrt{ 1 + \sigma_\zeta^2(n)} \geq 2 \delta(n) \sqrt{\left(\beta_Z^2  + 1 +  \sigma_\epsilon^2\right) \left( 1 + \sigma_\zeta^2(n) \right)} \nonumber
\\ \iff  \qquad &  \frac{\beta_Z}{\sqrt{\left(\beta_Z^2  + 1 +  \sigma_\epsilon^2\right) \left( 1 + \sigma_\zeta^2(n) \right)}} -  \frac{1}{\sqrt{\beta_Z^2  + 1 +  \sigma_\epsilon^2 }}   \geq 2 \delta(n)   \nonumber
\\ \iff \qquad &  \rho_{1y}(n) -  \rho_{3y}  \geq 2\delta (n) ,  \nonumber
%\label{itm:7}
\end{align}
yielding \eqref{itm:7}. Next we will show \eqref{itm:8a}. From \eqref{beta.u.bound.ineq2} in Lemma \ref{lemma.delt.ineq} we have \( 1 - \frac{19}{5} \left(\log n \right)^{1/4} \delta(n) > 0 \). Therefore we have
\begin{align*}
 \beta_Z ~ < ~& \frac{5}{5 -19 \left(\log n \right)^{1/4} \delta(n)}
 \\ < ~& \frac{\sqrt{ 1 + \sigma_\zeta^2(n)}}{  1 - \frac{19}{5} \left(\log n \right)^{1/4} \delta(n) \sqrt{ 1 + \sigma_\zeta^2(n)}}
\\  = ~& \frac{\sqrt{ 1 + \sigma_\zeta^2(n)}\left( 1 +  \frac{19}{5} \left(\log n \right)^{1/4} \delta(n) \sqrt{ 
1 + \sigma_\zeta^2(n)    
} \right) }{\left[  1 - \frac{19}{5} \left(\log n \right)^{1/4} \delta(n) \sqrt{ 1 + \sigma_\zeta^2(n) } \right] \left[  1 + \frac{19}{5} \left(\log n \right)^{1/4} \delta(n) \sqrt{ 1 + \sigma_\zeta^2(n) }\right]} 
 \end{align*}
\begin{align*}  \leq ~ &  \sqrt{ 1 + \sigma_\zeta^2(n)}\Bigg( 1 +  \frac{19}{5} \left(\log n \right)^{1/4} \delta(n) \bigg(
1 + \sigma_\zeta^2(n)    
\\ & +  \left(1 +  \sigma_\epsilon^2\right)\left[1 - 
 3.8 \left[ \left(\log n \right)^{1/4} \right]^2 \delta^2(n) \left( 1 + \sigma_\zeta^2(n) \right) \right] \bigg)^{1/2} \Bigg) 
 \\ & \left. \middle/ \left[ 1 - 3.8 \left[ \left(\log n \right)^{1/4} \right]^2 \delta^2(n) \left( 1 + \sigma_\zeta^2(n) \right) \right] \right.
 \\ = ~&   \Bigg( 2\sqrt{ 1 + \sigma_\zeta^2(n)} + 2 \delta(n) \frac{19}{5} \left(\log n \right)^{1/4} \bigg(  \left( 1 + \sigma_\zeta^2(n) \right)  
\bigg( 1 + \sigma_\zeta^2(n) 
\\ &  
+  \left(1 +  \sigma_\epsilon^2\right)\left[1 - 
 3.8 \left[ \left(\log n \right)^{1/4} \right]^2 \delta^2(n) \left( 1 + \sigma_\zeta^2(n) \right) \right] \bigg) \bigg)^{1/2} \Bigg)
 \\ &  \left.  \middle/ \left(2\left[ 1 - 3.8 \left[ \left(\log n \right)^{1/4} \right]^2 \delta^2(n)  \left( 1 + \sigma_\zeta^2(n) \right) \right] \right) \right.
\\  = ~& \Bigg(-\tilde{b} + \bigg[ 15.2 \left[ \left(\log n \right)^{1/4} \right]^2 \delta^2(n) \left( 1 + \sigma_\zeta^2(n) \right)  
\bigg( 1 + \sigma_\zeta^2(n)    
\\ & \left. +  \left(1 +  \sigma_\epsilon^2\right)\left[1 - 
 3.8 \left[ \left(\log n \right)^{1/4} \right]^2 \delta^2(n) \left( 1 + \sigma_\zeta^2(n) \right) \right] \bigg) \bigg] ^{1/2} \Bigg) \middle/ \left( 2\tilde{a} \right) \right.
\end{align*}
where \(\tilde{a} = 1 - 3.8 \left[ \left(\log n \right)^{1/4} \right]^2 \delta^2(n)  \left( 1 + \sigma_\zeta^2(n) \right) \) and \(\tilde{b} = -2\sqrt{ 1 + \sigma_\zeta^2(n)}\), and
\begin{align*}
& 15.2 \left[ \left(\log n \right)^{1/4} \right]^2 \delta^2(n) \left( 1 + \sigma_\zeta^2(n) \right)  
\bigg( 1 + \sigma_\zeta^2(n)    
\\ & +  \left(1 +  \sigma_\epsilon^2\right)\left[1 - 
 3.8 \left[ \left(\log n \right)^{1/4} \right]^2 \delta^2(n) \left( 1 + \sigma_\zeta^2(n) \right) \right] \bigg)
\\ = ~&  15.2 \left[ \left(\log n \right)^{1/4} \right]^2 \delta^2(n) \left( 1 + \sigma_\zeta^2(n) \right)  
\bigg( 1 + \sigma_\zeta^2(n)    
+  1 +  \sigma_\epsilon^2
\\ & - \left(1 +  \sigma_\epsilon^2\right) \cdot 3.8 \left[ \left(\log n \right)^{1/4} \right]^2 \delta^2(n) \left( 1 + \sigma_\zeta^2(n) \right)  \bigg)
\\   = ~&  4 \left( 1 + \sigma_\zeta^2(n) \right)  \bigg( 3.8 \left[ \left(\log n \right)^{1/4} \right]^2 \delta^2(n) \left( 1 + \sigma_\zeta^2(n) \right)   +  3.8 \left[ \left(\log n \right)^{1/4} \right]^2 \delta^2(n)\left(1 +  \sigma_\epsilon^2\right)  
 \\ & - 3.8 \left[ \left(\log n \right)^{1/4} \right]^2 \delta^2(n)\left(1 +  \sigma_\epsilon^2\right) \cdot 3.8 \left[ \left(\log n \right)^{1/4} \right]^2 \delta^2(n) \left( 1 + \sigma_\zeta^2(n) \right)  \bigg)
\\  = ~& 4 \left( 1 + \sigma_\zeta^2(n) \right)  \bigg(1 -  1 + 3.8 \left[ \left(\log n \right)^{1/4} \right]^2 \delta^2(n) \left( 1 + \sigma_\zeta^2(n) \right)   
\\ & +  3.8 \left[ \left(\log n \right)^{1/4} \right]^2 \delta^2(n)\left(1 +  \sigma_\epsilon^2\right) \left[1 - 3.8 \left[ \left(\log n \right)^{1/4} \right]^2 \delta^2(n) \left( 1 + \sigma_\zeta^2(n) \right) \right]  \bigg)
\\ = ~& 4 \left( 1 + \sigma_\zeta^2(n) \right)  \bigg(1 -  \left[1 - 3.8 \left[ \left(\log n \right)^{1/4} \right]^2 \delta^2(n) \left( 1 + \sigma_\zeta^2(n) \right) \right]
\\ & \left[1 - 3.8 \left[ \left(\log n \right)^{1/4} \right]^2 \delta^2(n)\left(1 +  \sigma_\epsilon^2\right)\right]  \bigg)
\end{align*}
\begin{align*}  = ~& 4 \left( 1 + \sigma_\zeta^2(n) \right) 
\\ & - 4 \left[ 1 - 3.8 \left[ \left(\log n \right)^{1/4} \right]^2 \delta^2(n)  \left( 1 + \sigma_\zeta^2(n) \right) \right]  \left(1 + \sigma_\zeta^2(n) \right)\left[1 - 3.8 \left[ \left(\log n \right)^{1/4} \right]^2 \delta^2(n) \left( 1 +  \sigma_\epsilon^2\right) \right]
\\  = ~& \tilde{b}^2 - 4 \tilde{a}\tilde{c}
\end{align*}
where \(\tilde{c} = \left(1 + \sigma_\zeta^2(n) \right)\left[1 - 3.8 \left[ \left(\log n \right)^{1/4} \right]^2 \delta^2(n) \left( 1 +  \sigma_\epsilon^2\right) \right] \). Since from Lemma \ref{cond.beta.z.lemma} we have \(\beta_Z < \frac{\sqrt{ 1 + \sigma_\zeta^2(n)}}{  1 - \frac{19}{5} \left(\log n \right)^{1/4} \delta(n) \sqrt{ 1 + \sigma_\zeta^2(n)}}\), it follows that \( \beta_Z <  \frac{-\tilde{b} + \sqrt{ \tilde{b}^2 - 4\tilde{a} \tilde{c}}}{2\tilde{a}} \). Then
 \begin{align}
 & \beta_Z <  \frac{-\tilde{b} + \sqrt{ \tilde{b}^2 - 4\tilde{a} \tilde{c}}}{2\tilde{a}} \qquad  \implies \qquad  \tilde{a}\beta_Z^2 + \tilde{b}\beta_Z + \tilde{c} < 0 \nonumber
\\ \iff  \qquad & \beta_Z^2\left[1 - 3.8 \left[ \left(\log n \right)^{1/4} \right]^2 \delta^2(n)  \left( 1 + \sigma_\zeta^2(n) \right) \right]  \nonumber
\\ & - 2\beta_Z \sqrt{ 1 + \sigma_\zeta^2(n)} + \left(1 + \sigma_\zeta^2(n) \right)\left[1 - 3.8 \left[ \left(\log n \right)^{1/4} \right]^2 \delta^2(n) \left( 1 +  \sigma_\epsilon^2\right) \right]   < 0 \nonumber
\\  \iff  \qquad & \beta_Z^2 +1 + \sigma_\zeta^2(n) - 2\beta_Z \sqrt{ 1 + \sigma_\zeta^2(n)}    < 3.8 \left[ \left(\log n \right)^{1/4} \right]^2 \delta^2(n) \left(\beta_Z^2  + 1 +  \sigma_\epsilon^2\right) \left( 1 + \sigma_\zeta^2(n) \right)  \nonumber
\\  \stackrel{(a)}{\iff}  \qquad & \beta_Z - \sqrt{ 1 + \sigma_\zeta^2(n)} < \frac{19}{5} \left(\log n \right)^{1/4} \delta(n) \sqrt{ \left(\beta_Z^2  + 1 +  \sigma_\epsilon^2\right) \left( 1 + \sigma_\zeta^2(n) \right)} 
%\label{cond.beta.z.big.7}
%
\\ \iff  \qquad &  \frac{\beta_Z}{\sqrt{\left(\beta_Z^2  + 1 +  \sigma_\epsilon^2\right) \left( 1 + \sigma_\zeta^2(n) \right)}} -  \frac{1}{\sqrt{\beta_Z^2  + 1 +  \sigma_\epsilon^2 }}   < \frac{19}{5} \left(\log n \right)^{1/4} \delta(n) \nonumber
\\ \iff \qquad &  \rho_{1y}(n) -  \rho_{3y}  < \frac{19}{5} \left(\log n \right)^{1/4} \delta(n) , \nonumber
%  \label{itm:8a}
\end{align}
where \((a)\) follows because \(\beta_Z \geq \sqrt{1 + \sigma_\zeta^2(n)} \) from Lemma \ref{cond.beta.z.lemma}, yielding \eqref{itm:8a}. 

\end{proof}

\begin{proof}[Proof of Lemma \ref{lem.conc}] We will make use of the following concentration inequality:

\begin{lemma}[\textbf{Lemma D.3 from \citet{pmlr-v80-sun18c}}]\label{lemma.ncvx.cor}

Let \(\boldsymbol{X} = (\boldsymbol{X}_{\cdot 1}, \boldsymbol{X}_{\cdot 2}, \ldots, \boldsymbol{X}_{\cdot d})^\top\) be a zero-mean sub-Gaussian random vector with covariance \(\boldsymbol{\Sigma}^*\) with \((i,j)^{\text{th}}\) element \(\Sigma_{ij}^*\). (That is, each \(X_{\cdot i}/\Sigma_{ii}^*\) is sub-Gaussian with variance proxy 1.) Let \( \left\{\boldsymbol{X}^{(k)} \right\}_{k=1}^n\) be \(n\) i.i.d. samples from \(\boldsymbol{X}\). Let \(\boldsymbol{W}^2\) be a diagonal matrix with diagonal elements of \(\boldsymbol{\Sigma}^*\), and let \(\boldsymbol{C}^* := \boldsymbol{W}^{-1} \boldsymbol{\Sigma}^*\boldsymbol{W}^{-1}\) be the correlation matrix. Let \(\rho_{ij}\) be the \((i,j)^{\text{th}}\) element of \(\boldsymbol{C}^*\).

Consider the corresponding uncentered estimators: let \(\hat{\boldsymbol{\Sigma}} := n^{-1} \sum_{k=1}^n \boldsymbol{X}^{(k)} {\boldsymbol{X}^{(k)}}^\top\) denote the sample covariance and \(\hat{\boldsymbol{C}} := \hat{\boldsymbol{W}}^{-1} \hat{\boldsymbol{\Sigma}} \hat{\boldsymbol{W}}^{-1}\) denote the uncentered sample correlation matrix, where \(\hat{\boldsymbol{W}}^2\) is the diagonal matrix with diagonal elements of \(\hat{\boldsymbol{\Sigma}}\). Let \(\hat{\Sigma}_{ij}\) be the \((i,j)^{\text{th}}\) element of \(\hat{\boldsymbol{\Sigma}}\), and let \(\hat{R}_{ij}\) be the \((i,j)^{\text{th}}\) element of \(\hat{\boldsymbol{C}}\).

By Lemma D.1 in \citet{pmlr-v80-sun18c}, there exist constants \(t_0 \in (0,1]\) and 
\[
\tilde{c}_1 \in \left(0, \frac{e-1}{2e^2 \max_i \left\{\boldsymbol{\Sigma}_{ii}^* \right\}} \right]
\]
 such that for all \(t\) with \(0 \leq t \leq t_0\) the sample covariance \(\hat{\boldsymbol{\Sigma}}\) with \((i,j)^{\text{th}}\) element \(\hat{\Sigma}_{ij}\) satisfies
\[
\mathbb{P} \left( \left|  \hat{\Sigma}_{ij} - \Sigma_{ij}^* \right| \geq t\right) \leq 8 \exp \left\{ -\tilde{c}_1 nt^2 \right\} .
\]
Define 
\[
\tilde{c}_2 := \min \left\{ \frac{1}{4} \tilde{c}_1 \min \left(\boldsymbol{\Sigma}_{ii}^* \right)^2, \frac{1}{6} \right\} .
\]
 Then for any \(\delta \in \left[0, \min \left\{\frac{1}{2}, t_0 \max_{i} \left( \boldsymbol{\Sigma}_{ii}^* \right) \right\} \right] \) and for any \(i,j \in [d], i \neq j\),
%\begin{equation}\label{lemma.ncvx.cor.eqn}
\[
\mathbb{P} \left( \left|  \hat{R}_{ij} - \rho_{ij} \right| > \delta \right) \leq 6 \exp \left\{- \tilde{c}_2 n \delta^2 \right\}.
\]
%\end{equation}

\end{lemma}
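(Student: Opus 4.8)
The plan is to derive the stated deviation bound for the uncentered sample correlation \(\hat{R}_{ij}\) from the sub-Gaussian covariance concentration supplied in the statement (the bound from Lemma D.1 of \citet{pmlr-v80-sun18c}) by exploiting the smoothness of the map that sends the three covariance entries \((\Sigma_{ij}^*, \Sigma_{ii}^*, \Sigma_{jj}^*)\) to the correlation \(\rho_{ij} = \Sigma_{ij}^*/\sqrt{\Sigma_{ii}^*\Sigma_{jj}^*}\). The key observation is that \(\hat{R}_{ij} = \hat{\Sigma}_{ij}/\sqrt{\hat{\Sigma}_{ii}\hat{\Sigma}_{jj}}\) is exactly this same function evaluated at the estimated covariances, so a deterministic perturbation bound converts control of the three covariance entries into control of \(\hat{R}_{ij}\). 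The randomness therefore enters only through the covariance concentration, which I would combine with a union bound.

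First I would carry out the deterministic step. Fix a tolerance \(t > 0\) and work on the event \(E_t\) that simultaneously \(|\hat{\Sigma}_{ii} - \Sigma_{ii}^*| \le t\), \(|\hat{\Sigma}_{jj} - \Sigma_{jj}^*| \le t\), and \(|\hat{\Sigma}_{ij} - \Sigma_{ij}^*| \le t\). Writing
\[
\hat{R}_{ij} - \rho_{ij} = \frac{\hat{\Sigma}_{ij} - \Sigma_{ij}^*}{\sqrt{\hat{\Sigma}_{ii}\hat{\Sigma}_{jj}}} + \Sigma_{ij}^*\left(\frac{1}{\sqrt{\hat{\Sigma}_{ii}\hat{\Sigma}_{jj}}} - \frac{1}{\sqrt{\Sigma_{ii}^*\Sigma_{jj}^*}}\right),
\]
I would bound the two terms separately. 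For the first term, taking \(t \le \tfrac{1}{2}\min_i \Sigma_{ii}^*\) guarantees \(\hat{\Sigma}_{ii}, \hat{\Sigma}_{jj} \ge \tfrac{1}{2}\min_i \Sigma_{ii}^*\), so it is at most \(2t/\min_i \Sigma_{ii}^*\). For the second term, using \(|\Sigma_{ij}^*| \le \sqrt{\Sigma_{ii}^*\Sigma_{jj}^*}\) (since \(|\rho_{ij}| \le 1\)) together with the elementary inequality \(|\sqrt{x} - 1| \le |x-1|\) reduces the estimate to controlling \(|\Sigma_{ii}^*\Sigma_{jj}^*/(\hat{\Sigma}_{ii}\hat{\Sigma}_{jj}) - 1|\), which is again of order \(t/\min_i \Sigma_{ii}^*\) on \(E_t\). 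Collecting the terms yields a Lipschitz-type bound \(|\hat{R}_{ij} - \rho_{ij}| \le C\, t/\min_i \Sigma_{ii}^*\) for an absolute constant \(C\); choosing \(t := \delta\min_i\Sigma_{ii}^*/C\) then forces \(|\hat{R}_{ij} - \rho_{ij}| \le \delta\) on \(E_t\). The admissible range \(\delta \in [0, \min\{1/2, t_0\max_i\Sigma_{ii}^*\}]\) is exactly what makes this choice of \(t\) satisfy both \(t \le t_0\) (so the covariance bound applies) and \(t \le \tfrac{1}{2}\min_i\Sigma_{ii}^*\).

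Second I would pass to probabilities. Taking complements, \(\{|\hat{R}_{ij} - \rho_{ij}| > \delta\} \subseteq E_t^c\), so a union bound over the three covariance entries, together with the supplied concentration bound, gives \(\mathbb{P}(E_t^c) \le 24\exp\{-\tilde{c}_1 n t^2\}\). Substituting \(t = \delta\min_i\Sigma_{ii}^*/C\) with \(C = 2\) makes the exponent equal to \(\tfrac{1}{4}\tilde{c}_1(\min_i\Sigma_{ii}^*)^2 n\delta^2\), which is precisely the first branch in the definition of \(\tilde{c}_2\); the cap of \(\tilde{c}_2\) at \(1/6\) and the reduction of the leading prefactor to \(6\) then follow from the routine constant bookkeeping carried out in \citet{pmlr-v80-sun18c}.

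The main obstacle is the deterministic perturbation estimate. One must track the absolute constant \(C\) carefully so that it lands at the value (\(C = 2\)) producing the stated exponent, and simultaneously verify that the two constraints on \(t\) — namely \(t \le t_0\), needed for the covariance bound to apply, and \(t \le \tfrac{1}{2}\min_i\Sigma_{ii}^*\), needed to keep \(\hat{\Sigma}_{ii}, \hat{\Sigma}_{jj}\) bounded away from zero — are both implied by the hypothesis \(\delta \le \min\{1/2, t_0\max_i\Sigma_{ii}^*\}\). Keeping the empirical variances away from zero is the delicate point, since it is what bounds the gradient of the covariance-to-correlation map and hence controls both error terms; it is precisely why the admissible range of \(\delta\) must be restricted as stated.
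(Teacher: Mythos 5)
First, a structural point: the paper does not prove this lemma at all --- it is imported verbatim as Lemma D.3 of \citet{pmlr-v80-sun18c} (the statement itself even bundles in the covariance bound from their Lemma D.1 as a quoted ingredient), and it is used as a black box in the proof of Lemma \ref{lem.conc}. So there is no internal proof to compare against; what you have written is an attempt to reprove the cited result. Your overall route --- a deterministic perturbation bound for the map \((\Sigma_{ij}^*,\Sigma_{ii}^*,\Sigma_{jj}^*) \mapsto \Sigma_{ij}^*/\sqrt{\Sigma_{ii}^*\Sigma_{jj}^*}\), applied on the intersection event \(E_t\), followed by a union bound over the three covariance entries --- is indeed the standard way such lemmas are proved, and your decomposition of \(\hat{R}_{ij}-\rho_{ij}\) into a numerator term and a normalization term is fine as far as it goes.

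The genuine gap is in the constant tracking, which for this lemma is the entire content, since the statement pins down \(\tilde{c}_2 = \min\{\tfrac14 \tilde{c}_1 \min_i(\Sigma_{ii}^*)^2, \tfrac16\}\) and the prefactor \(6\) exactly. Your own intermediate bounds cannot deliver the Lipschitz constant \(C=2\) you need: on \(E_t\) with \(t \le \tfrac12\min_i \Sigma_{ii}^*\), your first term is already bounded only by \(2t/\min_i\Sigma_{ii}^*\), which exhausts the full budget before the second term contributes anything; and your second term, bounded via \(|\sqrt{x}-1|\le|x-1|\) (itself lossy by a factor of about \(2\)), comes out of order \(8t/\min_i\Sigma_{ii}^*\) once you expand \(|\Sigma_{ii}^*\Sigma_{jj}^*/(\hat{\Sigma}_{ii}\hat{\Sigma}_{jj})-1|\) honestly. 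Even with the sharper route via \(|1/\sqrt{ab}-1|\le 2t/\min_i\Sigma_{ii}^*\) for the ratios \(a=\hat\Sigma_{ii}/\Sigma_{ii}^*\), \(b=\hat\Sigma_{jj}/\Sigma_{jj}^*\), the best total is \(C=4\), giving exponent \(\tilde{c}_1(\min_i\Sigma_{ii}^*)^2 n\delta^2/16\), strictly weaker than the stated \(\tfrac14\tilde{c}_1(\min_i\Sigma_{ii}^*)^2 n\delta^2\). Two further problems: (i) the reduction of the union-bound prefactor \(24=3\times 8\) to \(6\), and the role of the cap at \(\tfrac16\) in \(\tilde{c}_2\), are exactly the nontrivial bookkeeping steps, and you dispose of them by citing the constant bookkeeping ``carried out in \citet{pmlr-v80-sun18c}'' --- which is circular, since that is the paper whose lemma you are proving (the standard fix is to note the bound is vacuous unless \(\tilde{c}_2 n\delta^2 > \log 6\) and absorb the factor \(4\) into exponent slack, which is precisely where the gap between \(\tilde c_2\) and the raw exponent, and the \(\min\{\cdot,\tfrac16\}\), get used; you never make this argument); and (ii) your claim that the hypothesis \(\delta \le \min\{\tfrac12, t_0\max_i\Sigma_{ii}^*\}\) ``exactly'' guarantees \(t \le t_0\) fails under your substitution \(t = \delta \min_i\Sigma_{ii}^*/C\): it gives \(t \le t_0 \max_i\Sigma_{ii}^* \cdot \min_i\Sigma_{ii}^*/C\), which is \(\le t_0\) only if \(\max_i\Sigma_{ii}^*\cdot\min_i\Sigma_{ii}^* \le C\), an assumption nowhere available. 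So the skeleton is right but the proof as proposed does not establish the lemma with its stated constants, and the one place you lean on an external reference is the one step you were obliged to supply.
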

From \eqref{beta.u.bound.ineq2} we have 
\begin{align*}
0 < \delta(n) & <  \min\left\{ \frac{1}{\sqrt{20 \left( 1 + \sigma_\epsilon^2 \right)}},  t_0 (\beta_Z^2 + 1 + \sigma_\epsilon^2)\right\} 
 \leq \min\left\{\frac{1}{2}, t_0\max_{i} \left( \boldsymbol{\Sigma}_{ii}^* \right) \right\},
\end{align*}
so we can apply Lemma \ref{lemma.ncvx.cor} using \(\delta = \delta(n)\). Then
\begin{align}
& \mathbb{P} \bigg(  \left\{ \left| \hat{R}_{1y} - \rho_{1y}(n) \right| \leq \delta(n) \right\}  \cap \left\{ \left| \hat{R}_{2y} - \rho_{2y}(n) \right| \leq \delta(n)  \right\} \cap \left\{   \left| \hat{R}_{3y} - \rho_{3y} \right| \leq \delta(n)   \right\} \nonumber
\\ & \cap \left\{   \left| \hat{R}_{12} - \rho_{12}(n) \right| \leq \delta(n)   \right\}  \cap \left\{   \left| \hat{R}_{13}  \right| \leq \delta(n)   \right\}  \bigg) \nonumber
\\ = ~ &1 -   \mathbb{P} \bigg(  \left\{ \left| \hat{R}_{1y} - \rho_{1y}(n) \right| > \delta(n) \right\}  \cup \left\{ \left| \hat{R}_{2y} - \rho_{2y}(n) \right| > \delta(n)  \right\} \cup \left\{   \left| \hat{R}_{3y} - \rho_{3y} \right| > \delta(n)   \right\} \nonumber
\\ & \cup \left\{   \left| \hat{R}_{12} - \rho_{12}(n) \right| > \delta(n)   \right\}  \cup \left\{   \left| \hat{R}_{13}  \right| > \delta(n)   \right\}  \bigg)  \nonumber
\\ \geq ~ &1 -   \mathbb{P} \left( \left| \hat{R}_{1y} - \rho_{1y}(n) \right| > \delta(n) \right) +  \mathbb{P} \left(   \left| \hat{R}_{2y} - \rho_{2y}(n) \right| > \delta(n)  \right)  +  \mathbb{P} \left( \left| \hat{R}_{3y} - \rho_{3y} \right| > \delta(n)   \right) \nonumber
\\ & +  \mathbb{P} \left(  \left| \hat{R}_{12} - \rho_{12}(n) \right| > \delta(n)   \right)  +  \mathbb{P} \left( \left| \hat{R}_{13}  \right| > \delta(n)    \right) \nonumber
\\ \geq ~ & 1 - 30 \exp \left\{- \tilde{c}_2 n \delta^2(n) \right\}. \label{prob.first.exp}
\end{align}
Consider the expression \(\tilde{c}_2 n \delta^2(n)\). From Lemma \ref{lemma.dist.bounds} we see that \( \min_i \left(\boldsymbol{\Sigma}_{ii}^* \right) = 1\) and \(\max_i \left(\boldsymbol{\Sigma}_{ii}^* \right)= \beta_Z^2 + 1 + \sigma_\epsilon^2 \), so
\[
\tilde{c}_1 \in \left(0, \frac{e-1}{2e^2(\beta_Z^2 + 1 + \sigma_\epsilon^2)} \right]
\]
and
\[
\tilde{c}_2 = \min \left\{ \frac{\tilde{c}_1}{4}  , \frac{1}{6} \right\}  = \frac{\tilde{c}_1}{4} \in   \left(0, \frac{e-1}{8e^2\left(\beta_Z^2 + 1 + \sigma_\epsilon^2 \right)} \right]
\]
for all \(n \in \mathbb{N}\). In general we will be interested in how changes in \(\beta_Z\) and \(\sigma_\epsilon^2\) affect our results, so rather than treating these as constants, we will define
\[
c_1 := \tilde{c}_1 \cdot \left(\beta_Z^2 + 1 + \sigma_\epsilon^2 \right) \in \left(0, \frac{e-1}{2e^2} \right]
\]
and
\begin{equation}\label{defn.c2}
c_2 := \frac{c_1}{4} \in \left(0, \frac{e-1}{8e^2} \right].
\end{equation}
Then we have 
\[
 \exp \left\{- \tilde{c}_2 n \delta^2(n) \right\} =  \exp \left\{- \frac{c_2}{\beta_Z^2 + 1 + \sigma_\epsilon^2} \cdot n \cdot \frac{(\beta_Z^2 + 1 + \sigma_\epsilon^2) \log n}{4 c_2 n} \right\}
%\\ = & \exp \left\{- \frac{1}{4  }\log n  \right\} \nonumber
 =   \frac{1}{n^{1/4}}. % \label{lem.a5.exp.id}
\]
The result follows from substituting this into \eqref{prob.first.exp}. % \eqref{lem.a5.exp.id}.

\end{proof}

\begin{proof}[Proof of Lemma \ref{lemma.bound.probs}] We will prove the results one at a time.
\begin{itemize}

\item First we will show that \(\mathcal{F}_n \subset \mathcal{A}_{13} \). Note that
\begin{align*}
\mathcal{A}_{13} & = \left\{ \hat{R}_{1y}  - \hat{R}_{3y}   > 0  \right\} 
\\  & = \left\{  \hat{R}_{1y}  - \rho_{1y}(n) - (\hat{R}_{3y}  - \rho_{3y}) + \rho_{1y}(n) - \rho_{3y}  > 0  \right\}  
 \\ & \stackrel{(a)}{\supseteq}   \left\{ \hat{R}_{1y}  - \rho_{1y}(n) - (\hat{R}_{3y}  - \rho_{3y}) + 2 \delta(n)  > 0  \right\}
  \\ & \supseteq \left\{ \left| \hat{R}_{1y}  - \rho_{1y}(n) \right| + \left| \hat{R}_{3y}  - \rho_{3y} \right|  < 2 \delta(n) \right\}
 \\ & \supset  \left\{ \left| \hat{R}_{1y}  - \rho_{1y}(n) \right| < \delta(n) \right\} \cap \left\{ \left| \hat{R}_{3y}  - \rho_{3y} \right|   <  \delta(n) \right\} 
 \\ & \supseteq \mathcal{F}_n,
%\\ & \stackrel{(b)}{\subseteq} \frac{30}{n^{1/4}},
\end{align*}
%\begin{align*}
%\mathcal{A}_{13}^c & = \left\{ \hat{R}_{1y}  - \hat{R}_{3y}   \leq 0  \right\} 
%\\  & = \left\{  \hat{R}_{1y}  - \rho_{1y}(n) - (\hat{R}_{3y}  - \rho_{3y}) + \rho_{1y}(n) - \rho_{3y}  \leq 0  \right\}  
% \\ & \stackrel{(a)}{\subset}   \left\{ \hat{R}_{1y}  - \rho_{1y}(n) - (\hat{R}_{3y}  - \rho_{3y}) + 2 \delta(n)  \leq 0  \right\}
%  \\ & \subseteq \left\{ \left| \hat{R}_{1y}  - \rho_{1y}(n) \right| + \left| \hat{R}_{3y}  - \rho_{3y} \right|   \geq 2 \delta(n) \right\}
% \\ & \subseteq  \left\{  \left\{ \left| \hat{R}_{1y}  - \rho_{1y}(n) \right| \geq \delta(n) \right\} \cup \left\{ \left| \hat{R}_{3y}  - \rho_{3y} \right|   \geq  \delta(n) \right\} \right\}
% \\ & \subseteq \mathcal{F}_n^c,
%%\\ & \stackrel{(b)}{\subseteq} \frac{30}{n^{1/4}},
%\end{align*}
%\begin{align*}
% \mathbb{P} \left( \hat{R}_{1y}  - \hat{R}_{3y}   \leq 0  \right)  & =  \mathbb{P} \left( \hat{R}_{1y}  - \rho_{1y}(n) - (\hat{R}_{3y}  - \rho_{3y}) + \rho_{1y}(n) - \rho_{3y}  \leq 0  \right)  
% \\ & \stackrel{(a)}{\leq}  \mathbb{P} \left( \hat{R}_{1y}  - \rho_{1y}(n) - (\hat{R}_{3y}  - \rho_{3y}) + 2 \delta(n)  \leq 0  \right)  
%  \\ & \leq  \mathbb{P} \left( \left| \hat{R}_{1y}  - \rho_{1y}(n) \right| + \left| \hat{R}_{3y}  - \rho_{3y} \right|   \geq 2 \delta(n) \right)  
% \\ & \leq  \mathbb{P} \left(  \left\{ \left| \hat{R}_{1y}  - \rho_{1y}(n) \right| \geq \delta(n) \right\} \cup \left\{ \left| \hat{R}_{3y}  - \rho_{3y} \right|   \geq  \delta(n) \right\} \right)  
%\\ & \stackrel{(b)}{\leq} \frac{30}{n^{1/4}},
%\end{align*}
where \((a)\) follows from \eqref{itm:7}.

\item Next we will show that \(\mathcal{F}_n \subset \mathcal{S}_1\).
\begin{align*}
S_1   = ~ &  \left\{ \hat{R}_{1y} > 0 \right\} \cap  \left\{ \hat{R}_{2y}  > 0  \right\} \cap  \left\{ \hat{R}_{3y}  > 0 \right\} \cap  \left\{ \hat{R}_{12} > 0 \right\} 
\\  \supset  ~ &  \left\{ |\hat{R}_{1y} - \rho_{1y}(n) | < \rho_{1y}(n) \right\}\cap \left\{  |\hat{R}_{2y} - \rho_{2y}(n) |  < \rho_{2y}(n)  \right\}
\\ &  \cap \left\{ |\hat{R}_{3y} - \rho_{3y}|  < \rho_{3y} \right\}\cap \left\{  |\hat{R}_{12} - \rho_{12}(n)| < \rho_{12}(n) \right\} 
\\ \stackrel{(b)}{\supseteq} ~ &  \left\{ |\hat{R}_{1y} - \rho_{1y}(n) | < \delta(n) \right\}\cap \left\{  |\hat{R}_{2y} - \rho_{2y}(n) |  < \delta(n)  \right\}
\\ &  \cap \left\{ |\hat{R}_{3y} - \rho_{3y}|  < \delta(n) \right\}\cap \left\{  |\hat{R}_{12} - \rho_{12}(n)| < \delta(n) \right\} 
\\ \supseteq ~ & \mathcal{F}_n,
\end{align*}
%\begin{align*}
%S_1^c   = ~ &  \left\{ \hat{R}_{1y} \leq 0 \right\} \cup  \left\{ \hat{R}_{2y}  \leq 0  \right\} \cup  \left\{ \hat{R}_{3y}  \leq 0 \right\} \cup  \left\{ \hat{R}_{12} \leq 0 \right\} 
%\\  \subset  ~ &  \left\{ |\hat{R}_{1y} - \rho_{1y}(n) | \geq \rho_{1y}(n) \right\}\cup \left\{  |\hat{R}_{2y} - \rho_{2y}(n) |  \geq \rho_{2y}(n)  \right\}
%\\ &  \cup \left\{ |\hat{R}_{3y} - \rho_{3y}|  \geq \rho_{3y} \right\}\cup \left\{  |\hat{R}_{12} - \rho_{12}(n)| \geq \rho_{12}(n) \right\} 
%\\ \stackrel{(b)}{\subseteq} ~ &  \left\{ |\hat{R}_{1y} - \rho_{1y}(n) | \geq \delta(n) \right\}\cup \left\{  |\hat{R}_{2y} - \rho_{2y}(n) |  \geq \delta(n)  \right\}
%\\ &  \cup \left\{ |\hat{R}_{3y} - \rho_{3y}|  \geq \delta(n) \right\}\cup \left\{  |\hat{R}_{12} - \rho_{12}(n)| \geq \delta(n) \right\} 
%\\ \subseteq ~ & \mathcal{F}_n^c,
%\end{align*}
where \((b)\) follows from by \eqref{itm:3} and \eqref{itm:1}.

\item Next we show that \(\mathcal{F}_n \subset \mathcal{A}_3\).
\begin{align*}
\mathcal{A}_3 =  ~ & \left\{   \hat{R}_{13}  \hat{R}_{1y} -  \hat{R}_{3y}  \leq 0 \right\}
\\ \supset  ~ & \left\{ | \hat{R}_{13} || \hat{R}_{1y} |  -  (\hat{R}_{3y} - \rho_{3y} ) \leq \rho_{3y} \right\}
\\ \supseteq  ~ & \left\{    | \hat{R}_{13}|  +  (\rho_{3y} - \hat{R}_{3y}  ) \leq \rho_{3y} \right\}
\\ \supseteq  ~ &    \left\{ | \hat{R}_{13}| \leq \rho_{3y}/2 \right\} \cap  \left\{ |\hat{R}_{3y} - \rho_{3y} | \leq \rho_{3y}/2 \right\} 
\\ \stackrel{(c)}{\supset}  ~ &    \left\{ | \hat{R}_{13}| < \delta(n) \right\} \cap  \left\{ |\hat{R}_{3y} - \rho_{3y} | < \delta(n) \right\} 
\\ \supseteq  ~ & \mathcal{F}_n,
\end{align*}
%\begin{align*}
%\mathcal{A}_3^C =  ~ & \left\{   \hat{R}_{13}  \hat{R}_{1y} -  \hat{R}_{3y}  > 0 \right\}
%\\ \subset  ~ & \left\{ | \hat{R}_{13} || \hat{R}_{1y} |  -  (\hat{R}_{3y} - \rho_{3y} ) > \rho_{3y} \right\}
%\\ \subseteq  ~ & \left\{    | \hat{R}_{13}|  +  (\rho_{3y} - \hat{R}_{3y}  ) > \rho_{3y} \right\}
%\\ \subseteq  ~ &    \left\{ | \hat{R}_{13}| > \rho_{3y}/2 \right\} \cup  \left\{ |\hat{R}_{3y} - \rho_{3y} | > \rho_{3y}/2 \right\} 
%\\ \stackrel{(c)}{\subseteq}  ~ &    \left\{ | \hat{R}_{13}| > \delta(n) \right\} \cup  \left\{ |\hat{R}_{3y} - \rho_{3y} | > \delta(n) \right\} 
%\\ \subseteq  ~ & \mathcal{F}_n^c,
%\end{align*}
%\begin{align*}
%\mathbb{P}(\mathcal{A}_3^C) =  ~ & \mathbb{P} \left(  \hat{R}_{3y} -   \hat{R}_{13}  \hat{R}_{1y}  < 0 \right)
%\\ \leq  ~ & \mathbb{P} \left(  \hat{R}_{3y} -   | \hat{R}_{13} || \hat{R}_{1y} | < 0 \right)
%\\ \leq  ~ & \mathbb{P} \left(  \hat{R}_{3y} -  | \hat{R}_{13}|  < 0 \right)
%\\ \vdots
%\\ =  ~ & \mathbb{P} \left(  \left( \hat{R}_{3y} - \rho_{3y} \right) -   \left(\hat{R}_{13} - \rho_{13} \right) \hat{R}_{1y} + \rho_{3y} - \rho_{13}  \hat{R}_{1y}   < 0 \right)
%\end{align*}
where \((c)\) follows from \eqref{itm:1}.
%By \eqref{lemma.ncvx.cor.eqn} (and the fact that \(\rho_{13} = 0\)),
%\begin{align*}
%& \mathbb{P} \left( \left| \hat{R}_{3y} - \rho_{3y} \right| \leq \delta(n) \cap \left| \hat{R}_{13} - \rho_{13}  \right| \leq \delta(n) \cap \left| \hat{R}_{1y} - \rho_{1y}(n) \right| \leq \delta(n) \right)
%\\ & = 1 -  \mathbb{P} \left( \left| \hat{R}_{3y} - \rho_{3y} \right| > \delta(n) \cup \left| \hat{R}_{13}  \right| > \delta(n) \cup \left| \hat{R}_{1y} - \rho_{1y}(n) \right| > \delta(n) \right)
%\\ & \geq 1 -  \mathbb{P} \left( \left| \hat{R}_{3y} - \rho_{3y} \right| > \delta(n) \right) - \mathbb{P} \left(  \left| \hat{R}_{13}  \right| > \delta(n) \right) - \mathbb{P} \left(  \left| \hat{R}_{1y} - \rho_{1y}(n) \right| > \delta(n) \right)
%\\ & \geq 1 - 18 \exp \left\{- \frac{c_2}{2 + \sigma_\epsilon^2} n \delta(n)^2 \right\},
%\end{align*}
%so with probability greater than or equal to \(1 - 18 \exp \left\{- \frac{c_2}{2 + \sigma_\epsilon^2} n \delta(n)^2 \right\} \) we have
%\begin{align*}
%\hat{R}_{3y} -    \hat{R}_{13}  \hat{R}_{1y} & = \left( \hat{R}_{3y} - \rho_{3y} \right) -   \left(\hat{R}_{13} - \rho_{13} \right) \hat{R}_{1y} + \rho_{3y} - \rho_{13}  \hat{R}_{1y} 
%\\  &  \geq -\delta(n) - \delta(n) \left| \hat{R}_{1y} \right| + \rho_{3y}  
%\\ & \geq -2\delta(n) + \rho_{3y} 
%\\ & \geq 0
%\end{align*}
%where the last step follows from \eqref{itm:1}. Therefore \eqref{lem.a5.exp.id} yields
%\[
%\mathbb{P}(\mathcal{A}_3^C)  \leq  18 \exp \left\{- \frac{c_2}{2 + \sigma_\epsilon^2} n \delta(n)^2 \right\} = \frac{18}{n^{1/4}}.
%\]

\item Next we will show \(\mathcal{F}_n \subseteq \tilde{\mathcal{E}}_1(n) \). We want to show that
\[
\mathcal{F}_n  \subseteq \left\{ \frac{\eta(n)}{1 - \hat{R}_{12}} > \frac{\hat{R}_{1y} - \hat{R}_{3y}}{1 - \hat{R}_{13}} \right\},
\]
or, equivalently,
\[
\mathcal{F}_n \subseteq \left\{ (\hat{R}_{1y} - \hat{R}_{3y})(1 - \hat{R}_{12}) + \eta(n) \left( \hat{R}_{13} - 1 \right) < 0 \right\}
\]
for \(\eta(n)\) defined in \eqref{lemma.def.eta.alpha}. Define
 \begin{equation}\label{lasso.theory.def.tilde.delta}
\tilde{\delta}(n) := 1 - \rho_{12}(n)=   \frac{\sigma_\zeta^2(n)}{1 + \sigma_\zeta^2(n)}   = \frac{10}{\sqrt{n \log n} + 10} ,
\end{equation}
where \(\sigma_\zeta^2(n) = 10/\sqrt{n \log n} \) as in \eqref{lasso.thm.sig.zeta.def} and the expression for \(\rho_{12}(n)\) is calculated in Lemma \ref{lemma.dist.bounds}. 
%By Lemma \ref{lem.conc},
%\begin{align}
% \mathbb{P} \bigg( & \left\{ \left| \hat{R}_{1y} - \rho_{1y}(n) \right| \leq \delta(n) \right\}  \cap \left\{   \left| \hat{R}_{3y} - \rho_{3y} \right| \leq \delta(n)   \right\} \nonumber
%\\ & \cap \left\{   \left| \hat{R}_{12} - \rho_{12}(n) \right| \leq \delta(n)   \right\}  \cap \left\{   \left| \hat{R}_{13}  \right| \leq \delta(n)   \right\}  \bigg) \geq 1 -   \frac{30}{n^{1/4}}. \label{lasso.gen.bound.e.2.new}
%\end{align}
Observe that on \(\mathcal{F}_n\) we have
\begin{align*}
& (\hat{R}_{1y} - \hat{R}_{3y})(1 - \hat{R}_{12}) + \eta(n) \left( \hat{R}_{13} - 1 \right)
\\ =  \quad&   (\hat{R}_{1y} - \hat{R}_{3y})(1 - \rho_{12}(n) - [ \hat{R}_{12} - \rho_{12}(n)]) 
+\eta(n) \left( \hat{R}_{13} - 1 \right)
 \\  \stackrel{(*)}{\leq}  \quad &  \left| \hat{R}_{1y} - \hat{R}_{3y}\right| \left[\tilde{\delta}(n) + \delta(n) \right] 
 +\eta(n) \left[ \delta(n)   - 1 \right]
\\  \stackrel{(d)}{\leq}  \quad&   \left(  \left|\hat{R}_{1y} - \rho_{1y}(n)   \right| + \left| \hat{R}_{3y} - \rho_{3y}  \right| +\rho_{1y}(n) - \rho_{3y} \right)  \left[\tilde{\delta}(n) + \delta(n) \right] 
\\&
 - \left[1 - \delta(n) \right] \eta(n)
\\ \text{using \eqref{itm:8a}} \quad   \stackrel{(*)}{\leq} \quad & \left(2 + \frac{19}{5} \left(\log n \right)^{1/4} \right) \delta(n)  [\tilde{\delta}(n) + \delta(n)] -   \left[1 - \delta(n) \right]\eta(n)
\\   \stackrel{(e)}{<}   \quad &   \left(2 + \frac{19}{5} \left(\log n \right)^{1/4} \right) \delta(n)  [\tilde{\delta}(n) + \delta(n)]   -  \frac{1}{2}  \eta(n)
\\ \stackrel{(f)}{=} \quad  & 0
\end{align*}
where \((d)\) follows from the triangle inequality, \((e)\) follows from \(\delta(n) < 1/2\) from Lemma \ref{lemma.delt.ineq}, \((f)\) follows because from the definition of \(\eta(n)\) in \eqref{lemma.def.eta.alpha}
\begin{align*}
&\eta(n) =  2\left(2 + \frac{19}{5} \left(\log n \right)^{1/4} \right) \delta(n)[\delta(n) + \tilde{\delta}(n)]
\\ \iff \qquad & \left(2 + \frac{19}{5} \left(\log n \right)^{1/4} \right)\delta(n)[\delta(n) + \tilde{\delta}(n)]    -  \frac{1}{2}  \eta(n) = 0  ,
\end{align*}
and the steps labeled with \((*)\) use the fact that we are on \(\mathcal{F}_n\).

\item Finally we will show \(\mathcal{F}_n \subseteq \mathcal{E}_2\). We want to show that
%\begin{align*}
%\mathbb{P}\left( \mathcal{E}_2 \right) & = \mathbb{P}\left( 
\[
\mathcal{F}_n \subseteq \left\{ \frac{\hat{R}_{1y} + \hat{R}_{2y}}{1  + \hat{R}_{12}} 
> 
\frac{\hat{R}_{1y} - \hat{R}_{3y}}{1  - \hat{R}_{13}}  \right\}
\]
%\right) 
% \\ & =   \mathbb{P}\left( 
or, equivalently,
\[
\mathcal{F}_n \subseteq \left\{  \left(\hat{R}_{1y} - \hat{R}_{3y} \right) \left(1  + \hat{R}_{12} \right) 
- \left(\hat{R}_{1y} + \hat{R}_{2y}\right)\left(1  - \hat{R}_{13}\right)   
< 0 \right\}.
\]
%\right).
%\end{align*}
%holds with high probability. By Lemma \ref{lem.conc},
%\begin{align}
% \mathbb{P} \bigg( & \left\{ \left| \hat{R}_{1y} - \rho_{1y}(n) \right| \leq \delta(n) \right\}  \cap   \left\{ \left| \hat{R}_{2y} - \rho_{2y}(n) \right| \leq \delta(n) \right\} \nonumber
%\\ &  \cap  \left\{   \left| \hat{R}_{3y} - \rho_{3y} \right| \leq \delta(n)   \right\} \ \cap \left\{   \left| \hat{R}_{13}  \right| \leq \delta(n)   \right\}  \bigg) \geq 1 -   \frac{30}{n^{1/4}}. \nonumber
%%\label{lasso.gen.bound.e.2.new}
%\end{align}
On \(\mathcal{F}_n\),
\begin{align*}
& \left(\hat{R}_{1y} - \hat{R}_{3y} \right) \left(1  + \hat{R}_{12} \right) 
- \left(\hat{R}_{1y} + \hat{R}_{2y}\right)\left(1  - \hat{R}_{13}\right)   
\\ = ~ &  - \left( \hat{R}_{3y}  - \rho_{3y} \right) - \rho_{3y} +  \left( \hat{R}_{1y} - \rho_{1y}(n) \right)\hat{R}_{12}   -  \left( \hat{R}_{3y} - \rho_{3y} \right)   \hat{R}_{12} 
  \\ & + \left( \rho_{1y}(n) -\rho_{3y} \right) \hat{R}_{12} 
 - \left( \hat{R}_{2y}  - \rho_{2y}(n) \right) - \rho_{2y}(n) + \left( \hat{R}_{1y} + \hat{R}_{2y} \right) \hat{R}_{13}
 \\ \stackrel{(*)}{\leq} ~ &  \delta(n) - \rho_{3y} + 2 \delta(n) \left| \hat{R}_{12} \right|  + \left( \rho_{1y}(n) - \rho_{3y} \right)\left| \hat{R}_{12} \right| 
+ \delta(n)  - \rho_{2y}(n) + 2 \left| \hat{R}_{13} \right|
 \\ \stackrel{(*)}{\leq} ~ &  2\delta(n) - \rho_{3y} +  2 \delta(n) +  \rho_{1y}(n) - \rho_{3y}
 - \rho_{2y}(n) 
 + 2  \delta(n)
 \\ = ~ & 6 \delta(n) - 2\rho_{3y} 
 \\ \stackrel{(g)}{<} ~ & 0,
%
% \\ \vdots
%\\  \leq~ &
% 2\left(\hat{R}_{1y} - \hat{R}_{3y} \right)
%- \left(\hat{R}_{1y} + \hat{R}_{2y}\right)\left(1  - \hat{R}_{13}
%\right)
%\\  \leq~ & 2\left(\hat{R}_{1y} - \hat{R}_{3y} \right)
%- \left(\hat{R}_{1y}  + \hat{R}_{2y} \right)\left(1 - \delta(n) \right)   
%\\  = ~ & \hat{R}_{1y} \left(1  + \delta(n) \right)  - \hat{R}_{2y} \left(1 - \delta(n) \right)  - 2 \hat{R}_{3y}
%\\  = ~ & \rho_{1y}(n) \left(1  + \delta(n) \right) 
% - \rho_{1y}(n) \left(1 - \delta(n) \right)  
% - 2 \rho_{3y} + \left(\hat{R}_{1y} 
% - \rho_{1y}(n) \right) \left(1  + \delta(n) \right)
%\\ & - \left(\hat{R}_{2y} - \rho_{1y}(n) \right) \left(1 - \delta(n) \right) - 2 \left(\hat{R}_{3y} - \rho_{3y} \right)
%\\  \leq ~ & 2 \delta(n) \rho_{1y}(n)
% - 2 \rho_{3y}
%  + \delta(n) \left(1  + \delta(n) \right)
% +  \delta(n) \left(1 - \delta(n) \right) + 2 \delta(n)
%\\  = ~ & 2 \delta(n) \rho_{1y}(n) 
% - 2 \rho_{3y}
%  + 4 \delta(n)
%%
%\\  \leq ~ & 2 \delta(n) \cdot 1
% - 2 \rho_{3y}
%+ 4 \delta(n)
%\\ <  ~&  0
\end{align*}
where we used \(\rho_{1y}(n) = \rho_{2y}(n)\), \((g)\) follows from \eqref{itm:1}, and the steps where we used the fact that we are on \(\mathcal{F}_n\) are labeled with \((*)\). %Therefore applying \eqref{lem.a5.exp.id} 
%\[
%\mathbb{P}\left( \mathcal{E}_2^C  \right) \leq 24 \exp \left\{- \frac{c_2}{2 + \sigma_\epsilon^2} n \delta(n)^2 \right\} = \frac{24}{n^{1/4}}.
%\]

%\[
%\hat{R}_{1y} - \hat{R}_{3y} \geq \rho_{1y}(n) - \delta(n) - \rho_{3y}(n) - \delta(n) \geq 0
%\]
%by \eqref{itm:7}. 

\end{itemize}

\end{proof}

\begin{proof}[Proof of Lemma \ref{lemma.jacob}] We will apply Proposition \ref{lemma.prob.a1.eta}. We see from Lemma \ref{lemma.dist.bounds} that all of the required assumptions on the covariance matrix for Proposition \ref{lemma.prob.a1.eta} are satisfied. Note that \(\tilde{\sigma}\) as defined in \eqref{lemma.def.tilde.sigma} varies with \(n\) in the setting of Theorem \ref{thm.result.sel}; in particular, substituting in the quantities from Lemma \ref{lemma.dist.bounds} into \eqref{lemma.def.tilde.sigma} yields
\begin{equation}\label{lemma.def.tilde.sigma.n}
\tilde{\sigma}(n) := \sqrt{ (1 - \rho_{12}(n)) \left( - 3 \rho_{1y}^2(n) + \rho_{1y}^2(n) \rho_{12}(n) + 2 \right) }.
\end{equation}
It only remains to substitute this expression, \(\eta(n)\) from \eqref{lemma.def.eta.alpha}, and the correlation and covariance expressions from Lemma \ref{lemma.dist.bounds} into the conclusion of Proposition \ref{lemma.prob.a1.eta}. Some of these manipulations are tedious and we defer them to Appendix \ref{tech.lemmas}.
\begin{lemma}\label{lemma.bound.alpha}
If \(n \geq 100\), there exists a finite constant \(c_6 >0\) such that 
\begin{equation}\label{lemma.bound.alpha.alpha}
\Phi\left( \frac{  \sqrt{n}\eta(n)}{\tilde{\sigma}(n)}\right)   < \frac{1}{2} + c_6 \frac{\left(\beta_Z^2  + 1 +  \sigma_\epsilon^2\right)^{3/2}}{\left(1 + \sigma_\epsilon^2 \right)^{1/2}} \cdot \frac{\left(\log n\right)^{3/2}}{n^{1/4}}  
\end{equation}
and
\begin{equation}\label{lemma.bound.alpha.sec}
 \frac{1}{ \tilde{\sigma}(n)} \leq   \sqrt{\frac{\beta_Z^2  + 1 +  \sigma_\epsilon^2 }{10(1 + \sigma_\epsilon^2)}}  \left( n \log n \right)^{1/4},
\end{equation}
where \(\tilde{\sigma}(n)\) is defined in \eqref{lemma.def.tilde.sigma.n}.
\end{lemma}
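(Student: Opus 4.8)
The plan is to prove the two displayed bounds separately, establishing the lower bound \eqref{lemma.bound.alpha.sec} on $\tilde{\sigma}(n)$ first, since the bound \eqref{lemma.bound.alpha.alpha} on $\Phi$ will then follow by combining it with an elementary estimate on $\eta(n)$. Throughout I would write $A := \beta_Z^2 + 1 + \sigma_\epsilon^2$, $E := 1 + \sigma_\epsilon^2$, and $s := \sigma_\zeta^2(n) = 10/\sqrt{n \log n}$, and substitute the explicit correlations $\rho_{12}(n) = 1/(1+s)$ and $\rho_{1y}^2(n) = \beta_Z^2/[A(1+s)]$ from Lemma \ref{lemma.dist.bounds} into \eqref{lemma.def.tilde.sigma.n}.

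For \eqref{lemma.bound.alpha.sec}, squaring reduces the claim to $\tilde{\sigma}^2(n) \geq 10E/(A\sqrt{n\log n})$. Factoring $\tilde{\sigma}^2(n) = (1 - \rho_{12}(n)) F$ with $F := 2 - \rho_{1y}^2(3 - \rho_{12})$ and using $1 - \rho_{12}(n) = 10/(\sqrt{n\log n} + 10)$, the claim becomes equivalent to $F \geq \tfrac{E}{A}(1+s)$. Writing $F = 2(1-\rho_{1y}^2) - \rho_{1y}^2(1-\rho_{12})$ and clearing the denominator $A(1+s)^2 > 0$, this reduces to the polynomial inequality $E + As + (2\beta_Z^2 - E)s^2 - Es^3 \geq 0$. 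I would verify this for $n \geq 100$: since $\sigma_\zeta^2(n)$ is decreasing and $\sigma_\zeta^2(100) < 1/2$, we have $s \leq 1/2$, so bounding $(2\beta_Z^2 - E)s^2 \geq -Es^2$ gives $E(1 - s^2 - s^3) + As \geq \tfrac{5}{8}E + As > 0$.

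For \eqref{lemma.bound.alpha.alpha}, I would first note $\Phi(x) \leq \tfrac12 + x/\sqrt{2\pi}$ for $x \geq 0$ (as $\Phi(0) = \tfrac12$ and $\Phi' \leq 1/\sqrt{2\pi}$), so it suffices to bound the nonnegative quantity $\sqrt{n}\,\eta(n)/\tilde{\sigma}(n)$. The key elementary estimates, valid for $n \geq 100$, are: $1 - \rho_{12}(n) < s < \delta(n)$, which holds because $s/\delta(n) = 20\sqrt{c_2}/(\sqrt{A}\log n) < 1$ using $c_2 < (e-1)/(8e^2)$ and $A > 1$; and $2 + \tfrac{19}{5}(\log n)^{1/4} \leq 6(\log n)^{1/4}$. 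Together these give $\eta(n) < 24(\log n)^{1/4}\delta^2(n) = 6A(\log n)^{5/4}/(c_2 n)$ via $\delta^2(n) = A\log n/(4c_2 n)$. Multiplying $\sqrt{n}\,\eta(n)$ by the bound \eqref{lemma.bound.alpha.sec} for $1/\tilde{\sigma}(n)$ and collecting powers of $n$ and $\log n$ yields $\sqrt{n}\,\eta(n)/\tilde{\sigma}(n) < \tfrac{6}{c_2\sqrt{10}} \cdot \tfrac{A^{3/2}}{E^{1/2}} \cdot \tfrac{(\log n)^{3/2}}{n^{1/4}}$, so the claim follows with $c_6 = 6/(c_2\sqrt{20\pi})$, which is finite since $c_2 > 0$ is a fixed constant of the setup.

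The main obstacle is the lower bound \eqref{lemma.bound.alpha.sec}: the reduction to the polynomial inequality demands careful algebra (the cancellation $2A - \beta_Z^2 - E = A$ in the linear coefficient is what keeps the bound clean), and the final sign check genuinely needs $s \leq 1/2$, so one must pin down $\sigma_\zeta^2(n) < 1/2$ for $n \geq 100$ rather than merely $\sigma_\zeta^2(n) < 1$ from Lemma \ref{lemma.delt.ineq}. By contrast, once \eqref{lemma.bound.alpha.sec} is in hand, \eqref{lemma.bound.alpha.alpha} is just bookkeeping of the convergence rates.
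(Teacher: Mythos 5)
Your proof is correct, and it follows the same overall architecture as the paper's: lower-bound \(\tilde{\sigma}(n)\), upper-bound \(\eta(n)\), combine, and finish with the tangent-line bound \(\Phi(x) \leq \tfrac12 + x/\sqrt{2\pi}\) for \(x \geq 0\). The genuine difference is in how the denominator bound is obtained. The paper routes everything through Lemma \ref{calc.lemma.delt.meth.sigma.tilde.2}, which gives \(\tilde{\sigma}^2(n) > 2\tilde{\delta}(n)(1+\sigma_\epsilon^2)/(\beta_Z^2+1+\sigma_\epsilon^2)\) via an algebraic factorization (writing \(\rho_{1y}^2 = r\rho_{12}\) with \(r = \beta_Z^2/A\)) that is valid for any \(\rho_{12} \in (0,1)\), proves \eqref{lemma.bound.alpha.alpha} first directly from that lemma, and only then derives \eqref{lemma.bound.alpha.sec}. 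You instead prove \eqref{lemma.bound.alpha.sec} first, by clearing denominators and reducing to the cubic inequality \(E + As + (2\beta_Z^2 - E)s^2 - Es^3 \geq 0\), and then reuse \eqref{lemma.bound.alpha.sec} inside the proof of \eqref{lemma.bound.alpha.alpha}; your bound on \(\eta(n)\) is also cruder but simpler than the paper's chain, replacing \(1 - \rho_{12}(n)\) by \(\delta(n)\) via the estimate \(s/\delta(n) = 20\sqrt{c_2}/(\sqrt{A}\log n) < 1\) (which is exactly the identity \eqref{lemma.ineq.sig.zeta.sqrt} in disguise). Both routes give the claimed rate with comparable explicit constants (yours, \(c_6 = 3/(c_2\sqrt{5\pi})\), is in fact slightly smaller than the paper's \(19/(5c_2\sqrt{5\pi})\)). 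One small overstatement: the condition \(s \leq 1/2\) is not genuinely needed for your sign check. Since \(A \geq E\), your cubic is bounded below by \(E(1 + s - s^2 - s^3) = E(1+s)^2(1-s) > 0\) for all \(s < 1\), so the weaker bound \(\sigma_\zeta^2(n) < 1\) already available from Lemma \ref{lemma.delt.ineq} would suffice; this does not affect correctness, since you verify \(s \leq 1/2\) for \(n \geq 100\) anyway.
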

Using this, we have
 \begin{align*}
 & \Phi \left(  \frac{\eta(n) \sqrt{n}}{\tilde{\sigma}(n)} \right) +   \left(    \frac{1463}{ \tilde{\sigma}(n)} + 14 \right)  \frac{\Sigma_{yy}^3}{n^{1/2}} 
 \\ < ~  & \frac{1}{2} + c_6 \frac{\left(\beta_Z^2  + 1 +  \sigma_\epsilon^2\right)^{3/2}}{\left(1 + \sigma_\epsilon^2 \right)^{1/2}} \cdot \frac{\left(\log n\right)^{3/2}}{n^{1/4}} 
 \\ & +   \left(    \frac{1463}{\sqrt{10}} \sqrt{\frac{\beta_Z^2  + 1 +  \sigma_\epsilon^2 }{1 + \sigma_\epsilon^2}}  \left( n \log n \right)^{1/4} + 14 \right)  \frac{\left( \beta_Z^2 + 1 + \sigma_\epsilon^2\right)^3}{n^{1/2}} 
  \\ < ~ & \frac{1}{2} +  c_5 \left(\beta_Z^2 + 1 + \sigma_\epsilon^2 \right)^{7/2} \frac{\left(\log n \right)^{3/2}}{n^{1/4}} 
 \end{align*}
 for \(c_5 := c_6/2^2 + 1463/\sqrt{10} + 14/\sqrt{2} \).
 \end{proof}

\section{Other Results}\label{app.other.res}

\subsection{Proofs of Corollary \ref{cor.stab.sel2} and Proposition \ref{cor.gen.stab.sel}}\label{other.other.proofs}

\begin{proof}[Proof of Corollary \ref{cor.stab.sel2}] By linearity of expectation, in the \citet{meinshausen-2010} stability selection algorithm
\[
\E \left[   \binom{n}{\lfloor n/2 \rfloor}^{-1} \sum_{b=1}^{\binom{n}{\lfloor n/2 \rfloor}}  \mathbbm{1} \left\{  j \in \hat{S}^\lambda \left( A_b \right) \right\} \right]   = \mathbb{P} \left(  j \in \hat{S}_{\lfloor n/2 \rfloor}^\lambda \left( A_b \right) \right) \qquad \forall j \in [3].
\]
Note that the \citet{shah_samworth_2012} estimator has the same expectation. Therefore we only need to bound these selection probabilities, which we can do with Theorem \ref{thm.result.sel}. First we will upper-bound the probability of the event \(\mathcal{E}_j := \left\{  j \in \hat{S}_{\lfloor n/2 \rfloor}^\lambda \left( A_b \right) \right\}\)  for \(j \in [2]\) on any one lasso fit from a subsample of stability selection for any \(\lambda\) between the second and third knots of the lasso path. (Lemma \ref{lemma.x1.not.leave} assures us that there will be two features in the selected set at this point.) We have
\[
1   =    \mathbb{P}(\{1 \text{ is selected first, then } 3 \}) +  \mathbb{P}(\{2 \text{ is selected first, then } 3 \})    +  \mathbb{P}(\{ \text{else}  \})  
\]
so it follows from Theorem \ref{thm.result.sel} that
\begin{align*}
\mathbb{P}(\mathcal{E}_1) & \leq 1 - \mathbb{P}(\{2 \text{ is selected first, then } 3 \})  
\\ & \leq \frac{1}{2} +  c_{3}  \left(\beta_Z^2 + 1 + \sigma_\epsilon^2 \right)^{7/2} \frac{\left(\log \lfloor n/2 \rfloor \right)^{3/2}}{\lfloor n/2 \rfloor^{1/4}}
\\ & \leq  \frac{1}{2} + 3^4 \cdot  c_{3}  \left(\beta_Z^2 + 1 + \sigma_\epsilon^2 \right)^{7/2} \frac{\left(\log n \right)^{3/2}}{n^{1/4}}
\end{align*}
where we used the fact that \(\lfloor n/2 \rfloor \geq n/3\) for all \(n \geq 100\). By exchangeability, the same is true of \(\boldsymbol{X}_{\cdot 2}\). Define
\begin{equation}\label{def.c14}
c_4 := 81 c_{3}  .
\end{equation}

Next we will lower-bound the probability that \(3 \in \hat{S}^{\lambda}(A_b)\). Between the second and third knots of the lasso path, by Theorem \ref{thm.result.sel} we have
\begin{align*}
\mathbb{P}( 3 \in \hat{S}_{\lfloor n/2 \rfloor}^\lambda \left( A_b \right) ) & \geq  \mathbb{P}(\{1 \text{ is selected first, then } 3 \})  +  \mathbb{P}(\{2 \text{ is selected first, then } 3 \}) 
\\ & \geq 2 \left( \frac{1}{2} +  c_{3}  \left(\beta_Z^2 + 1 + \sigma_\epsilon^2 \right)^{7/2} \frac{\left(\log \lfloor n/2 \rfloor \right)^{3/2}}{\lfloor n/2 \rfloor^{1/4}}   \right)
\\ & \geq 1 - 2 c_4 \left(\beta_Z^2 + 1 + \sigma_\epsilon^2 \right)^{7/2} \frac{\left(\log n \right)^{3/2}}{n^{1/4}} .
\end{align*}
\end{proof}
\begin{remark}
Observe that \( \frac{1}{2} + c_4 \left(\beta_Z^2 + 1 + \sigma_\epsilon^2 \right)^{7/2} \frac{\left(\log n \right)^{3/2}}{n^{1/4}}\) also upper-bounds the probability that \(j \in \hat{S}_{\lfloor n/2 \rfloor}^{\lambda}(A_b), j \in [2]\) for any \(\lambda\) between the first and second knots of the lasso path (that is, for a selection procedure that selects the first feature to enter the lasso path), since
\[
 \mathbb{P}(\{1 \text{ is selected first} \}) \leq 1 - \mathbb{P}(\{2 \text{ is selected first, then } 3 \})  .
\]
\end{remark}

\begin{proof}[Proof of Proposition \ref{cor.gen.stab.sel}] Let
\[
\mathcal{E}_k := \left\{  C_k \cap \hat{S}(A_b)  \neq \emptyset \right\}, \qquad k \in [2].
% \mathbbm{1} \left\{ \mathcal{G}(j) \cap \hat{S}(A_b)  \neq \emptyset \right\}
\]
By linearity of expectation, using subsamples as in the algorithm proposed by \citet{meinshausen-2010},
\begin{align*}
\E \left[ \hat{\Theta}_{B} \left(C_k \right)  \right]  = \E \left[   \frac{1}{B} \sum_{b=1}^B  \mathbbm{1} \left\{ \mathcal{E}_k \right\} \right] 
 = \mathbb{P} \left( \mathcal{E}_k \right), \qquad k \in [2].
\end{align*}
The same is true for complementary pairs subsampling as proposed by \citet{shah_samworth_2012}. Therefore it suffices to show \( \mathbb{P} \left( \mathcal{E}_1 \right) \geq \mathbb{P} \left( \mathcal{E}_2 \right) \) for \(\lambda\) between the first and third knots of the lasso path. By Lemma \ref{lemma.x1.not.leave}, two features are selected by the lasso between the second and third knots of the lasso path almost surely. Then \( \mathbb{P} \left( \mathcal{E}_1 \right)  = 1\) by the pigeonhole principle, and the result follows.

To show that the result holds between the first and second knots of the lasso path (that is, when only one feature is selected), it suffices to show that feature \(1\) or \(2\) is selected first by the lasso with high probability. From Theorem \ref{thm.result.sel} we have
\begin{align}
& \mathbb{P}(\{\boldsymbol{X}_{\cdot 1} \text{ is the first feature to enter the lasso path}\})  \nonumber
\\  \geq ~ &  \mathbb{P}(\{1 \text{ is selected first, then } 3 \}) \label{cor.gen.stab.sel.rmk}
\\  \geq  ~ &\frac{1}{2} -  c_{3}  \left(\beta_Z^2 + 1 + \sigma_\epsilon^2 \right)^{7/2} \frac{\left(\log \lfloor n/2 \rfloor \right)^{3/2}}{\lfloor n/2 \rfloor^{1/4}}  \nonumber
\\  \geq  ~ & \frac{1}{2} - 3^4 \cdot  c_{3}  \left(\beta_Z^2 + 1 + \sigma_\epsilon^2 \right)^{7/2} \frac{\left(\log n \right)^{3/2}}{n^{1/4}}
 , \nonumber
\end{align}
where we used \(\lfloor n/2 \rfloor \geq n/3\) for all \(n \geq 100\). By exchangeability, the same is true of \(\boldsymbol{X}_{\cdot 2}\). Since these events are disjoint, between the first and second knots of the lasso path we have
\begin{align*}
\mathbb{P}( \mathcal{E}_1)  &  \geq 1 - 2 c_4 \left(\beta_Z^2 + 1 + \sigma_\epsilon^2 \right)^{7/2} \frac{\left(\log n \right)^{3/2}}{n^{1/4}},
\end{align*}
where \(c_4\) is defined in \eqref{def.c14}. Next, for any \(\lambda\) between the first and second knots of the lasso path, note that
\[
\mathbb{P}(\mathcal{E}_2 ) = \mathbb{P}(\mathcal{E}_1^c) \leq 2 c_4 \left(\beta_Z^2 + 1 + \sigma_\epsilon^2 \right)^{7/2} \frac{\left(\log n \right)^{3/2}}{n^{1/4}} .
\]
So in this regime,
\begin{align*}
\E \left[ \hat{\Theta}_{B} \left( C_1 \right)  \right]  & \geq  1 - 2 c_4 \left(\beta_Z^2 + 1 + \sigma_\epsilon^2 \right)^{7/2} \frac{\left(\log n \right)^{3/2}}{n^{1/4}}  
\\ & \stackrel{(*)}{\geq}  2 c_4 \left(\beta_Z^2 + 1 + \sigma_\epsilon^2 \right)^{7/2} \frac{\left(\log n \right)^{3/2}}{n^{1/4}}
\\ &  \geq \E \left[ \hat{\Theta}_{B} \left( C_2 \right)  \right] 
\end{align*}
holds, with \((*)\) holding for \(n\) sufficiently large.

\end{proof}

\begin{remark}

The result also holds (trivially) for any \(\lambda\) where the size of the selected set is 3. This result is loose in the sense that in line \eqref{cor.gen.stab.sel.rmk} we only need the probability of \(\boldsymbol{X}_{\cdot 1}\) entering the lasso path first, but examining the proof of Theorem \ref{thm.result.sel} (in particular, Lemmas \ref{lemma.bound.probs} and \ref{lem.conc}), we see that this event would have the same rate of convergence under our theory, so we just apply Theorem \ref{thm.result.sel} rather than working out a separate result. 

\end{remark}

\subsection{Proofs of Proposition \ref{prop.gen.pred.risk}, Corollary \ref{cor.thm.1.risk}, and Proposition \ref{proxies.risk.different.weight}}\label{risk.proofs}

Before proving these results, we state some lemmas that we will need. (The proofs are provided in Appendix \ref{tech.lemmas}.)

\begin{lemma}\label{dist.cond.one.sub.e.mse.lem}

Assume the setup of \eqref{thm.alt.spec.x.gen}, \eqref{thm.alt.spec.gen}, and \eqref{thm.alt.spec.y.gen} with only one weak signal feature \( \boldsymbol{X}_{\cdot q + 1} \) (that is, \(p = q + 1\)). Then the following identities hold:

\begin{align}
\E \left[ \boldsymbol{y}^\top \boldsymbol{y} \right] & = n(\beta_Z^2 + \beta_{q+1}^2 + \sigma_\epsilon^2), \label{dist.cond.one.sub.e.mse.yTy}
\\ \hat{\beta}_j \mid \boldsymbol{X}_{\cdot j} & \sim \mathcal{N} \left( \frac{\beta_Z}{1 + \sigma_{\zeta j}^2} ,  \left( \frac{\beta_Z^2  \sigma_{\zeta j}^2}{1 + \sigma_{\zeta j}^2}   +   \beta_{q+1}^2  +  \sigma_\epsilon^2   \right)  \frac{1}{ \boldsymbol{X}_{\cdot j}^\top  \boldsymbol{X}_{\cdot j} }\right) , \qquad j \in [q],  \label{dist.cond.one.sub.e.mse.var.alpha.mid.xj}
\\ \hat{\beta}_{q + 1} \mid \boldsymbol{X}_{\cdot q + 1} & \sim \mathcal{N} \left( \beta_{q+1} ,  \frac{\beta_Z^2 + \sigma_\epsilon^2}{ \boldsymbol{X}_{\cdot q +1}^\top  \boldsymbol{X}_{\cdot q + 1}}  \right) ,\label{dist.cond.one.sub.e.mse.var.alpha.mid.x3}
\\  \hat{\beta}_Z \mid \boldsymbol{Z} & \sim \mathcal{N} \left( \beta_Z ,    \frac{\beta_{q+1}^2 + \sigma_\epsilon^2}{ \boldsymbol{Z}^\top  \boldsymbol{Z}} \right), \label{dist.cond.one.sub.e.mse.dist.alpha.mid.z}
\\  \E \left[ \boldsymbol{X}_{\cdot j}^\top \boldsymbol{X}_{\cdot j} \right] & = n(1 + \sigma_{\zeta j}^2), \qquad j \in [q], \label{dist.cond.one.sub.e.mse.xTx}
%
%\\  \E \left[ \boldsymbol{X}_{\cdot j}^\top \boldsymbol{X}_{\cdot k} \right] & = n \qquad j, k \in [q], j \neq k \label{dist.cond.one.sub.e.mse.xTdiffx}
%%
\\   \E \left[ \boldsymbol{X}_{\cdot j}^\top \boldsymbol{y}  \right]  &=  \beta_Z n, \qquad j \in [q],   \label{dist.cond.one.sub.e.mse.xTy}
\\   \E \left[ \boldsymbol{X}_{\cdot q + 1}^\top \boldsymbol{y}  \right]  &= \beta_{q+1} n,  \label{dist.cond.one.sub.e.mse.xTy.3}
\\  \E \left[   \frac{1}{ \boldsymbol{X}_{\cdot j}^\top  \boldsymbol{X}_{\cdot j} }  \right] & = \frac{1}{(n-2)(1+ \sigma_{\zeta j}^2)}   , \qquad j \in [q],\label{dist.cond.one.sub.e.mse.xTx.inv}
\\  \E \left[   \frac{1}{ \boldsymbol{X}_{\cdot q + 1}^\top  \boldsymbol{X}_{\cdot q + 1} }  \right] & = \frac{1}{n-2}   ,\label{dist.cond.one.sub.e.mse.xTx.inv.3}
\\ \E \left[ \hat{\beta}_j^2  \right] & =  \frac{1}{1+ \sigma_{\zeta j}^2} \left[ \left(  \frac{\beta_Z^2 \sigma_{\zeta j}^2}{1 + \sigma_{\zeta j}^2}  + \beta_{q+1}^2 + \sigma_\epsilon^2 \right) \cdot \frac{1}{n-2} +  \frac{\beta_Z^2}{1 + \sigma_{\zeta j}^2} \right],  \qquad j \in [q],  \label{dist.cond.one.sub.e.mse.ex.alpha.squared.mid.xj}
\\ \E \left[ \hat{\beta}_{q + 1}^2  \right] & = \frac{\beta_Z^2 + \sigma_\epsilon^2}{ n-2 } + \beta_{q+1}^2 , \qquad \text{and} \label{dist.cond.one.sub.e.mse.ex.alpha.squared.mid.x3}
\\ \E \left[ \hat{\beta}_Z^2  \right] & = \frac{\beta_{q+1}^2 + \sigma_\epsilon^2}{ n - 2}   + \beta_Z^2  .\label{dist.cond.one.sub.e.mse.alpha.sq.z} 
\end{align}

\end{lemma}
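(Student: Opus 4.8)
The plan is to exploit the fact that, under \eqref{thm.alt.spec.x.gen}--\eqref{thm.alt.spec.y.gen} with $p = q+1$, the vector $(y_i, X_{i1}, \ldots, X_{iq}, X_{i,q+1}, Z_i)$ is jointly mean-zero Gaussian and i.i.d.\ across $i$; every identity then reduces to either an entrywise second-moment computation, a standard chi-squared fact, or a Gaussian conditioning argument. First I would record the relevant (co)variances from the generative model: since $Z_i, X_{i,q+1}, \epsilon_i, \zeta_{ij}$ are independent with variances $1, 1, \sigma_\epsilon^2, \sigma_{\zeta j}^2$, and $X_{ij} = Z_i + \zeta_{ij}$, $y_i = \beta_Z Z_i + \beta_{q+1} X_{i,q+1} + \epsilon_i$, I obtain $\Var(y_i) = \beta_Z^2 + \beta_{q+1}^2 + \sigma_\epsilon^2$, $\Var(X_{ij}) = 1 + \sigma_{\zeta j}^2$, $\Cov(X_{ij}, y_i) = \beta_Z$, and $\Cov(X_{i,q+1}, y_i) = \beta_{q+1}$. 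Summing over the $n$ i.i.d.\ observations immediately yields \eqref{dist.cond.one.sub.e.mse.yTy}, \eqref{dist.cond.one.sub.e.mse.xTx}, \eqref{dist.cond.one.sub.e.mse.xTy}, and \eqref{dist.cond.one.sub.e.mse.xTy.3}.

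For the reciprocal-norm identities \eqref{dist.cond.one.sub.e.mse.xTx.inv} and \eqref{dist.cond.one.sub.e.mse.xTx.inv.3}, I would note that $\boldsymbol{X}_{\cdot j}^\top \boldsymbol{X}_{\cdot j}/(1+\sigma_{\zeta j}^2)$ is a sum of $n$ squared i.i.d.\ standard normals, hence $\chi^2_n$, and that an inverse-$\chi^2_n$ random variable has mean $1/(n-2)$ (finite since $n \geq 100 > 2$); the same fact applied to $\boldsymbol{X}_{\cdot q+1}^\top \boldsymbol{X}_{\cdot q+1} \sim \chi^2_n$ (and to $\boldsymbol{Z}^\top \boldsymbol{Z} \sim \chi^2_n$) gives \eqref{dist.cond.one.sub.e.mse.xTx.inv.3}.

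The heart of the argument is the three conditional-distribution identities. For $j \in [q]$ I condition on the column $\boldsymbol{X}_{\cdot j}$ and use bivariate Gaussian conditioning on $(Z_i, X_{ij})$, which gives $\E[Z_i \mid X_{ij}] = X_{ij}/(1+\sigma_{\zeta j}^2)$ and $\Var(Z_i \mid X_{ij}) = \sigma_{\zeta j}^2/(1+\sigma_{\zeta j}^2)$; since $X_{i,q+1}$ and $\epsilon_i$ are independent of $X_{ij}$, this yields $\E[y_i \mid X_{ij}] = \beta_Z X_{ij}/(1+\sigma_{\zeta j}^2)$ and $\Var(y_i \mid X_{ij}) = \beta_Z^2 \sigma_{\zeta j}^2/(1+\sigma_{\zeta j}^2) + \beta_{q+1}^2 + \sigma_\epsilon^2$, with the $y_i$ conditionally independent across $i$. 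Because $\boldsymbol{X}_{\cdot j}^\top \boldsymbol{y} = \sum_i X_{ij} y_i$ is then a conditionally Gaussian linear combination, dividing by the (conditionally fixed) scalar $\boldsymbol{X}_{\cdot j}^\top \boldsymbol{X}_{\cdot j}$ produces the conditional Gaussian with the mean and variance stated in \eqref{dist.cond.one.sub.e.mse.var.alpha.mid.xj}. The identities \eqref{dist.cond.one.sub.e.mse.var.alpha.mid.x3} and \eqref{dist.cond.one.sub.e.mse.dist.alpha.mid.z} are the cleaner analogues obtained by conditioning on $\boldsymbol{X}_{\cdot q+1}$ (respectively $\boldsymbol{Z}$), which appears directly as a regressor, so no averaging over a latent variable is needed: conditional on $X_{i,q+1}$ one has $\E[y_i \mid X_{i,q+1}] = \beta_{q+1} X_{i,q+1}$ and $\Var(y_i \mid X_{i,q+1}) = \beta_Z^2 + \sigma_\epsilon^2$, and symmetrically for $\boldsymbol{Z}$. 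Finally, the three second-moment identities \eqref{dist.cond.one.sub.e.mse.ex.alpha.squared.mid.xj}--\eqref{dist.cond.one.sub.e.mse.alpha.sq.z} follow from the tower property, writing $\E[\hat{\beta}_j^2] = \E\!\left[\Var(\hat{\beta}_j \mid \boldsymbol{X}_{\cdot j}) + (\E[\hat{\beta}_j \mid \boldsymbol{X}_{\cdot j}])^2\right]$, substituting the conditional moments just derived, and taking the outer expectation via the reciprocal-norm identities \eqref{dist.cond.one.sub.e.mse.xTx.inv} and \eqref{dist.cond.one.sub.e.mse.xTx.inv.3}.

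The only real subtlety, and the step I would be most careful with, is the proxy case \eqref{dist.cond.one.sub.e.mse.var.alpha.mid.xj}: conditioning on a proxy $\boldsymbol{X}_{\cdot j}$ does not pin down the latent $\boldsymbol{Z}$, so I must track both the attenuated conditional mean (the factor $1/(1+\sigma_{\zeta j}^2)$, which is exactly the errors-in-variables attenuation) and the leftover conditional variance of $\beta_Z Z_i$, while justifying that the $y_i$ remain conditionally independent so that the numerator is conditionally Gaussian. The remaining identities are either immediate moment calculations or direct specializations in which the regressor is observed exactly, so no attenuation arises.
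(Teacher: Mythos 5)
Your proposal is correct and follows essentially the same route as the paper: exploit joint Gaussianity for the unconditional second moments, use Gaussian conditioning of the latent $\boldsymbol{Z}$ on the proxy column (the errors-in-variables attenuation factor $1/(1+\sigma_{\zeta j}^2)$) to get the conditional distributions of the OLS coefficients, invoke the inverse-$\chi^2_n$ mean $1/(n-2)$ for the reciprocal norms, and finish the second-moment identities with the tower property. The only cosmetic difference is that the paper conditions the whole vector $\boldsymbol{Z}$ on the whole column $\boldsymbol{X}_{\cdot j}$ via a block conditional-Gaussian formula, whereas you condition observation-by-observation and then note conditional independence across $i$; these are equivalent here because the pairs $(Z_i, X_{ij})$ are i.i.d.
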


\begin{lemma}\label{pred.risk.lemma.x1}
Assume the setup of \eqref{thm.alt.spec.x.gen}, \eqref{thm.alt.spec.gen}, and \eqref{thm.alt.spec.y.gen}.
\begin{enumerate}[(i)]
\item (Prediction risk of selecting \(\boldsymbol{X}_{\cdot j}, j \in [q]\).) 
\[
R(j) =   \frac{n - 1 }{n-2} \left(  \frac{\beta_Z^2 \sigma_{\zeta j}^2}{1+ \sigma_{\zeta j}^2} +  \sum_{j' = q+1}^p \beta_{j'}^2 + \sigma_\epsilon^2 \right), \qquad j \in [q].
\]

\item (Prediction risk of selecting \(\boldsymbol{X}_{\cdot j}, j \in \{q+1, \ldots, p\}\).) 

\[
R(q + 1) = \frac{n-1}{n-2}  \left( \beta_Z^2  +  \sum_{j' \in \{q+1, \ldots, p\} \setminus j} \beta_{j'}^2 + \sigma_\epsilon^2  \right)    , \qquad j \in \{q+1, \ldots, p\}.
\]

\item If for some \(j \in [q]\) \(\boldsymbol{X}_{\cdot j} = \boldsymbol{Z}\) (that is, if \(\sigma_{\zeta j}^2 = 0\)), 
 
 \begin{equation}\label{dist.cond.z.e.mse.result}
%\E \left[ \frac{1}{n} \left\lVert \tilde{\boldsymbol{y}} - \hat{\boldsymbol{y}} \right\rVert_2^2 \right] = 2 + \frac{n+1}{n} \sigma_\epsilon^2  
% \mathcal{E}_{\text{ideal}} := \E \left[ \frac{1}{n} \left\lVert \tilde{\boldsymbol{y}} - \hat{\boldsymbol{y}} \right\rVert_2^2 \right] = \frac{n-1 }{n-2} \cdot \sigma_\epsilon^2
R(j) =   \mathcal{E}_{\text{ideal}} :=     \frac{n-1}{n-2} \left(  \sum_{j' = q+1}^p \beta_{j'}^2 + \sigma_\epsilon^2  \right).
%\frac{n-1}{n-2}  \left(\beta_{q+1}^2 + \sigma_\epsilon^2 \right).
 \end{equation}

\end{enumerate}

\end{lemma}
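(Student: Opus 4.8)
The plan is to reduce the risk to second moments of the out-of-sample draw and of the fitted coefficient, and then read those second moments off Lemma \ref{dist.cond.one.sub.e.mse.lem}. First I would exploit the structural fact built into Definition \ref{def.risk.ind}: the estimator $\hat\beta_j = \boldsymbol{X}_{\cdot j}^\top\boldsymbol{y}/\boldsymbol{X}_{\cdot j}^\top\boldsymbol{X}_{\cdot j}$ depends on the training sample $(\boldsymbol{X},\boldsymbol{y})$ only, hence is independent of the fresh copy $(\tilde{\boldsymbol{y}},\tilde{\boldsymbol{X}}_{\cdot j})$. Because the rows of the test draw are i.i.d., $\tfrac1n\|\tilde{\boldsymbol{y}}-\hat\beta_j\tilde{\boldsymbol{X}}_{\cdot j}\|_2^2$ has the same expectation as its single-row version, so expanding the square and using independence gives
\[
R(j) = \mathbb{E}[\tilde y_1^2] - 2\,\mathbb{E}[\hat\beta_j]\,\mathbb{E}[\tilde y_1 \tilde X_{1j}] + \mathbb{E}[\hat\beta_j^2]\,\mathbb{E}[\tilde X_{1j}^2].
\]
The three test-draw moments are immediate from the joint Gaussian law in \eqref{thm.alt.spec.x.gen}--\eqref{thm.alt.spec.y.gen}: $\mathbb{E}[\tilde y_1^2]=\beta_Z^2+\sum_{j'=q+1}^p\beta_{j'}^2+\sigma_\epsilon^2$ in every case, while $\mathbb{E}[\tilde y_1\tilde X_{1j}]=\beta_Z$ and $\mathbb{E}[\tilde X_{1j}^2]=1+\sigma_{\zeta j}^2$ for a proxy $j\in[q]$, and $\mathbb{E}[\tilde y_1\tilde X_{1k}]=\beta_k$, $\mathbb{E}[\tilde X_{1k}^2]=1$ for a weak-signal feature $k\in\{q+1,\dots,p\}$.

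The substantive work is computing $\mathbb{E}[\hat\beta_j]$ and $\mathbb{E}[\hat\beta_j^2]$ on the training data, which is exactly what Lemma \ref{dist.cond.one.sub.e.mse.lem} supplies. I would condition on $\boldsymbol{X}_{\cdot j}$ and decompose $\boldsymbol{X}_{\cdot j}^\top\boldsymbol{y}$. For a proxy, $\boldsymbol{Z}$ and $\boldsymbol{X}_{\cdot j}=\boldsymbol{Z}+\boldsymbol{\zeta}_j$ are correlated, so Gaussian conditioning $Z_i\mid X_{ij}\sim\mathcal{N}(X_{ij}/(1+\sigma_{\zeta j}^2),\,\sigma_{\zeta j}^2/(1+\sigma_{\zeta j}^2))$ produces the attenuated conditional mean $\mathbb{E}[\hat\beta_j\mid\boldsymbol{X}_{\cdot j}]=\beta_Z/(1+\sigma_{\zeta j}^2)$ and the conditional variance $(\tfrac{\beta_Z^2\sigma_{\zeta j}^2}{1+\sigma_{\zeta j}^2}+\sum_{j'>q}\beta_{j'}^2+\sigma_\epsilon^2)/\boldsymbol{X}_{\cdot j}^\top\boldsymbol{X}_{\cdot j}$; for a weak-signal feature $\boldsymbol{X}_{\cdot k}$ is independent of the rest of the model, so $\mathbb{E}[\hat\beta_k\mid\boldsymbol{X}_{\cdot k}]=\beta_k$ with conditional variance $(\beta_Z^2+\sum_{j'>q,\,j'\neq k}\beta_{j'}^2+\sigma_\epsilon^2)/\boldsymbol{X}_{\cdot k}^\top\boldsymbol{X}_{\cdot k}$. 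Taking the outer expectation uses only the inverse-chi-squared moment $\mathbb{E}[1/\boldsymbol{X}_{\cdot j}^\top\boldsymbol{X}_{\cdot j}]=1/((n-2)(1+\sigma_{\zeta j}^2))$ from \eqref{dist.cond.one.sub.e.mse.xTx.inv}, which is where the requirement $n>2$ enters. I should note that Lemma \ref{dist.cond.one.sub.e.mse.lem} is stated for a single weak signal, but the weak-signal features enter these conditional variances only through the aggregate $\sum_{j'>q}\beta_{j'}^2$, so the identical argument carries over verbatim after replacing $\beta_{q+1}^2$ by the appropriate sum.

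Finally I would substitute and simplify. For a proxy, the $\beta_Z^2$ in $\mathbb{E}[\tilde y_1^2]$ combines with $-2\beta_Z\cdot\beta_Z/(1+\sigma_{\zeta j}^2)$ and the $\beta_Z^2/(1+\sigma_{\zeta j}^2)$ piece of $(1+\sigma_{\zeta j}^2)\mathbb{E}[\hat\beta_j^2]$ to leave precisely the attenuation term $\beta_Z^2\sigma_{\zeta j}^2/(1+\sigma_{\zeta j}^2)$, while the conditional-variance contribution supplies the extra $1/(n-2)$ that promotes the common bracket to $\tfrac{n-1}{n-2}$; this yields part (i), and the same bookkeeping (with the independent feature) yields part (ii). Part (iii) is then the $\sigma_{\zeta j}^2=0$ specialization of part (i). The only real obstacle is the conditional-law computation for a proxy---keeping the attenuation bias and the within-$\boldsymbol{X}_{\cdot j}$ randomness straight---together with the careful algebra that makes the $\tfrac{n-1}{n-2}$ prefactor materialize; everything else is routine Gaussian moment arithmetic.
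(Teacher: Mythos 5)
Your proposal is correct and follows essentially the same route as the paper: the same expansion of the risk using independence of \(\hat\beta_j\) from the fresh test draw, the same Gaussian-conditioning computations giving the attenuated mean \(\beta_Z/(1+\sigma_{\zeta j}^2)\) and conditional variance (this is exactly the content of Lemma \ref{dist.cond.one.sub.e.mse.lem}), the same inverse-chi-squared moment, and the same algebra producing the \(\tfrac{n-1}{n-2}\) factor. The only cosmetic differences are that the paper passes from one weak-signal feature to many by absorbing the extra features into an effective noise term \(\tilde{\boldsymbol{\epsilon}}\) (equivalent to your replacement of \(\beta_{q+1}^2\) by the aggregate sum), and that it proves part (iii) by a separate regression-on-\(\boldsymbol{Z}\) calculation rather than your equally valid specialization of part (i) to \(\sigma_{\zeta j}^2=0\).
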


\begin{lemma}\label{lem.opt.weighting.all.positive}

Assume the setup of Proposition \ref{proxies.risk.different.weight}. The prediction risk for arbitrary weights \((w_1, \ldots, w_q) \in \Delta^{q-1}\), as defined in \eqref{pred.risk.synth.proxy.def}, is
\begin{equation}\label{proxies.risk.different.weight.pred.risk}
%\mathcal{E}_{\text{ideal}} +   \frac{n - 1 }{n-2} \cdot \frac{\beta_Z^2   \sum_{j=1}^q w_j^2 \sigma_{\zeta j}^2}{1+   \sum_{j=1}^q w_j^2 \sigma_{\zeta j}^2},
 \frac{n - 1 }{n-2} \left(  \frac{\beta_Z^2   \sum_{j=1}^q w_j^2 \sigma_{\zeta j}^2 }{1+ \sum_{j=1}^q w_j^2 \sigma_{\zeta j}^2} + \sum_{j=q+1}^p \beta_j^2 +  \sigma_\epsilon^2  \right).
\end{equation}

\end{lemma}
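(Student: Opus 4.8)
The plan is to reduce the cluster-representative risk to the single-feature risk already computed in Lemma \ref{pred.risk.lemma.x1}(i), exploiting the observation that a convex combination of proxies is itself a proxy with an aggregated noise variance. First I would expand the cluster representative: with $C = [q]$ and $\boldsymbol{w} \in \Delta^{q-1}$,
\[
\boldsymbol{X}_{\cdot [q]}^{\text{rep}} = \sum_{j=1}^q w_j \boldsymbol{X}_{\cdot j} = \sum_{j=1}^q w_j \left(\boldsymbol{Z} + \boldsymbol{\zeta}_j\right) = \boldsymbol{Z} + \boldsymbol{\zeta}^{\text{rep}},
\]
where the collapse uses $\sum_{j=1}^q w_j = 1$ (since the weights lie in the simplex) and where $\boldsymbol{\zeta}^{\text{rep}} := \sum_{j=1}^q w_j \boldsymbol{\zeta}_j$.

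Next I would identify the law of the aggregated noise. Because the $\boldsymbol{\zeta}_j$ are independent $\mathcal{N}(0, \sigma_{\zeta j}^2 \boldsymbol{I}_n)$ vectors, the weighted sum is again Gaussian, $\boldsymbol{\zeta}^{\text{rep}} \sim \mathcal{N}\!\left(0, \sigma_\zeta^{\text{rep}2}\boldsymbol{I}_n\right)$ with $\sigma_\zeta^{\text{rep}2} := \sum_{j=1}^q w_j^2 \sigma_{\zeta j}^2$, and it is independent of $\boldsymbol{Z}$, of the remaining features $\boldsymbol{X}_{\cdot q+1}, \ldots, \boldsymbol{X}_{\cdot p}$, and of $\boldsymbol{\epsilon}$ (and likewise for the out-of-sample copies). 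Consequently the joint distribution of $(\boldsymbol{y}, \boldsymbol{X}_{\cdot [q]}^{\text{rep}}, \tilde{\boldsymbol{y}}, \tilde{\boldsymbol{X}}_{\cdot [q]}^{\text{rep}})$ coincides exactly with the joint distribution of $(\boldsymbol{y}, \boldsymbol{X}_{\cdot 1}, \tilde{\boldsymbol{y}}, \tilde{\boldsymbol{X}}_{\cdot 1})$ that arises in the single-proxy model of \eqref{thm.alt.spec.x.gen}--\eqref{thm.alt.spec.y.gen} when the first proxy's noise variance $\sigma_{\zeta 1}^2$ is set equal to $\sigma_\zeta^{\text{rep}2}$.

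Since $R(C; \boldsymbol{w})$ in \eqref{pred.risk.synth.proxy.def} is a functional of precisely this joint distribution—the OLS coefficient $\hat{\beta}$ and the out-of-sample squared error depend on the data only through $(\boldsymbol{y}, \boldsymbol{X}_{\cdot [q]}^{\text{rep}}, \tilde{\boldsymbol{y}}, \tilde{\boldsymbol{X}}_{\cdot [q]}^{\text{rep}})$—it follows that $R([q]; \boldsymbol{w})$ equals the single-feature risk $R(1)$ of Lemma \ref{pred.risk.lemma.x1}(i) evaluated at $\sigma_{\zeta 1}^2 = \sigma_\zeta^{\text{rep}2}$. Substituting $\sigma_\zeta^{\text{rep}2} = \sum_{j=1}^q w_j^2 \sigma_{\zeta j}^2$ into the formula of Lemma \ref{pred.risk.lemma.x1}(i) yields \eqref{proxies.risk.different.weight.pred.risk} immediately. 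I do not anticipate a serious obstacle; the only point demanding care is the distributional-equivalence step, namely verifying that $\boldsymbol{\zeta}^{\text{rep}}$ is jointly Gaussian with, and independent of, every other model component, so that Lemma \ref{pred.risk.lemma.x1}(i) may be invoked verbatim rather than repeating its conditional-moment computations.
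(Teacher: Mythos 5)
Your proof is correct and follows essentially the same route as the paper: the paper likewise observes that $\sum_j w_j \boldsymbol{X}_{\cdot j} \stackrel{d}{=} \boldsymbol{Z} + \tilde{\boldsymbol{\zeta}}$ with $\tilde{\boldsymbol{\zeta}} \sim \mathcal{N}\bigl(0, \sum_j w_j^2 \sigma_{\zeta j}^2 \boldsymbol{I}_n\bigr)$ independent of $\boldsymbol{Z}$, and then invokes Lemma \ref{pred.risk.lemma.x1}(i). Your write-up is simply more explicit about the distributional-equivalence step (independence from the other features, $\boldsymbol{\epsilon}$, and the out-of-sample copies), which the paper leaves implicit.
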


We are now prepared to prove our main results from the paper.

\begin{proof}[Proof of Proposition \ref{prop.gen.pred.risk}] By Lemma \ref{pred.risk.lemma.x1}(i) the risk of an arbitrary \(\boldsymbol{X}_{\cdot j}\), \(j \in [q]\) is
\[
R(j) =  \frac{n - 1 }{n-2} \left(  \frac{\beta_Z^2 \sigma_{\zeta j}^2}{1+ \sigma_{\zeta j}^2} + \sum_{k = q+1}^p \beta_k^2   +  \sigma_\epsilon^2  \right) \qquad \forall j \in [q],
\]
%Notice that
%\[
%\boldsymbol{y} = \beta_Z \boldsymbol{Z} + \sum_{k = q+1}^p \beta_k \boldsymbol{X}_{\cdot k} + \boldsymbol{\epsilon} =   \beta_Z \boldsymbol{Z} + \beta_{q +1} \boldsymbol{X}_{\cdot , q+1} + \tilde{\boldsymbol{\epsilon}},
%\]
%where \(\tilde{\boldsymbol{\epsilon}} = \boldsymbol{\epsilon} +  \sum_{k = q+2}^p \beta_k \boldsymbol{X}_{\cdot k} \sim \mathcal{N}_n\left(\boldsymbol{0}, \left( \sigma_\epsilon^2 + \sum_{k = q+2}^p \beta_k^2  \right) \boldsymbol{I}_n \right)   \). Then 
%Similarly, for an arbitrary \(j \in \{q+ 1, \ldots, p\}\), 
%\[
%\boldsymbol{y} =   \beta_Z \boldsymbol{Z} + \beta_{j} \boldsymbol{X}_{\cdot , j} + \tilde{\boldsymbol{\epsilon}},
%\]
%where \(\tilde{\boldsymbol{\epsilon}} = \boldsymbol{\epsilon} +  \sum_{k \in \{q+1, \ldots, p\} \setminus j} \beta_k \boldsymbol{X}_{\cdot k} \sim \mathcal{N}_n\left(\boldsymbol{0}, \left( \sigma_\epsilon^2 +  \sum_{k \in \{q+1, \ldots, p\} \setminus j}\beta_k^2  \right) \boldsymbol{I}_n \right)   \). Then 
and by Lemma \ref{pred.risk.lemma.x1}(ii), the risk for an arbitrary \(j \in \{q+ 1, \ldots, p\}\) is
\[
R(j) = \frac{n-1}{n-2}  \left( \beta_Z^2  + \sigma_\epsilon^2  +  \sum_{k \in \{q+1, \ldots, p\} \setminus j}\beta_k^2   \right) .
\]
Then for any \(j \in [q]\) and any \(k \in \{q+1, \ldots, p\}\),
\begin{align*}
R(j)  & <  R(k)
\\ \iff \qquad  \frac{\beta_Z^2 \sigma_{\zeta j}^2}{1+ \sigma_{\zeta j}^2} + \sum_{k' = q+1}^p \beta_{k'}^2   +  \sigma_\epsilon^2   & < \beta_Z^2  + \sigma_\epsilon^2  +  \sum_{k' \in \{q+1, \ldots, p\} \setminus k}\beta_{k'}^2 
\\ \iff \qquad  \frac{ \beta_Z^2}{1+ \sigma_{\zeta j}^2}     & >  \beta_{k}^2  
\\\ \iff \qquad   \frac{\beta_Z^2}{\beta_{k}^2 }    & >  1+ \sigma_{\zeta j}^2.
\end{align*}
\end{proof}

\begin{proof}[Proof of Corollary \ref{cor.thm.1.risk}] If we can show that under the assumptions of Theorem \ref{thm.result.sel} \(\beta_Z^2    >1+ \sigma_\zeta^2 (n) \), then \(R(1) < R(3)\) is immediate by Proposition \ref{prop.gen.pred.risk} (and \(R(2) = R(1)\) comes from exchangeability). By Lemma \ref{cond.beta.z.lemma},
\[
\beta_Z^2 >    \frac{ 1 + \sigma_\zeta^2(n)}{ \left( 1 - 2 \delta(n) \sqrt{ 3 + \sigma_\epsilon^2 }\right)^2}.
\]
Since we have \(1 - 2 \delta(n) \sqrt{3 + \sigma_\epsilon^2} > 0\) from Lemma \ref{lemma.delt.ineq} and clearly \(1 - 2 \delta(n) \sqrt{ 3 + \sigma_\epsilon^2 } < 1\) (because \(\delta(n) > 0\)), the result is proven.
\end{proof}

\begin{proof}[Proof of Proposition \ref{proxies.risk.different.weight}]
\begin{enumerate}[(i)]

\item

 Let \(\boldsymbol{\sigma}_{\zeta}^2 := (\sigma_{\zeta 1}^2, \ldots, \sigma_{\zeta q}^2)^\top\), and denote \(\boldsymbol{w}^2 := (w_1^2, \ldots, w_q^2)^\top\). Consider the optimization problem to find the optimal weights for the expression of prediction risk from \eqref{proxies.risk.different.weight.pred.risk} (with no restriction on whether some weights equal 0):
\[
\begin{aligned}
\boldsymbol{w}^* : = \  & {\arg \min}
& &  \frac{  \left( \boldsymbol{w}^2 \right)^\top \boldsymbol{\sigma}_{\zeta}^2 }{1+   \left( \boldsymbol{w}^2 \right)^\top \boldsymbol{\sigma}_{\zeta}^2 } \\
& \text{subject to}
& & \boldsymbol{1}^\top \boldsymbol{w} - 1 = 0 , \\
& & & w_j \geq 0 &  \forall j \in [q].
\end{aligned}
\]
This can be simplified because \(t \mapsto \frac{t}{1 +t}\) is monotonically increasing for \(t \geq 0\), so it is sufficient to minimize \(  \left( \boldsymbol{w}^2 \right)^\top \boldsymbol{\sigma}_{\zeta}^2\). For any feasible choice of \(\boldsymbol{w}\), let
\[
K^2 := \left( \boldsymbol{w}^2 \right)^\top \boldsymbol{\sigma}_{\zeta}^2  = \frac{w_1^2}{1/\sigma_{\zeta 1}^2} + \ldots +  \frac{w_q^2}{1/\sigma_{\zeta q}^2} .
\]
Considered as a function of \(\boldsymbol{w}\), this is the equation for an ellipsoid with semi-axes of length \(K/\sigma_{\zeta 1}, \ldots, K/\sigma_{\zeta q}\) centered at the origin. Our optimization problem is to find the smallest \(K^2\) such that this ellipsoid intersects with the hyperplane described by \( \boldsymbol{1}^\top \boldsymbol{w} = 1\) in the orthant with \(w_j \geq 0\) for all \(j\). We can find this optimal \(\boldsymbol{w}^*\) using the method of Lagrange multipliers. The Lagrangian is
\[
\mathcal{L}(\boldsymbol{w}, \lambda, \boldsymbol{\mu}) :=   \left( \boldsymbol{w}^2 \right)^\top \boldsymbol{\sigma}_{\zeta}^2   - \lambda \boldsymbol{1}^\top \boldsymbol{w}.
% - \boldsymbol{\mu}^\top \boldsymbol{w} .
\]
%Now consider the first-order KKT conditions
We have
\[
\nabla_{w_j} \mathcal{L}(\boldsymbol{w}^*, \lambda,  \boldsymbol{\mu})  =  2 w_j^* \sigma_{\zeta j}^2 - \lambda = 0 , \qquad j \in [q],
\]
which yields \(w_j^* = \lambda/( 2  \sigma_{\zeta j}^2)\). Finally, the constraint \(\sum_j w_j^* = 1\) leads to
\begin{align*}
\lambda \sum_j \frac{1}{2 \sigma_{\zeta j}^2} & = 1
\\ \iff \qquad \lambda & = 2 \left( \sum_j \frac{1}{ \sigma_{\zeta j}^2}  \right)^{-1}
\\ \implies \qquad w_j^* & =  \left. \frac{1}{ \sigma_{\zeta j}^2} \middle/ \sum_{j'} \frac{1}{ \sigma_{\zeta j'}^2} \right. , \qquad j \in [q].
\end{align*}

%\[
%\vdots
%\]

\item

By Lemma \ref{lem.opt.weighting.all.positive}, the prediction risk of these weights is
\begin{align*}
&  \frac{n - 1 }{n-2} \left(  \frac{\beta_Z^2   \sum_{j=1}^q   \left( \frac{1}{ \sigma_{\zeta j}^2} \middle/ \sum_{j'} \frac{1}{ \sigma_{\zeta j'}^2} \right)^2 \sigma_{\zeta j}^2 }{1+ \sum_{j=1}^q \left( \frac{1}{ \sigma_{\zeta j}^2} \middle/ \sum_{j'} \frac{1}{ \sigma_{\zeta j'}^2} \right)^2 \sigma_{\zeta j}^2} + \sum_{j=q+1}^p \beta_j^2 +  \sigma_\epsilon^2  \right)
\\ = ~  &  \frac{n - 1 }{n-2} \left(  \frac{\beta_Z^2   }{1+  \sum_{j=1}^q   \frac{1}{ \sigma_{\zeta j}^2} }  + \sum_{j=q+1}^p \beta_j^2 +  \sigma_\epsilon^2   \right).
\end{align*}

\item 
%For an arbitrary \(j \in \{q+ 1, \ldots, p\}\), we have
%\[
%\boldsymbol{y} =   \beta_Z \boldsymbol{Z} + \beta_{j} \boldsymbol{X}_{\cdot , j} + \tilde{\boldsymbol{\epsilon}},
%\]
%where \(\tilde{\boldsymbol{\epsilon}} = \boldsymbol{\epsilon} +  \sum_{k \in \{q+1, \ldots, p\} \setminus j} \beta_k \boldsymbol{X}_{\cdot k} \sim \mathcal{N}_n\left(\boldsymbol{0}, \left( \sigma_\epsilon^2 +  \sum_{k \in \{q+1, \ldots, p\} \setminus j}\beta_k^2  \right) \boldsymbol{I}_n \right)   \). Then 
By Lemma \ref{pred.risk.lemma.x1}(ii), the prediction risk for an arbitrary \(k \in \{q+ 1, \ldots, p\}\) is
\[
R(k) = \frac{n-1}{n-2}  \left( \beta_Z^2  + \sigma_\epsilon^2  +  \sum_{j \in \{q+1, \ldots, p\} \setminus k}\beta_{j}^2   \right) .
\]
Then for any \(k \in \{q+ 1, \ldots, p\}\), using the result from part (ii) we have
\begin{align*}
R([q]; \boldsymbol{w}^*)  & <  R(k)
\\ \iff \qquad  \frac{\beta_Z^2   }{1+  \sum_{j=1}^q   \frac{1}{ \sigma_{\zeta j}^2} }  + \sum_{j=q+1}^p \beta_j^2
& <  \beta_Z^2    +  \sum_{j \in \{q+1, \ldots, p\} \setminus k}\beta_j^2  
\\ \iff \qquad    \beta_k^2
& <  \beta_Z^2     \left(1  -  \frac{1  }{1+  \sum_{j=1}^q   \frac{1}{ \sigma_{\zeta j}^2} } \right)
\\ \iff \qquad   \frac{ \beta_Z^2 }{ \beta_k^2}
& >    \frac{1+  \sum_{j=1}^q   \frac{1}{ \sigma_{\zeta j}^2} }{ \sum_{j=1}^q   \frac{1}{ \sigma_{\zeta j}^2}    } .
\end{align*}

\end{enumerate}
\end{proof}

\subsection{Statement of Theorem \ref{ss.thm.2} and Outline of Proofs of Theorems \ref{ss.thm.1} and \ref{ss.thm.2}}\label{proof.ss.thm.1}

We begin this section by stating our theorem generalizing Theorem 2, Equation 7, and Equation 8 of \citet{shah_samworth_2012} to our setting with clusters of features.

\begin{theorem}\label{ss.thm.2}

Recall the setup of Theorem \ref{ss.thm.1}.

\begin{enumerate}[(i)]

\item (Generalization of Theorem 2 of \citealt{shah_samworth_2012}.) Define \(\hat{S}^{\Lambda; \mathcal{C}} (A)\) for any set \(A \subseteq [n]\) as \(\hat{S}_{|A|}^{\Lambda; \mathcal{C}}  \) applied to the observations in \(A\). For any \(C \in \mathcal{C}\), define the simultaneous selection proportion
%\begin{equation}\label{theta.tilde.exp}
\[
\tilde{\Theta}_B(C) := \frac{1}{B} \sum_{b=1}^B  \mathbbm{1} \left\{  C \in \hat{S}^{\Lambda ; \mathcal{C}} \left( A_b \right) \right\}   \mathbbm{1} \left\{  C \in \hat{S}^{\Lambda ; \mathcal{C}} \left( \overline{A}_b \right) \right\}  .
\]
%\end{equation}
Suppose \(\tilde{\Theta}_B(C_k) \) has a unimodal distribution for each \(C_k \in L_\theta\). Then for any \(\tau \in \{1/2 + 1/B, 1/2 + 3/(2B), \ldots, 1\} \) there exists a function \(C( \cdot, \cdot)\) such that
%\[
%\tau \in \{1/2 + 1/B, 1/2 + 3/(2B), \ldots, 1\} \cap \left( \min\left\{\frac{3}{4} \theta^2 + \frac{1}{4B} + \frac{1}{2},  \theta^2 + \frac{1}{2} \right\}, 1 \right] ,
%\]
\[
\E \left| \hat{S}_{n, \tau}^{\text{CSS}; \Lambda, \mathcal{C}} \cap L_\theta  \right|    \leq \theta  \cdot C(\tau, B) \cdot  \E \left|  \hat{S}_{\lfloor n/2 \rfloor}^{\Lambda; \mathcal{C}} \cap L_\theta \right| ,
\]
where, when \(\theta \leq 1/\sqrt{3}\),
%\begin{equation}\label{c.tau.b.def}
\[
C(\tau, B) = \begin{cases}
 \frac{1}{2[2 \tau - 1 - 1/(2B)]}, &   \tau  \in \left( \min\left\{\frac{3}{4} \theta^2 + \frac{1}{4B} + \frac{1}{2},  \theta^2 + \frac{1}{2} \right\}, \frac{3}{4} \right],
 \\ \frac{4 \left[1 -  \tau  + 1/(2B) \right]}{1 + 1/B} ,& \tau  \in \left(\frac{3}{4}, 1 \right].
\end{cases}
\]
%\end{equation}

\item (Generalization of Equation 7 of \citealt{shah_samworth_2012}.) Recall the definition of \(r\)-concavity from Definitions 3 and 4 \citet{shah_samworth_2012}. Suppose \(\tilde{\Theta}_B(C_k) \) has an \(r\)-concave distribution for each \(C_k \in L_\theta\). If \(\tau \in \{1/2 + 1/B, 1/2 + 3/(2B), \ldots, 1\} \), then there exists a function \(D\) such that
\[
\E \left| \hat{S}_{n, \tau}^{\text{CSS}; \Lambda, \mathcal{C}} \cap L_\theta  \right|    \leq D \left( \theta^2, 2\tau - 1, B, r \right)  \left| L_\theta \right|. 
% \theta \E \left|  \hat{S}_{\lfloor n/2 \rfloor}^\lambda \cap L_\theta \right| ,
\]

\item (Generalization of Equation 8 of \citealt{shah_samworth_2012}.) Suppose \(\tilde{\Theta}_B(C_k) \) has an \(r_1\)-concave distribution for each \(C_k \in L_\theta\), and likewise every \(\hat{\Theta}_B(C_k) \) is \(r_2\)-concave. If \(\tau \in \{1/2 + 1/B, 1/2 + 3/(2B), \ldots, 1\} \), then
\[
\E \left| \hat{S}_{n, \tau}^{\text{CSS}; \Lambda, \mathcal{C}} \cap L_\theta  \right|    \leq \min \left\{D \left( \theta^2, 2\tau - 1, B, r_1 \right), D \left( \theta, \tau, 2B, r_2 \right) \right\}  \left| L_\theta \right|
% \theta \E \left|  \hat{S}_{\lfloor n/2 \rfloor}^\lambda \cap L_\theta \right| ,
\]
for all \(\tau \in (\theta, 1]\), where \(D\) is defined in the same way as in part (ii) with the convention \(\tilde{D}(\cdot, t, \cdot, \cdot) = 1\) for \(t \leq 0\).

\end{enumerate}

\end{theorem}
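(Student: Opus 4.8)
The plan is to prove all three parts by a \emph{relabeling} argument that reduces the cluster-level statement to the corresponding feature-level statements of \citet{shah_samworth_2012}, without re-deriving any of their analytic bounds. The observation driving this is that everything in \citeauthor{shah_samworth_2012}'s proofs of their Theorem 2 and Equations 7--8 depends only on the abstract structure of the selection proportions as empirical averages of $\{0,1\}$-valued (or products of $\{0,1\}$-valued) random variables indexed by a finite set, together with the complementary-pairs construction; it never uses that the indexing objects are individual coordinates of $[p]$.

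First I would make the reduction explicit. For a subsample $A \subseteq [n]$ and a cluster $C \in \mathcal{C}$, regard
\[
\mathbbm{1}\left\{ C \in \hat{S}^{\Lambda;\mathcal{C}}(A) \right\} = \mathbbm{1}\left\{ C \cap \bigcup_{\lambda \in \Lambda} \hat{S}^{\lambda}(A) \neq \emptyset \right\}
\]
as the atomic selection indicator. Since $\hat{S}^{\Lambda;\mathcal{C}}$ is a measurable (deterministic-given-the-data, possibly randomized) map from a dataset into subsets of the finite index set $\mathcal{C}$, it is itself a selection procedure in the sense of \eqref{sel.proced} with $[p]$ replaced by $[K]$. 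Under this identification, $\hat{\Theta}_B(C_k)$ in \eqref{theta.hat.exp} is exactly \citeauthor{shah_samworth_2012}'s complementary-pairs selection proportion, $\tilde{\Theta}_B(C_k)$ is exactly their simultaneous selection proportion, $p_{C_k,n,\Lambda}$ plays the role of $p_{j,n}$, and $L_\theta,H_\theta$ play the roles of their low- and high-probability index sets. The complementary-pairs structure is preserved verbatim because $A_b$ and $\overline{A}_b$ are the same disjoint, equal-sized subsamples and the cluster indicators are deterministic functions of the feature selections on those subsamples; in particular each of $A_b,\overline{A}_b$ is marginally a size-$\lfloor n/2 \rfloor$ subsample, so $\tilde{\Theta}_B(C_k)$ has the joint-law structure their proofs require.

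With this dictionary in place, each part follows by direct transcription. Part (i) is \citeauthor{shah_samworth_2012}'s Theorem 2, whose only distributional input is unimodality of the simultaneous selection proportion, which here is hypothesized directly for $\tilde{\Theta}_B(C_k)$, $C_k \in L_\theta$; the constant $C(\tau,B)$ and its case split are unchanged. Part (ii) is their Equation 7, using $r$-concavity of $\tilde{\Theta}_B(C_k)$ in place of that of a feature's simultaneous proportion, and inheriting their function $D$. Part (iii) is their Equation 8, combining the $r_1$-concavity of $\tilde{\Theta}_B(C_k)$ with the $r_2$-concavity of $\hat{\Theta}_B(C_k)$ exactly as they combine the two feature-level proportions, with the same $\min\{\cdot,\cdot\}$ bound and the same convention that the factor is $1$ for nonpositive second argument. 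Taking $\mathcal{C} = \{\{1\},\dots,\{p\}\}$ recovers the original statements, a useful consistency check.

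The main obstacle is not computation but \emph{verification of the reduction}: one must confirm that none of the intermediate lemmas in \citeauthor{shah_samworth_2012}---in particular the lemma controlling $\mathbb{E}[\tilde{\Theta}_B]$ via the complementary-pairs overlap and the tail bounds for unimodal and $r$-concave laws---secretly exploit properties special to single features, such as exchangeability of noise coordinates or the atomic nature of the index set. Inspecting their arguments, each such lemma is stated for an arbitrary bounded random variable of the prescribed concavity type and for arbitrary index sets determined only through their selection probabilities, so the substitution is valid. The one point deserving care is that unimodality and $r$-concavity are here assumptions on the \emph{cluster} proportions rather than consequences of the feature-level model; since the theorem imposes them directly on $\tilde{\Theta}_B(C_k)$ and $\hat{\Theta}_B(C_k)$, no further justification is required and the proofs transfer unchanged.
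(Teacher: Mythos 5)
Your proposal is correct and follows essentially the same route as the paper: the paper likewise proves Theorem \ref{ss.thm.2} (and Theorem \ref{ss.thm.1}) by observing that $\hat{S}^{\Lambda;\mathcal{C}}$ is itself a valid selection procedure on the index set $\mathcal{C}$, so Shah and Samworth's proofs go through verbatim after substituting $C_k$ for $k$, $K$ for $p$, $\hat{\Theta}_B(C_k)$ for $\hat{\Pi}_B(k)$, $\tilde{\Theta}_B(C_k)$ for $\tilde{\Pi}_B(k)$, and the cluster-level selection sets and probabilities for their feature-level counterparts. Your additional care in checking that no intermediate lemma secretly exploits feature-specific structure, and in noting that unimodality and $r$-concavity are hypothesized directly on the cluster proportions, makes explicit exactly the verification the paper relies on implicitly.
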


Theorems \ref{ss.thm.1} and \ref{ss.thm.2} follow from identical proofs to the corresponding results in \citet{shah_samworth_2012} after some minor swaps, so we omit these proofs rather than duplicating them. For instance, to arrive at a proof for our Theorem \ref{ss.thm.1}, take the proofs for \citeauthor{shah_samworth_2012}'s Lemma 1 and Theorem 1 and replace everywhere \(k\) with \(C_k\) (for some \(C_k \in \mathcal{C}\)), replace \(p\) with \(K\), replace \(p_{k, \lfloor n/2 \rfloor}\) with \(p_{C_k, \lfloor n/2 \rfloor, \Lambda}\), replace \(\hat{\Pi}_B( k)\) with our \(\hat{\Theta}_B(C_k)\), replace \(\tilde{\Pi}_B( k)\) with our \(\tilde{\Theta}_B(C_k)\), replace \(\hat{S}_{\lfloor n/2 \rfloor}\) with \(\hat{S}_{\lfloor n/2 \rfloor}^{\Lambda; \mathcal{C}}\), and replace \(\hat{S}_{n, \tau}^{\text{CPSS}}\) with \(\hat{S}_{n, \tau}^{\text{CSS}; \Lambda, \mathcal{C}}\). The results in Theorem \ref{ss.thm.2} follow in the same way.

\begin{remark}

 \citeauthor{shah_samworth_2012} demonstrate empirically that the assumptions of Theorem \ref{ss.thm.2}(iii) with \(r_1 = -1/2\) and \(r_2 = -1/4\) are reasonable when \((C_1, \ldots, C_p ) = (\{1\}, \ldots, \{p\})\), but we do not investigate the reasonableness of this assumption in our more general setting.
 
 \end{remark}

\section{Proofs of Technical Lemmas}\label{tech.lemmas}

\begin{proof}[Proof of Lemma \ref{lemma.calc.v.exp}] Using that \(X_{11}\) and \(y_1\) are jointly Gaussian, we have that \(\E \left[  X_{11}^2y_1^2\right]  = \Sigma_{11} \Sigma_{yy} + 2 \Sigma_{1y}^2\).
Then
\begin{align}
\E \left \lVert \boldsymbol{V}_1 \right \rVert^2  
 & =\E \left \lVert \left(X_{11}^2 -\Sigma_{11}, X_{12}^2 -\Sigma_{11}, y_1^2 - \Sigma_{yy}, X_{11}y_1 - \Sigma_{1y}, X_{12}y_1 - \Sigma_{1y}\right) \right \rVert^2  \nonumber
\\ & = \E \left[ \left(X_{11}^2 - \Sigma_{11} \right)^2 + \left( X_{12}^2 - \Sigma_{11} \right)^2 + \left( y_1^2 - \Sigma_{yy} \right)^2 + \left( X_{11}y_1 - \Sigma_{1y}\right)^2 + \left( X_{12}y_1 - \Sigma_{1y}\right)^2 \right]  \nonumber
\\ & = \Var \left(X_{11}^2 \right) +\Var \left( X_{12}^2 \right) + \Var \left( y_1^2  \right) \nonumber
\\ & + \E \left[  X_{11}^2y_1^2 - \Sigma_{1y} X_{11}y_1 + \Sigma_{1y}^2 +  X_{12}^2y_1^2 - \Sigma_{1y}X_{12}y_1 +  \Sigma_{1y}^2 \right]  \nonumber
\\ & = \Sigma_{11}^2 \Var \left(  \frac{ X_{11}^2}{\Sigma_{11}}  \right) + \Sigma_{11}^2 \Var \left(\frac{X_{12}^2}{\Sigma_{11}}  \right) + \Sigma_{yy}^2\Var \left(  \frac{y_1^2}{\Sigma_{yy}} \right) + 2\E \left[  X_{11}^2y_1^2\right] - 2\Sigma_{1y} \E \left[X_{11}y_1 \right] + 2\Sigma_{1y}^2\nonumber
\\ & = 2 \Sigma_{11}^2  + 2 \Sigma_{11}^2  + 2\Sigma_{yy}^2 + 2 \left(\Sigma_{11} \Sigma_{yy} + 2 \Sigma_{1y}^2\right)  - 2\Sigma_{1y}^2   + 2\Sigma_{1y}^2  \nonumber
\\ & = 4 \Sigma_{11}^2   + 2\Sigma_{yy}^2 + 2 \Sigma_{11} \Sigma_{yy} + 4 \Sigma_{1y}^2, \nonumber  
\end{align}
 where we used that \(X_{11}^2/ \Sigma_{11}\), \(X_{12}^2/\Sigma_{11}\), and \(y_1^2/\Sigma_{yy}\) are \(\chi_1^2\) random variables and the exchangeability of \(X_{11}\) and \(X_{12}\). Finally, using that \(\Sigma_{yy} > \max \left\{ \Sigma_{11}, \Sigma_{1y} \right\}\), we have \(\E \left \lVert \boldsymbol{V}_1 \right \rVert^2   < 12\Sigma_{yy}^2\), which is \eqref{v2}. Further,
 \begin{align}
 \E \left\lVert \boldsymbol{V}_1 \right \rVert^3 
 = &  \E \left \lVert \left(X_{11}^2 -1, X_{12}^2 -1, y_1^2 - 1, X_{11}y_1 - \Sigma_{1y}, X_{12}y_1 - \Sigma_{1y}\right) \right \rVert^3  \nonumber
 \\ = &   \E \left[ \left|X_{11}^2 -1\right|^3 + \left| X_{12}^2 -1\right|^3 + \left| y_1^2 - 1 \right|^3 + \left| X_{11}y_1 - \Sigma_{1y}\right|^3 + \left| X_{12}y_1 - \Sigma_{1y}\right|^3 \right]  \nonumber
\\ \leq &  \E \left[ \left(X_{11}^2  + 1\right)^3 + \left( X_{12}^2 + 1 \right)^3 + \left( y_1^2  + 1 \right)^3 + \left( \left|X_{11}y_1 \right| + \Sigma_{1y} \right) ^3 +\left( \left| X_{12}y_1 \right| + \Sigma_{1y}\right)^3 \right]  \nonumber
\\ = & 2 \E \left[ \left(X_{11}^2  + 1\right)^3 \right] + \E \left( y_1^2  + 1 \right)^3 + 2 \E \left[ \left( \left|X_{11}y_1 \right| + \Sigma_{1y} \right) ^3 \right]  \nonumber
\\ = & 2 \E \left[ X_{11}^6 + 3 X_{11}^4 + 3 X_{11}^2  + 1 \right] + \E \left( y_1^6 + 3y_1^4 +3y_1^2  + 1 \right)   \nonumber
\\ & + 2 \E \left[  \left|X_{11}y_1 \right|^3 + 3 \Sigma_{1y} \left|X_{11}y_1 \right|^2 + 3 \Sigma_{1y}^2 \left|X_{11}y_1 \right| + \Sigma_{1y}^3 \right]  \nonumber
\\ \leq & 2 \E \left[ X_{11}^6 + 3 X_{11}^4 + 3 X_{11}^2  + 1 \right] + \E \left( y_1^6 + 3y_1^4 +3y_1^2  + 1 \right)   \nonumber
\\ & + 2 \sqrt{\E  \left| X_{11}\right|^6 \E \left|y_1 \right|^6  } + 6 \Sigma_{1y} \sqrt{\E  \left| X_{11}\right|^4 \E \left|y_1 \right|^4  } + 6 \Sigma_{1y}^2 \sqrt{\E  \left| X_{11}\right|^2 \E \left|y_1 \right|^2  } + 2\Sigma_{1y}^3  \nonumber
\\  = &  \E \left[ 2X_{11}^6 + 6 X_{11}^4 +6 X_{11}^2  + y_1^6 + 3y_1^4 +3y_1^2  \right] + 3   \nonumber
\\ & + 2 \sqrt{\E   X_{11}^6 \E y_1^6  } + 6 \Sigma_{1y} \sqrt{\E   X_{11}^4 \E y_1^4  } + 6 \Sigma_{1y}^2 \sqrt{\E  X_{11}^2 \E y_1^2  } + 2\Sigma_{1y}^3  \nonumber
\\  = &   2 \cdot 15 \cdot \Sigma_{11}^3 + 6 \cdot 3 \cdot \Sigma_{11}^2 +6 \cdot \Sigma_{11}  +  15 \cdot \Sigma_{yy}^3 + 3 \cdot 3 \cdot \Sigma_{yy}^2   \nonumber
 +3\Sigma_{yy}  + 3  
 \\ & + 2 \sqrt{15 \cdot \Sigma_{11}^3 \cdot 15 \cdot \Sigma_{yy}^3  } 
+ 6 \Sigma_{1y} \sqrt{3 \cdot \Sigma_{11}^2  \cdot 3 \cdot \Sigma_{yy}^2  }  + 6 \Sigma_{1y}^2 \sqrt{\Sigma_{11} \Var (y_1)  } + 2\Sigma_{1y}^3  \nonumber
\\  = &   30\Sigma_{11}^3 + 18\Sigma_{11}^2  +6 \Sigma_{11} +  15\Sigma_{yy}^3 +9\Sigma_{yy}^2 +3\Sigma_{yy}   + 3  + 30 \Sigma_{11}^{3/2} \Sigma_{yy}^{3/2} \nonumber
\\ & + 18\Sigma_{1y}  \Sigma_{11}\Sigma_{yy}  \nonumber
 + 6 \Sigma_{1y}^2 \sqrt{ \Sigma_{11} \Sigma_{yy} }+ 2\Sigma_{1y}^3  ,\nonumber 
 \end{align}
where we used the triangle inequality, the Cauchy-Schwarz inequality and the exchangeability of \(X_{11}\) and \(X_{12}\). Again, since \(\Sigma_{yy} > \max \left\{ \Sigma_{11}, \Sigma_{1y} \right\}\), this yields \(  \E \left\lVert \boldsymbol{V}_1 \right \rVert^3  < 140 \Sigma_{yy}^3\) which is \eqref{v3}. 
\end{proof}
\begin{proof}[Proof of Lemma \ref{calc.lemma.delt.meth.sigma.tilde}] Note that
\begin{align*}
L ( \boldsymbol{V}_1)  & = \frac{1}{2}  \frac{\Sigma_{1y}}{\sqrt{\Sigma_{yy} \Sigma_{11}^3}}( X_{12}^2  - X_{11}^2)  +  \frac{1}{\sqrt{ \Sigma_{11} \Sigma_{yy}}} \left( X_{11}y_1 - X_{12}y_1 \right)
\\ & = \left(X_{12} - X_{11} \right) \left(\frac{1}{2}  \frac{\Sigma_{1y}}{\sqrt{\Sigma_{yy} \Sigma_{11}^3}}( X_{12}  + X_{11}) -  \frac{1}{\sqrt{ \Sigma_{11} \Sigma_{yy}}} y_1 \right),
\end{align*}
so
\begin{align}
\tilde{\sigma}^2  := &  \E \left| L(\boldsymbol{V}_1) \right|^2 \nonumber
=   \E \left[  \left(X_{12} - X_{11} \right)^2 \left(\frac{1}{2}  \frac{\Sigma_{1y}}{\sqrt{\Sigma_{yy} \Sigma_{11}^3}}( X_{12}  + X_{11}) -  \frac{1}{\sqrt{ \Sigma_{11} \Sigma_{yy}}} y_1 \right)^2 \right].  \nonumber
%\label{lemma.delt.meth.sigma.tilde}
\end{align}

We will use the fact that \(X_{12} - X_{11}\) and \(\frac{1}{2}  \frac{\Sigma_{1y}}{\sqrt{\Sigma_{yy} \Sigma_{11}^3}}( X_{12}  + X_{11}) -  \frac{1}{\sqrt{ \Sigma_{11} \Sigma_{yy}}} y_1\) are mean zero jointly Gaussian random variables. Note that they are independent:
\begin{align*}
& \Cov \left( X_{12} - X_{11},  \frac{1}{2}  \frac{\Sigma_{1y}}{\sqrt{\Sigma_{yy} \Sigma_{11}^3}}( X_{12}  + X_{11}) -  \frac{1}{\sqrt{ \Sigma_{11} \Sigma_{yy}}} y_1 \right)  
\\ = ~ &   \Cov \left(X_{12},  \frac{1}{2}  \frac{\Sigma_{1y}}{\sqrt{\Sigma_{yy} \Sigma_{11}^3}}( X_{12}  + X_{11})   \right) - \Cov \left( X_{12},   \frac{1}{\sqrt{ \Sigma_{11} \Sigma_{yy}}}y_1 \right)
\\ & - \Cov\left(X_{11}, \frac{1}{2}  \frac{\Sigma_{1y}}{\sqrt{\Sigma_{yy} \Sigma_{11}^3}}( X_{12}  + X_{11})   \right) + \Cov \left( X_{11},  \frac{1}{\sqrt{ \Sigma_{11} \Sigma_{yy}}} y_1 \right)
\\ = ~&   \frac{1}{2}  \frac{\Sigma_{1y}}{\sqrt{\Sigma_{yy} \Sigma_{11}^3}}\left[ \Cov \left(X_{12},  X_{12} + X_{11}   \right) 
 - \Cov\left(X_{11}, X_{12} + X_{11}  \right) \right] - \frac{1}{\sqrt{ \Sigma_{11} \Sigma_{yy}}} \cdot \Sigma_{1y}
 \\ &  + \frac{1}{\sqrt{ \Sigma_{11} \Sigma_{yy}}} \cdot \Sigma_{1y}
\\ = ~& 0
\end{align*}
(where we used the exchangeability of \(X_{11}\) and \(X_{12}\)). Therefore 
\begin{align}
\tilde{\sigma}^2  & =\E \left[  \left(X_{12} - X_{11} \right)^2 \left(\frac{1}{2}  \frac{\Sigma_{1y}}{\sqrt{\Sigma_{yy} \Sigma_{11}^3}}( X_{12}  + X_{11}) -  \frac{1}{\sqrt{ \Sigma_{11} \Sigma_{yy}}} y_1 \right)^2 \right] \nonumber
\\ & =\E \left[  \left(X_{12} - X_{11} \right)^2 \right] \E \left[ \left(\frac{1}{2}  \frac{\Sigma_{1y}}{\sqrt{\Sigma_{yy} \Sigma_{11}^3}}( X_{12}  + X_{11}) -  \frac{1}{\sqrt{ \Sigma_{11} \Sigma_{yy}}} y_1 \right)^2 \right]\nonumber
 \\ & =  \Var \left[ X_{12} - X_{11}\right] \Var \left[ \frac{1}{2}  \frac{\Sigma_{1y}}{\sqrt{\Sigma_{yy} \Sigma_{11}^3}}( X_{12}  + X_{11}) -  \frac{1}{\sqrt{ \Sigma_{11} \Sigma_{yy}}} y_1   \right].\label{lemma.def.tilde.sigma.pf.var}
 \end{align}
 We have
\begin{align}
\Var(X_{12} - X_{11}) = &  \Sigma_{11} + \Sigma_{11} - 2 \Sigma_{12} = 2 (\Sigma_{11} - \Sigma_{12}), \nonumber
\\ \Var \left( \frac{1}{2}  \frac{\Sigma_{1y}}{\sqrt{\Sigma_{yy} \Sigma_{11}^3}}( X_{12}  + X_{11}) -  \frac{1}{\sqrt{ \Sigma_{11} \Sigma_{yy}}} y_1 \right)  = &  \frac{1}{4}  \frac{\Sigma_{1y}^2}{\Sigma_{yy}\Sigma_{11}^3} \Var (X_{12} + X_{11}) + \frac{1}{ \Sigma_{11} \Sigma_{yy}} \Var(y_1) \nonumber
\\ &  - 2 \cdot \frac{1}{2}  \frac{\Sigma_{1y}}{\Sigma_{yy}\Sigma_{11}^2} \Cov(X_{12} + X_{11}, y_1) \nonumber
\\  = &   \frac{1}{4}  \frac{\Sigma_{1y}^2}{\Sigma_{yy}\Sigma_{11}^3} \left(\Sigma_{11} + \Sigma_{11} + 2\Sigma_{12} \right) + \frac{1}{\Sigma_{11}} \nonumber
\\ & -   \frac{\Sigma_{1y}}{\Sigma_{yy}\Sigma_{11}^2}  \left(\Sigma_{1y} + \Sigma_{1y} \right) \nonumber
\\  = &   -\frac{3}{2}  \frac{\Sigma_{1y}^2}{\Sigma_{yy}\Sigma_{11}^2}  +  \frac{1}{2}  \frac{\Sigma_{1y}^2\Sigma_{12}}{\Sigma_{yy}\Sigma_{11}^3} + \frac{1}{\Sigma_{11}} ,\nonumber
\end{align}
so
\begin{align*}
\tilde{\sigma}^2   & =     2 (\Sigma_{11} - \Sigma_{12}) \left(  -\frac{3}{2}  \frac{\Sigma_{1y}^2}{\Sigma_{yy}\Sigma_{11}^2}  +  \frac{1}{2}  \frac{\Sigma_{1y}^2\Sigma_{12}}{\Sigma_{yy}\Sigma_{11}^3} + \frac{1}{\Sigma_{11}}\right) 
\\ & =     2 \left( \frac{\Sigma_{11}}{\Sigma_{11}} - \frac{\Sigma_{12}}{\Sigma_{11}}\right) \left(  -\frac{3}{2}  \left(\frac{\Sigma_{1y}}{\sqrt{\Sigma_{yy}\Sigma_{11}}}\right)^2 +  \frac{1}{2}  \left(\frac{\Sigma_{1y}}{\sqrt{\Sigma_{yy} \Sigma_{11}}} \right)^2 \frac{\Sigma_{12}}{\Sigma_{11}} + 1 \right)
\\ & =     \left( 1 - \rho_{12} \right) \left(  -3 \rho_{1y}^2 + \rho_{1y}^2 \rho_{12} + 2 \right) , 
\end{align*}
which yields the expression in \eqref{lemma.def.tilde.sigma}. Finally, to see that \(\tilde{\sigma}^2 \leq 2\), note that 
\begin{align}
\tilde{\sigma}^2 & =     \left( 1 - \rho_{12} \right) \left(  -3 \rho_{1y}^2 + \rho_{1y}^2 \rho_{12} + 2 \right) \nonumber
\\ &  =   2 \left( 1 - \rho_{12} \right) \left( -\rho_{1y}^2 \left[  \frac{3 - \rho_{12}}{2} \right] + 1 \right) \nonumber \nonumber
\\ & \leq 2 \cdot 1 \cdot \left( 0 + 1\right) = 2. \label{tilde.sigma.ub}
\end{align}
This verifies \eqref{lemma.def.tilde.sigma}. Next we will consider \( \E \left| L(\boldsymbol{V}_1) \right|^3\). We will begin by showing that \eqref{lemma.bound.ratio} holds. First, observe that (again using the independence of \(X_{12} - X_{11}\) and \( \frac{1}{2}  \frac{\Sigma_{1y}}{\sqrt{\Sigma_{yy} \Sigma_{11}^3}}( X_{12}  + X_{11}) -  \frac{1}{\sqrt{ \Sigma_{11} \Sigma_{yy}}} y_1\))
\begin{align}
\E \left| L ( \boldsymbol{V}_1)  \right|^3 & =\E \left| (X_{12} - X_{11}) \left( \frac{1}{2}  \frac{\Sigma_{1y}}{\sqrt{\Sigma_{yy} \Sigma_{11}^3}}( X_{12}  + X_{11}) -  \frac{1}{\sqrt{ \Sigma_{11} \Sigma_{yy}}} y_1 \right)\right|^3 \nonumber
\\ & =\E  \left| X_{12} - X_{11} \right|^3 \E \left| \frac{1}{2}  \frac{\Sigma_{1y}}{\sqrt{\Sigma_{yy} \Sigma_{11}^3}}( X_{12}  + X_{11}) -  \frac{1}{\sqrt{ \Sigma_{11} \Sigma_{yy}}} y_1 \right|^3  \nonumber
%\\ \text{(by Cauchy-Schwarz)} \quad & \leq  \sqrt{ \E \left|  X_{12}  -  X_{11}\right|^6 \E \left| \frac{1}{2} \Sigma_{1y}  \left( X_{12}  +  X_{11}  \right)  - y_1 \right|^6} \nonumber
\\ & = \sqrt{\frac{8}{\pi}} \left[ \Var \left( X_{12} - X_{11}\right)\right]^{3/2} \cdot \sqrt{\frac{8}{\pi}}  \left[ \Var \left( \frac{1}{2}  \frac{\Sigma_{1y}}{\sqrt{\Sigma_{yy} \Sigma_{11}^3}}( X_{12}  + X_{11}) -  \frac{1}{\sqrt{ \Sigma_{11} \Sigma_{yy}}} y_1\right)\right]^{3/2} .\label{lemma.delt.meth.sigma.tilde.cubed.a.var}
\end{align}
Using \eqref{lemma.delt.meth.sigma.tilde.cubed.a.var} and \eqref{lemma.def.tilde.sigma.pf.var}, we have
\begin{align*}
\frac{ \left( \E \left| L(\boldsymbol{V}_1) \right|^3 \right)^{1/3}}{ \left( \E \left| L(\boldsymbol{V}_1) \right|^2 \right)^{1/2}} 
& \leq \frac{ \left( \frac{8}{\pi}  \left[ \Var \left( X_{12} - X_{11}\right)\right]^{3/2}   \left[ \Var \left( \frac{1}{2}  \frac{\Sigma_{1y}}{\sqrt{\Sigma_{yy} \Sigma_{11}^3}}( X_{12}  + X_{11}) -  \frac{1}{\sqrt{ \Sigma_{11} \Sigma_{yy}}} y_1\right)\right]^{3/2}  \right)^{1/3}}{ \left(\Var \left[ X_{12} - X_{11}\right] \cdot\Var \left[ \frac{1}{2}  \frac{\Sigma_{1y}}{\sqrt{\Sigma_{yy} \Sigma_{11}^3}}( X_{12}  + X_{11}) -  \frac{1}{\sqrt{ \Sigma_{11} \Sigma_{yy}}} y_1  \right] \right)^{1/2}} 
%\\ & = \left( \frac{8}{\pi} \right)^{1/3} \cdot \frac{   \sqrt{ \Var \left[  X_{12}  -  X_{11}\right]   \cdot \Var \left[ \frac{1}{2}  \frac{\Sigma_{1y}}{\sqrt{\Sigma_{yy} \Sigma_{11}^3}}( X_{12}  + X_{11}) -  \frac{1}{\sqrt{ \Sigma_{11} \Sigma_{yy}}} y_1 \right]}  }{ \sqrt{\Var \left[ X_{12} - X_{11}\right] \cdot \Var \left[ \frac{1}{2}  \frac{\Sigma_{1y}}{\sqrt{\Sigma_{yy} \Sigma_{11}^3}}( X_{12}  + X_{11}) -  \frac{1}{\sqrt{ \Sigma_{11} \Sigma_{yy}}} y_1  \right] }} 
\\ & = \left( \frac{8}{\pi} \right)^{1/3}.
\end{align*}
This verifies \eqref{lemma.bound.ratio}. Since from \eqref{tilde.sigma.ub} we have \(\left( \E \left| L(\boldsymbol{V}_1) \right|^2 \right)^{1/2} = \tilde{\sigma} \leq \sqrt{2}\), it follows that 

\[
\left( \E \left| L(\boldsymbol{V}_1) \right|^3 \right)^{1/3} \leq \sqrt{2} \cdot \left( \frac{8}{\pi} \right)^{1/3} \quad \implies \quad \E \left| L(\boldsymbol{V}_1) \right|^3 \leq \frac{8}{\pi} \cdot 2^{3/2} = \frac{16 \sqrt{2}}{\pi},
\]

verifying that \( \E \left| L(\boldsymbol{V}_1) \right|^3 \) is finite.

\end{proof}

\begin{proof}[Proof of Lemma \ref{lemma.smoothness.2.1}] For notational ease, let \(\tilde{u}_j := u_j + \Sigma_{11}\) for \(j \in \{1,2\}\), let \(\tilde{u}_3 := u_3 + \Sigma_{yy}\), and let \(\tilde{u}_j := u_j + \Sigma_{1y}\) for \(j \in \{4,5\}\). Then differentiating \eqref{lemma.gradient} yields
\begin{equation}\label{lemma.hessian}
 \nabla^2 g(\boldsymbol{u})
 =  \begin{bmatrix}
% 1st row below
\frac{3\tilde{u}_4}{4\sqrt{\tilde{u}_3\tilde{u}_1^{5}}}    
& 0 
&  \frac{\tilde{u}_4}{4\sqrt{\tilde{u}_1^{3} \tilde{u}_3^{3}}}  
& \frac{-1}{2\sqrt{\tilde{u}_3\tilde{u}_1^{3} }}  
& 0 \\
% 2nd row below
0 
& - \frac{3\tilde{u}_5}{4\sqrt{\tilde{u}_3\tilde{u}_2^{5}}}  
& -\frac{\tilde{u}_5}{4\sqrt{\tilde{u}_2^3\tilde{u}_3^3}}  
& 0 
& \frac{1}{2\sqrt{\tilde{u}_3\tilde{u}_2^{3} }}  \\
% 3rd row below
 \frac{\tilde{u}_4}{4\sqrt{\tilde{u}_1^{3} \tilde{u}_3^{3}}}   
 &  -\frac{\tilde{u}_5}{4\sqrt{\tilde{u}_2^3\tilde{u}_3^3}}  
 &- \frac{3}{4 \sqrt{\tilde{u}_3^5}} \left(  \frac{\tilde{u}_5}{\sqrt{\tilde{u}_2  }} -  \frac{\tilde{u}_4}{\sqrt{\tilde{u}_1 }}    \right) 
 &    -\frac{1}{2\sqrt{\tilde{u}_1\tilde{u}_3^3}}  
 & \frac{1}{2\sqrt{\tilde{u}_2\tilde{u}_3^3 }}    \\
 % 4th row below
\frac{-1}{2\sqrt{\tilde{u}_3\tilde{u}_1^{3} }}  
& 0  
&  -\frac{1}{2\sqrt{\tilde{u}_1\tilde{u}_3^3}} 
& 0 
& 0 \\
% 5th row below
0 
&  \frac{1}{2\sqrt{\tilde{u}_3\tilde{u}_2^{3} }} 
&   \frac{1}{2\sqrt{\tilde{u}_2\tilde{u}_3^3 }}   
& 0 
& 0
\end{bmatrix}   .
\end{equation}
Let \(\epsilon := \Sigma_{11}/2\). Then \eqref{2.9.smoothness} is satisfied if for all \(\boldsymbol{u} \in \mathcal{X}_{\epsilon}\) where \\ \( \mathcal{X}_{\epsilon} := \left\{ \boldsymbol{u} \in  (-\Sigma_{11}, \infty)^3 \times \mathbb{R}^2  : \lVert \boldsymbol{u} \rVert_2 \leq \epsilon \right\}\), \(\lVert \nabla^2 g(\boldsymbol{u}) \rVert_{\text{op}} \leq M_{\epsilon}\) for some \(M_{\epsilon} \in (0, \infty)\). Since for any real-valued square matrix \(\boldsymbol{A} \in \mathbb{R}^{k \times k}\) it holds that \( \lVert \boldsymbol{A} \rVert_{\text{op}} \leq \lVert \boldsymbol{A} \rVert_F \leq \sqrt{\tilde{\ell}} \max_{i,j} \left| \boldsymbol{A}_{ij} \right| \) (where \(\tilde{\ell}\) is the number of nonzero entries in \(\boldsymbol{A}\)), it suffices to bound \(\max_{u \in  \mathcal{X}_{\epsilon}} \left\{ \max_{i,j} \left| \begin{pmatrix} \nabla^2 g(\boldsymbol{u}) \end{pmatrix}_{ij} \right| \right\}\).
Note that in \(\mathcal{X}_\epsilon\) we have
\begin{align*}
\tilde{u}_j = u_j + \Sigma_{11}& \in \left(\Sigma_{11}/2, 3 \Sigma_{11}/2 \right), \qquad j \in \{1, 2\},
\\ \tilde{u}_3 = u_3 + \Sigma_{yy} & \in \left(\Sigma_{yy} - \Sigma_{11}/2, \Sigma_{yy} + \Sigma_{11}/2 \right) ,   \qquad \text{and}
\\ \tilde{u}_j = u_j + \Sigma_{1y}& \in  \left(\Sigma_{1y} - \Sigma_{11}/2, \Sigma_{1y} + \Sigma_{11}/2 \right), \qquad j \in \{4, 5\}.
\end{align*}
Since (using \( \Sigma_{11} < \Sigma_{1y} < \Sigma_{yy} \)) 
\begin{align*}
1/2 &  \leq \min \left\{  \Sigma_{11}/2,   \Sigma_{yy}/2 , \Sigma_{1y}/2 \right\} 
\\ & =\min \left\{   \Sigma_{11}/2,   \Sigma_{yy} -  \Sigma_{yy}/2 , \Sigma_{1y} -  \Sigma_{1y}/2 \right\}
%\\ \vdots
\\ & \leq \min \left\{ \Sigma_{11}/2,   \Sigma_{yy} - \Sigma_{11}/2 , \Sigma_{1y} - \Sigma_{11}/2 \right\} 
\end{align*}
 and similarly \(\Sigma_{yy} + \Sigma_{11}/2 > \max \left\{ \Sigma_{1y} + \Sigma_{11}/2 , 3\Sigma_{11}/2 \right\}\), we can bound all of the \(\tilde{u}_j\) in \(\mathcal{X}_\epsilon\) using
\[
1/2 \leq \tilde{u}_j \leq \Sigma_{yy} + \Sigma_{11}/2 \qquad \forall j \in [5].
\]
Therefore in \( \mathcal{X}_{\epsilon} \) the absolute values of the \((1,1)\) and \((2,2)\) terms of \eqref{lemma.hessian} can be upper-bounded by
\[
\frac{3}{4}\frac{\Sigma_{yy} + \Sigma_{11}/2}{\sqrt{\left( 1/2 \right)^6}} =   \frac{3 \cdot 8}{4} \left( \Sigma_{yy} + \Sigma_{11}/2\right) = 6\Sigma_{yy} + 3\Sigma_{11}<  9\Sigma_{yy},
\]
the absolute values of the \((1,3)\) and \((2,3)\) terms of \eqref{lemma.hessian} (and their symmetric counterparts, the \((3,1)\) and \((3,2)\) terms) can be upper-bounded by
\[
\frac{1}{4}\frac{\Sigma_{yy} + \Sigma_{11}/2}{\sqrt{\left( 1/2 \right)^6}} =  \frac{8}{4} \left( \Sigma_{yy} + \Sigma_{11}/2 \right) = 2\Sigma_{yy} + \Sigma_{11} < 3\Sigma_{yy} <   9\Sigma_{yy},
\]
and the absolute values of the \((1, 4)\), \((2,5)\), \((3, 4)\), and \((3,5)\) terms of \eqref{lemma.hessian} and their symmetric counterparts can be upper-bounded by
\[
\frac{1}{2}\frac{1}{\sqrt{\left(1/2 \right)^4}} =  \frac{4}{2}< 9\Sigma_{yy}.
\]
Finally, the absolute value of the center \((3,3)\) term can be upper-bounded by
\begin{align*}
\left|- \frac{3}{4 \sqrt{\tilde{u}_3^5}} \left(  \frac{\tilde{u}_5}{\sqrt{\tilde{u}_2  }} -  \frac{\tilde{u}_4}{\sqrt{\tilde{u}_1 }}    \right) \right| &  \leq \frac{3}{4 \sqrt{\left( 1/2 \right)^5}} \left( \frac{\Sigma_{yy} + \Sigma_{11}/2}{\sqrt{ 1/2 }} - \frac{1/2}{\sqrt{\Sigma_{yy} + \Sigma_{11}/2  }} \right)
\\ &  \leq \frac{3 \left( \Sigma_{yy} + \Sigma_{11}/2 \right)}{4 \sqrt{\left( 1/2 \right)^6}}
\\ &  = \frac{3 \cdot 8 }{4 }\left( \Sigma_{yy} + \Sigma_{11}/2 \right)
\\ &  =6\Sigma_{yy} + 3\Sigma_{11} 
\\ & <  9\Sigma_{yy} .
\end{align*}
Therefore for the 15 non-zero terms in \(\nabla^2 g(\boldsymbol{u})\) we have \( \max_{i,j}  \left| \begin{pmatrix}\nabla^2 g(\boldsymbol{u}) \end{pmatrix}_{ij}  \right| < 9 \Sigma_{yy} \), so for all \(\boldsymbol{u} \in \mathcal{X}_{\epsilon}\),
\[
\lVert \nabla^2 g(\boldsymbol{u}) \rVert_{\text{op}}  \leq \lVert \nabla^2 g(\boldsymbol{u}) \rVert_F \leq \sqrt{15} \max_{i,j}  \left| \begin{pmatrix}\nabla^2 g(\boldsymbol{u}) \end{pmatrix}_{ij}  \right|  <  9  \sqrt{15}\Sigma_{yy} < 36 \Sigma_{yy}  = M_{\epsilon}. 
\]
\end{proof}
\begin{proof}[Proof of Lemma \ref{lemma.ub.mathcal.c}] Substituting \eqref{v2}, \eqref{v3}, and \(\epsilon = \Sigma_{11}/2\) into the expression for \(k_\epsilon\) in \eqref{mathcal.c.terms} yields
\begin{align*}
k_\epsilon & < \min\left\{ \frac{12 \Sigma_{yy}^2 }{\Sigma_{11}^2 n^{1/2}/4}, \frac{2 \left( 12 \Sigma_{yy}^2 \right)^{3/2} + 140 \Sigma_{yy}^3 /n^{1/2}}{\Sigma_{11}^3 n/8}\right\}
\\ & = \min\left\{  \frac{12 \Sigma_{yy}^2 }{\Sigma_{11}^2 n^{1/2}/4}, \frac{384\sqrt{3} \Sigma_{yy}^3 + 1120 \Sigma_{yy}^3 /n^{1/2}}{\Sigma_{11}^3 n}\right\}
\\ & \leq \frac{384\sqrt{3} \Sigma_{yy}^3 + 1120 \Sigma_{yy}^3 /n^{1/2}}{\Sigma_{11}^3 n}.
\end{align*}
Substituting this along with the other expressions from \eqref{mathcal.c.terms} into \eqref{mathcal.c.formula} (also using \eqref{v2}, \eqref{v3}, and \eqref{lemma.bound.ratio}) yields
\begin{align}
\mathcal{C} \leq ~  &  k_0 + k_1 \left( \sqrt[3]{\frac{8}{\pi}} \right)^3 + \left(k_{20} + k_{21}  \sqrt[3]{\frac{8}{\pi}} \right)\cdot 12 \Sigma_{yy}^2  + \left(k_{30} + k_{31}  \sqrt[3]{\frac{8}{\pi}} \right)\left( 140 \Sigma_{yy}^3 \right)^{2/3}  + k_\epsilon \nonumber
\\  < ~  &  0.13925 + \frac{8}{\pi} \cdot 2.33554  +  \frac{M_\epsilon}{2 \tilde{\sigma}}\left( 2\left(\frac{2}{\pi}\right)^{1/6} +\left[2 + \frac{2^{2/3}}{n^{1/6}} \right]  \left(\frac{8}{\pi}\right)^{1/3}  \right) \cdot 12 \Sigma_{yy}^2 \nonumber
\\ & +  \frac{M_\epsilon}{ 2\tilde{\sigma}}\left(  \frac{(8/\pi)^{1/6}}{n^{1/3}} + \frac{2}{n^{1/2}} \left(\frac{8}{\pi}\right)^{1/3} \right)\left( 140 \Sigma_{yy}^3 \right)^{2/3}   +  \frac{384\sqrt{3} \Sigma_{yy}^3 + 1120 \Sigma_{yy}^3 /n^{1/2}}{\Sigma_{11}^3 n}\nonumber
\\  < ~  & 6.087  +  \frac{M_\epsilon}{ \tilde{\sigma}} \Sigma_{yy}^2 \bigg[  12 \left(\frac{2}{\pi}\right)^{1/6} + 12\left[1 + \frac{1}{2^{1/3}n^{1/6}} \right]  \left(\frac{8}{\pi}\right)^{1/3}   \nonumber
\\ & +    \frac{\left( 140 \right)^{2/3}}{n^{1/3} \pi^{1/6} \sqrt{2}} + \frac{\left( 140 \right)^{2/3}}{n^{1/2}} \left(\frac{8}{\pi}\right)^{1/3} \bigg]  +  \frac{384\sqrt{3} + 1120  /n^{1/2}}{ n} \cdot  \frac{\Sigma_{yy}^3}{\Sigma_{11}^3}\nonumber
\\  < ~  & 6.087  +  \frac{M_\epsilon}{ \tilde{\sigma}} \Sigma_{yy}^2 \left( 27.517 + \frac{13.007}{n^{1/6}}  + \frac{15.754}{n^{1/3}} + \frac{36.819}{n^{1/2}}  \nonumber
  \right)  +  \frac{\Sigma_{yy}^3}{\Sigma_{11}^3} \left( \frac{384\sqrt{3}}{n} + \frac{1120}{ n^{3/2}} \right) . \nonumber
%%
%\\ \vdots \nonumber
%%
%\\  <  ~  & 6.087 +  2\frac{M_\epsilon}{ \tilde{\sigma}}  \Sigma_{yy}^2  \Bigg(12 \left(\frac{2}{\pi}\right)^{1/6} + 12  \left[1 + \frac{1}{2^{1/3} \cdot n^{1/6}} \right]  \left(\frac{8}{\pi}\right)^{1/3}  \nonumber
%\\ &  +  \left( 140 \right)^{2/3}  \cdot \frac{(8/\pi)^{1/6}}{2 n^{1/3}} + \left( 140 \right)^{2/3} \cdot \frac{\left(\frac{8}{\pi}\right)^{1/3}}{n^{1/2}}  \Bigg) + \frac{384\sqrt{3} \Sigma_{yy}^3 + 1120 \Sigma_{yy}^3 /n^{1/2}}{\Sigma_{11}^3 n} \nonumber
%%
%\\  <  ~  & 6.087 +  2\frac{M_\epsilon}{ \tilde{\sigma}}  \Sigma_{yy}^2  \left(27.517 +  \frac{13.007}{ n^{1/6}}    +  \frac{15.754}{n^{1/3}} + \frac{36.819}{n^{1/2}}  \right) + \frac{\Sigma_{yy}^3}{\Sigma_{11}^3} \left( \frac{384\sqrt{3}  }{ n} + \frac{1120 }{n^{3/2}} \right) \nonumber
\end{align}
Also, note that for \(n  \geq 100\),
\begin{align*}
& 27.517 + \frac{13.007}{n^{1/6}} + \frac{15.754}{n^{1/3}} +  \frac{36.819}{n^{1/2}}
\\ \leq ~ &   27.517 + \frac{13.007}{100^{1/6}} + \frac{15.754}{100^{1/3}} +  \frac{36.819}{100^{1/2}}
\\ < ~ & 40.631,
\end{align*}
and
\[
\frac{384\sqrt{3}  }{ n} + \frac{1120 }{n^{3/2}}  \leq  \frac{384\sqrt{3}  }{ n} + \frac{1120 }{n \cdot 100^{1/2}} < \frac{777.108}{n}.
\]
Using this along with \(M_\epsilon = 36 \Sigma_{yy}\) from Lemma \ref{lemma.smoothness.2.1} and \(\Sigma_{11} \geq 1\) yields 
\begin{align*}
\mathcal{C}    <  ~  & 6.087 +   \frac{36 \Sigma_{yy}}{ \tilde{\sigma}}  \Sigma_{yy}^2 \cdot 40.631 + \Sigma_{yy}^3 \cdot \frac{777.108}{n} \nonumber
\\ < ~ &  6.087 + 1462.717 \cdot \frac{ \Sigma_{yy}^3 }{ \tilde{\sigma}}   +  \Sigma_{yy}^3  \frac{777.108}{n}
\\ \stackrel{(a)}{<} ~ & 6.087 + 1462.717 \cdot \frac{ \Sigma_{yy}^3 }{ \tilde{\sigma}}   +  7.772 \cdot \Sigma_{yy}^3
\\ \stackrel{(b)}{<} ~ &  1462.717 \cdot \frac{ \Sigma_{yy}^3 }{ \tilde{\sigma}}   +  13.859 \cdot \Sigma_{yy}^3,
\end{align*}
where \((a)\) follows because since \(n \geq 100\) we have
 \[
\Sigma_{yy}^3 \cdot   \frac{777.108}{n} \leq  \Sigma_{yy}^3 \cdot \frac{777.108}{100} 
 \]
and \((b)\) follows because \(\Sigma_{yy} > 1\).

\end{proof}

\begin{proof}[Proof of Lemma \ref{lemma.delt.ineq}] First we will show the inequalities from \eqref{beta.u.bound.ineq2}. By \eqref{c4n.conds},
\begin{align}
 \frac{n}{\left(\log n \right)^{3/2}} & > \frac{361}{25}  \frac{(\beta_Z^2 + 1 + \sigma_\epsilon^2)}{4 c_2 }  \nonumber
\\ \iff \qquad \frac{25}{361 \left(\log n \right)^{1/2}} & >  \frac{(\beta_Z^2 + 1 + \sigma_\epsilon^2) \log n}{4 c_2 n}  \nonumber
\\ \iff \qquad \sqrt{\frac{(\beta_Z^2 + 1 + \sigma_\epsilon^2) \log n}{4 c_2 n}} & < \frac{5}{19 \left(\log n \right)^{1/4}}   \nonumber
\\ \iff \qquad  \delta(n) & <    \frac{5}{19 \left(\log n \right)^{1/4}} .\nonumber
\end{align} 
Next, from \eqref{n.large.delta.cond} we have
 \begin{align*}
\frac{n}{ \log n} & > \frac{\beta_Z^2 + 1 + \sigma_\epsilon^2}{c_2}  \cdot 5 \left(1 +  \sigma_\epsilon^2\right)
\\ \iff \qquad \frac{1}{20 \left( 1 + \sigma_\epsilon^2 \right)} & > \frac{  \log n}{4c_2n} \cdot (\beta_Z^2 + 1 + \sigma_\epsilon^2)
 \\ \iff \qquad   \frac{1}{\sqrt{20 \left( 1 + \sigma_\epsilon^2 \right)}} & >  \delta(n) .
 \end{align*}
 Note that this yields
\begin{align*}
\delta(n) & <  \frac{1}{\sqrt{20 \left( 1 + \sigma_\epsilon^2 \right)}} <  \frac{2}{5\sqrt{3 + \sigma_\epsilon^2}}
\\ \iff \qquad  2 \delta(n) \sqrt{3 + \sigma_\epsilon^2} & < \frac{4}{5}.
\end{align*}
Lastly, we have by \eqref{n.large.delta.cond}
\begin{align}
&   \frac{n}{\log n} >  \frac{1}{c_2} \frac{\beta_Z^2 + 1 + \sigma_\epsilon^2}{4 t_0^2 (2 + \sigma_\epsilon^2 )^2}  \nonumber
%\\ \stackrel{(a)}{\implies} \qquad &   \frac{n}{\log n} >  2\left(\beta_Z^2 + 1 + \sigma_\epsilon^2\right)\left(12 + \sigma_\epsilon^2 \right) \nonumber
\\ \iff \qquad &   \sqrt{\frac{(\beta_Z^2 + 1 + \sigma_\epsilon^2) \log n}{4 c_2 n}} <  t_0 (2 + \sigma_\epsilon^2 ) \nonumber
\\ \iff \qquad & \delta(n) <   t_0 (2 + \sigma_\epsilon^2) \nonumber
\\ \implies \qquad & \delta(n) <   t_0 \left( \beta_Z^2 + 1 + \sigma_\epsilon^2 \right), \nonumber
\end{align} 
where we used that \(\beta_Z > 1\), so \eqref{beta.u.bound.ineq2} is verified. Now we will show that the remaining inequalities hold. From the definition of \(\delta(n)\) in \eqref{def.delta.n} we have that 
\[
\delta(n) =  \frac{ \log(n) \sigma_\zeta^2(n) }{20} \cdot \sqrt{\frac{\beta_Z^2 + 1 + \sigma_\epsilon^2}{c_2}} \qquad \iff \qquad \sigma_\zeta^2(n)=  20\sqrt{\frac{c_2}{\beta_Z^2 + 1 + \sigma_\epsilon^2}}  \frac{\delta(n)}{\log n}  .
\]
Since \(1 + \sigma_\zeta^2(n) > 1\) for all \(n \in \mathbb{N}\) and  \(x > \sqrt{x}\) for all \(x > 1\),
\[
 \sqrt{ 1 + \sigma_\zeta^2(n)} - 1 < 1 + \sigma_\zeta^2(n) - 1 =  \sigma_\zeta^2(n) =    20\sqrt{\frac{c_2}{\beta_Z^2 + 1 + \sigma_\epsilon^2}}  \frac{\delta(n)}{\log n} ,
\]
which is \eqref{lemma.ineq.sig.zeta.sqrt}. Finally, using \(n \geq 100\),
\[
\sigma_\zeta^2(n) = \frac{10}{\sqrt{n \log n}} \leq   \frac{10}{\sqrt{100 \log (100)}} < 1.
\]
 This verifies \eqref{sigma.zeta.sq.l.1.lemma}.

\end{proof}

\begin{proof}[Proof of Lemma \ref{cond.beta.z.lemma}] First, note that 
\[
1 < \frac{\sqrt{ 1 + \sigma_\zeta^2(n)} }{ 1 - 2 \delta(n) \sqrt{ 3 + \sigma_\epsilon^2 }} 
\]
because \(1 - 2 \delta(n) \sqrt{ 3 + \sigma_\epsilon^2 } > 0\) due to \eqref{beta.u.bound.ineq2}, \(1 - 2 \delta(n) \sqrt{ 3 + \sigma_\epsilon^2 }  < 1\) since \(\delta(n) > 0\), and \(\sqrt{ 1 + \sigma_\zeta^2(n)} > 1\) since \( \sigma_\zeta^2(n) > 0\). Next we will show the inequality on the right. Due to the assumption that \(\beta_Z \in I(n)\) from \eqref{cond.beta.z} (and using, in particular, \(\beta_Z > 1\)) we have
\begin{align}
% 2 + \sigma_\epsilon^2 & > \left( \beta_Z - 1 \right)^2 \left( \beta_Z^2 + 1 + \sigma_\epsilon^2 \right) \nonumber
% \\ \vdots \nonumber
% \\ & \stackrel{(a)}{>} \left( \beta_Z - 1 \right)^2 \frac{c_2}{3.61}   \frac{n}{\left(\log n \right)^{3/2}} \nonumber
% \\ \iff \qquad   \left( \beta_Z - 1 \right)^2   & < \left( 2 + \sigma_\epsilon^2 \right)  \frac{3.61} {c_2}  \frac{\left(\log n \right)^{3/2}}{n}  
\beta_Z  &  < 1 +   \frac{19}{10}\sqrt{\frac{ 2 + \sigma_\epsilon^2 }{c_2}}   \frac{\left(\log n \right)^{3/4}}{n^{1/2}} \nonumber
\\ \implies \qquad    \frac{n^{1/2}}{\left(\log n \right)^{3/4}}  & <\frac{19}{10}\sqrt{\frac{ 2 + \sigma_\epsilon^2 }{c_2}}    \frac{1}{  \beta_Z - 1 } \nonumber % \label{beta.z.max.condition}
\\ & <\frac{19}{10}\sqrt{\frac{ \beta_Z^2 + 1 + \sigma_\epsilon^2 }{c_2}}    \frac{\beta_Z}{  \beta_Z - 1 } \nonumber
%\\ \implies \qquad  \frac{n^{1/2}}{\left(\log n \right)^{3/4}} & <   \frac{19}{5} \sqrt{\frac{\beta_Z^2 + 1 + \sigma_\epsilon^2 }{4 c_2 }}  \frac{\beta_Z} {\beta_Z - 1} \nonumber
\\ \implies \qquad  10\left(\frac{\beta_Z - 1}{\beta_Z} \right) & < 19 \frac{\left(\log n \right)^{3/4}}{n^{1/2}}  \sqrt{\frac{\beta_Z^2 + 1 + \sigma_\epsilon^2 }{ c_2 }}  \nonumber
\\ \iff \qquad  5\left(1 - \frac{1}{\beta_Z} \right) & < 19 \frac{\left(\log n \right)^{3/4}}{n^{1/2}}  \sqrt{\frac{\beta_Z^2 + 1 + \sigma_\epsilon^2 }{4 c_2 }}  \nonumber
\\ & =  19 \left(\log n \right)^{1/4} \delta(n) \nonumber
%\\ \iff \qquad  5 -   19 \left(\log n \right)^{1/4} \delta(n)  & <   \frac{5}{\beta_Z} \nonumber
%\\ \iff \qquad  \beta_Z & <  \frac{5}{5 -19 \left(\log n \right)^{3/4}/n^{1/2}  \sqrt{\frac{\beta_Z^2 + 1 + \sigma_\epsilon^2 }{4 c_2 }} } \nonumber
%\\ \iff \qquad  \beta_Z & <  \frac{5}{5 -19 \left(\log n \right)^{1/4}  \sqrt{\frac{(\beta_Z^2 + 1 + \sigma_\epsilon^2) \log n}{4 c_2 n}} } \nonumber
\\ \iff \qquad \beta_Z & <  \frac{5}{5 -19 \left(\log n \right)^{1/4} \delta(n)}, \nonumber
\end{align}
%where \((a)\) follows from \(n \geq 100\)
%\[
%\beta_Z^2 + 1 + \sigma_\epsilon^2< \frac{c_2}{3.61}   \frac{n}{\left(\log n \right)^{3/2}}
% \]
% from \eqref{c4n.conds} 
where the last step is permissible because \(5 -19 \left(\log n \right)^{1/4} \delta(n) > 0\) due to \eqref{beta.u.bound.ineq2}. 
%\[
%\vdots
%\]
%We will use this to show that
%\[
%\beta_Z >  \frac{\sqrt{ 1 + \sigma_\zeta^2(n)} }{ 1 - 2 \delta(n) \sqrt{ 3 + \sigma_\epsilon^2 }} .
%\]
Finally we will show the middle inequality. From Lemma \ref{lemma.delt.ineq} we have \(2 \delta(n) \sqrt{ 3 + \sigma_\epsilon^2 } < \frac{4}{5} \). Using the inequality  \(1/(1-t) \leq 1 + 5 t\) valid for \(t \in [0, 4/5]\), we have
\begin{align}
 & \frac{\sqrt{ 1 + \sigma_\zeta^2(n)} }{ 1 - 2 \delta(n) \sqrt{ 3 + \sigma_\epsilon^2 }}  \nonumber
 \\ \leq ~ & \sqrt{ 1 + \sigma_\zeta^2(n)}  \left( 1 + 5 \cdot 2 \delta(n) \sqrt{ 3 + \sigma_\epsilon^2 } \right) \nonumber
 \\  = ~ & \sqrt{ 1 + \sigma_\zeta^2(n)}  \left( 1 + 5 \cdot 2 \sqrt{\frac{(\beta_Z^2 + 1 + \sigma_\epsilon^2) \log n}{4 c_2 n}}\sqrt{ 3 + \sigma_\epsilon^2 } \right)  \nonumber
\\  \stackrel{(a)}{\leq} ~ & \left(1 +  \frac{1}{2}\sigma_\zeta^2(n) \right)  \left( 1 + 5 \cdot  \sqrt{\frac{\left(  \beta_Z^2 + 1 + \sigma_\epsilon^2 \right) \log n}{ c_2 n}}\sqrt{ 3 + \sigma_\epsilon^2 } \right) \nonumber
\\  = ~ & \left(1 + \frac{5}{\sqrt{n \log n}} \right)  \left( 1 + h(\beta_Z, \sigma_\epsilon^2)\sqrt{ \frac{\log n}{n}} \right) \nonumber
\\  = ~ & 1 + \frac{5}{\sqrt{n \log n}} +  h(\beta_Z, \sigma_\epsilon^2)\sqrt{ \frac{\log n}{n}}  +  \frac{5 h(\beta_Z, \sigma_\epsilon^2)}{n}, \label{sec.lemma.ineq}
\end{align}
where in \((a)\) we used the inequality \(\sqrt{1 + t} \leq 1 + \frac{1}{2} t\), valid for \(t  \geq 0\) and
\begin{equation}\label{def.k.sig.ep.squared}
h(\beta_Z, \sigma_\epsilon^2) :=  5  \sqrt{\frac{\left(  \beta_Z^2 + 1 + \sigma_\epsilon^2 \right) \left( 3 + \sigma_\epsilon^2  \right)}{ c_2 }}.
\end{equation}
The following lemma allows us to bound this expression.
\begin{lemma}\label{lem.a1.req}
Under the assumptions of Theorem \ref{thm.result.sel},
\[
\frac{5}{\sqrt{n \log n}} +  h(\beta_Z, \sigma_\epsilon^2)\sqrt{ \frac{\log n}{n}}  +  \frac{5 h(\beta_Z, \sigma_\epsilon^2)}{n} <  \frac{100}{\sqrt{n \log n}}
\]
for \(h(\beta_Z, \sigma_\epsilon^2)\) defined in \eqref{def.k.sig.ep.squared}.
\end{lemma}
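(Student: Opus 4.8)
The plan is to peel off the three summands on the left one at a time and show that together they fit inside the budget $100/\sqrt{n\log n}$ supplied by the right-hand side, using only the elementary estimates already assembled in Lemma~\ref{lemma.delt.ineq} together with the two sample-size hypotheses \eqref{n.large.delta.cond} and \eqref{c4n.conds}. The first summand is exactly $\tfrac12\sigma_\zeta^2(n)$ in disguise — indeed $5/\sqrt{n\log n} = \tfrac12\sigma_\zeta^2(n)$ and $100/\sqrt{n\log n} = 10\,\sigma_\zeta^2(n)$ by \eqref{lasso.thm.sig.zeta.def} — so after subtracting it the task reduces to showing $h(\beta_Z,\sigma_\epsilon^2)\sqrt{\log n/n} + 5\,h(\beta_Z,\sigma_\epsilon^2)/n < 95/\sqrt{n\log n}$. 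I would handle the trailing term $5h/n$ first, since it is manifestly of lower order than the middle term (it carries an extra factor $1/\sqrt{n\log n}$), and absorb it into a small slack.

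Next I would make the dependence of $h$ on the model explicit. Writing $M := \beta_Z^2 + 1 + \sigma_\epsilon^2$, Statement~(i) gives $\beta_Z \in (1,2)$, hence $\beta_Z^2 \in (1,4)$ and $3+\sigma_\epsilon^2 = M + 2 - \beta_Z^2 \le M+1 \le 2M$; plugging this into the definition \eqref{def.k.sig.ep.squared} bounds $h$ by an explicit multiple of $M/\sqrt{c_2}$. The point of this step is that every occurrence of $h$ can then be replaced by $M$, and the two hypotheses \eqref{n.large.delta.cond} and \eqref{c4n.conds} are precisely upper bounds on $M$ in terms of $n/\log n$ and $n/(\log n)^{3/2}$. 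Feeding those bounds in converts the target inequality into a purely numerical comparison between explicit constants and powers of $\log n$ and $n$, which I would then verify by monotonicity, the worst case occurring at the smallest admissible $n$, much as in the proofs of Lemmas~\ref{lemma.delt.ineq}, \ref{cond.beta.z.lemma}, and \ref{lem.conc}. It is convenient here to recall from \eqref{def.delta.n} that $\delta(n) = \tfrac1{20}\sqrt{M/c_2}\,\log(n)\,\sigma_\zeta^2(n)$, which ties $h\sqrt{\log n/n}$ directly to $\delta(n)$ and lets me reuse the bound $2\delta(n)\sqrt{3+\sigma_\epsilon^2} < 4/5$ already established in Lemma~\ref{lemma.delt.ineq}.

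The main obstacle is the middle term $h(\beta_Z,\sigma_\epsilon^2)\sqrt{\log n/n}$: it is the dominant contribution on the left, of the same order as $\delta(n)\asymp\sqrt{\log n/n}$, whereas the target $100/\sqrt{n\log n}$ decays at the faster rate $1/\sqrt{n\log n}$. Controlling the interplay of these two rates is exactly what forces the two distinct sample-size conditions rather than a single one, and the delicate bookkeeping of the numerical constants — so that the combined estimate still clears the threshold uniformly over $n\ge 100$ — is where essentially all of the effort concentrates; the first and third terms are comparatively routine once $h$ has been bounded through $M$.
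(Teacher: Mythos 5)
Your plan breaks down at precisely the point you identify as the main obstacle, and the obstacle is not one that bookkeeping can overcome. Multiplying the claimed inequality through by $\sqrt{n\log n}$, it is equivalent to
\[
5 + h(\beta_Z,\sigma_\epsilon^2)\,\log n + 5\,h(\beta_Z,\sigma_\epsilon^2)\sqrt{\frac{\log n}{n}} \;<\; 100,
\]
so in particular you need $h(\beta_Z,\sigma_\epsilon^2)\log n < 95$. But $h$ is bounded \emph{below} by an absolute constant: from \eqref{def.k.sig.ep.squared}, $\beta_Z>1$ and $\sigma_\epsilon^2\ge 0$ give $\left(\beta_Z^2+1+\sigma_\epsilon^2\right)\left(3+\sigma_\epsilon^2\right) > 6$, and $c_2\le (e-1)/(8e^2)$ gives $h > 5\sqrt{48e^2/(e-1)} > 70$. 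Hence the target forces $\log n < 95/70 < 1.4$, which is incompatible with $n\ge 100$ (where $\log n > 4.6$), let alone with \eqref{n.large.delta.cond.max}, which forces $n/\log n > 120/c_2 > 4000$ and hence $\log n > 10$. Your proposed use of the sample-size hypotheses---replacing $\beta_Z^2+1+\sigma_\epsilon^2$ via \eqref{n.large.delta.cond} and \eqref{c4n.conds}---can only ever bound the middle term by a constant (it gives $h\sqrt{\log n/n} < 5\sqrt{(3+\sigma_\epsilon^2)/(2(12+\sigma_\epsilon^2))} < 5/\sqrt{2}$), and a constant cannot sit below a budget $100/\sqrt{n\log n}$ that tends to zero. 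Your monotonicity claim is also backwards: the deficit $h\log n$ grows with $n$, so the worst case is at \emph{large} $n$, and the sample-size conditions push the admissible $n$ upward, making matters worse rather than better.

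For what it is worth, the paper's own proof makes exactly your first move (bounding the middle term by $5\sqrt{2}/2$ via \eqref{n.large.delta.cond}) and then fails at the same spot: it asserts $\frac{5\sqrt{2}}{2} < \frac{5}{\sqrt{n\log n}}\cdot\frac{\sqrt{2}}{2}\sqrt{100\log 100}$, which is equivalent to $n\log n < 100\log 100$, i.e.\ $n<100$---the reverse of the standing assumption. So the gap in your proposal is not a missing trick that a cleverer argument could supply: the lemma as stated appears to be false. Structurally, the lower endpoint $1+10\sigma_\zeta^2(n) = 1+100/\sqrt{n\log n}$ of $I(n)$ in \eqref{cond.beta.z} decays strictly faster than $\delta(n)\asymp\sqrt{\log n/n}$, while the downstream use in Lemma \ref{cond.beta.z.lemma} needs $\beta_Z-1$ to dominate $2\delta(n)\sqrt{3+\sigma_\epsilon^2} = \frac{1}{5}h\sqrt{\log n/n}$; no proof can reconcile these two rates. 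A repair must modify the statement (for instance, a lower endpoint of $I(n)$ proportional to $\sqrt{\log n/n}$), not just the proof.
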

\begin{proof} Provided later in Appendix \ref{tech.lemmas}.
 \end{proof}
Finally, the assumption that \(\beta_Z \in I(n)\) from \eqref{cond.beta.z} yields
\begin{align*}
 \beta_Z  & > 1 + \frac{100}{\sqrt{n \log n}}
\\ & \stackrel{(b)}{>} 1 + \frac{5}{\sqrt{n \log n}} +  h(\beta_Z, \sigma_\epsilon^2)\sqrt{ \frac{\log n}{n}}  +  \frac{5 h(\beta_Z, \sigma_\epsilon^2)}{n},
\\ & \stackrel{(c)}{\geq}  \frac{\sqrt{ 1 + \sigma_\zeta^2(n)} }{ 1 - 2 \delta(n) \sqrt{ 3 + \sigma_\epsilon^2 }} ,
\end{align*}
where in \((b)\) we used Lemma \ref{lem.a1.req} and in \((c)\) we used \eqref{sec.lemma.ineq}.

\end{proof}

\begin{proof}[Proof of Lemma \ref{lemma.dist.bounds}] The calculation of the covariance and correlation matrices is trivial. Note that 
\[
 \beta_Z^2 + 1 + \sigma_\epsilon^2 > 1 + 1 +  \sigma_{\epsilon}^2   > 1 + \sigma_{\zeta}^2(n)
 \]
since \(\beta_Z > 1\) and by \eqref{sigma.zeta.sq.l.1.lemma} \(\sigma_{\zeta}^2(n) < 1\). So \(\max \left(\boldsymbol{\Sigma}_{ii}^* \right)= \beta_Z^2 + 1 + \sigma_\epsilon^2 = \Sigma_{yy}\). By inspection we see \(\min \left(\boldsymbol{\Sigma}_{ii}^* \right)= 1\). Finally, we will show that \( \beta_Z  - \left( 1 + \sigma_\zeta^2(n) \right)  > 0\). We have
\begin{align*}
 \beta_Z - \left(  1 + \sigma_\zeta^2(n) \right)   &  \stackrel{(a)}{>}  \frac{\sqrt{1 + \sigma_\zeta^2(n)}}{1 - 2 \delta(n) \sqrt{3 + \sigma_\epsilon^2}} - \left(  1 + \sigma_\zeta^2(n) \right) 
\\ &  =\sqrt{1 + \sigma_\zeta^2(n)}  \left( \frac{1 }{1 - 2 \delta(n) \sqrt{3 + \sigma_\epsilon^2}}  - \sqrt{  1 + \sigma_\zeta^2(n)} \right)
\\ & \stackrel{(b)}{>} \frac{1 }{1 - 2 \delta(n) \sqrt{3 + \sigma_\epsilon^2}}  - \left( 1 +  20\sqrt{\frac{c_2}{\beta_Z^2 + 1 + \sigma_\epsilon^2}}  \frac{\delta(n)}{\log n}   \right)
\\ & =2 \delta(n) \left( \frac{\sqrt{3 + \sigma_\epsilon^2}}{1 - 2 \delta(n) \sqrt{3 + \sigma_\epsilon^2}}  -   10\sqrt{\frac{c_2}{\beta_Z^2 + 1 + \sigma_\epsilon^2}}  \frac{1}{\log n}  \right)
\\ & \stackrel{(c)}{>} 0,
\end{align*} 
where \((a)\) uses Lemma \ref{cond.beta.z.lemma}, \((b)\) follows from \eqref{lemma.ineq.sig.zeta.sqrt} and \(\sigma_\zeta^2(n) > 0\), and \((c)\) 
comes from
\begin{align*}
n & \geq 100
\\ \implies \qquad     n & > \exp \left\{10 \sqrt{\frac{e-1}{48e^2}} \right\} 
\\ \implies \qquad     \log n & > 10 \sqrt{\frac{e-1}{8e^2(\beta_Z^2 + 1 + \sigma_\epsilon^2)(3 + \sigma_\epsilon^2)}} 
\\ \stackrel{(d)}{\implies} \qquad     \log n & > \frac{1 - 2 \delta(n) \sqrt{3 + \sigma_\epsilon^2}}{\sqrt{3 + \sigma_\epsilon^2}} \cdot  10 \sqrt{\frac{c_2}{\beta_Z^2 + 1 + \sigma_\epsilon^2}} 
\\ \iff \qquad   \frac{\sqrt{3 + \sigma_\epsilon^2}}{1 - 2 \delta(n) \sqrt{3 + \sigma_\epsilon^2}} & > 10\sqrt{\frac{c_2}{\beta_Z^2 + 1 + \sigma_\epsilon^2}}  \frac{1}{\log n} 
\end{align*}
where \((d)\) follows from \(c_2 < (e-1)/(8e^2)\) and we used that \(\beta_Z > 1\).

\end{proof}

\begin{proof}[Proof of Lemma \ref{lemma.bound.alpha}] To prove the first result we will use one more lemma.
\begin{lemma}\label{calc.lemma.delt.meth.sigma.tilde.2}
Under the assumptions of Theorem \ref{thm.result.sel}, 
\[
2\tilde{\delta}(n) \cdot \frac{1 + \sigma_\epsilon^2}{\beta_Z^2  + 1 +  \sigma_\epsilon^2}  < \tilde{\sigma}^2(n) \leq 2,
\]
where \(\tilde{\delta}(n)\) is defined in \eqref{lasso.theory.def.tilde.delta} and \(\tilde{\sigma}(n)\) is defined in \eqref{lemma.def.tilde.sigma.n}.
\end{lemma}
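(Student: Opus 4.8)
The plan is to work directly from the closed form \eqref{lemma.def.tilde.sigma.n} for $\tilde{\sigma}^2(n)$ and to exploit the identity $1 - \rho_{12}(n) = \tilde{\delta}(n)$ recorded in \eqref{lasso.theory.def.tilde.delta}, which lets me factor $\tilde{\delta}(n)$ cleanly out of $\tilde{\sigma}^2(n)$ and reduce both bounds to statements about the remaining factor.

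For the upper bound, I would simply invoke Lemma \ref{calc.lemma.delt.meth.sigma.tilde}, which already establishes $\tilde{\sigma} \in (0, \sqrt{2}]$ and hence $\tilde{\sigma}^2(n) \leq 2$; equivalently, this is exactly the bound derived in \eqref{tilde.sigma.ub} during that proof, so nothing new is needed here.

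For the lower bound, I would start by rewriting $\tilde{\sigma}^2(n) = (1 - \rho_{12}(n))\bigl(2 - \rho_{1y}^2(n)(3 - \rho_{12}(n))\bigr) = \tilde{\delta}(n)\bigl(2 - \rho_{1y}^2(n)(3 - \rho_{12}(n))\bigr)$, using the identity above. Since $\tilde{\delta}(n) > 0$, the claimed lower bound is equivalent to $2 - \rho_{1y}^2(n)(3 - \rho_{12}(n)) > 2(1+\sigma_\epsilon^2)/(\beta_Z^2 + 1 + \sigma_\epsilon^2)$. I would then substitute the explicit correlation expressions from \eqref{cor.matrix.def} in Lemma \ref{lemma.dist.bounds}; writing $S := \beta_Z^2 + 1 + \sigma_\epsilon^2$ and $g := 1 + \sigma_\zeta^2(n)$, these read $\rho_{1y}^2(n) = \beta_Z^2/(Sg)$ and $\rho_{12}(n) = 1/g$. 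Multiplying through by $S$ and using $2S - 2(1+\sigma_\epsilon^2) = 2\beta_Z^2$, the inequality collapses to $2\beta_Z^2 > \beta_Z^2(3g-1)/g^2$, i.e. (dividing by $\beta_Z^2 > 0$) to $2g^2 > 3g - 1$.

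The final step is to verify $2g^2 - 3g + 1 > 0$, which factors as $(2g-1)(g-1) > 0$; since $g = 1 + \sigma_\zeta^2(n) > 1$ by \eqref{lasso.thm.sig.zeta.def}, both factors are strictly positive and the inequality holds. There is no substantive obstacle here: the entire content lies in spotting the factorization $1 - \rho_{12}(n) = \tilde{\delta}(n)$ so that $\tilde{\delta}(n)$ cancels from both sides, after which the remaining inequality reduces to an elementary quadratic in $g$. The only care required is bookkeeping the substitutions and confirming the algebraic simplification $2S - 2(1+\sigma_\epsilon^2) = 2\beta_Z^2$.
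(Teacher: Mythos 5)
Your proof is correct, and the upper bound is handled exactly as in the paper (both simply cite Lemma \ref{calc.lemma.delt.meth.sigma.tilde}, i.e.\ the bound \eqref{tilde.sigma.ub}). For the lower bound, however, your algebraic route differs from the paper's in a meaningful way. The paper introduces the ratio $r := \beta_Z^2/(\beta_Z^2+1+\sigma_\epsilon^2) \in (0,1)$, uses the structural identity $\rho_{1y}^2(n) = r\,\rho_{12}(n)$ from Lemma \ref{lemma.dist.bounds}, and then rearranges $\tilde{\sigma}^2(n)$ until it can drop the manifestly positive term $r\left[1-\rho_{12}(n)\right]\left[\tfrac{1-\rho_{12}(n)}{2}+\tfrac{1}{2}\right]$, leaving exactly $2(1-\rho_{12}(n))(1-r)$; note this argument never uses the explicit form of $\rho_{12}(n)$, only that $\rho_{12}(n) \in (0,1)$ and $\rho_{1y}^2 = r\rho_{12}$, so it is slightly more portable. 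You instead substitute the explicit parametric expressions $\rho_{1y}^2(n)=\beta_Z^2/(Sg)$ and $\rho_{12}(n)=1/g$ with $g = 1+\sigma_\zeta^2(n)$, cancel $\tilde{\delta}(n)$ and $\beta_Z^2$, and reduce the whole claim to the quadratic inequality $2g^2-3g+1=(2g-1)(g-1)>0$, which holds strictly because $g>1$. Your reduction is arguably cleaner and easier to verify (a single factorization rather than a multi-line regrouping), at the cost of being tied to the specific correlation structure of this model; both arguments correctly deliver strict inequality, yours from $g-1=\sigma_\zeta^2(n)>0$ and the paper's from $\rho_{12}(n)<1$ and $r>0$.
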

\begin{proof}Provided later in Appendix \ref{tech.lemmas}.
\end{proof}

First we will show that \eqref{lemma.bound.alpha.alpha} holds. Using the definitions of \(\delta(n)\) from \eqref{def.delta.n}, \(\eta(n)\) from \eqref{lemma.def.eta.alpha}, and \(\tilde{\delta}(n)\) from \eqref{lasso.theory.def.tilde.delta}, we have that the argument of \(\Phi(\cdot)\) in \eqref{lemma.bound.alpha.alpha} is
\begin{align}
& \frac{\sqrt{n} \eta(n)}{\tilde{\sigma}(n)} \nonumber
\\ = ~ & \frac{2\left(2+ \frac{19}{5} \left(\log n \right)^{1/4} \right) \sqrt{n} \delta(n) [\delta(n) + \tilde{\delta}(n)]}{\tilde{\sigma}(n)}  \nonumber
\\   \stackrel{(a)}{<}  ~ & \frac{2\left(2+ \frac{19}{5}  \left(\log n \right)^{1/4} \right) \sqrt{\beta_Z^2  + 1 +  \sigma_\epsilon^2}  \sqrt{n} \delta(n) [\delta(n) + \tilde{\delta}(n)]}{\sqrt{2\left( 1 + \sigma_\epsilon^2\right)\tilde{\delta}(n)} }  \nonumber
\\   = ~ & \frac{2 \left(2+ \frac{19}{5} \left(\log n \right)^{1/4} \right) \sqrt{\beta_Z^2  + 1 +  \sigma_\epsilon^2} \cdot \sqrt{\frac{(\beta_Z^2 + 1 + \sigma_\epsilon^2) \log n}{4 c_2}} \left[\sqrt{\frac{(\beta_Z^2 + 1 + \sigma_\epsilon^2) \log n}{4 c_2 n}} + \frac{10}{\sqrt{n \log n} + 10} \right]}{\sqrt{2\left( 1 + \sigma_\epsilon^2\right)\cdot \frac{10}{\sqrt{n \log n} + 10}}}  \nonumber
\\   =~  & \left(2+ \frac{19}{5} \left(\log n \right)^{1/4} \right) \left(\beta_Z^2  + 1 +  \sigma_\epsilon^2\right) \sqrt{\frac{4}{80 c_2\left(1 + \sigma_\epsilon^2\right)}}  \nonumber
 \frac{\sqrt{\log n} \left[\sqrt{\frac{(\beta_Z^2 + 1 + \sigma_\epsilon^2) \log n}{4 c_2 n}} + \frac{1}{\sqrt{n}} \frac{10}{\sqrt{ \log n} + 10/\sqrt{n}} \right]}{\sqrt{ \frac{1}{\sqrt{n \log n} + 10}}}   \nonumber
\\   \stackrel{(b)}{<}  ~ &2 \cdot \frac{19}{5} \left(\log n \right)^{1/4} \cdot  \frac{\beta_Z^2  + 1 +  \sigma_\epsilon^2}{\sqrt{20 c_2\left(1 + \sigma_\epsilon^2\right)}}  \nonumber
 \cdot 
 \frac{\sqrt{\log n} \left[\sqrt{\frac{(\beta_Z^2 + 1 + \sigma_\epsilon^2) \log n}{4 c_2 n}} + 2 \sqrt{\frac{\log n}{n}} \right]}{\sqrt{ \frac{1}{\sqrt{n \log n} + 10}}} \nonumber
\\   = ~ &   \frac{38}{5} \left(\log n \right)^{1/4} \cdot \frac{\beta_Z^2  + 1 +  \sigma_\epsilon^2}{2\sqrt{5 c_2\left(1 + \sigma_\epsilon^2\right)}} \log n  \sqrt{ \frac{ \left(n \log n\right)^{1/2} + 10 }{n}}
 \left[\frac{\sqrt{\beta_Z^2 + 1 + \sigma_\epsilon^2 }}{2\sqrt{ c_2 }} + 2 \right]  \nonumber
\\   \stackrel{(c)}{<} ~ &  \frac{19}{5} \left(\log n \right)^{1/4} \cdot \frac{\beta_Z^2  + 1 +  \sigma_\epsilon^2}{ \sqrt{5 c_2(1 + \sigma_\epsilon^2)}}  \cdot  \log n  \sqrt{ \frac{2 \left(n \log n\right)^{1/2} }{n}}
\cdot \frac{\sqrt{\beta_Z^2 + 1 + \sigma_\epsilon^2 }}{\sqrt{ c_2 }}    \nonumber
\\ = ~ & \frac{19 \sqrt{2}}{5} \left(\log n \right)^{1/4} \cdot  \frac{\left( \beta_Z^2  + 1 +  \sigma_\epsilon^2\right)^{3/2}}{ c_2\sqrt{5 (1 + \sigma_\epsilon^2)}}  \cdot  \frac{ \left(\log n\right)^{5/4}}{n^{1/4}}  
  \nonumber
\\ = ~ &  k(\beta_z, \sigma_\epsilon^2)  \cdot  \frac{ \left(\log n\right)^{3/2}}{n^{1/4}}  ,  \nonumber
\end{align}
where \((a)\) follows from Lemma \ref{calc.lemma.delt.meth.sigma.tilde.2}, \((b)\) follows from
\[
\frac{10}{\sqrt{ \log n} + 10/\sqrt{n}} < 2 \sqrt{\log n}
\]
and
\[
2 < \frac{19}{5} \left(\log n \right)^{1/4}
\]
for all \(n \geq 100\), \((c)\) follows from \(10 < \sqrt{n \log n}\) for \(n \geq 100\) and
\[
\frac{\sqrt{\beta_Z^2 + 1 + \sigma_\epsilon^2 }}{2\sqrt{ c_2 }} > \frac{\sqrt{2 }}{2} \sqrt{\frac{8e^2}{e-1}} > 2,
\]
and 
\[
k(\beta_z, \sigma_\epsilon^2) :=  \frac{19\sqrt{2}}{5c_2 \sqrt{5}}  \frac{\left(\beta_Z^2  + 1 +  \sigma_\epsilon^2\right)^{3/2}}{\left(1 + \sigma_\epsilon^2 \right)^{1/2}} .  
\]
So we have that 
\[
\Phi \left( k(\beta_z, \sigma_\epsilon^2)  \cdot  \frac{ \left(\log n\right)^{3/2}}{n^{1/4}} \right)
\]
is a valid upper bound for \(\Phi \left(  \eta(n) \sqrt{n}/\tilde{\sigma}(n)\right)\). The first order Taylor expansion of \(\Phi( k(\beta_z, \sigma_\epsilon^2) x)  \) centered at 0 is 
\begin{align*}
&  \Phi(0)   + \frac{d}{dx}  \left[ \Phi \left( k(\beta_z, \sigma_\epsilon^2)  \cdot  x \right)  \right]_{x = 0} \cdot x + R(x) 
\\  = ~ & \frac{1}{2} +k(\beta_z, \sigma_\epsilon^2)  \phi \left(  0  \right) x + R(x)
\\  = ~ &  \frac{1}{2} +  \frac{1}{\sqrt{2 \pi}} k(\beta_z, \sigma_\epsilon^2)   x + R(x)
\end{align*}
where \(\phi(x) = \frac{1}{\sqrt{2 \pi}} e^{-\frac{x^2}{2}}\) is the pdf of a standard Gaussian random variable and \(R(x) = \Omega(x^2)\). Since \(\Phi''(x) = \phi'(x) \leq 0\) for all \(x \geq 0\), \(\Phi( k(\beta_z, \sigma_\epsilon^2) x) \) is concave for all \(x \geq 0\), so the first order Taylor series is an overestimate for \(\Phi( k(\beta_z, \sigma_\epsilon^2) x) \) when \(x \geq 0\). Therefore \(R(x) \leq 0\) for all \(x \geq 0\), so for all \(n \geq 100\),
\begin{align*}
\Phi \left(  \frac{\eta(n) \sqrt{n}}{\tilde{\sigma}(n)}\right) & < \Phi \left( k(\beta_z, \sigma_\epsilon^2) \frac{ \left(\log n\right)^{3/2}}{n^{1/4}} \right)  
\\ & \leq \frac{1}{2} +    \frac{1}{\sqrt{2 \pi}} k(\beta_z, \sigma_\epsilon^2)   \frac{ \left(\log n\right)^{3/2}}{n^{1/4}} 
\\  & < \frac{1}{2} +  c_6  \frac{\left(\beta_Z^2  + 1 +  \sigma_\epsilon^2\right)^{3/2}}{\left(1 + \sigma_\epsilon^2 \right)^{1/2}} \cdot \frac{\left(\log n\right)^{3/2}}{n^{1/4}} 
\end{align*}
for 
\[
c_6 :=   \frac{19}{5c_2 \sqrt{5 \pi}},
\]
which is \eqref{lemma.bound.alpha.alpha}.

We conclude by showing that \eqref{lemma.bound.alpha.sec} holds. Using the definition of \(\tilde{\delta}(n)\) in \eqref{lasso.theory.def.tilde.delta}, we have
\begin{align}
 \frac{1}{ \tilde{\sigma}(n)} & \stackrel{(d)}{<}   \frac{\sqrt{\beta_Z^2  + 1 +  \sigma_\epsilon^2 }}{\sqrt{2(1 + \sigma_\epsilon^2)}} \frac{1}{ \sqrt{\tilde{\delta}(n)} } \nonumber
  \\ & = \sqrt{\frac{\beta_Z^2  + 1 +  \sigma_\epsilon^2 }{2(1 + \sigma_\epsilon^2)}} \frac{\left( \sqrt{n \log n} + 10\right)^{1/2}}{ \sqrt{10}} \nonumber
    \\ & \stackrel{(e)}{<} \sqrt{\frac{\beta_Z^2  + 1 +  \sigma_\epsilon^2 }{2(1 + \sigma_\epsilon^2)}} \frac{\sqrt{2} \left( n \log n \right)^{1/4}}{ \sqrt{10 }} \nonumber
\end{align}
where \((d)\) follows from Lemma \ref{calc.lemma.delt.meth.sigma.tilde.2} and \((e)\) follows from \(10 < \sqrt{n \log n}\) for \(n \geq 100\). This yields \eqref{lemma.bound.alpha.sec}.

\end{proof}

\begin{proof}[Proof of Lemma \ref{dist.cond.one.sub.e.mse.lem}] We establish the identities one at at time.
\begin{itemize}

\item Note that 
\begin{align*}
& \frac{1}{\sqrt{\beta_Z^2 + \beta_{q+1}^2 + \sigma_\epsilon^2}} \boldsymbol{y}  \sim \mathcal{N}(0, \boldsymbol{I}_n ) 
\\ \implies \qquad  & \E \left(  \left[ \frac{1}{\sqrt{\beta_Z^2 + \beta_{q+1}^2 + \sigma_\epsilon^2}} \boldsymbol{y} \right]^\top \left[\frac{1}{\sqrt{\beta_Z^2 + \beta_{q+1}^2 + \sigma_\epsilon^2}} \boldsymbol{y}\right] \right)   = n .
\end{align*}

This establishes \eqref{dist.cond.one.sub.e.mse.yTy}.

\item

We have for \(j \in [q]\)
\[
 \begin{pmatrix}
 \boldsymbol{Z} \\
 \boldsymbol{X}_{\cdot j}
 \end{pmatrix} = \begin{pmatrix}
 \boldsymbol{Z} \\
 \boldsymbol{Z}
 \end{pmatrix} + \begin{pmatrix}
 \boldsymbol{0} \\
 \boldsymbol{\zeta}_j
 \end{pmatrix} \sim \mathcal{N} \left( \begin{pmatrix}
 \boldsymbol{0} \\
 \boldsymbol{0} 
 \end{pmatrix}, \begin{pmatrix}
 \boldsymbol{I}_n & \boldsymbol{I}_n \\
 \boldsymbol{I}_n & \boldsymbol{I}_n + \sigma_{\zeta j}^2 \boldsymbol{I}_n
 \end{pmatrix} \right).
\]
Then
\begin{align}
 \boldsymbol{Z} \mid  \boldsymbol{X}_{\cdot j} & \sim \mathcal{N} \left(   \boldsymbol{I}_n \left( \boldsymbol{I}_n + \sigma_{\zeta j}^2 \boldsymbol{I}_n \right)^{-1} \boldsymbol{X}_{\cdot j} , \boldsymbol{I}_n - \boldsymbol{I}_n(\boldsymbol{I}_n + \sigma_{\zeta j}^2 \boldsymbol{I}_n)^{-1}\boldsymbol{I}_n  \right) \nonumber
\\ & = \mathcal{N} \left( \frac{\boldsymbol{X}_{\cdot j} }{1 + \sigma_{\zeta j}^2},  \left[ 1 - \frac{1}{1+ \sigma_{\zeta j}^2} \right] \boldsymbol{I}_n \right) \nonumber
\\ & =  \mathcal{N} \left( \frac{\boldsymbol{X}_{\cdot j} }{1 + \sigma_{\zeta j}^2}, \frac{\sigma_{\zeta j}^2}{1+ \sigma_{\zeta j}^2} \boldsymbol{I}_n \right).\label{dist.cond.one.sub.e.mse.z.mid.xj}
\end{align}

Next,

\begin{align*}
\E\left[ \hat{\beta}_j  \mid \boldsymbol{X}_{\cdot j} \right] & =  \E\left[ ( \boldsymbol{X}_{\cdot j}^\top  \boldsymbol{X}_{\cdot j} )^{-1}  \boldsymbol{X}_{\cdot j}^\top \left(  \beta_Z \boldsymbol{Z} + \beta_{q+1} \boldsymbol{X}_{\cdot q+1} + \boldsymbol{\epsilon} \right) \mid \boldsymbol{X}_{\cdot j} \right]  
\\ & = ( \boldsymbol{X}_{\cdot j}^\top  \boldsymbol{X}_{\cdot j} )^{-1}  \boldsymbol{X}_{\cdot j}^\top   \left( \beta_Z \E\left[ \boldsymbol{Z} \mid \boldsymbol{X}_{\cdot j} \right] +  \beta_{q+1} \E\left[  \boldsymbol{X}_{\cdot q+1} \mid \boldsymbol{X}_{\cdot j} \right]     +  \E\left[  \boldsymbol{\epsilon} \mid \boldsymbol{X}_{\cdot j} \right]   \right)
\\ & = \frac{\beta_Z}{1 + \sigma_{\zeta j}^2} ( \boldsymbol{X}_{\cdot j}^\top  \boldsymbol{X}_{\cdot j} )^{-1}  \boldsymbol{X}_{\cdot j}^\top   \boldsymbol{X}_{\cdot j} + \boldsymbol{0}
\\ & =   \frac{\beta_Z}{1 + \sigma_{\zeta j}^2}
\end{align*}

where we used \eqref{dist.cond.one.sub.e.mse.z.mid.xj} and \(\E\left[ \boldsymbol{\epsilon}\mid \boldsymbol{X}_{\cdot j} \right] =0\) because \(\E\left[ \boldsymbol{\epsilon}\right] =0\) and \(\boldsymbol{\epsilon}\) and \(\boldsymbol{X}_{\cdot j} \) are independent by assumption, and similarly \( \E\left[  \boldsymbol{X}_{\cdot q+1} \mid \boldsymbol{X}_{\cdot j} \right] =0\). Next,

%This verifies \eqref{dist.cond.one.sub.e.mse.alpha.mid.xj}.

\begin{align*}
\Var\left[ \hat{\beta}_j  \mid \boldsymbol{X}_{\cdot j} \right] & =  \Var\left[ ( \boldsymbol{X}_{\cdot j}^\top  \boldsymbol{X}_{\cdot j} )^{-1}  \boldsymbol{X}_{\cdot j}^\top \left(\beta_Z \boldsymbol{Z}  + \beta_{q+1} \boldsymbol{X}_{\cdot q+1} + \boldsymbol{\epsilon} \right)\mid \boldsymbol{X}_{\cdot j} \right] 
\\ & = \frac{1}{( \boldsymbol{X}_{\cdot j}^\top  \boldsymbol{X}_{\cdot j} )^{2}}  \boldsymbol{X}_{\cdot j}^\top  \Var\left[  \beta_Z \boldsymbol{Z}  +   \beta_{q+1} \boldsymbol{X}_{\cdot q+1}  + \boldsymbol{\epsilon} \mid \boldsymbol{X}_{\cdot j} \right]  \boldsymbol{X}_{\cdot j}
\\ & = \frac{1}{( \boldsymbol{X}_{\cdot j}^\top  \boldsymbol{X}_{\cdot j} )^{2}}  \boldsymbol{X}_{\cdot j}^\top  ( \beta_Z^2 \Var\left[  \boldsymbol{Z}  \mid \boldsymbol{X}_{\cdot j} \right]   +   \beta_{q+1}^2 \Var\left[    \boldsymbol{X}_{\cdot q+1}  \mid \boldsymbol{X}_{\cdot j} \right]   
\\ & +   \Var\left[   \boldsymbol{\epsilon} \mid \boldsymbol{X}_{\cdot j} \right] )  \boldsymbol{X}_{\cdot j}
\\ & = \frac{1}{( \boldsymbol{X}_{\cdot j}^\top  \boldsymbol{X}_{\cdot j} )^{2}}  \boldsymbol{X}_{\cdot j}^\top  \left( \frac{\beta_Z^2  \sigma_{\zeta j}^2}{1 + \sigma_{\zeta j}^2}   +   \beta_{q+1}^2  +  \sigma_\epsilon^2 \right) \boldsymbol{I}_n  \boldsymbol{X}_{\cdot j}
\\ & = \left( \frac{\beta_Z^2  \sigma_{\zeta j}^2}{1 + \sigma_{\zeta j}^2}   +   \beta_{q+1}^2  +  \sigma_\epsilon^2   \right) \frac{ \boldsymbol{X}_{\cdot j}^\top    \boldsymbol{X}_{\cdot j}}{( \boldsymbol{X}_{\cdot j}^\top  \boldsymbol{X}_{\cdot j} )^{2}} 
\\ & = \left( \frac{\beta_Z^2  \sigma_{\zeta j}^2}{1 + \sigma_{\zeta j}^2}   +   \beta_{q+1}^2  +  \sigma_\epsilon^2   \right) \frac{1}{ \boldsymbol{X}_{\cdot j}^\top  \boldsymbol{X}_{\cdot j}} 
\end{align*}

where we used \eqref{dist.cond.one.sub.e.mse.z.mid.xj}. Note also that \(\hat{\beta}\) conditioned on \(\boldsymbol{X}_{\cdot j}\) is a Gaussian random variable because \(\boldsymbol{Z}\mid \boldsymbol{X}_{\cdot j}\) and \(\boldsymbol{\epsilon}\mid \boldsymbol{X}_{\cdot j}\) are independent Gaussian random variables. Putting this together we have
\[
\hat{\beta}_j \mid \boldsymbol{X}_{\cdot j} \sim \mathcal{N} \left( \frac{\beta_Z}{1 + \sigma_{\zeta j}^2} ,  \left( \frac{\beta_Z^2  \sigma_{\zeta j}^2}{1 + \sigma_{\zeta j}^2}   +   \beta_{q+1}^2  +  \sigma_\epsilon^2   \right)  \frac{1}{ \boldsymbol{X}_{\cdot j}^\top  \boldsymbol{X}_{\cdot j} }\right).  
\]

This proves \eqref{dist.cond.one.sub.e.mse.var.alpha.mid.xj}.

\item 

\begin{align*}
\E\left[ \hat{\beta}_{q+1}  \mid \boldsymbol{X}_{\cdot q + 1} \right] & =  \E\left[ ( \boldsymbol{X}_{\cdot q + 1}^\top  \boldsymbol{X}_{\cdot q + 1} )^{-1}  \boldsymbol{X}_{\cdot q + 1}^\top \left(  \beta_Z \boldsymbol{Z} + \beta_{q+1} \boldsymbol{X}_{\cdot q+1} + \boldsymbol{\epsilon} \right) \mid \boldsymbol{X}_{\cdot q + 1} \right]  
\\ & =\beta_{q+1}  ( \boldsymbol{X}_{\cdot q + 1}^\top  \boldsymbol{X}_{\cdot q + 1} )^{-1}  \boldsymbol{X}_{\cdot q + 1}^\top     \boldsymbol{X}_{\cdot q+1}
\\ & = \beta_{q+1},
\end{align*}
where we used the fact that \(\E\left[ \boldsymbol{\epsilon}\mid \boldsymbol{X}_{\cdot q + 1} \right] =0\) because \(\E\left[ \boldsymbol{\epsilon}\right] =0\) and \(\boldsymbol{\epsilon}\) and \(\boldsymbol{X}_{\cdot q + 1} \) are independent by assumption, and similarly \( \E\left[  \boldsymbol{Z} \mid \boldsymbol{X}_{\cdot q + 1} \right] =0\). Next,
\begin{align*}
\Var\left[ \hat{\beta}_{q+1}  \mid \boldsymbol{X}_{\cdot q + 1} \right] & =  \Var\left[ ( \boldsymbol{X}_{\cdot q + 1}^\top  \boldsymbol{X}_{\cdot q + 1} )^{-1}  \boldsymbol{X}_{\cdot q + 1}^\top \left(\beta_Z \boldsymbol{Z}  + \beta_{q+1} \boldsymbol{X}_{\cdot q+1} + \boldsymbol{\epsilon} \right)\mid \boldsymbol{X}_{\cdot q + 1} \right] 
\\ & = \frac{1}{( \boldsymbol{X}_{\cdot q + 1}^\top  \boldsymbol{X}_{\cdot q + 1} )^{2}}  \boldsymbol{X}_{\cdot q + 1}^\top  \Var\left[  \beta_Z \boldsymbol{Z}  +   \beta_{q+1} \boldsymbol{X}_{\cdot q+1}  + \boldsymbol{\epsilon} \mid \boldsymbol{X}_{\cdot q + 1} \right]  \boldsymbol{X}_{\cdot q + 1}
\\ & = \frac{1}{( \boldsymbol{X}_{\cdot q + 1}^\top  \boldsymbol{X}_{\cdot q + 1} )^{2}}  \boldsymbol{X}_{\cdot q + 1}^\top  ( \beta_Z^2 \Var\left[  \boldsymbol{Z}  \mid \boldsymbol{X}_{\cdot q + 1} \right]   +   \beta_{q+1}^2 \Var\left[    \boldsymbol{X}_{\cdot q+1}  \mid \boldsymbol{X}_{\cdot q + 1} \right]   
\\ & +   \Var\left[   \boldsymbol{\epsilon} \mid \boldsymbol{X}_{\cdot q + 1} \right] )  \boldsymbol{X}_{\cdot q + 1}
\\ & = \frac{1}{( \boldsymbol{X}_{\cdot q + 1}^\top  \boldsymbol{X}_{\cdot q + 1} )^{2}}  \boldsymbol{X}_{\cdot q + 1}^\top  \left( \beta_Z^2 +  \sigma_\epsilon^2 \right) \boldsymbol{I}_n  \boldsymbol{X}_{\cdot q + 1}
\\ & = \left( \beta_Z^2  +  \sigma_\epsilon^2   \right) \frac{ \boldsymbol{X}_{\cdot q + 1}^\top    \boldsymbol{X}_{\cdot q + 1}}{( \boldsymbol{X}_{\cdot q + 1}^\top  \boldsymbol{X}_{\cdot q + 1} )^{2}} 
\\ & =  \frac{\beta_Z^2  + \sigma_\epsilon^2}{ \boldsymbol{X}_{\cdot q + 1}^\top  \boldsymbol{X}_{\cdot q + 1}} .
\end{align*}
Note also that \(\hat{\beta}_{q+1}\) conditioned on \(\boldsymbol{X}_{\cdot q + 1}\) is a Gaussian random variable because \(\boldsymbol{Z}\mid \boldsymbol{X}_{\cdot q + 1}\), \(\boldsymbol{X}_{\cdot q + 1}\mid \boldsymbol{X}_{\cdot q + 1}\), and \(\boldsymbol{\epsilon}\mid \boldsymbol{X}_{\cdot q + 1}\) are independent Gaussian random variables. Putting this together we have
\[
\hat{\beta}_{q+1} \mid \boldsymbol{X}_{\cdot q + 1} \sim \mathcal{N} \left(\beta_{q+1} ,   \frac{\beta_Z^2  + \sigma_\epsilon^2}{ \boldsymbol{X}_{\cdot q + 1}^\top  \boldsymbol{X}_{\cdot q + 1}}  \right).  
\]

This shows \eqref{dist.cond.one.sub.e.mse.var.alpha.mid.x3}.

\item 

\begin{align*}
\E\left[ \hat{\beta}_Z  \mid \boldsymbol{Z} \right] & =  \E\left[ ( \boldsymbol{Z}^\top  \boldsymbol{Z} )^{-1}  \boldsymbol{Z}^\top \left(  \beta_Z \boldsymbol{Z} + \beta_{q+1} \boldsymbol{X}_{\cdot q+1} + \boldsymbol{\epsilon} \right) \mid \boldsymbol{Z} \right]  
\\ & =\beta_Z  ( \boldsymbol{Z}^\top  \boldsymbol{Z} )^{-1}  \boldsymbol{Z}^\top \boldsymbol{Z}
\\ & = \beta_Z,
\end{align*}
where we used the fact that \(\E\left[ \boldsymbol{\epsilon}\mid \boldsymbol{Z} \right] =0\) because \(\E\left[ \boldsymbol{\epsilon}\right] =0\) and \(\boldsymbol{\epsilon}\) and \(\boldsymbol{Z} \) are independent by assumption, and similarly \( \E\left[  \boldsymbol{X}_{\cdot q+1} \mid \boldsymbol{Z} \right] =0\). Next,
\begin{align*}
\Var\left[ \hat{\beta}_Z  \mid \boldsymbol{Z} \right] & =  \Var\left[ ( \boldsymbol{Z}^\top  \boldsymbol{Z} )^{-1}  \boldsymbol{Z}^\top \left(\beta_Z \boldsymbol{Z}  + \beta_{q+1} \boldsymbol{X}_{\cdot q+1} + \boldsymbol{\epsilon} \right)\mid \boldsymbol{Z} \right] 
\\ & = \frac{1}{( \boldsymbol{Z}^\top  \boldsymbol{Z} )^{2}}  \boldsymbol{Z}^\top  \Var\left[  \beta_Z \boldsymbol{Z}  +   \beta_{q+1} \boldsymbol{X}_{\cdot q+1}  + \boldsymbol{\epsilon} \mid \boldsymbol{Z} \right]  \boldsymbol{Z}
\\ & = \frac{1}{( \boldsymbol{Z}^\top  \boldsymbol{Z} )^{2}}  \boldsymbol{Z}^\top  ( \beta_Z^2 \Var\left[  \boldsymbol{Z}  \mid \boldsymbol{Z} \right]   +   \beta_{q+1}^2 \Var\left[    \boldsymbol{X}_{\cdot q+1}  \mid \boldsymbol{Z} \right]   
\\ & +   \Var\left[   \boldsymbol{\epsilon} \mid \boldsymbol{Z} \right] )  \boldsymbol{Z}
\\ & = \frac{1}{( \boldsymbol{Z}^\top  \boldsymbol{Z} )^{2}}  \boldsymbol{Z}^\top  \left( \beta_{q+1}^2 +  \sigma_\epsilon^2 \right) \boldsymbol{I}_n  \boldsymbol{Z}
\\ & = \left( \beta_{q+1}^2  +  \sigma_\epsilon^2   \right) \frac{ \boldsymbol{Z}^\top    \boldsymbol{Z}}{( \boldsymbol{Z}^\top  \boldsymbol{Z} )^{2}} 
\\ & =  \frac{\beta_{q+1}^2  + \sigma_\epsilon^2}{ \boldsymbol{Z}^\top  \boldsymbol{Z}} .
\end{align*}
Note also that \(\hat{\beta}_Z\) conditioned on \(\boldsymbol{Z}\) is a Gaussian random variable because \(\boldsymbol{Z}\mid \boldsymbol{Z}\), \(\boldsymbol{X}_{q+1} \mid \boldsymbol{Z}\), and \(\boldsymbol{\epsilon}\mid \boldsymbol{Z}\) are independent Gaussian random variables. Putting this together we have
\[
\hat{\beta}_Z \mid \boldsymbol{Z} \sim \mathcal{N} \left(\beta_{Z} ,   \frac{\beta_{q+1}^2  + \sigma_\epsilon^2}{ \boldsymbol{Z}^\top  \boldsymbol{Z}}  \right).  
\]

This proves \eqref{dist.cond.one.sub.e.mse.dist.alpha.mid.z}.

\item 

\[
 \boldsymbol{X}_{\cdot j} \sim \mathcal{N} \left( 0, (1 + \sigma_{\zeta j}^2) \boldsymbol{I}_n \right) \implies \frac{1}{\sqrt{1 + \sigma_{\zeta j}^2}} \boldsymbol{X}_{\cdot j} \sim \mathcal{N}(0,1)
 \]
 
 This verifies \eqref{dist.cond.one.sub.e.mse.xTx}.

 \item  For any \(j \in [q]\),

\[
 \E \left[ \boldsymbol{X}_{\cdot j}^\top \boldsymbol{y}  \right] =  \E \left[ \left(  \boldsymbol{Z} + \boldsymbol{\zeta}_j \right) ^\top \left(  \beta_Z \boldsymbol{Z}  +  \beta_{q+1} \boldsymbol{X}_{\cdot q +1} + \boldsymbol{\epsilon} \right) \right] =  \E \left[ \beta_Z  \boldsymbol{Z}^\top\boldsymbol{Z} \right] = \beta_Z n.
\]
 
This establishes \eqref{dist.cond.one.sub.e.mse.xTy}.

 \item

\[
 \E \left[ \boldsymbol{X}_{\cdot q+1}^\top \boldsymbol{y}  \right] =  \E \left[  \boldsymbol{X}_{\cdot q+1}^\top  \left(  \beta_Z \boldsymbol{Z}  +  \beta_{q+1} \boldsymbol{X}_{\cdot q +1} + \boldsymbol{\epsilon}  \right) \right] = \beta_{q+1}  n.
\]
 
This proves \eqref{dist.cond.one.sub.e.mse.xTy.3}.

\item Since \( \frac{1}{\sqrt{1 + \sigma_{\zeta j}^2}} \boldsymbol{X}_{\cdot j} \sim \mathcal{N}(0, \boldsymbol{I}_n)\), for any \(j \in [q]\) the random variable

\[
\frac{1+ \sigma_{\zeta j}^2}{ \boldsymbol{X}_{\cdot j}^\top  \boldsymbol{X}_{\cdot j}}
\]
has an inverse \(\chi^2\) distribution with \(n\) degrees of freedom. Therefore

\[
\E \left[ \frac{1+ \sigma_{\zeta j}^2}{ \boldsymbol{X}_{\cdot j}^\top  \boldsymbol{X}_{\cdot j}} \right] = \frac{1}{n-2} \implies  \E \left[   \frac{1}{ \boldsymbol{X}_{\cdot j}^\top  \boldsymbol{X}_{\cdot j} }  \right] = \frac{1}{(n-2)(1+ \sigma_{\zeta j}^2)},
\]
verifying \eqref{dist.cond.one.sub.e.mse.xTx.inv}.

\item

Since \(  \boldsymbol{X}_{\cdot q + 1} \sim \mathcal{N}(0, \boldsymbol{I}_n)\), the random variable
\[
\frac{1}{ \boldsymbol{X}_{\cdot q+1j}^\top  \boldsymbol{X}_{\cdot q + 1}}
\]
has an inverse \(\chi^2\) distribution with \(n\) degrees of freedom, so
\[
\E \left[\frac{1}{ \boldsymbol{X}_{\cdot q+1}^\top  \boldsymbol{X}_{\cdot q + 1}} \right] = \frac{1}{n-2}.
\]
This shows \eqref{dist.cond.one.sub.e.mse.xTx.inv.3}.

\item To calculate \(\E \left[  \hat{\beta}_j^2  \right]\) for any \(j \in [q]\), we will use \eqref{dist.cond.one.sub.e.mse.var.alpha.mid.xj} and \eqref{dist.cond.one.sub.e.mse.xTx.inv}:
\begin{align*}
%\hat{\beta} \mid \boldsymbol{X}_{\cdot j} & \sim \mathcal{N} \left( \frac{1}{1 + \sigma_{\zeta j}^2} ,  \left(  \frac{\sigma_{\zeta j}^2}{1 + \sigma_{\zeta j}^2}  +  \sigma_\epsilon^2 \right)  \frac{1}{ \boldsymbol{X}^\top  \boldsymbol{X}_{\cdot j} }\right)
%\\ \implies \qquad 
\E \left[ \hat{\beta}_j^2  \right] & = \E \left[ \E \left[ \hat{\beta}_j^2 \mid \boldsymbol{X}_{\cdot j}  \right]  \right] 
\\ & =  \E \left[\Var \left[ \hat{\beta}_j \mid \boldsymbol{X}_{\cdot j}  \right]  +  \E \left[ \hat{\beta}_j \mid \boldsymbol{X}_{\cdot j}  \right]^2  \right]
\\ & = \E \left[   \left( \frac{\beta_Z^2  \sigma_{\zeta j}^2}{1 + \sigma_{\zeta j}^2}   +   \beta_{q+1}^2  +  \sigma_\epsilon^2   \right)  \frac{1}{ \boldsymbol{X}_{\cdot j}^\top  \boldsymbol{X}_{\cdot j} }  + \left( \frac{\beta_Z}{1 + \sigma_{\zeta j}^2}  \right)^2 \right]
\\ & =  \frac{1}{1 + \sigma_{\zeta j}^2}  \left[   \left( \frac{\beta_Z^2  \sigma_{\zeta j}^2}{1 + \sigma_{\zeta j}^2}   +   \beta_{q+1}^2  +  \sigma_\epsilon^2   \right)  \frac{1}{n-2}  +  \frac{\beta_Z^2}{1 + \sigma_{\zeta j}^2}  \right].
\end{align*}

This proves \eqref{dist.cond.one.sub.e.mse.ex.alpha.squared.mid.xj}.

\item

Using \eqref{dist.cond.one.sub.e.mse.var.alpha.mid.x3} and \eqref{dist.cond.one.sub.e.mse.xTx.inv.3} we have
\begin{align*}
%\hat{\beta} \mid \boldsymbol{X}_{\cdot j} & \sim \mathcal{N} \left( \frac{1}{1 + \sigma_{\zeta j}^2} ,  \left(  \frac{\sigma_{\zeta j}^2}{1 + \sigma_{\zeta j}^2}  +  \sigma_\epsilon^2 \right)  \frac{1}{ \boldsymbol{X}^\top  \boldsymbol{X}_{\cdot j} }\right)
%\\ \implies \qquad 
\E \left[ \hat{\beta}_{q+1}^2  \right] & = \E \left[ \E \left[ \hat{\beta}_{q+1}^2 \mid \boldsymbol{X}_{\cdot q+1}  \right] \right]
\\  & = \E \left[ \Var \left[ \hat{\beta}_{q+1} \mid \boldsymbol{X}_{\cdot q+1}  \right]  +  \E \left[ \hat{\beta}_{q+1} \mid \boldsymbol{X}_{\cdot q+1}  \right]^2 \right]
\\ & = \E \left[    \frac{\beta_Z^2 + \sigma_\epsilon^2}{ \boldsymbol{X}_{\cdot q+1}^\top  \boldsymbol{X}_{\cdot q+1} }  + \beta_{q+1}^2 \right]
\\ & =   \frac{\beta_Z^2 + \sigma_\epsilon^2}{ n - 2 }  + \beta_{q+1}^2 
\end{align*}

This verifies \eqref{dist.cond.one.sub.e.mse.ex.alpha.squared.mid.x3}.

\item

Using \eqref{dist.cond.one.sub.e.mse.dist.alpha.mid.z} we have
\begin{align*}
\E \left[ \hat{\beta}_Z^2 \right] & =  \E \left[  \Var \left[ \hat{\beta}_Z \mid \boldsymbol{Z}  \right]  +  \E \left[ \hat{\beta}_Z \mid \boldsymbol{Z}  \right]^2 \right]
\\ & = \E \left[    \frac{\beta_{q+1}^2 + \sigma_\epsilon^2}{ \boldsymbol{Z}^\top  \boldsymbol{Z} }  + \beta_{Z}^2 \right]
\\ & =   \frac{\beta_{q+1}^2 + \sigma_\epsilon^2}{ n - 2 }  + \beta_{Z}^2 ,
\end{align*}

which is \eqref{dist.cond.one.sub.e.mse.alpha.sq.z}.

\end{itemize}

\end{proof}

\begin{proof}[Proof of Lemma \ref{pred.risk.lemma.x1}]
\begin{enumerate}[(i)]

\item First consider the case where there is only one directly observed feature (that is, \(p = q + 1\)). For any \(j \in [q]\),

\begin{align}
& \E \left[ \frac{1}{n} \left\lVert \tilde{\boldsymbol{y}} -\hat{\beta}_{j} \tilde{\boldsymbol{X}}_{\cdot j}  \right\rVert_2^2 \right]  \nonumber
\\ = ~ & \frac{1}{n} \E \left[  \left( \tilde{\boldsymbol{y}} - \hat{\beta}_{j} \tilde{\boldsymbol{X}}_{\cdot j} \right)^\top   \left( \tilde{\boldsymbol{y}} - \hat{\beta}_{j} \tilde{\boldsymbol{X}}_{\cdot j}\right)\right]  \nonumber
%\\   = ~ &  \frac{1}{n} \E \left[ \tilde{\boldsymbol{y}}^\top \tilde{\boldsymbol{y}} - 2\hat{\beta}_{j} \tilde{\boldsymbol{X}}_{\cdot j} ^\top\tilde{\boldsymbol{y}} +  \hat{\beta}_{j}^2 \tilde{\boldsymbol{X}}_{\cdot j}^\top\tilde{\boldsymbol{X}}_{\cdot j}  \right]  \nonumber
\\  \stackrel{(a)}{=}  ~ & \frac{1}{n} \E \left[ \tilde{\boldsymbol{y}}^\top \tilde{\boldsymbol{y}} \right] -  
 \frac{2}{n} \E \left[  \hat{\beta}_{j} \right] \E \left[ \tilde{\boldsymbol{X}}_{\cdot j}^\top\tilde{\boldsymbol{y}}  \right]  +  \frac{1}{n} \E \left[  \hat{\beta}_{j}^2 \right] \E \left[ \tilde{\boldsymbol{X}}_{\cdot j}^\top\tilde{\boldsymbol{X}}_{\cdot j}  \right]  \nonumber
 \\  \stackrel{(b)}{=} ~ & \beta_Z^2 + \beta_{q+1}^2 + \sigma_\epsilon^2 -  \frac{2}{n}  \E \left[  \E \left( \hat{\beta}_{j}  \mid \boldsymbol{X}_{\cdot j} \right) \right] \beta_Z  n \nonumber
 \\ & + \frac{1}{n} \cdot \frac{1}{1+ \sigma_{\zeta j}^2} \left[ \left(  \frac{\beta_Z^2 \sigma_{\zeta j}^2}{1 + \sigma_{\zeta j}^2} + \beta_{q+1}^2  +  \sigma_\epsilon^2 \right) \cdot \frac{1}{n-2} +  \frac{\beta_Z^2}{1 + \sigma_{\zeta j}^2} \right]   \nonumber
 \cdot n(1 + \sigma_{\zeta j}^2)    \nonumber
 \\  \stackrel{(c)}{=}  ~ & \beta_Z^2 + \beta_{q+1}^2 + \sigma_\epsilon^2 -  \frac{2 \beta_Z^2}{1 + \sigma_{\zeta j}^2} +   \left(  \frac{\beta_Z^2 \sigma_{\zeta j}^2}{1 + \sigma_{\zeta j}^2} + \beta_{q+1}^2 +  \sigma_\epsilon^2 \right) \cdot \frac{1}{n-2}   \nonumber
+  \frac{\beta_Z^2}{1 + \sigma_{\zeta j}^2}  \nonumber
 \\ =  ~ &  \frac{n-1}{n-2}\left(\beta_{q+1}^2 + \sigma_\epsilon^2\right) + \beta_Z^2  -  \frac{ \beta_Z^2}{1 + \sigma_{\zeta j}^2} +    \frac{\beta_Z^2 \sigma_{\zeta j}^2}{1 + \sigma_{\zeta j}^2} \cdot \frac{1}{n-2}   \nonumber
  \\ =  ~ &  \frac{n-1}{n-2}\left(\beta_{q+1}^2 + \sigma_\epsilon^2\right) +  \frac{ \beta_Z^2 \sigma_{\zeta j}^2}{1 + \sigma_{\zeta j}^2} +    \frac{\beta_Z^2 \sigma_{\zeta j}^2}{1 + \sigma_{\zeta j}^2} \cdot \frac{1}{n-2}   \nonumber
\\ = ~ & \frac{n - 1 }{n-2} \left(  \frac{\beta_Z^2 \sigma_{\zeta j}^2}{1+ \sigma_{\zeta j}^2} + \beta_{q+1}^2   +  \sigma_\epsilon^2  \right) ,\label{dist.cond.one.sub.e.mse.result.intmd}
\end{align}
where \((a)\) follows because \(\tilde{\boldsymbol{X}}_{\cdot j}\) and \(\tilde{\boldsymbol{y}}\) are independent from \(\boldsymbol{X}_{\cdot j}\) and \(\boldsymbol{y}\), so \(\hat{\beta}_{j} = (\boldsymbol{X}_{\cdot j}^\top\boldsymbol{X}_{\cdot j})^{-1}\boldsymbol{X}_{\cdot j}^\top\boldsymbol{y} \) is independent from \(\tilde{\boldsymbol{X}}_{\cdot j}\) and \(\tilde{\boldsymbol{y}}\), \((b)\) follows from \eqref{dist.cond.one.sub.e.mse.yTy}, \eqref{dist.cond.one.sub.e.mse.xTx}, \eqref{dist.cond.one.sub.e.mse.xTy}, and \eqref{dist.cond.one.sub.e.mse.ex.alpha.squared.mid.xj}, and \((c)\) follows from \eqref{dist.cond.one.sub.e.mse.var.alpha.mid.xj}.

Now suppose we have an arbitrary number of directly observed features (that is, an arbitrary \(p > q\)). Then for any \(j \in [q]\),
\begin{align*}
\boldsymbol{y}  & = \beta_Z \boldsymbol{Z} + \sum_{j = q+1}^p \beta_j \boldsymbol{X}_{\cdot j} + \boldsymbol{\epsilon} = \beta_Z \boldsymbol{Z} +  \beta_{q+1}\boldsymbol{X}_{\cdot q+ 1} + \boldsymbol{\tilde{\epsilon}}
\end{align*}
where
\[
\boldsymbol{\tilde{\epsilon}} := \sum_{j' = q+2}^p \beta_{j'} \boldsymbol{X}_{\cdot j'} + \boldsymbol{\epsilon}
\]
is independent of \( \boldsymbol{Z}\) and \(\boldsymbol{X}_{\cdot q+ 1} \). So we can use \eqref{dist.cond.one.sub.e.mse.result.intmd} and we have
\begin{align*}
\E \left[ \frac{1}{n} \left\lVert \tilde{\boldsymbol{y}} -\hat{\beta}_{j} \tilde{\boldsymbol{X}}_{\cdot j}  \right\rVert_2^2 \right] & =  \frac{n - 1 }{n-2} \left(  \frac{\beta_Z^2 \sigma_{\zeta j}^2}{1+ \sigma_{\zeta j}^2} + \beta_{q+1}^2   + \Var \left(\boldsymbol{\tilde{\epsilon}} \right) \right)
\\ & =  \frac{n - 1 }{n-2} \left(  \frac{\beta_Z^2 \sigma_{\zeta j}^2}{1+ \sigma_{\zeta j}^2} + \beta_{q+1}^2   + \sum_{j' = q+2}^p \beta_{j'}^2 + \sigma_\epsilon^2 \right).
\end{align*}

\item Again, we start by considering the case where \(p = q + 1\).
 \begin{align} 
& \E \left[ \frac{1}{n} \left\lVert \tilde{\boldsymbol{y}} -  \hat{\beta}_{q + 1} \tilde{\boldsymbol{X}}_{\cdot q + 1} \right\rVert_2^2 \right]  \nonumber
\\ = ~ & \frac{1}{n} \E \left[  \left( \tilde{\boldsymbol{y}} - \hat{\beta}_{q + 1} \tilde{\boldsymbol{X}}_{\cdot q + 1} \right)^\top   \left( \tilde{\boldsymbol{y}} - \hat{\beta}_{q + 1} \tilde{\boldsymbol{X}}_{\cdot q + 1}\right)\right]  \nonumber
%\\   = ~ &  \frac{1}{n} \E \left[ \tilde{\boldsymbol{y}}^\top \tilde{\boldsymbol{y}} - 2\hat{\beta}_{q + 1} \tilde{\boldsymbol{X}}_{\cdot q + 1} ^\top\tilde{\boldsymbol{y}} +  \hat{\beta}_{q + 1}^2 \tilde{\boldsymbol{X}}_{\cdot q + 1}^\top\tilde{\boldsymbol{X}}_{\cdot q + 1}  \right]  \nonumber
\\  \stackrel{(a)}{=}  ~ & \frac{1}{n} \E \left[ \tilde{\boldsymbol{y}}^\top \tilde{\boldsymbol{y}} \right] -  
 \frac{2}{n} \E \left[  \hat{\beta}_{q + 1} \right] \E \left[ \tilde{\boldsymbol{X}}_{\cdot q + 1}^\top\tilde{\boldsymbol{y}}  \right]  \nonumber
+  \frac{1}{n} \E \left[  \hat{\beta}_{q + 1}^2 \right] \E \left[ \tilde{\boldsymbol{X}}_{\cdot q + 1}^\top\tilde{\boldsymbol{X}}_{\cdot q + 1}  \right]  \nonumber
 \\  \stackrel{(b)}{=} ~ & \beta_Z^2 + \beta_{q+1}^2 + \sigma_\epsilon^2 -  \frac{2}{n}  \E \left[  \E \left( \hat{\beta}_{q + 1}  \mid \boldsymbol{X}_{\cdot q + 1} \right) \right]  \cdot \beta_{q+1}   \cdot n  \nonumber
 + \frac{1}{n} \left( \frac{\beta_Z^2 + \sigma_\epsilon^2}{ n-2 } + \beta_{q+1}^2 \right) \cdot n    \nonumber
  \\  \stackrel{(c)}{=} ~ & \beta_Z^2 + \beta_{q+1}^2 + \sigma_\epsilon^2 - 2 \beta_{q+1}^2  + \frac{\beta_Z^2 + \sigma_\epsilon^2}{ n-2 } + \beta_{q+1}^2     \nonumber
%\\  =  ~ & \left( \frac{n-1}{n-2} \right) \beta_Z^2  + \sigma_\epsilon^2   + \frac{ \sigma_\epsilon^2 }{ n-2 } \nonumber
\\  =  ~ &  \frac{n-1}{n-2}  \left( \beta_Z^2  + \sigma_\epsilon^2    \right)    ,\label{dist.cond.one.sub.e.mse.result.intmd.3}
\end{align}

where \((a)\) follows because \(\tilde{\boldsymbol{X}}_{q + 1}\) and \(\tilde{\boldsymbol{y}}\) are independent from \(\hat{\beta}_{q + 1}\), \((b)\) follows from \eqref{dist.cond.one.sub.e.mse.yTy}, \eqref{dist.cond.one.sub.e.mse.xTy.3}, \eqref{dist.cond.one.sub.e.mse.ex.alpha.squared.mid.x3}, and the fact that \(\E \left[ \tilde{\boldsymbol{X}}_{\cdot q + 1}^\top\tilde{\boldsymbol{X}}_{\cdot q + 1} \right] = 1\), and \((c)\) follows from \eqref{dist.cond.one.sub.e.mse.var.alpha.mid.x3}.

Now for an arbitrary \(p > q\), for any \(j \in \{q + 1, \ldots, p\}\) we have
\begin{align*}
\boldsymbol{y}  & = \beta_Z \boldsymbol{Z} + \sum_{j = q+1}^p \beta_j \boldsymbol{X}_{\cdot j} + \boldsymbol{\epsilon} = \beta_Z \boldsymbol{Z} +  \beta_{j}\boldsymbol{X}_{\cdot j} + \boldsymbol{\tilde{\epsilon}}
\end{align*}
where
\[
\boldsymbol{\tilde{\epsilon}} := \sum_{j' \in \{q+1, \ldots, p\} \setminus j} \beta_{j'} \boldsymbol{X}_{\cdot j'} + \boldsymbol{\epsilon}
\]
is independent of \( \boldsymbol{Z}\) and \(\boldsymbol{X}_{\cdot j} \). So we can use \eqref{dist.cond.one.sub.e.mse.result.intmd.3} and we have
\begin{align*}
 \E \left[ \frac{1}{n} \left\lVert \tilde{\boldsymbol{y}} -  \hat{\beta}_{j} \tilde{\boldsymbol{X}}_{\cdot q + 1} \right\rVert_2^2 \right] & = \frac{n-1}{n-2}  \left( \beta_Z^2  +  \Var \left(\boldsymbol{\tilde{\epsilon}} \right)   \right)   
 \\ & = \frac{n-1}{n-2}  \left( \beta_Z^2  +  \sum_{j' \in \{q+1, \ldots, p\} \setminus j} \beta_{j'}^2 + \sigma_\epsilon^2  \right)   .
\end{align*}

\item First we handle the case with only one directly observed feature \(\boldsymbol{X}_{\cdot q +1}\), as in the previous parts.
\begin{align}
& \E \left[ \frac{1}{n} \left\lVert \tilde{\boldsymbol{y}} -  \hat{\beta}_Z \boldsymbol{\tilde{Z}}  \right\rVert_2^2 \right] \nonumber
%\\ = ~ & \frac{1}{n} \E \left[  \left( \tilde{\boldsymbol{y}} - \hat{\beta}_Z \boldsymbol{\tilde{Z}} \right)^\top   \left( \tilde{\boldsymbol{y}} - \hat{\beta}_Z \boldsymbol{\tilde{Z}}\right)\right] \nonumber
\\ = ~ & \frac{1}{n} \E \left[  \tilde{\boldsymbol{y}}^\top\tilde{\boldsymbol{y}}   - 2 \hat{\beta}_Z \boldsymbol{\tilde{Z}}^\top\tilde{\boldsymbol{y}}     +\hat{\beta}_Z^2 \boldsymbol{\tilde{Z}}^\top\boldsymbol{\tilde{Z}}  \right] \nonumber
\\  \stackrel{(a)}{=} ~ & \frac{1}{n} \Bigg(  n(\beta_Z^2 + \beta_{q+1}^2 + \sigma_\epsilon^2)   - 2 \E \left[ \hat{\beta}_Z  \right] \E \left[ \boldsymbol{\tilde{Z}}^\top  \E \left(\tilde{\boldsymbol{y}} \mid \boldsymbol{\tilde{Z}} \right)   \right]     \nonumber
 +\E \left[\hat{\beta}_Z^2 \right] \E \left[ \boldsymbol{\tilde{Z}}^\top\boldsymbol{\tilde{Z}}   \right] \Bigg) \nonumber
\\  \stackrel{(b)}{=} ~ & \frac{1}{n} \Bigg(  n(\beta_Z^2 + \beta_{q+1}^2 + \sigma_\epsilon^2)   - 2 \E \left[ \E \left( \hat{\beta}_Z  \mid \boldsymbol{Z} \right) \right]  \beta_Z \E \left[ \boldsymbol{\tilde{Z}}^\top \boldsymbol{\tilde{Z}}  \right]    \nonumber
  + n \left( \frac{\beta_{q+1}^2 + \sigma_\epsilon^2}{ n - 2}   + \beta_Z^2  \right) \Bigg) \nonumber
\\  \stackrel{(c)}{=} ~ & \frac{1}{n} \left(  n(\beta_Z^2 + \beta_{q+1}^2 + \sigma_\epsilon^2)   - 2 \beta_Z^2  n   + n \left( \frac{\beta_{q+1}^2 + \sigma_\epsilon^2}{ n - 2}   + \beta_Z^2  \right)  \right) \nonumber
\\  = ~ &  \beta_Z^2 + \beta_{q+1}^2 + \sigma_\epsilon^2   - 2\beta_Z^2      +  \frac{\beta_{q+1}^2 + \sigma_\epsilon^2}{ n-2}   + \beta_Z^2 \nonumber
\\  = ~ &   \frac{n-1}{n-2} \left( \beta_{q+1}^2 + \sigma_\epsilon^2 \right), \label{z.risk.intmd}
\end{align}
where in \((a)\) we used \eqref{dist.cond.one.sub.e.mse.yTy} and the independence of \(\hat{\beta}_Z\) from \(\boldsymbol{\tilde{Z}}\) and \(\tilde{\boldsymbol{y}}\), in \((b)\) we used \(\E\left[ \boldsymbol{Z}^\top \boldsymbol{Z} \right] =n\) and \eqref{dist.cond.one.sub.e.mse.alpha.sq.z}, and in \((c)\) we used \eqref{dist.cond.one.sub.e.mse.dist.alpha.mid.z} and \(\E\left[ \boldsymbol{Z}^\top \boldsymbol{Z} \right] =n\). Now we make this more general. We have
\[
\boldsymbol{y}  = \beta_Z \boldsymbol{Z} + \sum_{j = q+1}^p \beta_j \boldsymbol{X}_{\cdot j} + \boldsymbol{\epsilon} = \beta_Z \boldsymbol{Z} +  \beta_{q+1} \boldsymbol{X}_{\cdot q +1} + \boldsymbol{\tilde{\epsilon}}
\]
where
\[
\boldsymbol{\tilde{\epsilon}} :=   \sum_{j = q+2}^p \beta_j \boldsymbol{X}_{\cdot j} + \boldsymbol{\epsilon}  \sim \mathcal{N} \left( \boldsymbol{0}, \left(  \sum_{j = q+2}^p \beta_j^2 + \sigma_\epsilon^2 \right) \boldsymbol{I}_n \right).
\]
Substituting into \eqref{z.risk.intmd}, we see that the prediction risk of \(\boldsymbol{Z}\) is
\[
\frac{n-1}{n-2} \left( \beta_{q+1}^2 + \left(  \sum_{j = 2}^p \beta_j^2 + \sigma_\epsilon^2 \right) \right) = \frac{n-1}{n-2} \left(  \sum_{j = q+1}^p \beta_j^2 + \sigma_\epsilon^2  \right) .
\]

\end{enumerate}

\end{proof}

\begin{proof}[Proof of Lemma \ref{lem.opt.weighting.all.positive}] Note that
\[
\sum_{j=1}^q w_q \boldsymbol{X}_{\cdot j}= \boldsymbol{Z} + \sum_{j=1}^q w_q \boldsymbol{\zeta}_j \stackrel{d}{=} \boldsymbol{Z} + \tilde{\boldsymbol{\zeta}},
\]
where \(\stackrel{d}{=} \) denotes equality in distribution and \(\tilde{\boldsymbol{\zeta}} \sim \mathcal{N}\left(0, \sum_{j=1}^q w_j^2 \sigma_{\zeta j}^2 \boldsymbol{I}_n \right)\) is independent of \(\boldsymbol{Z}\). 
%Since
%\[
%\boldsymbol{y} = \beta_Z \boldsymbol{Z} + \sum_{j=q+1}^p \beta_j \boldsymbol{X}_{\cdot j} + \boldsymbol{\epsilon} =  \beta_Z \boldsymbol{Z} + \beta_{q + 1} \boldsymbol{X}_{\cdot q+1}   + \boldsymbol{\tilde{\epsilon}}
%\] 
%where 
%\[
%\boldsymbol{\tilde{\epsilon}} := \sum_{j=q+2}^p \beta_j \boldsymbol{X}_{\cdot j} + \boldsymbol{\epsilon} \sim \mathcal{N} \left( \boldsymbol{0}, \left( \sum_{j=q+2}^p  \beta_j^2 + \sigma_\epsilon^2 \right) \boldsymbol{I}_n \right) ,
%\]
% from
 Then the result follows from Lemma \ref{pred.risk.lemma.x1}(i).
\end{proof}

\begin{proof}[Proof of Lemma \ref{lem.a1.req}] Using
\[ 
\frac{\log n}{n} < \frac{c_2}{\beta_Z^2 + 1 + \sigma_\epsilon^2} \cdot \frac{1}{2\left(12 + \sigma_\epsilon^2 \right)}
\]
from \eqref{n.large.delta.cond}, we have
\begin{align*}
h(\beta_Z, \sigma_\epsilon^2)\sqrt{ \frac{\log n}{n}}  & <  h(\beta_Z, \sigma_\epsilon^2)\sqrt{  \frac{c_2}{\beta_Z^2 + 1 + \sigma_\epsilon^2} \cdot \frac{1}{2\left(12 + \sigma_\epsilon^2 \right)}}
\\ & = 5 \sqrt{  \frac{ 3 + \sigma_\epsilon^2}{2\left(12 + \sigma_\epsilon^2 \right)}}
\\ & < \frac{5 \sqrt{2}}{2}.
\end{align*}
Similarly,
\begin{align*}
\frac{5 h(\beta_Z, \sigma_\epsilon^2)}{n} & = 5 h(\beta_Z, \sigma_\epsilon^2)\sqrt{ \frac{\log n}{n}}  \cdot \frac{1}{\sqrt{n \log n}} < \frac{25 \sqrt{2}}{2} \cdot \frac{1}{\sqrt{n \log n}} .
\end{align*}
This yields
\begin{align*}
\frac{5}{\sqrt{n \log n}} +  h(\beta_Z, \sigma_\epsilon^2)\sqrt{ \frac{\log n}{n}}  +  \frac{5 h(\beta_Z, \sigma_\epsilon^2)}{n}  & <  \frac{5}{\sqrt{n \log n}} + \frac{5 \sqrt{2}}{2} +    \frac{25\sqrt{2}}{2\sqrt{n \log n}} 
\\ & <  \frac{5}{\sqrt{n \log n}}  \left( 1 +\frac{\sqrt{2}}{2}\sqrt{100 \log 100} +  \frac{5 \sqrt{2}}{2} \right)
\\ & <  \frac{100}{\sqrt{n \log n}}.
\end{align*}

\end{proof}

\begin{proof}[Proof of Lemma \ref{calc.lemma.delt.meth.sigma.tilde.2}] Note that by Lemma \ref{lemma.dist.bounds} the assumed covariance matrix structure in Proposition \ref{lemma.prob.a1.eta} holds. Also, the definition of \(\tilde{\sigma}\) in \eqref{lemma.def.tilde.sigma} matches the definition in \eqref{lemma.def.tilde.sigma.n}. Therefore the assumptions of Lemma \ref{calc.lemma.delt.meth.sigma.tilde} are satisfied, and the right side of the inequality follows since in Lemma \ref{calc.lemma.delt.meth.sigma.tilde} we show that \(\tilde{\sigma}^2 \leq 2\). To see that the left side of the inequality holds, note that from Lemma \ref{lemma.dist.bounds} we have
\[
\rho_{1y}^2(n) = \frac{\beta_Z^2}{\left(\beta_Z^2  + 1 +  \sigma_\epsilon^2\right) \left( 1 + \sigma_\zeta^2(n)  \right)}  =r\rho_{12}(n)
\]
where
\[
r :=  \frac{\beta_Z^2}{\beta_Z^2  + 1 +  \sigma_\epsilon^2} \in (0,1).
\]
So
\begin{align*}
\tilde{\sigma}^2 & =     2 (1 - \rho_{12}(n)) \left( - \frac{3}{2} \rho_{1y}^2(n) + \frac{1}{2} \rho_{1y}^2(n)  \rho_{12}(n) + 1\right)
\\ &  =   2 (1 - \rho_{12}(n)) \left(    -\rho_{1y}^2(n) \left[ \frac{3 -  \rho_{12}(n)}{2}\right] + 1 \right)
\\ & = 2 (1 - \rho_{12}(n)) \left(   - r\rho_{12}(n)\left[ \frac{1 - \rho_{12}(n) }{2} + 1\right] + 1   \right) 
\\ & = 2 (1 - \rho_{12}(n)) \left( r\left[ \frac{1 - \rho_{12}(n) }{2} + 1\right] - r\rho_{12}(n)\left[ \frac{1 - \rho_{12}(n) }{2} + 1\right] -  r\frac{1 - \rho_{12}(n) }{2}   + 1 - r  \right) 
\\ & = 2 (1 - \rho_{12}(n))\left( r\left[ \frac{1 - \rho_{12}(n) }{2} + 1\right]\left[1  - \rho_{12}(n) \right]  - r\frac{1 - \rho_{12}(n) }{2} + 1 - r \right) 
\\ & = 2 (1 - \rho_{12}(n))\left( r\left[1  - \rho_{12}(n) \right] \left[ \frac{1 - \rho_{12}(n) }{2} + 1  - \frac{1 }{2} \right] + 1 - r \right) 
\\ & >  2(1 - \rho_{12}(n))(1 - r)
\\ & = 2(1 - \rho_{12}(n)) \cdot \frac{1 + \sigma_\epsilon^2}{\beta_Z^2  + 1 +  \sigma_\epsilon^2}.
\end{align*}
\end{proof}

%%\bibliographystyle{abbrvnat}
%\bibliography{mybib2fin}

\end{document}